\newcommand{\calC}{{\mathcal{C}}}
\newcommand{\calE}{{\mathcal{E}}}
\newcommand{\card}[1]{\left|{#1}\right|}
\newcommand{\formula}[1]{\left({#1}\right)}
\newcommand{\tuple}[1]{\langle{#1}\rangle}
\newcommand{\defeq}{\mathrel{\mathop:}=}
\newcommand{\signature}[2]{[{#1},{#2}]}
\newcommand{\db}{{\mathbf{db}}}
\newcommand{\rep}{{\mathbf{r}}}
\newcommand{\sep}{{\mathbf{s}}}
\newcommand{\block}{{\mathbf{b}}}
\newcommand{\theblock}[2]{{\mathsf{block}}({#1},{#2})}
\newcommand{\gblock}{{\mathbf{g}}}   
\newcommand{\repairs}[1]{\mathsf{rset}({#1})}
\newcommand{\adom}[1]{{\mathbf{adom}}({#1})}
\newcommand{\atomvars}[1]{{\mathsf{vars}}({#1})}
\newcommand{\keyvars}[1]{{\mathsf{key}}({#1})}
\newcommand{\queryvars}[1]{\mathsf{vars}({#1})}
\newcommand{\sequencevars}[1]{\mathsf{vars}({#1})}
\newcommand{\substitute}[3]{{#1}_{[{{#2}\mapsto{#3}}]}}
\newcommand{\vmodels}[1]{\Vvdash_{#1}}
\newcommand{\fd}[2]{{#1}\rightarrow{#2}}
\newcommand{\FD}[1]{{\mathcal{K}}({#1})}
\newcommand{\keycl}[2]{{#1}^{+,{#2}}}
\newcommand{\keyclsup}[2]{{#1}^{\boxplus,{#2}}}
\newcommand{\cqa}[1]{{\mathsf{CERTAINTY}}({#1})}
\newcommand{\attacksymbol}[1]{\stackrel{#1}{\rightsquigarrow}}
\newcommand{\nattacksymbol}[1]{\stackrel{#1}{\not\rightsquigarrow}}
\newcommand{\attacks}[1]{\attacksymbol{#1}}
\newcommand{\nattacks}[1]{\nattacksymbol{#1}}
\newtheorem{lemma}{Lemma}
\newtheorem{theorem}{Theorem}
\newtheorem{proposition}{Proposition}
\newtheorem{sublemma}{Sublemma}
\newenvironment{proof}[1][]{{\bf Proof #1}\hspace{0.5ex}}{\hfill$\Box$\newline}
\newenvironment{subproof}[1][]{{\bf Proof #1}}{\hfill$\dashv$\newline}
\newtheorem{defi}{Definition}
\newenvironment{definition}{\begin{defi}\rm}{\hfill$\lhd$\end{defi}}
\newtheorem{exa}{Example}
\newenvironment{example}{\begin{exa}\rm}{\hfill$\lhd$\end{exa}}
\newtheorem{rem}{Remark}
\newcommand{\fuxman}{0}
\newcommand{\pema}{1}
\newcommand{\depth}[1]{d({#1})}
\newcommand{\bival}[2]{\Theta^{#1}_{#2}}
\newcommand{\trival}[3]{\Theta^{#1}_{#2,#3}}
\newcommand{\step}[1]{\stackrel{#1}{\smallfrown}}
\newcommand{\consistent}[1]{[\![{#1}]\!]}
\newcommand{\icard}[1]{{\mathsf{incnt}}({#1})}
\newcommand{\markov}{\stackrel{\mathsf{{}_{M}}}{\longrightarrow}}
\newcommand{\markovpath}{\stackrel{\mathsf{{}_{M}}*}{\longrightarrow}}
\newcommand{\markovpatharg}[1]{\stackrel{{#1},\mathsf{{}_{M}}*}{\longrightarrow}}
\newcommand{\markovarg}[1]{\stackrel{{#1},\mathsf{{}_{M}}}{\longrightarrow}}
\newcommand{\type}[1]{{\mathsf{type}}({#1})}
\newcommand{\fpro}[2]{\preceq_{#1}^{#2}}
\newcommand{\resolve}[2]{{\mathsf{dissolve}}({#1},{#2})}
\newcommand{\clutch}[2]{{\mathsf{C}}_{#2}({#1})}
\newcommand\MyLBrace[2]{%
  \left.\rule{0pt}{#1}\right\}{#2}}
\newcommand{\ucq}{\mathsf{UCQ}}
\newcommand{\dbgraph}[1]{{\mathcal{G}}({#1})}
\title{A Trichotomy in the Data Complexity of Certain Query Answering for Conjunctive Queries}
\date{}
\begin{document}

\author[1]{Paraschos Koutris}
\author[2]{Jef Wijsen}
\affil[1]{University of Washington, Seattle, USA}
\affil[2]{University of Mons, Belgium}


\maketitle


\begin{abstract}
A relational database is said to be uncertain if primary key constraints can possibly be violated. 
A repair (or possible world) of an uncertain database is obtained by selecting a maximal number of tuples without ever selecting two distinct tuples with the same primary key value.  
For any Boolean query $q$, $\cqa{q}$ is the problem that takes an uncertain database $\db$ on input, and asks whether $q$ is true in every repair of $\db$. 
The complexity of this problem has been particularly studied for $q$ ranging over the class of self-join-free Boolean conjunctive queries. 
A research challenge is to determine, given $q$, whether $\cqa{q}$ belongs to complexity classes $\FO$, $\P$, or $\coNP$-complete.  
In this paper, we combine existing techniques for studying the above complexity classification task. 
We show that for any self-join-free Boolean conjunctive query $q$, it can be decided whether or not $\cqa{q}$ is in $\FO$.
Further, for any self-join-free Boolean conjunctive query $q$, $\cqa{q}$ is either in $\P$ or $\coNP$-complete, and the complexity dichotomy is effective.
This settles a research question that has been open for ten years, since~\cite{FUXMAN2005}.
\end{abstract}

\section{Introduction}
\label{sec:intro}

Primary key violations provide an elementary means for capturing uncertainty in the relational data model.
A {\em block\/} is a maximal set of tuples of the same relation that agree on the primary key of the relation.
Tuples in the same block are mutually exclusive: exactly one tuple is true, but we are uncertain about which one.
We will refer to databases as ``uncertain databases" to stress that they can violate primary key constraints.

A {\em repair\/} (or possible world) of an uncertain database is obtained by selecting exactly one tuple from each block.
In general, the number of repairs of an uncertain database can be exponential in its size.
For instance, if an uncertain database contains $n$ blocks with two tuples each,
then it contains $2n$ tuples and has $2^{n}$ repairs.

There are two natural semantics for answering Boolean queries $q$ on an uncertain database. 
Under the {\em possibility semantics\/}, the question is whether the query evaluates to true on some repair.
Under the {\em certainty semantics\/}, which is adopted in this paper, the question is whether the query evaluates to true on every repair. 
The certainty semantics adheres to the paradigm of {\em consistent query answering\/}~\cite{DBLP:conf/pods/ArenasBC99,DBLP:series/synthesis/2011Bertossi}, 
which introduces the notion of database repairs with respect to general integrity constraints.
In this work, repairing is exclusively with respect to primary key constraints, one per relation.

For any Boolean query $q$, the decision problem $\cqa{q}$ is the following.
\begin{center}
\fbox{
\begin{tabular}{ll}
{\sc Problem:\/} & $\cqa{q}$\\
{\sc Input:\/} & uncertain database $\db$\\
{\sc Question:\/} & Does every repair of $\db$ satisfy $q$?  
\end{tabular}}
\end{center}
Three comments are in place.
First, the Boolean query $q$ is not part of the input.
Every Boolean query $q$ gives thus rise to a new problem.
Since the input to $\cqa{q}$ is an uncertain database, we consider the {\em data complexity\/} of the problem. 
Second, we will assume that every relation name in $q$ or $\db$ has a fixed known arity and primary key.
The primary key constraints are thus implicitly present in all problems.
Third, all the complexity results obtained in this paper can be carried over to non-Boolean queries;
the restriction to Boolean queries eases the technical treatment, but is not fundamental.

The complexity of $\cqa{q}$ has gained considerable research attention in recent years, especially for $q$ ranging over the set of self-join-free conjunctive queries. 
A challenging question is to distinguish queries $q$ for which the problem $\cqa{q}$ is tractable from queries for which the problem is intractable. 
Further, if $\cqa{q}$ is tractable, one may ask whether it is first-order expressible.
We will refer to these questions as the {\em complexity classification task of $\cqa{q}$\/}.

In the past decade, a variety of tools and techniques have been used in the complexity classification task of $\cqa{q}$ for self-join-free conjunctive queries $q$.
In their pioneering work, Fuxman and Miller~\cite{FUXMAN2005} introduced the notion of {\em join graph\/} (not to be confused with the classical notion of join tree).
Later on, Wijsen~\cite{DBLP:conf/pods/Wijsen10} introduced the notion of {\em attack graph\/}.
Kolaitis and Pema~\cite{DBLP:journals/ipl/KolaitisP12} applied Minty's algorithm~\cite{DBLP:journals/jct/Minty80} to the task.
Koutris and Suciu~\cite{DBLP:conf/icdt/KoutrisS14} introduced the notion of {\em query graph\/} and the distinction between consistent and possibly inconsistent relations.  
All these techniques have limited applicability:
join graphs seem too rudimentary to obtain general complexity dichotomies;
attack graphs enable to characterize first-order expressibility of $\cqa{q}$, but only for acyclic (in the sense of~\cite{DBLP:journals/jacm/BeeriFMY83}) queries $q$;
Minty's algorithm has been used to establish a $\P$-$\coNP$-complete dichotomy in the complexity of $\cqa{q}$, but only for queries $q$ with exactly two atoms;
the framework of Koutris and Suciu has also resulted in a $\P$-$\coNP$-complete dichotomy, but only when all primary keys consist of a single attribute.
On top of the limited applicability of each individual technique,
there is the difficulty that complexity classifications expressed in terms of different techniques cannot be easily compared.  

In this paper, we make significant progress in the complexity classification task of $\cqa{q}$ for $q$ ranging over the set of self-join-free conjunctive queries,
by establishing the following effective complexity trichotomy:
\begin{itemize}
\item
Given a self-join-free Boolean conjunctive query $q$,
it is decidable whether $\cqa{q}$ is in $\FO$.
In~\cite{DBLP:conf/pods/Wijsen10}, this was only shown  under the assumption that queries are acyclic (in the sense of~\cite{DBLP:journals/jacm/BeeriFMY83}).
\item
Given a self-join-free Boolean conjunctive query $q$,
if $\cqa{q}$ is not in $\FO$, then it is $\L$-hard.
In previous works~\cite{DBLP:conf/pods/Wijsen10,DBLP:journals/tods/Wijsen12}, Hanf locality was used to show first-order inexpressibility, resulting in involved proofs.
The current paper takes a complexity-theoretic approach to first-order inexpressibility, which results in an easier proof of a stronger result.
\item
For every self-join-free Boolean conjunctive query,
$\cqa{q}$ is either in $\P$ or $\coNP$-complete, and the dichotomy is effective.
In~\cite{DBLP:conf/icdt/KoutrisS14}, this was only shown under the assumption that all primary keys are simple (i.e., consist of a single attribute).
\end{itemize}
The established complexity trichotomy solves a problem that has been open since 2005~\cite{FUXMAN2005}.

\paragraph{Organization}
This paper is organized as follows.
Section~\ref{sec:relatedwork} discusses related work.
Section~\ref{sec:preliminaries} introduces our data and query model.
Section~\ref{sec:attackgraph} defines attack graphs for Boolean conjunctive queries, extending an older notion of attack graph~\cite{DBLP:journals/tods/Wijsen12} that was defined exclusively for acyclic Boolean conjunctive queries. 
The section also states the main result of the paper, Theorem~\ref{thm:main}.
Section~\ref{sec:firstorder} establishes an effective procedure that takes in a self-join-free Boolean conjunctive query $q$,
and decides whether $\cqa{q}$ is in $\FO$.
Section~\ref{sec:intractability} provides a sufficient condition for $\coNP$-hardness of $\cqa{q}$, for any self-join-free Boolean conjunctive query $q$.
Section~\ref{sec:road} shows that if the condition is not satisfied, then $\cqa{q}$ is in $\P$.
The appendix contains the proofs of some non-trivial results.

\section{Related Work}
\label{sec:relatedwork}

Consistent query answering (CQA) goes back to the seminal work by Arenas, Bertossi, and Chomicki~\cite{DBLP:conf/pods/ArenasBC99}.
Fuxman and Miller~\cite{FUXMAN2005} were the first ones to focus on CQA under the restrictions that
consistency is only with respect to primary keys and that queries are self-join-free conjunctive.
The term $\cqa{q}$ was coined in~\cite{DBLP:conf/pods/Wijsen10}. 
A recent and comprehensive survey on $\cqa{q}$ is~\cite{DBLP:conf/foiks/Wijsen14}.

Little is known about $\cqa{q}$ beyond self-join-free conjunctive queries.
An interesting recent result by Fontaine~\cite{DBLP:conf/lics/Fontaine13} goes as follows.
Let $\ucq$ be the class of Boolean first-order queries that can be expressed as disjunctions of Boolean conjunctive queries (possibly with constants and self-joins).
A daring conjecture is that for every query $q$ in $\ucq$, $\cqa{q}$ is either in $\P$ or $\coNP$-complete.
Fontaine showed that this conjecture implies Bulatov's dichotomy theorem for conservative CSP~\cite{DBLP:journals/tocl/Bulatov11},
the proof of which is highly involved (the full paper contains 66~pages).

\section{Preliminaries}
\label{sec:preliminaries}

We assume disjoint sets of {\em variables\/} and {\em constants\/}.
If $\vec{x}$ is a sequence containing variables and constants, then $\sequencevars{\vec{x}}$ denotes the set of variables that occur in $\vec{x}$.
A {\em valuation\/} over a set $U$ of variables is a total mapping $\theta$ from $U$ to the set of constants.
At several places, it is implicitly understood that such a valuation $\theta$ is extended to be the identity on constants and on variables not in $U$.
If $V \subseteq U$, then $\theta[V]$ denotes the restriction of $\theta$ to $V$.

If $\theta$ is a valuation over a set $U$ of variables, $x$ is a variable, and $a$ is a constant,
then $\substitute{\theta}{x}{a}$ is the valuation over $U\cup\{x\}$ such that
$\substitute{\theta}{x}{a}(x)=a$ and for every variable $y$ such that $y\neq x$, $\substitute{\theta}{x}{a}(y)=\theta(y)$.
Notice that $x\in U$ is allowed.

\paragraph{Atoms and key-equal facts}
Each {\em relation name\/} $R$ of arity $n$, $n\geq 1$, has a unique {\em primary key\/} which is a set $\{1,2,\dots,k\}$ where $1\leq k\leq n$.
We say that $R$ has {\em signature\/} $\signature{n}{k}$ if $R$ has arity $n$ and primary key $\{1,2,\dots,k\}$. 
We say that $R$ is {\em simple-key\/} if $k=1$.
Elements of the primary key are called {\em primary-key positions\/},
while $k+1$, $k+2$, \dots, $n$ are {\em non-primary-key positions\/}. 
For all positive integers $n,k$ such that $1\leq k\leq n$, we assume denumerably many relation names with signature $\signature{n}{k}$.

If $R$ is a relation name with signature $\signature{n}{k}$, then $R(s_{1},\dots,s_{n})$ is called an {\em $R$-atom\/} (or simply atom), where each $s_{i}$ is either a constant or a variable ($1\leq i\leq n$).
Such an atom is commonly written as $R(\underline{\vec{x}},\vec{y})$ where the primary key value $\vec{x}=s_{1},\dots,s_{k}$ is underlined and $\vec{y}=s_{k+1},\dots,s_{n}$.
An {\em $R$-fact\/} (or simply fact) is an $R$-atom in which no variable occurs.
Two facts $R_{1}(\underline{\vec{a}_{1}},\vec{b}_{1}),R_{2}(\underline{\vec{a}_{2}},\vec{b}_{2})$ are {\em key-equal\/} if $R_{1}=R_{2}$ and $\vec{a}_{1}=\vec{a}_{2}$.
An $R$-atom or an $R$-fact is {\em simple-key\/} if $R$ is simple-key.

We will use letters $F,G,H$ for atoms.
For an atom $F=R(\underline{\vec{x}},\vec{y})$, we denote by $\keyvars{F}$ the set of variables that occur in $\vec{x}$,
and by $\atomvars{F}$ the set of variables that occur in $F$, that is, $\keyvars{F}=\sequencevars{\vec{x}}$ and $\atomvars{F}=\sequencevars{\vec{x}}\cup\sequencevars{\vec{y}}$.

\paragraph{Uncertain database, blocks, and repairs}
A {\em database schema\/} is a finite set of relation names.
All constructs that follow are defined relative to a fixed database schema.

An {\em uncertain database\/} is a finite set $\db$ of facts using only the relation names of the schema.
We refer to databases as ``uncertain databases" to stress that such databases can violate primary key constraints.

We write $\adom{\db}$ for the active domain of $\db$ (i.e., the set of constants that occur in $\db$).
A {\em block\/} of $\db$ is a maximal set of key-equal facts of $\db$.
The term $R$-block refers to a block of $R$-facts, i.e., facts with relation name $R$.
If $A$ is a fact of $\db$, then $\theblock{A}{\db}$ denotes the block of $\db$ that contains $A$.
An uncertain database $\db$ is {\em consistent\/} if no two distinct facts are key-equal 
(i.e., if every block of $\db$ is a singleton).
A {\em repair\/} of $\db$ is a maximal (with respect to set containment) consistent subset of $\db$. 
We write $\repairs{\db}$ for the set of repairs of $\db$.

\paragraph{Boolean conjunctive queries}
A {\em Boolean query\/} is a mapping $q$ that associates a Boolean (true or false) to each uncertain database, such that $q$ is closed under isomorphism~\cite{DBLP:books/sp/Libkin04}.
We write $\db\models q$ to denote that $q$ associates true to $\db$, in which case $\db$ is said to {\em satisfy\/} $q$.
A {\em Boolean first-order query\/} is a Boolean query that can be defined in first-order logic.
A {\em Boolean conjunctive query\/} is a finite set 
$q=\{R_{1}(\underline{\vec{x}_{1}},\vec{y}_{1})$, $\dots$, $R_{n}(\underline{\vec{x}_{n}},\vec{y}_{n})\}$ of atoms.
We denote by $\queryvars{q}$ the set of variables that occur in $q$.
The set $q$ represents the first-order sentence 
$$\exists u_{1}\dotsm\exists u_{k}\left(R_{1}(\underline{\vec{x}_{1}},\vec{y}_{1})\land\dotsm\land R_{n}(\underline{\vec{x}_{n}},\vec{y}_{n})\right),$$ where $\{u_{1}, \dots, u_{k}\}=\queryvars{q}$.
This query $q$ is satisfied by uncertain database $\db$ if there exists a valuation $\theta$ over $\queryvars{q}$ such that for each $i\in\{1,\dots,n\}$, 
$R_{i}(\underline{\vec{a}},\vec{b})\in\db$ with $\vec{a}=\theta(\vec{x}_{i})$ and $\vec{b}=\theta(\vec{y}_{i})$.

We say that a Boolean conjunctive query $q$ has a {\em self-join\/} if some relation name occurs more than once in $q$.
If $q$ has no self-join, then it is called {\em self-join-free\/}.
By a little abuse of notation, we may confuse atoms with their relation names in a self-join-free Boolean conjunctive query $q$.
That is, if we use a relation name $R$ at places where an atom is expected,
then we mean the (unique) $R$-atom of $q$.

If $q$ is a Boolean conjunctive query, $\vec{x}=\tuple{x_{1},\dots,x_{\ell}}$ is a sequence of distinct variables that occur in $q$, and $\vec{a}=\tuple{a_{1},\dots,a_{\ell}}$ is a sequence of constants,
then $\substitute{q}{\vec{x}}{\vec{a}}$ denotes the query obtained from $q$ by replacing all occurrences of $x_{i}$ with $a_{i}$, for all $1\leq i\leq\ell$.

\paragraph{Typed uncertain databases}
For every variable $x$, we assume an infinite set of constants, denoted $\type{x}$, such that $x\neq y$ implies $\type{x}\cap\type{y}=\emptyset$. 
Let $q$ be 	a self-join-free Boolean conjunctive query, and let $\db$ be an uncertain database.
We say that $\db$ is {\em typed relative to $q$\/} if for every atom $R(x_{1},\dots,x_{n})$ in $q$,
for every $i\in\{1,\dots,n\}$, if $x_{i}$ is a variable,
then for every fact $R(a_{1},\dots,a_{n})$ in $\db$, $a_{i}\in\type{x_{i}}$ and the constant $a_{i}$ does not occur in $q$.
Significantly, since $q$ is self-join-free, the assumption that uncertain databases are typed is without loss of generality.

\paragraph{Purified uncertain databases}
Let $q$ be a Boolean conjunctive query, and let $\db$ be an uncertain database.
We say that a fact $A\in\db$ is {\em relevant for $q$ in $\db$\/} if for some valuation $\theta$ over $\queryvars{q}$,
$A\in\theta(q)\subseteq\db$.
We say that $\db$ is {\em purified relative to $q$\/} if every fact $A\in\db$ is relevant for $q$ in $\db$.

\paragraph{Frugal repairs}
For every uncertain database $\db$, Boolean conjunctive query $q$, and $X\subseteq\queryvars{q}$,  we define a preorder $\fpro{q}{X}$ on $\repairs{\db}$, as follows.
For every two repairs $\rep_{1},\rep_{2}$, we define $\rep_{1}\fpro{q}{X}\rep_{2}$ if for every valuation $\theta$ over $X$,
$\rep_{1}\models\theta(q)$ implies $\rep_{2}\models\theta(q)$.
Here, $\theta(q)$ is the query obtained from $q$ by replacing all occurrences of each $x\in X$ with $\theta(x)$;
variables not in $X$ remain unaffected (i.e., $\theta$ is understood to be the identity on variables not in $X$).
Clearly, $\fpro{q}{X}$ is a preorder (i.e., it is reflexive and transitive), and its minimal elements are 
called {\em $\fpro{q}{X}$-frugal repairs\/}.\footnote{$\rep_{1}$ is minimal if for all $\rep_{2}$, if $\rep_{2}\fpro{q}{X}\rep_{1}$ then $\rep_{1}\fpro{q}{X}\rep_{2}$.}

\paragraph{Functional dependencies}
Let $q$ be a Boolean conjunctive query.
A {\em functional dependency for $q$\/} is an expression $\fd{X}{Y}$ where $X,Y\subseteq\queryvars{q}$.
We say that an uncertain database $\db$ {\em satisfies $\fd{X}{Y}$ for $q$\/}, denoted $\db\vmodels{q}\fd{X}{Y}$,
if for all valuations $\theta,\mu$ over $\queryvars{q}$ such that $\theta(q),\mu(q)\subseteq\db$,
if $\theta[X]=\mu[X]$,
then $\theta[Y]=\mu[Y]$.

\begin{example}
The relation $R$ shown next does not satisfy the standard functional dependency $\fd{2}{3}$,
because its tuples agree on the second position, but disagree on the third position.
Nevertheless, for $q=\exists y\exists z R(a,y,z)$, we have $R\vmodels{q}\fd{y}{z}$.
The second tuple of $R$ is not relevant for the query, because $a$ and $d$ are distinct constants;
the relation $R'$ is purified relative to $q$.
\begin{align*}
\begin{array}{cc}
\begin{array}{c|ccc}
R & \underline{1} & 2 & 3\\\cline{2-4}
  & a & b & c\\
	& d & b & f
\end{array}	
&
\begin{array}{c|ccc}
R' & \underline{1} & 2 & 3\\\cline{2-4}
   & a & b & c
\end{array}	
\end{array}
\end{align*}
\end{example}


\paragraph{Certain query answering}

For every Boolean conjunctive query $q$, the decision problem $\cqa{q}$ takes on input an uncertain database $\db$, and asks whether $q$ is satisfied by every repair of $\db$.

It is easy to show the following upper bound on the complexity of $\cqa{q}$.

\begin{theorem}\label{the:coNP}
For every Boolean first-order query $q$, $\cqa{q}$ is in $\coNP$.
\end{theorem}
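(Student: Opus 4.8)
The plan is to show that the complement of $\cqa{q}$ belongs to $\NP$, from which membership in $\coNP$ follows immediately. The complement problem takes an uncertain database $\db$ on input and asks whether \emph{some} repair of $\db$ falsifies $q$, i.e., whether there exists $\rep\in\repairs{\db}$ with $\rep\not\models q$. The key observation is that a positive instance of this complement problem has a short certificate, namely a repair itself: since every repair is a subset of $\db$, its size is bounded by $\card{\db}$ and is thus polynomial in the size of the input.

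The nondeterministic algorithm proceeds by guess-and-check. First, guess a subset $\rep\subseteq\db$. Second, verify in polynomial time that $\rep$ is a genuine repair of $\db$; this amounts to checking that $\rep$ contains exactly one fact from each block of $\db$, which can be done by grouping the facts of $\db$ according to their relation name and primary-key value and confirming that $\rep$ selects precisely one representative from each block. Third, evaluate $q$ on $\rep$ and accept if and only if $\rep\not\models q$. Correctness is immediate: the algorithm has an accepting run on $\db$ if and only if $\db$ admits a repair that falsifies $q$, which is exactly a positive instance of the complement of $\cqa{q}$.

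For the running-time bound, the only nontrivial ingredient is the third step. Here I would invoke the fact that, since $q$ is a \emph{fixed} Boolean first-order query, the data complexity of evaluating $q$ on $\rep$ is polynomial; indeed, first-order query evaluation lies in $\P$ (and even in $\FO$ as a uniform circuit class), which is a standard result on the data complexity of first-order logic. Hence the whole procedure runs in nondeterministic polynomial time, establishing that the complement of $\cqa{q}$ is in $\NP$ and therefore that $\cqa{q}$ is in $\coNP$.

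As for obstacles, there is essentially none of substance: the result is a routine application of the guess-and-check paradigm, and the bound on certificate length is what makes it go through. The only point that deserves care is the second step, namely ensuring that what is guessed is really a repair (a maximal consistent subset) rather than an arbitrary subset of $\db$; this is handled by the per-block uniqueness check described above, exploiting that repairs are in one-to-one correspondence with the choice functions that select exactly one fact from each block.
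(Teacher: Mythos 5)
Your proof is correct and is exactly the standard guess-and-check argument that the paper has in mind; the paper states Theorem~\ref{the:coNP} without proof, calling it easy to show. Nothing is missing: the certificate is a repair (polynomial size), repair-hood is checkable in polynomial time block by block, and evaluating the fixed first-order query $q$ on the guessed repair is in $\P$ by the standard data-complexity bound for first-order logic.
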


The following two lemmas are useful in the study of the complexity of $\cqa{q}$.

\begin{lemma}[\cite{DBLP:conf/pods/Wijsen13}]\label{lem:purified}
Let $q$ be a Boolean conjunctive query.
Let $\db$ be an uncertain database.
It is possible to compute in polynomial time an uncertain database $\db'$ that is purified relative to $q$ such that
every repair of $\db$ satisfies $q$ if and only if every repair of $\db'$ satisfies $q$.
\end{lemma}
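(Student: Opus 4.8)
The plan is to obtain $\db'$ not by discarding individual irrelevant facts, but by iteratively discarding \emph{entire blocks} that contain at least one irrelevant fact. One must be careful here, because deleting only the irrelevant facts of a block can change the answer to $\cqa{q}$: if a block contains both a relevant and an irrelevant fact, the repair that selects the irrelevant fact may falsify $q$, so erasing only the irrelevant fact would spuriously turn a ``no''-instance into a ``yes''-instance. I would therefore define the following cleanup operator. Given a database $D\subseteq\db$, call a fact $A\in D$ \emph{$D$-relevant} if $A\in\theta(q)\subseteq D$ for some valuation $\theta$ over $\queryvars{q}$; if some block $\theblock{A}{D}$ of $D$ contains a fact that is not $D$-relevant, delete that entire block from $D$. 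Starting from $\db$, I would apply this operator until a fixpoint $\db'$ is reached. Since $\db'$ is a union of blocks each of whose facts is $\db'$-relevant, every fact of $\db'$ is relevant for $q$ in $\db'$, so $\db'$ is purified relative to $q$.

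The central step is to show that a single application of the operator preserves the property that every repair satisfies $q$. Concretely, I would prove that if $\block$ is a block of $D$ containing a fact $A_{\mathrm{irr}}$ that is not $D$-relevant, then every repair of $D$ satisfies $q$ if and only if every repair of $D\setminus\block$ satisfies $q$. For the direction from $D$ to $D\setminus\block$, take a repair $\rep^{-}$ of $D\setminus\block$ and extend it to the repair $\rep=\rep^{-}\cup\{A_{\mathrm{irr}}\}$ of $D$; by hypothesis $\rep\models q$ via some $\theta$ with $\theta(q)\subseteq\rep$, and since $A_{\mathrm{irr}}$ is not $D$-relevant it cannot belong to $\theta(q)$, whence $\theta(q)\subseteq\rep^{-}$ and $\rep^{-}\models q$. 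For the converse direction, take any repair $\rep$ of $D$ and delete the single fact it selects from $\block$; the result is a repair of $D\setminus\block$ contained in $\rep$, so if it satisfies $q$ then $\rep$ does too, because a satisfying valuation's image is preserved under taking supersets. Chaining these equivalences over the successive deletions yields that every repair of $\db$ satisfies $q$ if and only if every repair of $\db'$ does.

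For the running time, I would note that each iteration deletes at least one fact, so there are at most $\card{\db}$ iterations; and within a single iteration, testing $D$-relevance of a fixed fact $A$ amounts to checking whether the partial valuation forced by matching $A$ against an atom of $q$ extends to a valuation $\theta$ with $\theta(q)\subseteq D$, which is conjunctive query evaluation and is therefore computable in time polynomial in $\card{D}$ for the fixed query $q$. Hence the whole construction runs in polynomial time.

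The hard part is conceptual rather than computational: it is the realization that block-removal, not fact-removal, is the correct operation, together with the observation that makes the equivalence work, namely that an irrelevant fact never occurs in any homomorphic image $\theta(q)\subseteq D$ and can thus be inserted into a repair without creating a new satisfying match. Getting this block-removal step right, and verifying that the fixpoint is genuinely purified (every surviving fact being relevant \emph{in $\db'$}, not merely in $\db$), is where the real content lies; the polynomial-time bound and the subset direction of the equivalence are routine.
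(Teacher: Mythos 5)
The paper does not prove this lemma itself --- it is imported by citation from the earlier work \cite{DBLP:conf/pods/Wijsen13} --- so there is no in-paper proof to compare against; your argument is correct and is essentially the standard one. The three points that carry the content are all handled properly: deleting an entire block that contains a fact not relevant in the current database preserves the answer in both directions (the inserted irrelevant fact cannot appear in any homomorphic image, and satisfaction of a conjunctive query is monotone under supersets), iteration to a fixpoint is genuinely needed because removing one block can render facts of other blocks irrelevant, and the fixpoint is purified with respect to itself; the polynomial bound is also right.
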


\begin{lemma}\label{lem:frugal}
Let $q$ be a self-join-free Boolean conjunctive query, and $X\subseteq\queryvars{q}$.
Let $\db$  be an uncertain database.
Then, every repair of $\db$ satisfies $q$ if and only if every $\fpro{q}{X}$-frugal repair of $\db$ satisfies $q$.
\end{lemma}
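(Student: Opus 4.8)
The plan is to prove the two implications separately, reducing everything to one structural fact about the preorder $\fpro{q}{X}$. First I would record the bridge between satisfaction of $q$ and the instantiated queries $\theta(q)$ that define the preorder: for any repair $\rep$ of $\db$, we have $\rep\models q$ if and only if $\rep\models\theta(q)$ for some valuation $\theta$ over $X$. This holds because a valuation over $\queryvars{q}$ witnessing $\rep\models q$ restricts on $X$ to such a $\theta$, while conversely $\theta(q)$ merely re-binds the variables of $X$ to constants and leaves the remaining variables existentially quantified, so any witness for $\rep\models\theta(q)$ is also a witness for $\rep\models q$.

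The forward implication is immediate: every $\fpro{q}{X}$-frugal repair is by definition a repair of $\db$, so if every repair of $\db$ satisfies $q$, then so does every frugal repair. For the converse I would argue by contraposition. Suppose some repair $\rep$ of $\db$ does not satisfy $q$; I claim $\rep$ is itself $\fpro{q}{X}$-frugal, which suffices. By the bridge above, $\rep\not\models q$ means that $\rep\models\theta(q)$ fails for every valuation $\theta$ over $X$. Hence, for an arbitrary repair $\rep'$ of $\db$ and every $\theta$, the implication ``$\rep\models\theta(q)$ implies $\rep'\models\theta(q)$'' holds vacuously, so $\rep\fpro{q}{X}\rep'$. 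Thus $\rep$ is a least element of $\fpro{q}{X}$, in particular minimal, i.e.\ $\fpro{q}{X}$-frugal. Since $\rep$ is then a frugal repair failing $q$, the contrapositive is established and the lemma follows.

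There is no genuine obstacle here; the only points that demand care are the orientation of the preorder (a smaller repair satisfies fewer instantiations $\theta(q)$, so the minimal repairs are those satisfying a set-minimal family of instantiations) and the vacuous-implication step that makes a failing repair least. It is worth remarking that the argument uses neither self-join-freeness of $q$ nor the finiteness of $\repairs{\db}$. Had one instead only wished to descend from $\rep$ to some frugal repair below it --- which does require finiteness --- one would combine this with the monotonicity ``$\rep_1\fpro{q}{X}\rep_2$ implies ($\rep_1\models q$ implies $\rep_2\models q$)'' that follows directly from the bridge; but the observation that $\rep$ is already frugal renders this detour unnecessary.
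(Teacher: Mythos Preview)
Your argument is correct. The paper states this lemma in the preliminaries without proof, presumably regarding it as elementary; your proof is precisely the natural one, and your remark that neither self-join-freeness nor finiteness of $\repairs{\db}$ is needed is accurate.
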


\section{Attack Graphs}\label{sec:attackgraph}

Attack graphs were introduced in~\cite{DBLP:conf/pods/Wijsen10} for studying first-order expressibility of $\cqa{q}$ for acyclic (in the sense of~\cite{DBLP:journals/jacm/BeeriFMY83}) self-join-free conjunctive queries $q$.
Here, we extend the notion of attack graph to all (cyclic or acyclic) self-join-free conjunctive queries. 

Let $q$ be a self-join-free Boolean conjunctive query.
We define $\FD{q}$ as the following set of functional dependencies:
$$
\FD{q}\defeq\{\fd{\keyvars{F}}{\atomvars{F}}\mid F\in q\}
$$
For every atom $F\in q$, we define $\keycl{F}{q}$ and $\keyclsup{F}{q}$ as the following sets of variables.
\begin{eqnarray*}
\keycl{F}{q} & \defeq & \{x\in\queryvars{q}\mid\FD{q\setminus\{F\}}\models\fd{\keyvars{F}}{x}\}\\
\keyclsup{F}{q} & \defeq & \{x\in\queryvars{q}\mid\FD{q}\models\fd{\keyvars{F}}{x}\}
\end{eqnarray*}

The {\em attack graph of $q$\/} is a directed graph whose vertices are the atoms of $q$.
There is a directed edge from $F$ to $G$ ($F\neq G$) if there exists a sequence 
\begin{equation}\label{eq:defwitness}
F_{0}, F_{1}, \dots, F_{n}
\end{equation}
of (not necessarily distinct) atoms of $q$ such that 
\begin{itemize}
\item
$F_{0}=F$ and $F_{n}=G$; and
\item
for all $i\in\{0,\dots,n-1\}$, $\atomvars{F_{i}}\cap\atomvars{F_{i+1}}\nsubseteq\keycl{F}{q}$.
\end{itemize}
A directed edge from $F$ to $G$ in the attack graph of $q$ is also called an {\em attack from $F$ to $G$\/},  denoted by $F\attacks{q}G$. 
The sequence (\ref{eq:defwitness}) is called a {\em witness\/} for the attack $F\attacks{q}G$.
We will often add variables to a witness:
if we write 
$F_{0}\step{z_{1}}F_{1}\step{z_{2}}F_{2}\dots\step{z_{n}}F_{n}$,
then it is understood that for $i\in\{1,\dots,n\}$, 
$z_{i}\in\atomvars{F_{i-1}}\cap\atomvars{F_{i}}$ and $z_{i}\not\in\keycl{F_{0}}{q}$.
If $F\attacks{q}G$, then we also say that $F$ {\em attacks\/} $G$ (or that $G$ is attacked by $F$).

An attack from $F$ to $G$ is called {\em weak\/} if $\FD{q}\models\fd{\keyvars{F}}{\keyvars{G}}$; 
otherwise it is {\em strong\/}.
A directed cycle in the attack graph of of $q$ is called {\em weak\/} if all attacks in the cycle are weak;
otherwise the cycle is called {\em strong\/}.

\begin{figure}\centering
\includegraphics[scale=1.0]{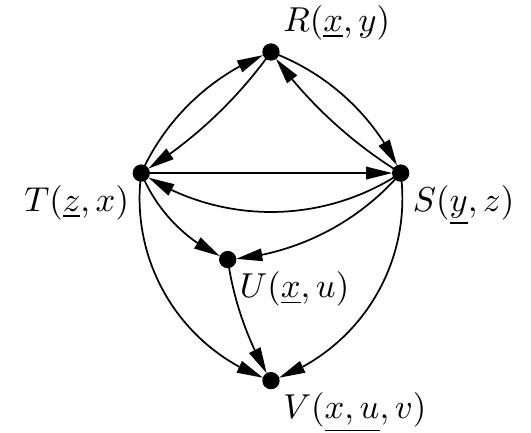}
\caption{Attack graph of the query in Example \ref{ex:ag}.}\label{fig:ag}
\end{figure}

\begin{example}\label{ex:ag}
Let $q=\{R(\underline{x},y)$, $S(\underline{y},z)$, $T(\underline{z},x)$, $U(\underline{x},u)$, $V(\underline{x,u},v)\}$.
By a little abuse of notation, we denote each atom by its relation name (e.g., $R$ is used to denote the atom $R(\underline{x},y)$).
We have $\keycl{R}{q}=\{x,u,v\}$.
A witness for $R\attacks{q}T$ is $R\step{y}S\step{z}T$.
The complete attack graph is shown in Fig.~\ref{fig:ag}.
All attacks are weak.
\end{example}

The above notion of attack graph is purely syntactic.
Semantically, an attack from an $R$-atom to an $S$-atom  in the attack graph of $q$ means
that there exists an uncertain database $\db$ such that every repair of $\db$ satisfies $q$,
and such that two $R$-facts of a same $R$-block join exclusively with two $S$-facts belonging to distinct $S$-blocks.
For the query of Example~\ref{ex:ag}, such a database could be $\db=\{R(\underline{1},a)$, $R(\underline{1},b)$, $S(\underline{a},\alpha)$,  $S(\underline{b},\beta),\dots\}$,
in which the two $R$-facts belong to the same $R$-block, and
$R(\underline{1},a)$ joins exclusively with $S(\underline{a},\alpha)$, and
$R(\underline{1},b)$ joins exclusively with $S(\underline{b},\beta)$,
and the two $S$-facts belong to distinct $S$-blocks.
Therefore, the attack graph of Fig.~\ref{fig:ag} contains a directed edge from the $R$-atom to the $S$-atom.

Equipped with the notion of attack graph, we can now present the effective
complexity trichotomy in the set \linebreak
$\{\cqa{q}\mid q$ is a self-join-free Boolean conjunctive query$\}$.

\begin{theorem}[Trichotomy Theorem]
\label{thm:main} 
Let $q$ be a self-join-free Boolean conjunctive query. 
\begin{enumerate}
\item If the attack graph of $q$ is acyclic, then $\cqa{q}$ is in $\FO$.
\item If the attack graph of $q$ is cyclic but contains no strong cycle, then $\cqa{q}$ is in $\P$ and is $\L$-hard.
\item If the attack graph of $q$ contains a strong cycle, then $\cqa{q}$ is $\coNP$-complete.
\end{enumerate}
\end{theorem}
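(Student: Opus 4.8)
The plan is to treat the three cases separately after recording two observations common to all of them. First, by Theorem~\ref{the:coNP} we have $\cqa{q}\in\coNP$ in every case, so only matching lower bounds and the $\FO$/$\P$ upper bounds remain to be established. Second, the case distinction is itself effective: the attack graph is defined through the functional-dependency closures $\keycl{F}{q}$ and $\keyclsup{F}{q}$, which are computable in polynomial time in $\card{q}$, so one can decide in polynomial time which of the three cases applies (acyclic; cyclic with only weak cycles; containing a strong cycle). By Lemma~\ref{lem:purified} I may assume throughout that the input database is purified relative to $q$.

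For case~1 (acyclic attack graph $\Rightarrow\FO$) I would argue by induction on $\card{q}$, the single-atom base case being immediate. For the inductive step, acyclicity guarantees an atom $F$ with no incoming attack. The crucial step is a first-order rewriting: because $F$ is unattacked, I claim the certain answer to $q$ is equivalent to an $\FO$ formula that existentially quantifies the key of $F$, asserts that the corresponding $F$-block is nonempty, and requires that each fact of that block forces $\cqa{q'}$, where $q'$ is $q$ with $F$ deleted and $\keyvars{F}$ frozen. Lemma~\ref{lem:frugal}, instantiated with $X=\keyvars{F}$, is the engine justifying that conditioning on a single block loses no generality. I would then show that deleting an unattacked atom preserves acyclicity of the attack graph, so the induction hypothesis yields $\cqa{q'}\in\FO$; substituting that formula into the rewriting gives $\cqa{q}\in\FO$.

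For case~3 (strong cycle $\Rightarrow\coNP$-complete), membership is Theorem~\ref{the:coNP}, so the work is $\coNP$-hardness. I would fix a strong cycle, which by definition contains an attack $F\attacks{q}G$ with $\FD{q}\not\models\fd{\keyvars{F}}{\keyvars{G}}$, and use this unsynchronised attack to build a polynomial-time reduction from the complement of a known $\NP$-complete problem (a suitable monotone satisfiability or CSP instance). The strong attack supplies two key values of $G$ that a single $F$-block can reach while those values lie in distinct $G$-blocks; chaining this around the cycle lets a repair encode a binary choice per variable, so that some repair \emph{falsifies} $q$ exactly when the encoded instance is satisfiable. The technical heart of this case is making the gadgets work for arbitrary \emph{compound} primary keys, extending a construction previously available only for simple keys.

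For case~2 (cyclic, no strong cycle $\Rightarrow\P$ and $\L$-hard) there are two tasks. For the $\P$ upper bound I would give a polynomial-time algorithm generalising the consistent/possibly-inconsistent framework from simple to compound keys: the absence of strong cycles means each cyclic attack $F\attacks{q}G$ satisfies $\FD{q}\models\fd{\keyvars{F}}{\keyvars{G}}$, and this synchronisation of key values around cycles is what blocks the combinatorial explosion, letting the certain answer be computed by a fixpoint/saturation over blocks (restricting to frugal repairs via Lemma~\ref{lem:frugal}) rather than by enumerating repairs. For $\L$-hardness I would take the complexity-theoretic route advertised in the introduction: whenever the attack graph is cyclic, reduce an $\L$-complete problem (for instance undirected $st$-connectivity) to $\cqa{q}$ under first-order reductions. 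Since every $\FO$-expressible property lies in $\mathsf{AC}^{0}\subsetneq\L$, this simultaneously gives $\L$-hardness and reproves $\cqa{q}\notin\FO$ uniformly for the whole cyclic case. I expect the $\P$ upper bound for compound keys to be the principal obstacle of the entire theorem, with the compound-key $\coNP$-hardness gadget of case~3 a close second.
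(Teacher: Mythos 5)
Your treatment of case~1 and of the $\L$-hardness in case~2 matches the paper's route: induction on $\card{q}$ peeling off an unattacked atom via frugal repairs with $X=\keyvars{F}$ (this is exactly Lemma~\ref{lem:inFO} plus Lemma~\ref{lem:firstorder}), and a complexity-theoretic first-order reduction for inexpressibility (the paper factors it as UFA $\leq$ $\cqa{q_{\fuxman}}$ $\leq$ $\cqa{q}$ for the two-atom cycle query $q_{\fuxman}$, rather than reducing an $\L$-complete problem to arbitrary cyclic $q$ in one shot, but the idea is the same). One caution on case~1: when you delete the unattacked atom $F$ you must freeze \emph{all} of $\atomvars{F}$, not only $\keyvars{F}$; deleting an atom shrinks the closures $\keycl{G}{\cdot}$ of the remaining atoms and can in principle create new attacks, which is why the paper substitutes constants for every variable of $F$ first (Lemma~\ref{lem:nonewcycles}) and only then discards the now-ground atom.

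The genuine gap is the $\P$ upper bound, which you correctly identify as the principal obstacle but for which your plan --- ``a fixpoint/saturation over blocks'' exploiting the key synchronisation $\FD{q}\models\fd{\keyvars{F}}{\keyvars{G}}$ around weak cycles --- is not an algorithm and does not survive contact with the actual difficulty. The paper's proof is not a single fixpoint computation; it is a recursive sequence of \emph{query rewritings} that strictly decreases the number of possibly-inconsistent atoms: normalise the query (Lemma~\ref{lem:mostsimplified}), \emph{saturate} it by adjoining consistent atoms $T^{c}(\underline{x},z)$ witnessing functional dependencies that hold in frugal repairs (Lemma~\ref{lem:reduction}); then either an unattacked mode-$i$ atom exists and is eliminated as in case~1, or every initial strong component of the attack graph is nontrivial, in which case one must prove that a \emph{premier Markov cycle} exists (Lemma~\ref{lem:markov_cycle} --- false without saturation, cf.\ Example~\ref{ex:markov_path}) and ``dissolve'' it into a single fresh atom $T(\underline{u},x_{0},\dots,x_{k-1},\vec{y})$ plus consistent $U_{i}^{c}$-atoms (Lemma~\ref{lem:soluble}), checking separately that dissolution of a premier cycle preserves the absence of strong attack cycles (Lemma~\ref{lem:nonewstrongcycles}). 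The dissolution step itself requires the gblock/gpurification machinery and a case analysis of cycles in a $k$-partite graph over the database. None of this is derivable from the synchronisation observation alone, so as it stands case~2's upper bound is asserted rather than proved. A smaller but real gap of the same kind affects case~3: rather than building SAT gadgets from scratch for an arbitrary query with compound keys, the paper reduces from the known $\coNP$-hard $\cqa{q_{\pema}}$ with $q_{\pema}=\{R_{\pema}(\underline{x},y),S_{\pema}(\underline{y,z},x)\}$, and the entire technical content is the six-way valuation $\trival{a}{b}{c}$ defined on the partition of $\queryvars{q}$ by $\keycl{F}{q}$, $\keyclsup{F}{q}$, and $\keycl{G}{q}$, whose purpose is to neutralise every atom $H\in q\setminus\{F,G\}$ (Sublemma~\ref{lem:notFnotGstrong}); your sketch says nothing about how those remaining atoms are kept consistent, which is precisely where a naive gadget construction fails.
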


The rest of the paper presents the proof of Theorem~\ref{thm:main}.
We first present some properties of attack graphs that will be useful in subsequent sections.

\begin{lemma}\label{lem:trans}
Let $q$ be a self-join-free Boolean conjunctive query.
If $F\attacks{q}G$ and $G\attacks{q}H$,
then either $F\attacks{q}H$ or $G\attacks{q}F$ (or both).
\end{lemma}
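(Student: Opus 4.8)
The plan is to prove the contrapositive form of a transitivity-like property by carefully analyzing witnesses. Suppose $F \attacks{q} G$ and $G \attacks{q} H$, and assume toward contradiction that neither $F \attacks{q} H$ nor $G \attacks{q} F$. The attack $F \attacks{q} G$ is witnessed by a sequence of atoms in which consecutive atoms share a variable outside $\keycl{F}{q}$, and similarly $G \attacks{q} H$ is witnessed by a sequence with shared variables outside $\keycl{G}{q}$.

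First I would try to concatenate the two witnesses. The natural candidate for a witness of $F \attacks{q} H$ is the sequence $F = F_0, \dots, F_m = G = G_0, \dots, G_n = H$ obtained by gluing the witness for $F \attacks{q} G$ onto the witness for $G \attacks{q} H$. For the first half this already works, since each consecutive overlap avoids $\keycl{F}{q}$ by hypothesis. The obstacle is the second half: each consecutive overlap in the $G$-to-$H$ witness is only guaranteed to avoid $\keycl{G}{q}$, not $\keycl{F}{q}$. So the glued sequence is a valid witness for $F \attacks{q} H$ precisely when every shared variable along the $G$-to-$H$ part also lies outside $\keycl{F}{q}$.

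Hence the crux is to show that if this gluing fails — i.e.\ some overlap variable $z$ in the $G$-to-$H$ witness lies in $\keycl{F}{q}$ — then we can instead derive $G \attacks{q} F$. The key tool here should be a relationship between the key closures $\keycl{F}{q}$ and $\keycl{G}{q}$: a variable that is in $\keycl{F}{q}$ but was allowed as an overlap (hence not in $\keycl{G}{q}$) signals that $G$'s key fails to functionally determine something that $F$'s key does. I would aim to show that such a discrepancy produces a witness running backward from $G$ to $F$: intuitively, if $z \in \keycl{F}{q} \setminus \keycl{G}{q}$ appears as a live overlap on a path reaching $G$, then one can splice together the portion of the $G$-to-$H$ witness up to $z$ together with functional-dependency reasoning to reach an atom of the $F$-to-$G$ witness, exhibiting a chain from $G$ back to $F$ whose overlaps avoid $\keycl{G}{q}$.

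The main obstacle I anticipate is exactly this last step: turning the failure of the naive concatenation into a concrete backward witness $G \attacks{q} F$. This requires a clean lemma about how key closures interact along attack paths (roughly, that variables reachable by an attack from $F$ but not in $\keycl{F}{q}$ behave predictably under $\FD{q\setminus\{F\}}$), and careful bookkeeping of which closure each overlap must avoid. I would isolate the needed closure fact as a preliminary claim, then do a case split on whether the glued sequence is already a valid $F$-to-$H$ witness; the hard case being handled by the backward-witness construction sketched above.
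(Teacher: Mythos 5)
Your overall strategy coincides with the paper's: concatenate the two witnesses, observe that if the result is not a witness for $F\attacks{q}H$ then the failure must occur at some overlap $\atomvars{G_j}\cap\atomvars{G_{j+1}}$ of the $G$-to-$H$ part, and extract from that overlap a variable $x\in\keycl{F}{q}\setminus\keycl{G}{q}$ (it lies in $\keycl{F}{q}$ because the whole failing overlap does, and outside $\keycl{G}{q}$ because the overlap is a legal step of the $G$-to-$H$ witness). Up to this point the proposal is correct and is exactly the paper's argument.

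The gap is the final step, which is where all the substance of this lemma lives, and your sketch of it is off on two counts. First, the preliminary closure fact you say you would isolate is misstated: what is needed concerns variables that \emph{are} in $\keycl{F}{q}$ (but outside $\keycl{G}{q}$), not ``variables reachable by an attack from $F$ but not in $\keycl{F}{q}$.'' Second, splicing back into ``an atom of the $F$-to-$G$ witness'' cannot yield a witness for $G\attacks{q}F$: the overlaps of that witness are only guaranteed to avoid $\keycl{F}{q}$, whereas a witness for $G\attacks{q}F$ needs every overlap to avoid $\keycl{G}{q}$, and neither condition implies the other. The statement you actually need (the paper's Lemma~\ref{lem:goback}) is: if $F\attacks{q}G$ and $x\in\keycl{F}{q}\setminus\keycl{G}{q}$, then there is a sequence of atoms starting at $F$ and ending at an atom containing $x$, all of whose consecutive overlaps avoid $\keycl{G}{q}$. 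It is proved by induction on the stage at which $x$ enters the sequential computation of $\keycl{F}{q}$: if $x$ is introduced by an atom $H_i$ whose key is already derived, then $H_i\neq G$ (the hypothesis $F\attacks{q}G$ forbids $G$ from participating in that derivation, since otherwise $\atomvars{G}\subseteq\keycl{F}{q}$ and no witness for $F\attacks{q}G$ could end at $G$), hence $\keyvars{H_i}\nsubseteq\keycl{G}{q}$ --- otherwise $x\in\keycl{G}{q}$ --- so one picks $y\in\keyvars{H_i}\setminus\keycl{G}{q}$ of smaller depth, recurses, and appends $H_i$. Reversing this sequence and prefixing it with $G_0,\dots,G_j$ (the two parts meet at the atom containing $x$, sharing $x\notin\keycl{G}{q}$) gives the witness for $G\attacks{q}F$. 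Without this lemma, or with the variant you sketched, the argument does not close.
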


\begin{lemma}\label{lem:cycletwo}
Let $q$ be a self-join-free Boolean conjunctive query.
\begin{enumerate}
\item
If the attack graph of $q$ contains a cycle,
then it contains a cycle of size two.
\item
If the attack graph of $q$ contains a strong cycle,
then it contains a strong cycle of size two.
\end{enumerate}
\end{lemma}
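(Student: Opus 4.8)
The plan is to prove both parts by the same device: take a directed cycle of \emph{minimal length}, and use Lemma~\ref{lem:trans} on two consecutive attacks to either expose a two-cycle immediately or produce a strictly shorter cycle, contradicting minimality. For part~1 this is unconditional; for part~2 the subtlety is that the shorter cycle must be kept \emph{strong}, which is where the real work lies.

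For part~1, I would fix a directed cycle $F_{1}\attacks{q}F_{2}\attacks{q}\dots\attacks{q}F_{m}\attacks{q}F_{1}$ of minimal length $m$ (so its vertices are pairwise distinct) and show $m=2$. Suppose $m\geq 3$. Applying Lemma~\ref{lem:trans} to $F_{1}\attacks{q}F_{2}$ and $F_{2}\attacks{q}F_{3}$ yields either $F_{2}\attacks{q}F_{1}$ or $F_{1}\attacks{q}F_{3}$. In the first case $F_{1}\attacks{q}F_{2}\attacks{q}F_{1}$ is a cycle of size two, contradicting $m\geq 3$. In the second case $F_{1}\attacks{q}F_{3}\attacks{q}F_{4}\attacks{q}\dots\attacks{q}F_{m}\attacks{q}F_{1}$ is a directed cycle of length $m-1\geq 2$, again contradicting minimality. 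Hence $m=2$.

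For part~2, I would refine this argument by tracking the weak/strong labelling. Take a strong cycle of minimal length $m$, and by rotating assume its first edge $F_{1}\attacks{q}F_{2}$ is strong, i.e.\ $\FD{q}\not\models\fd{\keyvars{F_{1}}}{\keyvars{F_{2}}}$; I aim to show $m=2$. Suppose $m\geq 3$ and apply Lemma~\ref{lem:trans} to $F_{1}\attacks{q}F_{2}\attacks{q}F_{3}$. If the \emph{back edge} $F_{2}\attacks{q}F_{1}$ appears, then $F_{1}\attacks{q}F_{2}\attacks{q}F_{1}$ is a two-cycle containing the strong attack $F_{1}\attacks{q}F_{2}$, hence a strong cycle of length $2<m$, contradicting minimality. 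If instead the \emph{shortcut} $F_{1}\attacks{q}F_{3}$ appears, then $F_{1}\attacks{q}F_{3}\attacks{q}F_{4}\attacks{q}\dots\attacks{q}F_{m}\attacks{q}F_{1}$ has length $m-1$, and provided this shorter cycle is still strong we again contradict minimality; so the only way to avoid a contradiction is to reduce to $m=2$.

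The crux, and the step I expect to be the main obstacle, is thus to show that the shortcut \emph{inherits} strength: if $F\attacks{q}G$ is strong, $G\attacks{q}H$, and $G\nattacks{q}F$ (so that Lemma~\ref{lem:trans} forces $F\attacks{q}H$), then $F\attacks{q}H$ is again strong, i.e.\ $\FD{q}\not\models\fd{\keyvars{F}}{\keyvars{H}}$. I would attempt this by contradiction, assuming $\keyvars{H}\subseteq\keyclsup{F}{q}$ and using the witness $G\step{z_{1}}\dots\step{z_{n}}H$ for $G\attacks{q}H$ together with the hypothesis $G\nattacks{q}F$ to propagate key-determination back along the witness and conclude $\keyvars{G}\subseteq\keyclsup{F}{q}$, contradicting strength of $F\attacks{q}G$. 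The delicate point is precisely that an attack is defined through the closure $\keycl{\cdot}{q}$, which omits each atom's own functional dependency, whereas strength is measured through the full closure $\keyclsup{\cdot}{q}$; reconciling these two closures along the witness, and exploiting $G\nattacks{q}F$ to rule out the stray variables that could otherwise block the propagation, is where the genuine effort resides. Granting this strength-preservation property, the minimal strong cycle must have length two, which completes part~2.
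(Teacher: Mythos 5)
Your part~1 and the overall ``minimal cycle plus Lemma~\ref{lem:trans}'' strategy are fine, but part~2 has a genuine gap exactly where you place the crux: the strength-inheritance claim (if $F\attacks{q}G$ is strong, $G\attacks{q}H$, and $G\nattacks{q}F$, then $F\attacks{q}H$ is strong) is left unproven, and the plan you sketch for it does not work. You propose to assume $\keyvars{H}\subseteq\keyclsup{F}{q}$ and ``propagate key-determination back along the witness'' of $G\attacks{q}H$ to conclude $\keyvars{G}\subseteq\keyclsup{F}{q}$; but determination by $\keyvars{F}$ does not propagate backwards along shared variables --- that $\keyvars{H}$ is determined by $\keyvars{F}$ and that $H$ shares a variable with the preceding atom of the witness says nothing about that atom's key being determined by $\keyvars{F}$. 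The property you need is in fact true, but its proof runs in the opposite direction and needs real machinery: since $F\attacks{q}G$ is strong, $G$'s own functional dependency is never triggered when closing $\keyvars{F}$ under $\FD{q}$, so $\FD{q\setminus\{G\}}\models\fd{\keyvars{F}}{\keyvars{H}}$; moreover $G\attacks{q}H$ forces $\keyvars{H}\nsubseteq\keycl{G}{q}$; a trace-back in the style of Lemma~\ref{lem:witness} then yields a path from $F$ to an atom containing a key variable of $H$ that avoids $\keycl{G}{q}$, and splicing it onto the witness of $G\attacks{q}H$ gives $G\attacks{q}F$, contradicting $G\nattacks{q}F$. None of this is in your proposal, and Lemma~\ref{lem:witness} is not among the tools you invoke.

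The paper avoids the issue entirely by a different case split. With the strong edge $H_0\attacks{q}H_1$ fixed, it first tries the shortcut $H_1\attacks{q}H_3$, which shortens the cycle while \emph{keeping} the strong edge $H_0\attacks{q}H_1$ inside it, so no inheritance is needed. Only when $H_1\nattacks{q}H_3$ does Lemma~\ref{lem:trans} produce the back-attack $H_2\attacks{q}H_1$; if that attack is strong one has a strong $2$-cycle at once, and if it is weak, its weakness supplies the functional dependency $\fd{\keyvars{H_2}}{\keyvars{H_1}}$, so the shortcut $H_0\attacks{q}H_2$ (obtained from Lemma~\ref{lem:trans} after also disposing of the subcase $H_1\attacks{q}H_0$) must be strong by plain transitivity of functional dependencies: otherwise $\fd{\keyvars{H_0}}{\keyvars{H_2}}$ and $\fd{\keyvars{H_2}}{\keyvars{H_1}}$ would make $H_0\attacks{q}H_1$ weak. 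You should either adopt that case analysis or supply a complete proof of your strength-inheritance claim.
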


\begin{lemma}\label{lem:nonewcycles}
Let $q$ be a self-join-free Boolean conjunctive query.
Let $x\in\queryvars{q}$ and let $a$ be an arbitrary constant.
\begin{enumerate}
\item
If the attack graph of $q$ is acyclic, then the attack graph of $\substitute{q}{x}{a}$ is acyclic.
\item
If the attack graph of $q$ contains no strong cycle, then the attack graph of $\substitute{q}{x}{a}$ contains no strong cycle.
\end{enumerate}
\end{lemma}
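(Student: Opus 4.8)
The plan is to show that passing from $q$ to $q'\defeq\substitute{q}{x}{a}$ can only \emph{destroy} edges of the attack graph, never create them, and likewise can only turn strong attacks into weak ones. Since $q$ is self-join-free and substitution leaves all relation names intact, $q'$ is self-join-free as well, so its attack graph is defined, and its atoms stand in the obvious bijection with those of $q$ (I write $F$ for an atom of $q$ and reuse $F$ for its image in $q'$, whose key-variable set is $\keyvars{F}\setminus\{x\}$ and whose variable set is $\atomvars{F}\setminus\{x\}$). I would reduce the whole lemma to two claims: (a) if $F\attacks{q'}G$ then $F\attacks{q}G$; and (b) if $F\attacks{q'}G$ is strong, then $F\attacks{q}G$ is strong. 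Granting these, Part~1 follows because any directed cycle of the attack graph of $q'$ transfers edge-by-edge, via (a), to a directed cycle of the attack graph of $q$; contrapositively, acyclicity of $q$ forces acyclicity of $q'$. Part~2 follows the same way: a strong cycle of $q'$ transfers by (a) to a cycle of $q$, and the strong attack in it stays strong by (b), yielding a strong cycle of $q$; contraposition gives the statement.

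The technical heart is a monotonicity property of the key closures. For the FD $\fd{\keyvars{H}}{\atomvars{H}}$ contributed by an atom $H$ in $q$, the corresponding FD in $q'$ is $\fd{\keyvars{H}\setminus\{x\}}{\atomvars{H}\setminus\{x\}}$, since substituting the constant $a$ for $x$ simply deletes $x$ from both sides. Intuitively, fixing $x$ to a constant behaves like adding the trivially satisfiable dependency $\fd{\emptyset}{x}$: the value of $x$ becomes available for free, so it can only help to trigger further dependencies. I would make this precise as the two inclusions $\keycl{F}{q}\setminus\{x\}\subseteq\keycl{F}{q'}$ and $\keyclsup{F}{q}\setminus\{x\}\subseteq\keyclsup{F}{q'}$. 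Each is proved by induction on the stages of the standard closure computation: writing $C_{0}\subseteq C_{1}\subseteq\dots$ for the closure of $\keyvars{F}$ in $q$, the base case is $C_{0}\setminus\{x\}=\keyvars{F}\setminus\{x\}$, and in the inductive step a variable $y\neq x$ added to $C_{i+1}$ through an FD $\fd{\keyvars{H}}{\atomvars{H}}$ with $\keyvars{H}\subseteq C_{i}$ is also produced in $q'$, because $\keyvars{H}\setminus\{x\}\subseteq C_{i}\setminus\{x\}$ already lies in the $q'$-closure by hypothesis and the FD $\fd{\keyvars{H}\setminus\{x\}}{\atomvars{H}\setminus\{x\}}$ then yields $y$. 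The same argument works verbatim for $\keyclsup{}{}$, using all atoms rather than all but $F$.

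With the monotonicity in hand, claim (a) is routine. Given a witness $F=F_{0},F_{1},\dots,F_{n}=G$ for $F\attacks{q'}G$, I use the same atom sequence as a candidate witness in $q$. For each $i$ there is a variable $z\in\atomvars{F_{i}}\cap\atomvars{F_{i+1}}$ with $z\notin\keycl{F}{q'}$ witnessing the step in $q'$; necessarily $z\neq x$, and since $\atomvars{F_{j}}\setminus\{x\}\subseteq\atomvars{F_{j}}$ the same $z$ lies in $\atomvars{F_{i}}\cap\atomvars{F_{i+1}}$ computed in $q$. Were $z\in\keycl{F}{q}$, then $z\in\keycl{F}{q}\setminus\{x\}\subseteq\keycl{F}{q'}$, contradicting $z\notin\keycl{F}{q'}$; hence $\atomvars{F_{i}}\cap\atomvars{F_{i+1}}\nsubseteq\keycl{F}{q}$, so the sequence is a genuine witness for $F\attacks{q}G$, and $F\neq G$ since distinct atoms of $q'$ come from distinct atoms of $q$. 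Claim (b) is handled through its contrapositive: if $F\attacks{q}G$ is weak, i.e.\ $\keyvars{G}\subseteq\keyclsup{F}{q}$, then $\keyvars{G}\setminus\{x\}\subseteq\keyclsup{F}{q}\setminus\{x\}\subseteq\keyclsup{F}{q'}$ by the second monotonicity inclusion, which is precisely weakness of the attack in $q'$.

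I expect the only delicate point to be getting the direction of the monotonicity right — that substitution \emph{enlarges} the key closures (modulo deleting $x$) rather than shrinking them — together with the bookkeeping that intersections of variable sets can only lose $x$ under substitution. Once those are pinned down, the cycle- and strong-cycle-transfer arguments are immediate.
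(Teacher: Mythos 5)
Your proposal is correct and follows essentially the same route as the paper's proof: the same closure monotonicity $\keycl{\widehat{F}}{q}\setminus\{x\}\subseteq\keycl{F}{q'}$, the same lifting of attack witnesses from $\substitute{q}{x}{a}$ back to $q$, and the same preservation of the functional dependencies $\fd{\keyvars{F}}{\keyvars{G}}$ (which you phrase via $\keyclsup{F}{q}$) to show weak attacks stay weak. You have merely filled in the inductive closure-stage argument that the paper dismisses as ``easily shown.''
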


We conclude this section with three definitions.
The following definition is taken from~\cite{DBLP:journals/ipl/AspvallPT79} and applies to directed graphs in general.

\begin{definition}\label{def:tarjan}
A directed graph is {\em strongly connected\/} if there is a directed path from any vertex to any other.
The maximal strongly connected subgraphs of a graph are vertex-disjoint and are called its {\em strong components\/}.
If $S_{1}$ and $S_{2}$ are strong components such that an edge leads from a vertex in $S_{1}$ to a vertex in $S_{2}$,
then $S_{1}$ is a {\em predecessor\/} of $S_{2}$ and $S_{2}$ is a {\em successor\/} of $S_{1}$.
A strong component is called {\em initial\/} if it has no predecessor.
\end{definition}

Strong components in the attack graph should not be confused with strong attacks.

\begin{example}
In the attack graph of Fig.~\ref{fig:ag}, the atoms $R(\underline{x},y)$, $S(\underline{y},z)$, and $T(\underline{z},x)$
together constitute an initial strong component.
\end{example}

So far we have defined an attack from an atom to another atom. 
The following definition introduces attacks from an atom to a variable.

\begin{definition}\label{def:az}
Let $q$ be a self-join-free Boolean conjunctive query.
Let $R$ be a relation name with signature $\signature{1}{1}$ such that $R$ does not occur in $q$.
For $F\in q$ and $z\in\queryvars{q}$,
we say that {\em $F$ attacks $z$\/}, denoted $F\attacks{q}z$, if $F\attacks{q'}R(\underline{z})$ where $q'=q\cup\{R(\underline{z})\}$.
\end{definition}

\begin{example}
Clearly, if $F_{0}\step{z_{1}}F_{1}\dots\step{z_{n}}F_{n}$ is a witness for $F_{0}\attacks{q}F_{n}$,
then $F_{0}\attacks{q}z_{i}$ for every $i\in\{1,\dots,n\}$.
Notice also that if $q=\{R(\underline{x},y)\}$, then the attack graph of $q$ contains no edge, yet $R\attacks{q}y$.
\end{example}

Finally, we introduce the notion of {\em sequential proof\/}, which mimics an algorithm for testing logical implication for functional dependencies~\cite[Algorithm~8.2.7]{DBLP:books/aw/AbiteboulHV95}.

\begin{definition}\label{def:sqp}
Let $q$ be a self-join free Boolean conjunctive query. 
Let $X\subseteq\queryvars{q}$ and $y\in\queryvars{q}$.
A {\em sequential proof\/} of $\FD{q}\models\fd{X}{y}$ is a sequence $H_{0},H_{1},\dots,H_{\ell}$ of atoms of $q$
such that 
\begin{itemize}
\item
$y\in X\cup\bigcup_{i=1}^{\ell}\atomvars{H_{i}}$; and
\item
for $i\in\{0,\dots,\ell\}$, $\keyvars{H_{i}}\subseteq X\cup\bigcup_{j=0}^{i-1}\atomvars{H_{j}}$.
\end{itemize}
Notice that if $y\in X$, then the empty sequence is a sequential proof of  $\FD{q}\models\fd{X}{y}$.
\end{definition}

\section{First-Order Expressibility}\label{sec:firstorder}

In this section, we prove the first item in the statement of Theorem~\ref{thm:main}, as well as the $\L$-hard lower complexity bound stated in the second item. 

\begin{theorem}\label{the:FO}
Let $q$ be a self-join-free Boolean conjunctive query.
Then the following are equivalent:
\begin{enumerate}
\item\label{it:FOone}
$\cqa{q}$ is in $\FO$;
\item\label{it:FOtwo}
the attack graph of $q$ is acyclic.
\end{enumerate}
\end{theorem}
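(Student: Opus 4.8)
The plan is to establish the two directions of the equivalence separately, and the harder direction is clearly $(\ref{it:FOtwo})\Rightarrow(\ref{it:FOone})$: acyclicity of the attack graph implies first-order expressibility. For this direction I would argue by induction on the number of atoms of $q$, exploiting the structure given by strong components (Definition~\ref{def:tarjan}). Since the attack graph is acyclic, it has an initial strong component that is a single atom $F$ (a source with no incoming attacks). The idea is that an unattacked atom can be ``resolved'' first: because no atom attacks $F$, the key of $F$ functionally determines, via $\keyclsup{F}{q}$, enough of the query that one can write a first-order formula quantifying over the block of $F$ and reducing to a query $q'$ on fewer atoms. Concretely, I expect to pick a fresh source atom $F=R(\underline{\vec{x}},\vec{y})$, and build an $\FO$ rewriting of the form ``there is a key value $\vec{a}$ for $R$ such that \emph{every} $R$-fact with that key, together with the constants it forces through $\FD{q}$, makes the residual query $\cqa{q''}$ certain,'' where $q''$ is obtained by substituting the forced values. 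Lemma~\ref{lem:nonewcycles}(1) is exactly what keeps this induction alive: substituting a constant for a variable of $q$ preserves acyclicity of the attack graph, so the residual queries stay in the inductive class. The base case is an empty or single-atom query, where $\cqa{q}$ is trivially first-order (a single unattacked atom means one can test certainty by a bounded first-order condition on its block).

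For the converse direction $(\ref{it:FOone})\Rightarrow(\ref{it:FOtwo})$, I would prove the contrapositive: if the attack graph is cyclic, then $\cqa{q}$ is not in $\FO$. By Lemma~\ref{lem:cycletwo}(1) a cycle yields a cycle of size two, so it suffices to handle a pair of atoms $F,G$ with $F\attacks{q}G$ and $G\attacks{q}F$. The paper's introduction signals that the intended route is \emph{complexity-theoretic} rather than via Hanf locality: the second bullet of the trichotomy states that a non-$\FO$ query is in fact $\L$-hard. So the plan is to reduce a canonical $\L$-hard (indeed $\NL$- or $\L$-complete) problem—undirected reachability, or equivalently a connectivity/parity problem that is provably not in $\FO$—to $\cqa{q}$, using the mutual attack to build ``gadgets'' that propagate a choice along a path. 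The semantic reading of an attack given after Example~\ref{ex:ag} is the key intuition: a mutual attack between $F$ and $G$ lets two facts in one block join exclusively with two facts in distinct blocks of the other, and chaining such configurations encodes a long chain whose satisfaction depends on a global (non-local) condition. Since $\FO\subsetneq\L$ (any $\L$-hard problem lies outside $\FO$, as $\FO$ queries are in $\AC^0\subsetneq\L$), $\L$-hardness of $\cqa{q}$ immediately gives that $\cqa{q}\notin\FO$, completing the contrapositive.

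The main obstacle I anticipate is the inductive rewriting in the forward direction: ensuring that ``resolving'' the initial source atom $F$ really does reduce $\cqa{q}$ to finitely many instances of $\cqa{q''}$ over substituted queries, uniformly in first-order logic. The subtlety is that certainty over the original database must be expressed as a bounded Boolean combination over the (unboundedly many) facts in the block of $F$, and one must verify that the functional dependencies in $\FD{q}$—captured syntactically by sequential proofs (Definition~\ref{def:sqp}) and by $\keyclsup{F}{q}$—force exactly the constants needed so that the residual certainty condition depends only on $F$'s key value and not on which fact of the block is chosen. I would lean on Lemma~\ref{lem:frugal} to restrict attention to $\fpro{q}{X}$-frugal repairs for a suitable $X$ (e.g.\ $X=\keyvars{F}$), which should make the reduction to the residual query clean, and on Lemma~\ref{lem:purified} to assume the database is purified so that every fact participates in some embedding of $q$. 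Getting the first-order formula to range correctly over blocks, while the induction strips one atom at a time and Lemma~\ref{lem:nonewcycles}(1) guarantees acyclicity is preserved, is the delicate bookkeeping that carries the proof.
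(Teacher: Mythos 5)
Your plan follows the paper's proof essentially verbatim: the sufficient direction is Lemma~\ref{lem:inFO} plus the induction of Lemma~\ref{lem:firstorder} (using Lemma~\ref{lem:nonewcycles} and the $\fpro{q}{X}$-frugal-repair machinery with $X=\keyvars{F}$), and the necessary direction is the paper's first-order reduction from the $\L$-complete problem UFA via $q_{\fuxman}=\{R_{\fuxman}(\underline{x},y),S_{\fuxman}(\underline{y},x)\}$ and a gadget built from a two-atom attack cycle obtained from Lemma~\ref{lem:cycletwo} (Lemmas~\ref{lem:Lhard} and~\ref{lem:notFO}). The two genuinely hard steps you flag as obstacles---showing that certainty depends only on $F$'s key value (the repair-swapping argument behind Lemma~\ref{lem:inFO}) and constructing the valuations $\bival{a}{b}$ that make the reduction from $\cqa{q_{\fuxman}}$ to $\cqa{q}$ sound---are exactly where the paper's technical work lies, and your plan correctly locates them without carrying them out.
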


That is, acyclicity of the attack graph of $q$ is both a necessary and sufficient condition for first-order expressibility of $\cqa{q}$.
In Section~\ref{sec:necessary}, we show the contrapositive of the implication $\ref{it:FOone}\implies\ref{it:FOtwo}$.
In Section~\ref{sec:sufficient}, we show the implication $\ref{it:FOtwo}\implies\ref{it:FOone}$.

\subsection{Necessary Condition}\label{sec:necessary}

Let $q_{\fuxman}=\{R_{\fuxman}(\underline{x},y),S_{\fuxman}(\underline{y},x)\}$.
In~\cite{DBLP:journals/ipl/Wijsen10}, it was shown that $\cqa{q_{\fuxman}}$ is not in $\FO$.
The following lemma shows a stronger result.

\begin{lemma}\label{lem:Lhard}
Let $q_{\fuxman}=\{R_{\fuxman}(\underline{x},y),S_{\fuxman}(\underline{y},x)\}$. Then $\cqa{q_{\fuxman}}$ is $\L$-hard.
\end{lemma}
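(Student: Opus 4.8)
The plan is to give a first-order (in particular logarithmic-space) reduction to $\cqa{q_{\fuxman}}$ from a problem that is already $\L$-complete, for which undirected $s$-$t$ connectivity (USTCON) is the natural choice. The first step is a graph-theoretic reformulation of $\cqa{q_{\fuxman}}$. Given an input $\db$, I would use Lemma~\ref{lem:purified} to assume $\db$ is purified, so that $R_{\fuxman}(\underline{a},b)\in\db$ iff the matching fact $S_{\fuxman}(\underline{b},a)\in\db$. Such a $\db$ is then encoded by a bipartite graph $\dbgraph{\db}$ whose vertices are the key-values of the $R_{\fuxman}$- and $S_{\fuxman}$-blocks and whose edges are the pairs $(a,b)$. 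Choosing a repair amounts to letting every vertex pick one incident edge (the fact kept from its block), and a repair falsifies $q_{\fuxman}$ exactly when no edge is picked simultaneously from both of its endpoints. Hence $\cqa{q_{\fuxman}}(\db)$ holds iff every such selection produces a ``mutually chosen'' edge.

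The combinatorial lemma I would then prove is that, on a connected graph, every selection yields a mutually chosen edge iff the graph is acyclic: in a tree, iterating any neighbor-selection eventually reaches two vertices that select each other (a closed walk in a tree must backtrack with period $2$), whereas a graph containing a cycle can be oriented cyclically and extended outward so that no edge is picked from both ends. Consequently $\cqa{q_{\fuxman}}(\db)$ holds iff $\dbgraph{\db}$ has an acyclic (tree) component. The reduction must therefore send a connectivity instance $(H,s,t)$ to a database whose associated graph has an acyclic component iff $s$ and $t$ are connected. I would first subdivide $H$, an operation that is first-order definable and preserves both the component structure and the cycle structure while making the graph bipartite, and then use the asymmetry between the two relation names $R_{\fuxman}$ and $S_{\fuxman}$ to attach a directed ``probe'' between $s$ and $t$ together with cycle gadgets that neutralize every other component, so that one controlled component decides the answer.

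Since the construction is first-order definable and first-order reductions compose, the reduction is in particular log-space, which already yields $\L$-hardness because USTCON is $\L$-complete. Moreover, were $\cqa{q_{\fuxman}}$ in $\FO$, composing the first-order reduction with the hypothetical $\FO$ sentence would place undirected connectivity in $\FO$, contradicting the unconditional strict inclusion $\FO\subsetneq\L$. This is precisely the promised strengthening of~\cite{DBLP:journals/ipl/Wijsen10}: rather than ruling out $\FO$-expressibility by a locality argument, we obtain $\L$-hardness, from which non-expressibility in $\FO$ follows as a corollary.

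The hard part will be the gadget in the reduction. The clean criterion ``$\cqa{q_{\fuxman}}(\db)$ holds iff $\dbgraph{\db}$ has an acyclic component'' is existential over connected components, whereas connectivity concerns two designated vertices, and the purified undirected picture forgets the orientation carried by the two distinct relation names. The crux is to exploit that orientation to funnel the whole instance into a single component whose acyclicity tracks $s$-$t$ reachability, and to check that the resulting map is genuinely first-order and not merely log-space, since only a first-order reduction delivers the $\FO$ lower bound. Securing both correctness of the gadget and its first-order definability at once is where the real work lies.
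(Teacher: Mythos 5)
Your reduction is left unfinished at exactly the point where the lemma is actually proved, and the source problem you chose makes the outlined strategy unworkable. Your graph-theoretic reformulation is correct and matches what the paper implicitly uses: after purification, $\cqa{q_{\fuxman}}(\db)$ holds iff the associated bipartite graph is nonempty and has an acyclic connected component. But that criterion is \emph{existential over components and global over cycles}, and for a general USTCON instance $(H,s,t)$ the cycle and component structure of $H$ is completely uncontrolled: if $H$ is a triangle plus two isolated vertices $s,t$, any ``probe'' edge between $s$ and $t$ creates an acyclic two-vertex component even though $s\not\sim t$; conversely, if $s\sim t$ inside a component that also contains a cycle, no probe can make that component acyclic. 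Subdivision preserves cycles, so it does not help, and ``cycle gadgets that neutralize every other component'' cannot be placed without already knowing which component is which --- which is the connectivity question itself. You flag the gadget as ``where the real work lies,'' but there is no first-order gadget of the kind you describe for arbitrary graphs; the missing step is not routine, it is the whole proof.

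The paper avoids this by reducing from UFA (undirected \emph{forest} accessibility with exactly two connected components, each containing an edge), which is $\L$-complete. The forest promise is what makes your criterion usable: every component is already acyclic, so the only question is whether $u$ and $v$ cohabit. The paper's gadget is then a pendant: it adds the facts $R_{\fuxman}(\underline{u},t)$ and $R_{\fuxman}(\underline{v},t)$ for a fresh constant $t$ with no matching $S_{\fuxman}$-fact. Purification (Lemma~\ref{lem:purified}) deletes any block containing an irrelevant fact, and in this encoding the deletion cascades through the entire connected component of $u$ (respectively $v$). Hence if $u\sim v$ one component is erased and the \emph{other} component --- guaranteed by the promise to be a tree with an edge --- survives as an acyclic witness forcing every repair to satisfy $q_{\fuxman}$; if $u\not\sim v$ both components are erased, the purified database is empty, and its unique (empty) repair falsifies $q_{\fuxman}$. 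If you replace USTCON by UFA and use this pendant-plus-purification device in place of your unspecified probe, your argument becomes the paper's; as written, it has a genuine gap.
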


\begin{lemma}\label{lem:notFO}
Let $q$ be a self-join-free Boolean conjunctive query.
If the attack graph of $q$ is cyclic,
then $\cqa{q}$ is  $\L$-hard (and hence not in $\FO$).
\end{lemma}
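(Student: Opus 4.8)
The plan is to reduce from the base case established in Lemma~\ref{lem:Lhard}, namely that $\cqa{q_{\fuxman}}$ is $\L$-hard for $q_{\fuxman}=\{R_{\fuxman}(\underline{x},y),S_{\fuxman}(\underline{y},x)\}$. By Lemma~\ref{lem:cycletwo}, if the attack graph of $q$ is cyclic, then it contains a cycle of size two, say $F\attacks{q}G$ and $G\attacks{q}F$ for two distinct atoms $F,G\in q$. The goal is to exhibit a first-order (or at least logspace) reduction from $\cqa{q_{\fuxman}}$ to $\cqa{q}$, thereby transferring $\L$-hardness. The natural strategy is to take an input $\db_{0}$ to $\cqa{q_{\fuxman}}$ and construct, in a way definable in first-order logic over $\db_{0}$, an uncertain database $\db$ such that every repair of $\db$ satisfies $q$ if and only if every repair of $\db_{0}$ satisfies $q_{\fuxman}$. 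The two mutually attacking atoms $F$ and $G$ should play the roles of $R_{\fuxman}$ and $S_{\fuxman}$, while the remaining atoms of $q$ are to be satisfied "for free" by a carefully chosen constant assignment.

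The key steps, in order, are as follows. First I would use the semantic meaning of the size-two cycle: the attack $F\attacks{q}G$ means there is a witness path along which variables outside $\keycl{F}{q}$ are shared, and symmetrically for $G\attacks{q}F$. The variables that both attacks can exploit give the "handle" variables through which we encode the $R_{\fuxman}/S_{\fuxman}$ pattern. Second, I would fix all variables of $q$ that are not needed for the encoding to fresh constants, choosing these constants so that the atoms of $q$ other than $F$ and $G$ are always satisfiable by a single fact each, and so that those atoms do not interfere with the blocks built around $F$ and $G$. Because $q$ is self-join-free, substituting constants for some variables (via $\substitute{q}{x}{a}$) keeps the other atoms independent, and Lemma~\ref{lem:nonewcycles} guarantees that such substitutions cannot destroy the relevant cyclic structure—though here we want to preserve the size-two cycle specifically, so care is needed in which variables we freeze. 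Third, for each fact of $\db_{0}$ I would generate the corresponding $F$- and $G$-facts of $\db$, translating the key-equalities of $R_{\fuxman}$ and $S_{\fuxman}$ into key-equalities of $F$ and $G$ via the witness variables; the mutual attack ensures that a block of $F$-facts joins with two distinct $G$-blocks and vice versa, replicating exactly the interaction that makes $\cqa{q_{\fuxman}}$ hard. Finally, I would argue correctness of the reduction: every repair of $\db$ satisfies $q$ iff the $F$/$G$ part behaves like $q_{\fuxman}$, which holds iff every repair of $\db_{0}$ satisfies $q_{\fuxman}$.

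The main obstacle I expect is the construction and correctness of the gadget in full generality. The size-two cycle $F\attacks{q}G$, $G\attacks{q}F$ is only guaranteed to exist by Lemma~\ref{lem:cycletwo}, but the witnesses for the two attacks may pass through several intermediate atoms and may share variables in complicated ways; in particular the variables in $\keyvars{F}$, $\keyvars{G}$, $\keycl{F}{q}$, and $\keycl{G}{q}$ must be analyzed to determine precisely which variables can be used to carry the two "coordinates" of the $q_{\fuxman}$ pattern and which must be frozen to constants. Getting the typing right—so that the frozen atoms are always jointly satisfiable, never accidentally create extra repairs that trivially satisfy $q$, and never merge blocks that should stay distinct—is the delicate part, and is where the self-join-freeness and the typed-database assumption (Section~\ref{sec:preliminaries}) will be used heavily. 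Since first-order inexpressibility follows from $\L$-hardness (as $\FO \subsetneq \L$ under the standard assumption that $\L$-hard problems under first-order reductions are not in $\FO$), once the reduction is shown to be first-order definable the second clause of the statement follows immediately.
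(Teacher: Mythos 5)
Your high-level skeleton matches the paper's: reduce from $\cqa{q_{\fuxman}}$ (Lemma~\ref{lem:Lhard}) using the size-two cycle $F\attacks{q}G\attacks{q}F$ guaranteed by Lemma~\ref{lem:cycletwo}, with $F$ and $G$ playing the roles of $R_{\fuxman}$ and $S_{\fuxman}$. However, the heart of the argument is missing, and the one concrete design decision you do commit to --- ``fix all variables of $q$ that are not needed for the encoding to fresh constants'' --- is the step that would fail. The witness for $F\attacks{q}G$ is a chain of intermediate atoms $F_{0},\dots,F_{n}$ each consecutive pair of which shares a variable \emph{outside} $\keycl{F}{q}$; these intermediate atoms are exactly what ties the $F$-fact chosen by a satisfying valuation to the $G$-fact it must join with. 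If the variables on that chain are frozen to constants independent of the source tuple, the chain no longer transmits any information, and the reduction's backward direction breaks: a repair $\rep$ of $\db$ falsifying $q_{\fuxman}$ could still yield a database satisfying $q$ by combining an $F$-fact derived from $R_{\fuxman}(\underline{a},b)$ with a $G$-fact derived from an unrelated $S_{\fuxman}(\underline{b'},a')$, the intermediate atoms joining them ``for free.''

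The paper's resolution is not a binary choice between ``carries a coordinate'' and ``frozen'': it defines, for each pair of constants $(a,b)$, a valuation $\bival{a}{b}$ that sends every variable $u$ to one of four values --- $a$, $b$, $\bot$, or the pair $\tuple{a,b}$ --- according to whether $u$ lies in $\keycl{F}{q}\setminus\keycl{G}{q}$, $\keycl{G}{q}\setminus\keycl{F}{q}$, their intersection, or neither. Two sublemmas then verify that atoms $H\notin\{F,G\}$ never produce conflicting key-equal facts (so the non-$F$, non-$G$ part of the image is consistent and repairs of $f(\db)$ correspond exactly to repairs of $\db$), while the $F$- and $G$-images reproduce the block structure of $R_{\fuxman}$ and $S_{\fuxman}$. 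The correctness of the backward direction is then an induction \emph{along the attack witness}: since every linking variable $u_{i}$ is outside $\keycl{F}{q}$, its value under $\theta$ must lie in $\{b,\tuple{a_{i},b}\}$, so the coordinate $b$ propagates from $\theta(F)$ all the way to $\theta(G)$, forcing $b=b'$ (and symmetrically $a=a'$ using $G\attacks{q}F$). You correctly flagged this as ``the delicate part,'' but the proposal does not supply the four-case valuation or the witness-path induction, and without them the reduction as described does not go through. (A small side remark: first-order inexpressibility follows unconditionally from $\L$-hardness under first-order reductions, since for instance PARITY is in $\L$ but not in $\FO$; no additional assumption is needed.)
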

\begin{proof}
Assume that the attack graph of $q$ is cyclic.
We show hereinafter that there exists a first-order many-one reduction from $\cqa{q_{\fuxman}}$ to $\cqa{q}$.
The desired result then follows from Lemma~\ref{lem:Lhard}.

By Lemma~\ref{lem:cycletwo}, we can assume two distinct atoms $F,G\in q$ such that $F\attacks{q}G\attacks{q}F$ is an attack cycle of size two.
We will assume hereinafter that the relation names in $F$ and $G$ are $R$ and $S$ respectively.

For all constants $a,b$ we define the valuation $\bival{a}{b}$ over $\queryvars{q}$ as follows.
Let $\bot$ be a fixed constant not occurring elsewhere.
For every variable $u\in\queryvars{q}$,
\begin{enumerate}
\item
if $u\in\keycl{F}{q}\setminus\keycl{G}{q}$, then $\bival{a}{b}(u)=a$;
\item
if $u\in\keycl{G}{q}\setminus\keycl{F}{q}$, then $\bival{a}{b}(u)=b$;
\item
if $u\in\keycl{F}{q}\cap\keycl{G}{q}$, then $\bival{a}{b}(u)=\bot$;
\item
if $u\in\queryvars{q}\setminus\formula{\keycl{F}{q}\cup\keycl{G}{q}}$, then $\bival{a}{b}(u)=\tuple{a,b}$.
\end{enumerate}

\begin{sublemma}\label{lem:notFnotG}
For all constants $a,b,a',b'$,
if $H\in q\setminus\{F,G\}$,
then $\{\bival{a}{b}(H),\bival{a'}{b'}(H)\}$ is consistent.
\end{sublemma}
\begin{subproof}[of Sublemma~\ref{lem:notFnotG}]
Assume that for all $u\in\keyvars{H}$, $\bival{a}{b}(u)=\bival{a'}{b'}(u)$.
We distinguish four cases.
\begin{description}
\item[Case $a=a'$ and $b=b'$.]
Then $\bival{a}{b}(H)=\bival{a'}{b'}(H)$.
\item[Case $a=a'$ and $b\neq b'$.]
Then $\keyvars{H}\subseteq\keycl{F}{q}$, hence $\atomvars{H}\subseteq\keycl{F}{q}$.
Then $\bival{a}{b}(H)=\bival{a'}{b'}(H)$.
\item[Case $a\neq a'$ and $b=b'$.]
Then $\keyvars{H}\subseteq\keycl{G}{q}$, hence $\atomvars{H}\subseteq\keycl{G}{q}$.
Then $\bival{a}{b}(H)=\bival{a'}{b'}(H)$.
\item[Case $a\neq a'$ and $b\neq b'$.]
Then $\keyvars{H}\subseteq\keycl{F}{q}\cap\keycl{G}{q}$, hence $\atomvars{H}\subseteq\keycl{F}{q}\cap\keycl{G}{q}$.
Then $\bival{a}{b}(H)=\bival{a'}{b'}(H)$.
\end{description} 
\end{subproof}

\begin{sublemma}\label{lem:FandG}
For all constants $a,b,a',b'$,
\begin{enumerate}
\item\label{it:abceone}
$\bival{a}{b}(F)$ and $\bival{a'}{b'}(F)$ are key-equal if and only if $a=a'$.
\item\label{it:abcetwo}
$\bival{a}{b}(F)=\bival{a'}{b'}(F)$ if and only if $a=a'$ and $b=b'$.
\item
$\bival{a}{b}(G)$ and $\bival{a'}{b'}(G)$ are key-equal if and only if $b=b'$.
\item
$\bival{a}{b}(G)=\bival{a'}{b'}(G)$ if and only if $a=a'$ and $b=b'$.
\end{enumerate}
\end{sublemma}
\begin{subproof}[of Sublemma~\ref{lem:FandG}]

\framebox{\ref{it:abceone}. $\implies$}
Consequence of $\keyvars{F}\nsubseteq\keycl{G}{q}$ (because $G\attacks{q}F$).
\framebox{\ref{it:abceone}. $\impliedby$}
Consequence of $\keyvars{F}\subseteq\keycl{F}{q}$.

\framebox{\ref{it:abcetwo}. $\implies$}
Consequence of $\atomvars{F}\nsubseteq\keycl{F}{q}$ (because $F\attacks{q}G$).
\framebox{\ref{it:abcetwo}. $\impliedby$}
Trivial.

The proof of the remaining items is analogous.
\end{subproof}

For every uncertain database $\db$ with $R_{\fuxman}$-facts and $S_{\fuxman}$-facts,
we define $f(\db)$ as the following uncertain database:
\begin{enumerate}
\item
for every $R_{\fuxman}(\underline{a},b)$ in $\db$, $f(\db)$ contains $\bival{a}{b}(q\setminus\{G\})$; and
\item
for every $S_{\fuxman}(\underline{b},a)$ in $\db$, $f(\db)$ contains $\bival{a}{b}(q\setminus\{F\})$.
\end{enumerate}
It is easy to see that $f$ is computable in $\FO$.

In what follows, we assume that $\db$ is typed, as explained in Section~\ref{sec:preliminaries}.
It will be understood that $a,a_{1},a_{2},\dots$ belong to $\type{x}$, and that $b,b_{1},b_{2},\dots$ belong to $\type{y}$.

Let us define $g(\db)$ as follows:
$$g(\db)\defeq f(\db)\setminus\left(\{\bival{a}{b}(F)\mid R_{\fuxman}(\underline{a},b)\in\db\}\cup \{\bival{a}{b}(G)\mid S_{\fuxman}(\underline{b},a)\in\db\}\right).$$
That is, $g(\db)$ contains all facts of $f(\db)$ that are neither $R$-facts nor $S$-facts.

By Sublemmas~\ref{lem:notFnotG} and~~\ref{lem:FandG},
\begin{equation}
\repairs{f(\db)}= \{f(\rep)\cup g(\db)\mid\rep\in\repairs{\db}\}.
\end{equation}
Let $\db$ be an arbitrary database with $R_{\fuxman}$-facts and $S_{\fuxman}$-facts.
It suffices to show that the following are equivalent for every repair $\rep$ of $\db$:
\begin{enumerate}
\item\label{it:fgone}
$\rep$ satisfies $q_{\fuxman}$;
\item\label{it:fgtwo}
$f(\rep)\cup g(\db)$ satisfies $q$.
\end{enumerate}
\framebox{\ref{it:fgone}$\implies$\ref{it:fgtwo}}
This is the easier part.

\noindent \framebox{\ref{it:fgtwo}$\implies$\ref{it:fgone}}
Let $\theta$ be a substitution over $\queryvars{q}$ such that $\theta(q)\subseteq f(\rep)\cup g(\db)$.

By our construction,
we can assume $R_{\fuxman}(\underline{a},b)\in\rep$ such that $\theta(F)\in\bival{a}{b}(q\setminus\{G\})$.
Likewise,
we can assume $S_{\fuxman}(\underline{b'},a')\in\rep$ such that $\theta(G)\in\bival{a'}{b'}(q\setminus\{F\})$.

It suffices to show that $a=a'$ and $b=b'$.

Before giving the proof, we provide some intuition.
For every fact $A\in f(\db)$, we can assume an atom in $q$, denoted $H_{A}$, such that $A=\bival{a}{b}(H_{A})$ for some constant $a\in\type{x}$ and some constant $b\in\type{y}$.
Then, for all $z\in\atomvars{H_{A}}$, $\bival{a}{b}(z)\in\{\bot,a,b,\tuple{a,b}\}$.
The constants in the latter set allow to ``trace back" $A$ to some facts $R_{\fuxman}(\underline{a},b)$ or $S_{\fuxman}(\underline{b},a)$ in $\db$. 

With this intuition in mind, it is easy to show $b=b'$ (the proof of $a=a'$ is symmetrical).
Since $F\attacks{q}G$,
there exists a sequence 
$F_{0}, F_{1}, \dots, F_{n}$ of atoms of $q$ such that 
\begin{itemize}
\item
$F_{0}=F$ and $F_{n}=G$; and
\item
for all $i\in\{0,\dots,n-1\}$, we can assume $u_{i}\in\atomvars{F_{i}}\cap\atomvars{F_{i+1}}$ such that $u_{i}\not\in\keycl{F}{q}$.
\end{itemize}

We show by induction on increasing $i$ that for all $i\in\{0,\dots,n-1\}$, there exists constant $a_{i}$ such that 
for all $w_{i}\in\atomvars{F_{i}}$, we have $\theta(w_{i})\in\{\bot$, $a_{i}$, $b$, $\tuple{a_{i},b}\}$.
\begin{description}
\item[Basis $i=0$.]
Since $\theta(F)\in\bival{a}{b}(q\setminus\{G\})$,
for all $w_{0}\in\atomvars{F_{0}}$, we have $\theta(w_{0})\in\{\bot$, $a$, $b$, $\tuple{a,b}\}$.
\item[Step $i\rightarrow i+1$.]
By the induction hypothesis, there exists constant $a_{i}$ such that 
for all $w_{i}\in\atomvars{F_{i}}$, we have $\theta(w_{i})\in\{\bot$, $a_{i}$, $b$, $\tuple{a_{i},b}\}$.

From $u_{i}\not\in\keycl{F}{q}$, it follows that $\theta(u_{i})\in\{b$, $\tuple{a_{i},b}\}$.

Since $u_{i}\in\atomvars{F_{i+1}}$, it follows that there exists constant $a_{i+1}$ such that 
for all $w_{i+1}\in\atomvars{F_{i+1}}$, we have $\theta(w_{i+1})\in\{\bot$, $a_{i+1}$, $b$, $\tuple{a_{i+1},b}\}$.
\end{description}
It follows that for $u_{n-1}\in\atomvars{G}$, there exists constant $a_{n-1}$ such that $\theta(u_{n-1})\in\{b$, $\tuple{a_{n-1},b}\}$.
From $\theta(G)\in\bival{a'}{b'}(q\setminus\{F\})$, it follows $\theta(u_{n-1})\in\{b'$, $\tuple{a',b'}\}$.
Consequently, $b=b'$.
\end{proof}

\subsection{Sufficient Condition}\label{sec:sufficient}

In this section, we show that $\cqa{q}$ is in $\FO$ if the attack graph of $q$ is acyclic.

\begin{lemma}\label{lem:inFO}
Let $q$ be a self-join-free Boolean conjunctive query.
Let $F$ be an atom of $q$ such that in the attack graph of $q$, the indegree of $F$ is zero.
Let $k=\card{\keyvars{F}}$ and let $\vec{x}=(x_{1},\dots,x_{k})$ be a sequence containing (exactly once) each variable of $\keyvars{F}$.
Then the following are equivalent for every uncertain database $\db$:
\begin{enumerate}
\item\label{it:general}
$q$ is true in every repair of $\db$;
\item\label{it:reify}
for some $\vec{a}\in\formula{\adom{\db}}^{k}$,
it is the case that $\substitute{q}{\vec{x}}{\vec{a}}$ is true in every repair of $\db$.
\end{enumerate}
\end{lemma}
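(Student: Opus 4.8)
The plan is to establish the two implications separately; \ref{it:reify}$\Rightarrow$\ref{it:general} is immediate, while \ref{it:general}$\Rightarrow$\ref{it:reify} carries all the weight. For \ref{it:reify}$\Rightarrow$\ref{it:general}, fix a tuple $\vec{a}$ witnessing \ref{it:reify}. Since $\substitute{q}{\vec{x}}{\vec{a}}$ logically entails $q$ (extend any valuation embedding $\substitute{q}{\vec{x}}{\vec{a}}$ by mapping $\vec{x}$ back to $\vec{a}$ to obtain one embedding $q$), every repair that satisfies $\substitute{q}{\vec{x}}{\vec{a}}$ also satisfies $q$, which is \ref{it:general}.

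For the converse I would first recast everything in terms of the key-values a repair witnesses. For a repair $\rep$, write $W(\rep) \defeq \{\vec{a} \in \formula{\adom{\db}}^{k} \mid \rep \models \substitute{q}{\vec{x}}{\vec{a}}\}$. Because the variables of $q$ are existentially quantified and any satisfying valuation maps $\vec{x}$ into $\adom{\rep} \subseteq \adom{\db}$, we have $\rep \models q$ if and only if $W(\rep) \neq \emptyset$, whereas statement \ref{it:reify} asserts exactly that $\bigcap_{\rep \in \repairs{\db}} W(\rep) \neq \emptyset$. Assuming \ref{it:general}, every $W(\rep)$ is nonempty, and the task becomes to produce a key-value common to all of them. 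The connection to frugality is that $\rep_{1} \fpro{q}{\keyvars{F}} \rep_{2}$ holds precisely when $W(\rep_{1}) \subseteq W(\rep_{2})$, so the $\fpro{q}{\keyvars{F}}$-frugal repairs (Lemma~\ref{lem:frugal}) are those whose witnessed set $W(\rep)$ is $\subseteq$-minimal.

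The technical core I would isolate is a mixing claim: for any two repairs $\rep_{1}, \rep_{2}$ there is a repair $\rep_{3}$ with $W(\rep_{3}) \subseteq W(\rep_{1}) \cap W(\rep_{2})$. Granting it, the finite family $\{W(\rep)\}_{\rep \in \repairs{\db}}$ is downward directed, so a $\subseteq$-minimal $W(\rep^{*})$ is in fact a minimum: for every $\rep$, mixing $\rep^{*}$ with $\rep$ yields some $\rep_{3}$ with $W(\rep_{3}) \subseteq W(\rep^{*}) \cap W(\rep) \subseteq W(\rep^{*})$, whence $W(\rep_{3}) = W(\rep^{*})$ by minimality and therefore $W(\rep^{*}) = W(\rep_{3}) \subseteq W(\rep)$. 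Since $W(\rep^{*}) \neq \emptyset$ by \ref{it:general}, any $\vec{a}^{*} \in W(\rep^{*})$ lies in every $W(\rep)$, giving \ref{it:reify}.

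I expect the mixing claim to be the main obstacle, and it is exactly here that the indegree-zero hypothesis on $F$ is indispensable: attacks are precisely the combinatorial obstruction to merging independent block choices (this is the content of the semantic reading of attacks given after Example~\ref{ex:ag}), so the absence of any attack into $F$ should allow $\rep_{3}$ to be assembled from the block choices of $\rep_{1}$ and $\rep_{2}$ without creating a fresh witnessed key-value for $F$. Concretely, I would build $\rep_{3}$ by stitching the two repairs together and then argue that every embedding of $q$ surviving in $\rep_{3}$ can be rerouted to an embedding into each of $\rep_{1}$ and $\rep_{2}$ carrying the same $F$-key, using an analysis of witnesses and sequential proofs (Definition~\ref{def:sqp}) along any hypothetical path into $F$; the fact that no such path is an attack is what blocks the creation of new key-values. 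Two routine points would be handled by the reductions of Section~\ref{sec:preliminaries}: restricting to facts relevant to $q$ via purification (Lemma~\ref{lem:purified}) and assuming $\db$ typed, which together keep the stitching well defined and ensure $\rep_{3}$ is a genuine repair.
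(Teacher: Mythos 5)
Your high-level skeleton is sound and in fact mirrors the paper's: the easy direction is as you say, the reformulation via witness sets $W(\rep)$ is correct (your observation that $\rep_{1}\fpro{q}{\keyvars{F}}\rep_{2}$ iff $W(\rep_{1})\subseteq W(\rep_{2})$, so that $\fpro{q}{\keyvars{F}}$-frugal repairs are exactly those with $\subseteq$-minimal witness set, is exactly how the paper uses Lemma~\ref{lem:frugal}), and your ``mixing claim'' is precisely what the paper's proof constructs: starting from a frugal $\rep$ and an arbitrary $\sep$, it builds a pair $(\rep_{n},\sep_{n})$ with $W(\rep_{n})=W(\sep_{n})\subseteq W(\rep)\cap W(\sep)$, from which minimality of $W(\rep)$ gives $W(\rep)\subseteq W(\sep)$. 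So you have correctly reduced the lemma to the right statement.

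The gap is that you have not proved the mixing claim, and it carries the entire content of the lemma. ``Stitching the two repairs together and rerouting embeddings'' is the conclusion to be established, not an argument; nothing in your sketch explains why the assembled $\rep_{3}$ cannot witness a key-value for $F$ that neither $\rep_{1}$ nor $\rep_{2}$ witnesses, nor why it remains a repair. The paper does this incrementally: it repeatedly swaps a single fact $A$ that is \emph{relevant} in one of the two current repairs for its key-equal counterpart in the other (termination because the number of relevant facts strictly decreases), and the whole burden falls on a one-swap invariance lemma (Lemma~\ref{lem:dontcare}, via Lemma~\ref{lem:dontcarebis}): if the $R$-atom $G$ of $q$ attacks no variable of $\keyvars{F}$, then replacing a relevant $R$-fact $A$ by a key-equal $B$ cannot create a new valuation $\zeta$ over $\keyvars{F}$ with $\zeta(q)$ satisfied. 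Its proof is not a routine chase of witnesses or sequential proofs: one cuts the co-occurrence graph of $q$ along edges whose shared variables lie in $\keycl{G}{q}$, takes the component $q_{G}$ containing $G$, and splices the valuation $\mu$ witnessing relevance of $A$ (on $\queryvars{q_{G}}$) with the valuation $\zeta^{+}$ found after the swap (on the rest), checking consistency on the overlap via $\zeta^{+}[\keycl{G}{q}]=\mu[\keycl{G}{q}]$ and using $G\nattacks{q}x$ to see that each $x\in\keyvars{F}\setminus\keycl{G}{q}$ falls outside $q_{G}$. That splicing construction is the missing idea; without it (or an equivalent), your plan does not yet yield a proof.
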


Lemma~\ref{lem:inFO} immediately leads to the following result.

\begin{lemma}\label{lem:firstorder}
Let $q$ be a self-join-free Boolean conjunctive query.
If the attack graph of $q$ is acyclic,
then $\cqa{q}$ is in $\FO$.
\end{lemma}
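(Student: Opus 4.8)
The goal is to prove Lemma~\ref{lem:firstorder}: acyclicity of the attack graph of $q$ implies $\cqa{q}\in\FO$. The plan is to proceed by induction on the number of atoms in $q$, using Lemma~\ref{lem:inFO} as the engine that peels off one atom at a time while keeping the attack graph acyclic. The base case is the empty query (or a query with no atoms), for which $\cqa{q}$ is trivially in $\FO$ since it is constantly true.

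For the inductive step, suppose $q$ is nonempty and its attack graph is acyclic. Since a finite acyclic directed graph has at least one source, I would first select an atom $F\in q$ whose indegree in the attack graph is zero; such an $F$ exists precisely because acyclicity forbids every vertex from having an incoming edge. Let $k=\card{\keyvars{F}}$ and fix an enumeration $\vec{x}=(x_{1},\dots,x_{k})$ of $\keyvars{F}$. By Lemma~\ref{lem:inFO}, for every uncertain database $\db$, the query $q$ is true in every repair of $\db$ if and only if there exists $\vec{a}\in(\adom{\db})^{k}$ such that $\substitute{q}{\vec{x}}{\vec{a}}$ is true in every repair of $\db$. This is the key reduction: it replaces the certainty of $q$ by a finite (active-domain-bounded) disjunction over the certainty of the queries $\substitute{q}{\vec{x}}{\vec{a}}$, each of which has one fewer variable to worry about in its source atom.

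The plan is then to apply the induction hypothesis to each $\substitute{q}{\vec{x}}{\vec{a}}$. Here I would use Lemma~\ref{lem:nonewcycles}(1): substituting a constant for a variable in $q$ cannot create a cycle in the attack graph, so the attack graph of $\substitute{q}{\vec{x}}{\vec{a}}$ is again acyclic. After the substitution, the variables of $\vec{x}$ no longer occur as variables, so $F$ becomes an atom with fewer variables; iterating (or arguing directly) lets the induction decrease an appropriate measure---for instance the total number of variables in $q$---so the inductive hypothesis yields that $\cqa{\substitute{q}{\vec{x}}{\vec{a}}}$ is in $\FO$, say defined by a first-order sentence $\varphi_{\vec{a}}$. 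Crucially, up to renaming, only finitely many distinct queries arise as $\vec{x}$ ranges over tuples of (abstract) constants, and each is definable by a first-order formula with the substituted constants treated as free variables ranging over the active domain.

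The final assembly is to express the existential over $\vec{a}\in(\adom{\db})^{k}$ as an active-domain first-order quantifier. Since $\adom{\db}$ is first-order definable, the equivalence supplied by Lemma~\ref{lem:inFO} becomes the first-order sentence $\exists z_{1}\dotsm\exists z_{k}\,\varphi(z_{1},\dots,z_{k})$, where $\varphi$ is the first-order formula obtained from the inductive hypothesis with the constants $\vec{a}$ replaced by the quantified variables $\vec{z}$, and the quantifiers range over the active domain. This shows $\cqa{q}\in\FO$. I expect the main obstacle to be the bookkeeping that guarantees the induction is well-founded and that the first-order formula for $\substitute{q}{\vec{x}}{\vec{a}}$ can be taken uniformly in $\vec{a}$ (so that substituting constants for the quantified variables yields a single formula $\varphi(\vec{z})$ rather than a family of syntactically unrelated sentences); this uniformity is what lets the finite active-domain disjunction collapse into a genuine first-order existential quantifier, and it relies on Lemma~\ref{lem:nonewcycles}(1) to keep acyclicity---and hence applicability of the inductive hypothesis---stable under substitution.
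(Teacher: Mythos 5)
Your overall strategy---pick an unattacked atom $F$, apply Lemma~\ref{lem:inFO} to trade certainty of $q$ for an active-domain existential over certainty of $\substitute{q}{\vec{x}}{\vec{a}}$, invoke Lemma~\ref{lem:nonewcycles} to keep the attack graph acyclic, and recurse---is the same as the paper's. But there is a genuine gap in the well-foundedness of your induction. Lemma~\ref{lem:inFO} only instantiates the \emph{key} variables of $F$; it neither removes $F$ nor touches the non-key variables. Consequently your proposed measure (the total number of variables of $q$) fails to decrease whenever the chosen unattacked atom satisfies $\keyvars{F}=\emptyset$, i.e., all of its primary-key positions are occupied by constants: in that case $k=0$ and Lemma~\ref{lem:inFO} degenerates to a tautology. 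This situation is unavoidable. Already for $q=\{R(\underline{x},y)\}$ one round of substitution yields $\{R(\underline{a},y)\}$, whose only (unattacked) atom has an all-constant key while the query still contains the variable $y$; and for $q=\{R(\underline{c},y),S(\underline{y},z)\}$ the unique unattacked atom has an empty key-variable set from the start. Your recursion stalls at such queries without ever reaching the base case of an atomless query, so the argument as written does not establish $\cqa{q}\in\FO$.

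The paper closes exactly this gap by pushing the unfolding one step further: once the key value $\vec{a}$ of the relevant $R$-block is fixed, it quantifies in first-order logic over the facts $R(\underline{\vec{a}},\vec{b}')$ of that block, thereby instantiating the non-key variables $\vec{y}$ as well via an extended valuation $\theta^{+}$, and then recurses on $\theta^{+}(q')$ with $q'=q\setminus\{R(\underline{\vec{x}},\vec{y})\}$. The induction is therefore on $\card{q}$, which strictly decreases at every step regardless of whether $\keyvars{F}$ is empty. To repair your argument you need this additional step (or an equivalent one that actually eliminates the atom $F$). The uniformity issue you flag at the end is real but is handled just as you suggest, by treating the substituted constants as free variables ranging over the active domain.
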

\begin{proof}
Assume that the attack graph of $q$ is acyclic.

The proof runs by induction on $\card{q}$.
If $\card{q}=0$, then $\cqa{q}$ is obviously in $\FO$.

Let $\db$ be an instance of $\cqa{q}$.
Since the attack graph of $q$ is acyclic,
we can assume an atom $R(\underline{\vec{x}},\vec{y})$ that is not attacked in the attack graph of $q$. 
By Lemma~\ref{lem:inFO}, the following are equivalent:
\begin{enumerate}
\item
$q$ is true in every repair of $\db$.
\item\label{it:testfo}
For some fact $R(\underline{\vec{a}},\vec{b})\in\db$,  
there exists of a valuation $\theta$ over $\sequencevars{\vec{x}}$ such that $\theta(\vec{x})=\vec{a}$ and such that for all key-equal facts $R(\underline{\vec{a}},\vec{b}')$ in $\db$,
the valuation $\theta$ can be extended to a valuation $\theta^{+}$ over $\sequencevars{\vec{x}}\cup\sequencevars{\vec{y}}$ such that 
$\theta^{+}(\vec{y})=\vec{b}$ and $\theta^{+}(q')$ is true in every repair of $\db$, where $q'=q\setminus\{R(\underline{\vec{x}},\vec{y})\}$. 
\end{enumerate}
From Lemma~\ref{lem:nonewcycles}, it follows that the attack graph of $\theta^{+}(q')$ is acyclic, and hence $\cqa{\theta^{+}(q')}$ is in $\FO$ by the induction hypothesis.
It is then clear that the latter condition~(\ref{it:testfo}) can be checked in $\FO$.
\end{proof}

For a self-join-free Boolean conjunctive query $q$, the problem $\cqa{q}$ can be equivalently defined as the set containing every uncertain database $\db$ such that every repair of $\db$ satisfies $q$.
If $\cqa{q}$ is in $\FO$, then the set $\cqa{q}$ is definable in first-order logic (by definition of the complexity class $\FO$).
If $\cqa{q}$ is in $\FO$, then its first-order definition is commonly called {\em first-order rewriting\/}. 
Such a first-order rewriting is actually an implementation, in first-order logic, of the algorithm in the proof of Lemma~\ref{lem:firstorder}.
This is illustrated next.

\begin{example}
Let $q=\{R(\underline{x},y), S(\underline{y}, b)\}$, where $b$ is a constant.
The attack graph of $q$ contains a single directed edge, from the $R$-atom to the $S$-atom.
The first-order definition of $\cqa{q}$ is as follows:
$$
\setlength{\arraycolsep}{0pt}
\begin{array}{rl}
\exists x\exists y & \left(R(\underline{x},y)\land\right.\\
        \forall y  & \left.\formula{R(\underline{x},y)\rightarrow\formula{S(\underline{y},b)\land\forall z\formula{S(\underline{y},z)\rightarrow z=b}}}\right).
\end{array}
$$
\end{example}

\section{Intractability Result}\label{sec:intractability}

In this section, we prove the $\coNP$-hard lower complexity bound stated in the third item of Theorem~\ref{thm:main}. 


\begin{theorem}\label{the:hard}
Let $q$ be a self-join-free Boolean conjunctive query.
If the attack graph of $q$ contains a strong cycle,
then $\cqa{q}$ is $\coNP$-hard.
\end{theorem}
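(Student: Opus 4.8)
The plan is to exhibit a first-order many-one reduction from a known $\coNP$-complete problem to $\cqa{q}$ whenever the attack graph of $q$ contains a strong cycle. The canonical hard problem to reduce from is $\cqa{q_1}$ where $q_1=\{R_1(\underline{x},y),S_1(\underline{x},z)\}$ or a similar two-atom query whose $\coNP$-hardness is established in the literature via a reduction from (the complement of) MONOTONE 3SAT or positive 2-in-3-SAT. Concretely, I would first invoke Lemma~\ref{lem:cycletwo}(2) to reduce to the cleanest possible situation: since the attack graph contains a strong cycle, it contains a strong cycle of size two, so I may assume two distinct atoms $F,G\in q$ with $F\attacks{q}G\attacks{q}F$ where at least one of the two attacks is strong, say $\FD{q}\not\models\fd{\keyvars{F}}{\keyvars{G}}$. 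Write $R,S$ for the relation names of $F,G$ respectively.

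The core of the argument mirrors the encoding used in the proof of Lemma~\ref{lem:notFO}, but now exploiting that the cycle is \emph{strong}. The key difference is combinatorial: in the weak-cycle case the valuations $\bival{a}{b}$ could be chosen so that the keys of $F$ and $G$ were functionally tied together, forcing matching repairs; strongness breaks exactly this tie, which is what makes the problem jump from $\P$ to $\coNP$-complete. I would therefore design a pair of typed valuations that, for each fact of the source instance, installs the gadget $\bival{a}{b}(q\setminus\{G\})$ and $\bival{a}{b}(q\setminus\{F\})$, using the condition $\FD{q}\not\models\fd{\keyvars{F}}{\keyvars{G}}$ to guarantee that the $R$-block structure and the $S$-block structure can be set independently. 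This independence is what lets an arbitrary truth assignment to the source variables be encoded as an independent choice of repairs on the $R$-atoms and $S$-atoms. The analogues of Sublemmas~\ref{lem:notFnotG} and~\ref{lem:FandG} must be re-proved, establishing that (i) every atom $H\in q\setminus\{F,G\}$ still produces key-equal, indeed identical, facts across the relevant valuation pairs so it contributes no spurious choices, and (ii) $F$ and $G$ behave as independent two-element blocks. One then checks that the gadget query $q$ holds in a repair of $f(\db)$ iff the corresponding source query holds in the matching repair of $\db$, using the strong-attack witnesses in both directions to trace variables back to source facts exactly as in the $b=b'$ induction of Lemma~\ref{lem:notFO}.

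The main obstacle, and the place where real work is needed, is the degenerate overlap between $\keycl{F}{q}$ and $\keycl{G}{q}$ together with the possibility that $F$ and $G$ share variables in ways that make the two blocks \emph{not} fully independent. In the weak case this was benign; in the strong case I must verify that the source of hardness (the 3SAT-style clause gadget) can actually be realized, i.e. that the partition of $\queryvars{q}$ induced by membership in $\keycl{F}{q}$ versus $\keycl{G}{q}$ leaves enough ``free'' variables (those in $\queryvars{q}\setminus(\keycl{F}{q}\cup\keycl{G}{q})$) to carry the clause encoding. I expect to have to reduce from a three-valued or paired encoding rather than a plain two-atom instance, choosing constants $a,b$ (and pairs $\tuple{a,b}$) so that satisfying assignments correspond to repairs falsifying $q$. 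Handling the boundary case where one of the two attacks is weak while the other is strong—so strongness lives on a single edge of the size-two cycle—will require care: the reduction must break the functional tie on the strong edge while respecting the constraint imposed by the weak edge, and confirming that this asymmetric gadget still admits the full set of independent repair choices is the delicate step. Once the gadget is shown to be faithful and first-order computable, $\coNP$-hardness of $\cqa{q}$ follows from the $\coNP$-hardness of the source problem, completing the proof.
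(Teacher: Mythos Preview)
Your proposal has the right architecture---invoke Lemma~\ref{lem:cycletwo}(2) to obtain a strong $2$-cycle $F\attacks{q}G\attacks{q}F$, then build a first-order reduction by instantiating $q$ via carefully chosen valuations, in the spirit of Lemma~\ref{lem:notFO}---but two concrete ingredients are missing, and without them the plan does not go through.

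First, your source problem is wrong. The query $q_1=\{R_1(\underline{x},y),S_1(\underline{x},z)\}$ has an acyclic attack graph (neither atom attacks the other, since $\keycl{R_1}{q_1}=\keycl{S_1}{q_1}=\{x\}$), so $\cqa{q_1}$ is in $\FO$ by Theorem~\ref{the:FO} and cannot serve as the hard source. The paper reduces from $q_{\pema}=\{R_{\pema}(\underline{x},y),\,S_{\pema}(\underline{y,z},x)\}$, whose $\coNP$-hardness is established in~\cite{DBLP:journals/ipl/KolaitisP12}. The composite key $\{y,z\}$ of $S_{\pema}$ is not cosmetic: the extra coordinate $z$ is precisely what must be injected into $\keyvars{G}$ on the target side, and this is only possible because the attack $F\attacks{q}G$ is \emph{strong}, i.e.\ $\keyvars{G}\nsubseteq\keyclsup{F}{q}$.

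Second, the two-parameter valuation $\bival{a}{b}$ you propose is not expressive enough. The paper uses a \emph{three}-parameter valuation $\trival{a}{b}{c}$ whose case analysis partitions $\queryvars{q}$ by membership in $\keycl{F}{q}$, $\keycl{G}{q}$, and the \emph{full} closure $\keyclsup{F}{q}$ (not just $\keycl{F}{q}$). The region $\keycl{G}{q}\setminus\keyclsup{F}{q}$, nonempty exactly because the attack is strong, is where the parameter $c$ lives; this is what makes Sublemma~\ref{lem:FandGstrong}(3) go through, namely that $\trival{a}{b}{c}(G)$ and $\trival{a'}{b'}{c'}(G)$ are key-equal iff $b=b'$ \emph{and} $c=c'$. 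Your partition based only on $\keycl{F}{q}$ and $\keycl{G}{q}$ cannot distinguish this region from $(\keycl{G}{q}\cap\keyclsup{F}{q})\setminus\keycl{F}{q}$, so you have no place to put $c$, and the $S$-blocks in your target instance will not faithfully encode the $S_{\pema}$-blocks. Your closing remark that you ``expect to have to reduce from a three-valued or paired encoding'' is on the right track, but the specific mechanism---the set $\keyclsup{F}{q}$ and the six-region partition it induces---is the missing idea.
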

\begin{proof}
Assume that the attack graph of $q$ contains a strong cycle.
By Lemma~\ref{lem:cycletwo}, we can assume $F,G\in q$ such that $F\attacks{q}G\attacks{q}F$ and the attack $F\attacks{q}G$ is strong.
We will assume hereinafter that the relation names in $F$ and $G$ are $R$ and $S$ respectively.

Let $q_{\pema}=\{R_{\pema}(\underline{x},y),S_{\pema}(\underline{y,z},x)\}$.
We show hereinafter that there exists a polynomial-time (and even first-order) many-one reduction from $\cqa{q_{\pema}}$ to $\cqa{q}$.
Since it is known~\cite{DBLP:journals/ipl/KolaitisP12} that $\cqa{q_{\pema}}$ is $\coNP$-hard,
it follows that  $\cqa{q}$ is $\coNP$-hard.

\begin{figure}\centering
\includegraphics{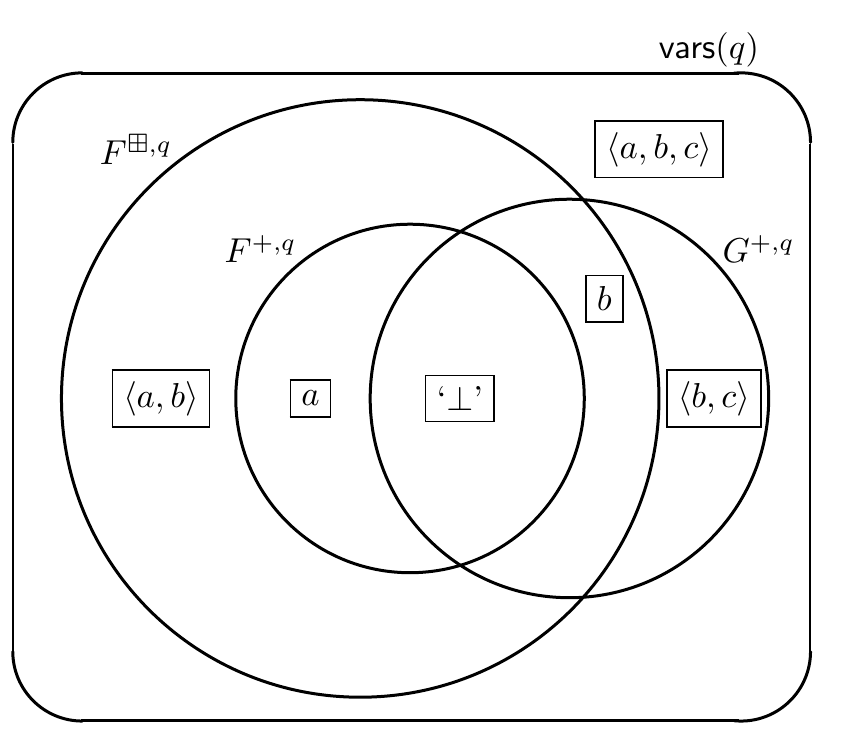}
\caption{Help for the proof of Theorem~\ref{the:hard}.}\label{fig:venn}
\end{figure}

For all constants $a,b,c$,
we define $\trival{a}{b}{c}$ as the following valuation over $\queryvars{q}$ (see Fig.~\ref{fig:venn} for a mnemonic).
Let $\bot$ be some fixed constant.
\begin{enumerate}
\item
If $u\in\keycl{F}{q}\cap\keycl{G}{q}$, then $\trival{a}{b}{c}(u)=\bot$;
\item
if $u\in\keycl{F}{q}\setminus\keycl{G}{q}$, then $\trival{a}{b}{c}(u)=a$;
\item
if $u\in\keycl{G}{q}\setminus\keyclsup{F}{q}$, then $\trival{a}{b}{c}(u)=\tuple{b,c}$;
\item
if $u\in\formula{\keycl{G}{q}\cap\keyclsup{F}{q}}\setminus\keycl{F}{q}$, then $\trival{a}{b}{c}(u)=b$;
\item
if $u\in\keyclsup{F}{q}\setminus\formula{\keycl{F}{q}\cup\keycl{G}{q}}$, then $\trival{a}{b}{c}(u)=\tuple{a,b}$; and
\item
if $u\not\in\keyclsup{F}{q}\cup\keycl{G}{q}$, then $\trival{a}{b}{c}(u)=\tuple{a,b,c}$.
\end{enumerate}

\begin{sublemma}\label{lem:notFnotGstrong}
For all constants $a,b,c,a',b',c'$,
if $H\in q\setminus\{F,G\}$,
then $\{\trival{a}{b}{c}(H),\trival{a'}{b'}{c'}(H)\}$ is consistent.
\end{sublemma}
\begin{subproof}[of Sublemma~\ref{lem:notFnotG}]
Assume that for all $u\in\keyvars{H}$, 
\begin{equation}\label{eq:keyeq}
\trival{a}{b}{c}(u)=\trival{a'}{b'}{c'}(u).
\end{equation}
We distinguish four cases.
\begin{description}
\item[Case $a=a'$ and $b=b'$.]
If $c=c'$, then $\trival{a}{b}{c}(H)=\trival{a'}{b'}{c'}(H)$.
Assume next $c\neq c'$.
From~(\ref{eq:keyeq}), it follows $\keyvars{H}\subseteq\keyclsup{F}{q}$.
Consequently, $\atomvars{H}\subseteq\keyclsup{F}{q}$.
Since $c$ does not occur inside $\keyclsup{F}{q}$ in the Venn diagram of Fig.~\ref{fig:venn},
we have $\trival{a}{b}{c}(H)=\trival{a'}{b'}{c'}(H)$.
\item[Case $a=a'$ and $b\neq b'$.]
From~(\ref{eq:keyeq}), it follows $\keyvars{H}\subseteq\keycl{F}{q}$, hence $\atomvars{H}\subseteq\keycl{F}{q}$.
Since $b$ and $c$ do not occur inside $\keycl{F}{q}$ in the Venn diagram, $\trival{a}{b}{c}(H)=\trival{a'}{b'}{c'}(H)$.
\item[Case $a\neq a'$ and $b=b'$.]
First assume $c=c'$.
From~(\ref{eq:keyeq}), it follows $\keyvars{H}\subseteq\keycl{G}{q}$, hence $\atomvars{H}\subseteq\keycl{G}{q}$.
Since $c$ does not occur inside $\keycl{G}{q}$ in the Venn diagram, $\trival{a}{b}{c}(H)=\trival{a'}{b'}{c'}(H)$.

Next assume $c\neq c'$.
From~(\ref{eq:keyeq}), it follows $\keyvars{H}\subseteq\keyclsup{F}{q}\cap\keycl{G}{q}$, hence $\atomvars{H}\subseteq\keyclsup{F}{q}\cap\keycl{G}{q}$.
Since $a$ and $c$ do not occur inside $\keyclsup{F}{q}\cap\keycl{G}{q}$ in the Venn diagram, $\trival{a}{b}{c}(H)=\trival{a'}{b'}{c'}(H)$.
\item[Case $a\neq a'$ and $b\neq b'$.]
From~(\ref{eq:keyeq}), it follows $\keyvars{H}\subseteq\keycl{F}{q}\cap\keycl{G}{q}$, hence $\atomvars{H}\subseteq\keycl{F}{q}\cap\keycl{G}{q}$.
Since $a,b,c$ do not occur inside $\keycl{F}{q}\cap\keycl{G}{q}$ in the Venn diagram, $\trival{a}{b}{c}(H)=\trival{a'}{b'}{c'}(H)$.
\end{description}
\end{subproof}

\begin{sublemma}\label{lem:FandGstrong}
For all constants $a,b,c,a',b',c'$,
\begin{enumerate}
\item\label{it:abceonestrong}
$\trival{a}{b}{c}(F)$ and $\trival{a'}{b'}{c'}(F)$ are key-equal if{f} $a=a'$.
\item\label{it:abcetwostrong}
$\trival{a}{b}{c}(F)=\trival{a'}{b'}{c'}(F)$ if{f} $a=a'$ and $b=b'$.
\item\label{it:abcethreestrong}
$\trival{a}{b}{c}(G)$ and $\trival{a'}{b'}{c'}(G)$ are key-equal if{f}  $b=b'$ and $c=c'$.
\item\label{it:abcefourstrong}
$\trival{a}{b}{c}(G)=\trival{a'}{b'}{c'}(G)$ if{f} $a=a'$ and $b=b'$ and $c=c'$.
\end{enumerate}
\end{sublemma}
\begin{subproof}[of Sublemma~\ref{lem:FandGstrong}]

\framebox{\ref{it:abceonestrong}. $\implies$}
Consequence of $\keyvars{F}\nsubseteq\keycl{G}{q}$ (because $G\attacks{q}F$).
\framebox{\ref{it:abceonestrong}. $\impliedby$}
Consequence of $\keyvars{F}\subseteq\keycl{F}{q}$.

\framebox{\ref{it:abcetwostrong}. $\implies$}
Consequence of $\atomvars{F}\nsubseteq\keycl{F}{q}$ (because $F\attacks{q}G$).
\framebox{\ref{it:abcetwostrong}. $\impliedby$}
Consequence of $\atomvars{F}\subseteq\keyclsup{F}{q}$.

\framebox{\ref{it:abcethreestrong}. $\implies$}
Consequence of $\keyvars{G}\nsubseteq\keyclsup{F}{q}$ (because $F\attacks{q}G$ is a strong attack).
\framebox{\ref{it:abcethreestrong}. $\impliedby$}
Consequence of $\keyvars{G}\subseteq\keycl{G}{q}$.

\framebox{\ref{it:abcefourstrong}. $\implies$}
Consequence of item~\ref{it:abcethreestrong} and $\atomvars{G}\nsubseteq\keycl{G}{q}$ (because $G\attacks{q}F$).
\framebox{\ref{it:abcefourstrong}. $\impliedby$}
Trivial.
\end{subproof}

Let $\db$ be uncertain database with $R_{\pema}$-facts and $S_{\pema}$-facts.
In what follows, we assume that $\db$ is typed, as explained in Section~\ref{sec:preliminaries}.
It will be understood that $a,a_{1},a_{2},\dots$ belong to $\type{x}$, that $b,b_{1},b_{2},\dots$ belong to $\type{y}$, and that $c,c_{1},c_{2},\dots$ belong to $\type{z}$.

Let $h(\db)$ be the subset of $\db$ such that
\begin{enumerate}
\item
$h(\db)$contains all $S_{\pema}$-facts of $\db$; and
\item
$h(\db)$ contains every $R_{\pema}$-block $\block$ of $\db$ such that for every fact $R_{\pema}(\underline{a},b)$ in $\block$, 
there exists some constant $c$ such that $S_{\pema}(\underline{b,c},a)$ is in $\db$.
\end{enumerate}
Clearly, the computation of $h(\db)$ from $\db$ is in $\FO$, and the following are equivalent:
\begin{enumerate}
\item
every repair of $\db$ satisfies $q_{\pema}$; 
\item
every repair of $h(\db)$ satisfies $q_{\pema}$.
\end{enumerate}
We define $f(\db)$ as the following uncertain database:
\begin{enumerate}
\item
for every pair $\{R_{\pema}(\underline{a},b), S_{\pema}(\underline{b,c},a)\}$ contained in $h(\db)$, $f(\db)$ contains $\trival{a}{b}{c}(q\setminus\{G\})$; and
\item
for every $S_{\pema}(\underline{b,c},a)$ in $h(\db)$, $f(\db)$ contains $\trival{a}{b}{c}(q\setminus\{F\})$.
\end{enumerate}
It is easy to see that $f$ is computable in $\FO$.

Let $g(\db)$ be the subset of $f(\db)$ containing all facts of $f(\db)$ that are neither $R$-facts nor $S$-facts.

By Sublemmas~\ref{lem:notFnotGstrong} and~~\ref{lem:FandGstrong},
\begin{equation}
\repairs{f(\db)}= \{f(\rep)\cup g(\db)\mid\rep\in\repairs{\db}\}.
\end{equation}

Let $\db$ be an arbitrary database with $R_{\pema}$-facts and $S_{\pema}$-facts.
It suffices to show that the following are equivalent for every repair $\rep$ of $\db$:
\begin{enumerate}
\item\label{it:fgonestrong}
$\rep$ satisfies $q_{\pema}$;
\item\label{it:fgtwostrong}
$f(\rep)\cup g(\db)$ satisfies $q$.
\end{enumerate}
\framebox{\ref{it:fgonestrong}$\implies$\ref{it:fgtwostrong}}
This is the easier part.

\framebox{\ref{it:fgtwostrong}$\implies$\ref{it:fgonestrong}}
Let $\theta$ be a substitution over $\queryvars{q}$ such that $\theta(q)\subseteq f(\rep)\cup g(\db)$.
By our construction,
we can assume $R_{\pema}(\underline{a},b)\in\rep$ and some constant $c$ such that $\theta(F)\in\trival{a}{b}{c}(q\setminus\{G\})$.
Likewise,
we can assume $S_{\pema}(\underline{b',c'},a')\in\rep$ such that $\theta(G)\in\trival{a'}{b'}{c'}(q\setminus\{F\})$.
It suffices to show that $a=a'$ and $b=b'$.

\framebox{$b=b'$} 
Since $F\attacks{q}G$,
there exists a sequence 
$F_{0}, F_{1}, \dots, F_{n}$ of distinct atoms of $q$ such that 
\begin{itemize}
\item
$F_{0}=F$ and $F_{n}=G$; and
\item
for all $i\in\{0,\dots,n-1\}$, we can assume $u_{i}\in\atomvars{F_{i}}\cap\atomvars{F_{i+1}}$ such that $u_{i}\not\in\keycl{F}{q}$.
\end{itemize}

We show by induction on increasing $i$ that for all $i\in\{0,\dots,n-1\}$, there exist constants $a_{i}$ and $c_{i}$ such that 
for all $w_{i}\in\atomvars{F_{i}}$, we have $\theta(w_{i})\in\{\bot$, $a_{i}$, $b$, $\tuple{a_{i},b}$, $\tuple{b,c_{i}}$, $\tuple{a_{i},b,c_{i}}\}$.
\begin{description}
\item[Basis $i=0$.]
Since $\theta(F)\in\trival{a}{b}{c}(q\setminus\{G\})$,
for all $w_{0}\in\atomvars{F_{0}}$, we have $\theta(w_{0})\in\{\bot$, $a$, $b$, $\tuple{a,b}$, $\tuple{b,c}$, $\tuple{a,b,c}\}$.
\item[Step $i\rightarrow i+1$.]
By the induction hypothesis, there exist constants $a_{i}$ and $c_{i}$ such that 
for all $w_{i}\in\atomvars{F_{i}}$, we have $\theta(w_{i})\in\{\bot$, $a_{i}$, $b$, $\tuple{a_{i},b}$, $\tuple{b,c_{i}}$, $\tuple{a_{i},b,c_{i}}\}$.

From $u_{i}\not\in\keycl{F}{q}$, it follows that $\theta(u_{i})\in\{b$, $\tuple{a_{i},b}$, $\tuple{b,c_{i}}$, $\tuple{a_{i},b,c_{i}}\}$.

Since $u_{i}\in\atomvars{F_{i+1}}$, it follows that there exist constants $a_{i+1}$ and $c_{i+1}$ such that 
for all $w_{i+1}\in\atomvars{F_{i+1}}$, we have $\theta(w_{i+1})\in\{\bot$, $a_{i+1}$, $b$, $\tuple{a_{i+1},b}$, $\tuple{b,c_{i+1}}$, $\tuple{a_{i+1},b,c_{i+1}}\}$.
\end{description}
It follows that for $u_{n-1}\in\atomvars{G}$, there exist constants $a_{n-1}$ and $c_{n-1}$ such that $\theta(u_{n-1})\in\{b$, $\tuple{a_{n-1},b}$, $\tuple{b,c_{n-1}}$, $\tuple{a_{n-1},b,c_{n-1}}\}$.
From $\theta(G)\in\trival{a'}{b'}{c'}(q\setminus\{F\})$, it follows $\theta(u_{n-1})\in\{b'$, $\tuple{a',b'}$, $\tuple{b',c'}$, $\tuple{a',b',c'}\}$.
Consequently, $b=b'$.

\framebox{$a=a'$} 
Analogous.
\end{proof}

\section{Polynomial Tractability}
\label{sec:road}

In this section, we prove the $\P$ upper complexity bound stated in the second item of Theorem~\ref{thm:main}.

\begin{theorem} \label{the:ptime}
Let $q$ be a self-join-free Boolean conjunctive query.
If the attack graph of $q$ contains no strong cycle,
then $\cqa{q}$ is in $\P$.
\end{theorem}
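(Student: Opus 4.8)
The plan is to prove that $\cqa{q}$ is in $\P$ whenever the attack graph of $q$ contains no strong cycle, by induction on the number of atoms $\card{q}$, mirroring the inductive strategy already used in the $\FO$ case (Lemma~\ref{lem:firstorder}). The base case $\card{q}=0$ is trivial. For the inductive step, the key structural observation is that the absence of strong cycles means every \emph{initial strong component} of the attack graph (Definition~\ref{def:tarjan}) is either a single unattacked atom or consists entirely of weak attacks forming a weak cycle. This dichotomy drives the whole argument: when an initial strong component is a single atom $F$ with indegree zero, I would reduce to a smaller query exactly as in the $\FO$ case, invoking Lemma~\ref{lem:inFO} to ``reify'' the key of $F$ by a constant, then Lemma~\ref{lem:nonewcycles}(2) to guarantee that $\substitute{q}{x}{a}$ still has no strong cycle so the induction hypothesis applies. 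The genuinely new work is the case where the only initial strong components are \emph{weak} cycles.

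\textbf{Handling the weak-cycle case.} When the initial strong component is a weak cycle, by Lemma~\ref{lem:cycletwo} it contains a weak cycle of size two, say $F\attacks{q}G\attacks{q}F$ with $\FD{q}\models\fd{\keyvars{F}}{\keyvars{G}}$ and $\FD{q}\models\fd{\keyvars{G}}{\keyvars{F}}$, so the keys of $F$ and $G$ are functionally equivalent. The plan here is to exploit the frugal-repair machinery of Lemma~\ref{lem:frugal}: choosing a suitable set $X\subseteq\queryvars{q}$ (intuitively, the common key closure shared across the weak component), one restricts attention to $\fpro{q}{X}$-frugal repairs, in which the choices inside the weak component become \emph{correlated}. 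The intended payoff is that within a weak cycle the mutual weak attacks force the blocks participating in the cycle to be resolvable greedily or in a consistent-across-the-block manner, so that satisfaction of $q$ can be decided in polynomial time by a fixpoint or reachability computation rather than by inspecting exponentially many repairs. After fixing these correlated choices one peels off the atoms of the initial component and applies the induction hypothesis to the residual query, whose attack graph again contains no strong cycle.

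\textbf{The main obstacle} is precisely the weak-cycle case: unlike the $\FO$ case, one cannot simply delete an unattacked atom, because in a weak cycle \emph{no} atom has indegree zero within its strong component, so the clean peeling of Lemma~\ref{lem:firstorder} fails. The crux will be to show that the functional equivalence of keys inside a weak component lets us treat the whole component as a single ``super-block'' whose consistent resolution is computable in $\P$ — this is where the contrast between \emph{strong} and \emph{weak} attacks does its essential work, and where I expect the argument to require the most care (likely a dedicated algorithmic lemma, possibly reducing to reachability or $2$-SAT-style propagation in the spirit of Minty's algorithm cited in the introduction). A secondary technical point is verifying that the residual query after fixing the component's choices keeps an acyclic-or-weak-cyclic attack graph, which should follow from Lemma~\ref{lem:nonewcycles}(2) once the appropriate variables are reified, but the bookkeeping across a multi-atom component rather than a single substituted variable needs to be made precise.
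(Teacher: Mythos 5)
Your Case~1 (an unattacked atom exists) matches the paper's treatment, but Case~2 --- the weak initial strong component --- is where essentially all of the work lies, and your proposal does not contain the ideas that make it go through. First, the induction measure is wrong: the paper's reduction for this case does not shrink $\card{q}$; it replaces a set of atoms by a \emph{larger} set (a new atom $T(\underline{u},x_{0},\dots,x_{k-1},\vec{y})$ plus $k$ auxiliary atoms $U_{i}^{c}(\underline{x_{i}},u)$ known to be consistent), and the quantity that strictly decreases is the number of atoms over \emph{possibly inconsistent} relations, $\icard{q}$. This forces a conservative extension of the data model (relations of mode $c$), which your plan omits entirely. Second, ``treating the whole component as a super-block'' cannot be organized around the attack-graph cycle $F\attacks{q}G\attacks{q}F$ directly: the object that gets dissolved is an elementary cycle in a different graph, the \emph{Markov graph} on variables, and such a cycle is only guaranteed to exist after the query has been \emph{saturated} by adding consistent atoms $T^{c}(\underline{x},z)$ materializing certain implied functional dependencies (Lemma~\ref{lem:reduction}); Example~\ref{ex:markov_path} exhibits a query meeting all your hypotheses whose Markov graph is acyclic before saturation, so your ``correlated choices within the weak component'' have no well-defined database-level counterpart at that stage. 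The functional equivalence $\FD{q}\models\fd{\keyvars{F}}{\keyvars{G}}$ is a statement about valuations embedding the whole query, not a constraint visible in the instance, and bridging that gap is exactly what saturation and the clutch/Markov machinery accomplish.

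Third, even granting a cycle to dissolve, two further steps you do not anticipate are load-bearing: (i) the data-level graph $\dbgraph{\db}$ can contain cycles longer than the query cycle, or cycles whose realizing valuations disagree on non-key variables; these are eliminated by a generalized-relevance argument (gblocks, gpurification, Lemma~\ref{lem:irrelevant}) rather than by any greedy or $2$-SAT-style propagation; and (ii) the residual query must be shown to still have no strong attack cycle, which holds only when the dissolved Markov cycle is \emph{premier} (Lemma~\ref{lem:nonewstrongcycles}), a condition tied back to the initial strong component via Lemma~\ref{lem:markov_cycle}; Lemma~\ref{lem:nonewcycles} alone, which handles substitution of a single variable by a constant, does not cover this. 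In short, the proposal correctly isolates where the difficulty sits but defers it to ``a dedicated algorithmic lemma'' without supplying the saturation step, the Markov-cycle existence argument, the dissolution reduction, or the right induction measure --- and these are the proof.
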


\paragraph{Road map} The proof of Theorem~\ref{the:ptime} is technically involved.
We start by introducing in Section~\ref{subsec:mode} an extension of the data model that allows some syntactic simplifications,
expressed in Section~\ref{subsec:synsim}. 
In Section~\ref{subsec:markov}, we introduce the notion of {\em Markov cycle\/},
and show how the ``dissolution" of Markov cycles is helpful in the proof of Theorem~\ref{the:ptime}, which is given in Section~\ref{subsec:final}.
The dissolution of Markov cycles is explained in detail in Section~\ref{subsec:soluble}.

\subsection{Relations Known to Be Consistent}\label{subsec:mode}

We conservatively extend our data model.
We first distinguish between two kinds of relation names:
those that can be inconsistent, and those that cannot.

\paragraph{Relations known to be consistent}
Every relation name has a unique and fixed {\em mode\/}, which is an element in $\{i,c\}$.
It will come in handy to think of $i$ and $c$ as inconsistent and consistent respectively.
We often write $R^{c}$ to denote that $R$ is a relation name with mode $c$.
If $q$ is a self-join-free Boolean conjunctive query,
then $\consistent{q}$ denotes the subset of $q$ containing each atom whose relation name has mode $c$.
The {\em inconsistency count\/} of $q$, denoted $\icard{q}$, is the number of relation names with mode $i$ in $q$.
Modes carry over to atoms and facts: the mode of an atom $R(\underline{\vec{x}},\vec{y})$ or a fact $R(\underline{\vec{a}},\vec{b})$ is the mode of $R$.

The intended semantics is that if a relation name $R$ has mode $c$,
then the set of $R$-facts of an uncertain database will always be consistent.

\paragraph{Certain query answering with consistent and inconsistent relations} 
The problem $\cqa{q}$ now takes as input an uncertain database $\db$ such that
for every relation name $R$ in $q$,
if $R$ has mode $c$, then the set of $R$-facts of $\db$ is consistent.
The problem is to determine whether every repair of $\db$ satisfies $q$.

All results shown in previous sections carry over to the new setting, by assuming that all relation names used so far had mode $i$.
Furthermore, as stated by Proposition~\ref{pro:mode} (which has an easy proof),  relation names with mode $c$ can be simulated by means exclusively of relation names with mode $i$.
Therefore, having relation names with mode $c$ will be convenient, but is not fundamental.

\begin{proposition}\label{pro:mode}
Let $q$ be a self-join free Boolean conjunctive query. 
Let $R^{c}(\underline{\vec{x}},\vec{y})$ be an atom with mode $c$ in $q$.
Let $R_{1}$ and $R_{2}$ be two relation names, both with mode $i$ and with the same signature as $R$,
such that neither $R_{1}$ nor $R_{2}$ occurs in $q$.
Let $q'=\formula{q\setminus\{R^{c}(\underline{\vec{x}},\vec{y})\}}\cup\{R_{1}(\underline{\vec{x}},\vec{y}), R_{2}(\underline{\vec{x}},\vec{y})\}$. 
Then $\cqa{q}$ and $\cqa{q'}$ are equivalent under first-order reductions.
\end{proposition}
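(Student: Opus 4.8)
The plan is to show that the problems $\cqa{q}$ and $\cqa{q'}$ reduce to one another under first-order reductions, by translating databases in both directions so that repairs correspond appropriately. The essential idea is that a consistent relation $R^{c}$ behaves exactly like the ``synchronized'' conjunction of two independent inconsistent copies $R_{1}$ and $R_{2}$: a single consistent fact $R^{c}(\underline{\vec{a}},\vec{b})$ plays the role of a pair of copies $R_{1}(\underline{\vec{a}},\vec{b})$ and $R_{2}(\underline{\vec{a}},\vec{b})$ that are forced to agree, while the freedom to choose different facts within an $R_{1}$-block and an $R_{2}$-block with the same key is what must be suppressed.

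First I would give the reduction from $\cqa{q}$ to $\cqa{q'}$. Given an input $\db$ for $\cqa{q}$ (so the set of $R^{c}$-facts of $\db$ is already consistent), I map $\db$ to the database $\db'$ obtained by replacing every fact $R^{c}(\underline{\vec{a}},\vec{b})\in\db$ with the two facts $R_{1}(\underline{\vec{a}},\vec{b})$ and $R_{2}(\underline{\vec{a}},\vec{b})$, and leaving all other facts unchanged. Since the $R^{c}$-facts of $\db$ form singleton blocks, each $R_{1}$-block and each $R_{2}$-block of $\db'$ is also a singleton, so the repairs of $\db'$ are in one-to-one correspondence with the repairs of $\db$ (the $R_{1}$- and $R_{2}$-facts are forced). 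Under this correspondence a repair satisfies $q'$ iff the corresponding repair satisfies $q$, because $R_{1}(\underline{\vec{a}},\vec{b})\wedge R_{2}(\underline{\vec{a}},\vec{b})$ is present exactly when $R^{c}(\underline{\vec{a}},\vec{b})$ is, and every other atom is common to both queries. Hence every repair of $\db'$ satisfies $q'$ iff every repair of $\db$ satisfies $q$, and the map is clearly first-order computable.

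For the reverse reduction from $\cqa{q'}$ to $\cqa{q}$, I would take an arbitrary input $\db'$ (in which $R_{1}$ and $R_{2}$ may be inconsistent) and build a database $\db$ for $\cqa{q}$ by keeping all atoms other than $R_{1},R_{2}$ unchanged, and placing into the consistent relation $R^{c}$ exactly those tuples that occur in \emph{both} an $R_{1}$-fact and a key-equal $R_{2}$-fact: that is, $R^{c}(\underline{\vec{a}},\vec{b})\in\db$ iff $R_{1}(\underline{\vec{a}},\vec{b})\in\db'$ and $R_{2}(\underline{\vec{a}},\vec{b})\in\db'$. This intersection is computable in first-order logic, but note that the resulting set of $R^{c}$-facts need \emph{not} be consistent if two distinct full tuples share a key. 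The main obstacle is precisely to argue that this mismatch does not affect the certain answer: I would show that for every repair $\rep'$ of $\db'$, the witnessing valuation of $q'$ uses a single $R_{1}$-fact and a single $R_{2}$-fact which, being drawn from distinct blocks, may disagree, so the correct invariant to prove is the equivalence ``every repair of $\db'$ satisfies $q'$ iff every repair of $\db$ satisfies $q$'' by analyzing how repairs on each side choose from the relevant blocks. I would handle this by using Lemma~\ref{lem:purified} to first purify $\db'$ relative to $q'$, which removes the dangerous tuples: after purification, every surviving $R_{1}$- or $R_{2}$-fact extends to a satisfying valuation, and the self-join-freeness of $q'$ together with the strict typing of the copies forces the matched $R_{1}$- and $R_{2}$-facts to come from the same key, so the intersection construction yields a consistent $R^{c}$-relation with repairs in exact correspondence with those of $\db'$. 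The bulk of the work, and the place where care is required, is verifying this correspondence between repairs after purification; once it is established, both directions combine to give the claimed equivalence under first-order reductions.
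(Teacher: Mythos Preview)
Your forward reduction is fine. The reverse reduction, however, has a genuine gap: the intersection $R_{1}\cap R_{2}$ need not be consistent, and purification does \emph{not} repair this. Take $q'=\{R_{1}(\underline{x},y),R_{2}(\underline{x},y)\}$ and
\[
\db'=\{R_{1}(\underline{a},b_{1}),\,R_{1}(\underline{a},b_{2}),\,R_{2}(\underline{a},b_{1}),\,R_{2}(\underline{a},b_{2})\}
\]
with $b_{1}\neq b_{2}$. All four facts are relevant for $q'$ in $\db'$, so $\db'$ is already purified; yet your intersection yields $\{R^{c}(\underline{a},b_{1}),R^{c}(\underline{a},b_{2})\}$, which is not a legal input to $\cqa{q}$. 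Worse, even when the intersection happens to be consistent, the reduction is incorrect. With $\db'=\{R_{1}(\underline{a},b),R_{1}(\underline{a},b'),R_{2}(\underline{a},b)\}$ (again $b\neq b'$) the intersection is the singleton $\{R^{c}(\underline{a},b)\}$, whose unique repair satisfies $q$; but the repair $\{R_{1}(\underline{a},b'),R_{2}(\underline{a},b)\}$ of $\db'$ falsifies $q'$. So your map sends a no-instance of $\cqa{q'}$ to a yes-instance of $\cqa{q}$.

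The fix is to be more restrictive: put $R^{c}(\underline{\vec{a}},\vec{b})$ into $\db$ only when the $R_{1}$-block and the $R_{2}$-block with key $\vec{a}$ in $\db'$ are \emph{both} the singleton $\{(\vec{a},\vec{b})\}$. This is first-order definable and automatically consistent. For correctness, given a repair $\rep$ of $\db$ build the repair $\rep'$ of $\db'$ that agrees with $\rep$ outside $R_{1},R_{2}$ and, for every key $\vec{a}$ where disagreement is possible, deliberately chooses $R_{1}$- and $R_{2}$-facts with different non-key parts; then the pairs on which $R_{1}$ and $R_{2}$ agree in $\rep'$ are exactly the $R^{c}$-facts of $\db$, and $\rep'\models q'$ forces $\rep\models q$. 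Conversely, if every repair of $\db$ satisfies $q$, any repair $\rep'$ of $\db'$ must contain $R_{1}(\underline{\vec{a}},\vec{b})$ and $R_{2}(\underline{\vec{a}},\vec{b})$ whenever $R^{c}(\underline{\vec{a}},\vec{b})\in\db$ (these blocks are singletons), so the witnessing valuation for $q$ in the corresponding repair of $\db$ also witnesses $q'$ in $\rep'$. (The paper asserts the proposition has an easy proof and does not spell it out, so there is no ``paper's proof'' to compare against beyond this.)
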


If relation names with mode $c$ are allowed for syntactic convenience, the definition of $\keycl{F}{q}$ needs slight change:
$$
\keycl{F}{q}\defeq\{x\in\queryvars{q}\mid\FD{\formula{q\setminus F}\cup\consistent{q}}\models\fd{\keyvars{F}}{x}\}
$$
Modulo this redefinition, the notion of attack graph remains unchanged.

Proposition~\ref{pro:mode} explains how to replace atoms with mode~$c$.
Conversely, the following lemma states that in pursuing a proof for Theorem~\ref{the:ptime}, 
there are cases where a self-join-free Boolean conjunctive query can be extended with atoms of mode~$c$.

\begin{lemma}\label{lem:reduction}
Let $q$ be a self-join-free Boolean conjunctive query.
Let $x,z\in\queryvars{q}$ such that $\FD{q}\models \fd{x}{z}$ and for every $F\in q$, 
if $\FD{q}\models\fd{x}{\keyvars{F}}$, then $F\nattacks{q}x$ and $F\nattacks{q}z$.
Let $q'=q\cup\{T^c(\underline{x},z)\}$, where $T$ is a fresh relation name with mode $c$.
Then, 
\begin{enumerate}
\item\label{it:reductionone} there exists a polynomial-time many-one reduction from $\cqa{q}$ to  $\cqa{q'}$; and
\item\label{it:reductiontwo} if the attack graph of $q$ contains no strong cycle, then the attack graph of $q'$ contains no strong cycle either.
\end{enumerate} 
\end{lemma}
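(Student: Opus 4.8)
The plan is to prove both items by transporting the question back to the attack-theoretic machinery of the previous sections, using the hypothesis $\FD q\models\fd x z$ to make the consistent atom $T^{c}(\underline x,z)$ behave like a branch-free lookup table. For item~\ref{it:reductionone} I would fix a purified, typed input $\db$ (Lemma~\ref{lem:purified}) and set $f(\db)\defeq\db\cup T_{0}$ for a set $T_{0}$ of $T$-facts to be chosen. Since $T$ has mode $c$ and $T_{0}$ will be consistent, every $T$-block is a singleton, so $\repairs{f(\db)}=\{\rep\cup T_{0}\mid\rep\in\repairs\db\}$. Because $q\subseteq q'$ and $T$ does not occur in $q$, any repair $\rep\cup T_{0}$ satisfying $q'$ satisfies $q$ through $\rep$, so $f(\db)\in\cqa{q'}$ implies $\db\in\cqa q$ for \emph{every} consistent $T_{0}$. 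The entire content of the reduction is thus the converse: I must pick one consistent $T_{0}$ such that every repair $\rep$ with $\rep\models q$ admits a valuation $\theta$ with $\theta(q)\subseteq\rep$ and $T(\underline{\theta(x)},\theta(z))\in T_{0}$.

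The choice of $T_{0}$ rests on a determinism claim: for any repairs $\rep_{1},\rep_{2}$ and valuations $\theta_{1},\theta_{2}$ with $\theta_{i}(q)\subseteq\rep_{i}$ and $\theta_{1}(x)=\theta_{2}(x)$, one has $\theta_{1}(z)=\theta_{2}(z)$. Granting this, the value of $z$ forced by $x=a$ is repair-independent, and I set $T_{0}\defeq\{T(\underline a,c_{a})\}$ over those $a$ occurring as $\theta(x)$ for some embedding into some repair, with $c_{a}$ the associated $z$-value; this $T_{0}$ is consistent by construction and computable in polynomial time by following a sequential proof of $\fd x z$ from $x=a$ (Definition~\ref{def:sqp}). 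It witnesses the converse direction, since within one consistent repair $\FD q\models\fd x z$ already forces $\theta(z)$ from $\theta(x)$. To prove the determinism claim I would take a sequential proof $H_{0},\dots,H_{\ell}$ of $\FD q\models\fd x z$ and push agreement of $\theta_{1},\theta_{2}$ forward: if $H_{m}$ is the first atom on which they disagree, then $\theta_{1},\theta_{2}$ agree on $\keyvars{H_{m}}$, so $\theta_{1}(H_{m}),\theta_{2}(H_{m})$ are key-equal facts sitting in a common block of $\db\supseteq\rep_{1}\cup\rep_{2}$ while disagreeing on a variable that propagates to $z$ — exactly the configuration certifying an attack $H_{m}\attacks q z$. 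But each such $H_{m}$ lies in $K\defeq\{F\mid\FD q\models\fd x{\keyvars F}\}$, whose atoms satisfy $F\nattacks q z$ by hypothesis, a contradiction.

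For item~\ref{it:reductiontwo} I first record that $\FD{q'}$ and $\FD q$ are equivalent FD theories, since the only dependency contributed by $T$ is $\fd x z$, already implied by $\FD q$; hence the weak/strong label of any attack between original atoms is unchanged, and each key-closure can only grow by the closure of $\{z\}$, and only when $x$ already lies in it, so $\keycl F{q'}=\keycl F q$ whenever $x\notin\keycl F q$. The crucial structural fact is that $T$ has out-degree zero in the attack graph of $q'$: a witness leaving $T$ would have to begin with a variable of $\atomvars T=\{x,z\}$, but $\{x,z\}\subseteq\keycl T{q'}$ because $\FD q\models\fd x z$, so no first step exists. Thus no directed cycle of $q'$ passes through $T$, and it suffices to show that every attack $F\attacks{q'}G$ between original atoms is already an attack $F\attacks q G$. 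Given a $q'$-witness, I would reroute it around $T$: wherever it enters and leaves $T$ through the same variable $v\in\{x,z\}$ (necessarily $v\notin\keycl F{q'}\supseteq\keycl F q$), its predecessor and successor share $v$ and $T$ is simply deleted; the remaining case, entering through $x$ and leaving through $z$, forces $x\notin\keycl F{q'}$ and hence $\keycl F{q'}=\keycl F q$, and is bridged by a sequential proof of $\fd x z$ through atoms of $K$, where the hypothesis keeps the bridge outside $\keycl F q$. A strong cycle in $q'$ would therefore yield a strong cycle in $q$, which is the contrapositive of item~\ref{it:reductiontwo}.

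The main obstacle is the determinism claim behind item~\ref{it:reductionone} (equivalently, the faithful rerouting in item~\ref{it:reductiontwo}): in both places one must turn a cross-repair disagreement in the value of $z$ into a genuine attack onto $z$ by an atom of $K$, and the delicate point is controlling the key-closures so that the disagreeing variables genuinely form a witness $\nsubseteq\keycl{H_{m}}{q}$. Here the mode-$c$ atoms of $\consistent q$, whose facts are repair-independent, and the two hypotheses $F\nattacks q x$ and $F\nattacks q z$ for $F\in K$ must be used together, the first to pin down the key side of the propagation and the second to forbid the terminal disagreement in $z$.
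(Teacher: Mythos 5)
Your proof of item~\ref{it:reductionone} rests on a determinism claim that is false: it is \emph{not} true that any two valuations $\theta_{1},\theta_{2}$ embedding $q$ into two \emph{different} repairs with $\theta_{1}(x)=\theta_{2}(x)$ must agree on $z$. Take the query of Example~\ref{ex:saturated}, $q=\{R(\underline{x},y)$, $S_1(\underline{y},z)$, $S_2(\underline{y},z)$, $T^c(\underline{x,z},w)$, $U(\underline{w},x)\}$, with the pair $(y,z)$ in the role of the lemma's $(x,z)$ (the paper confirms the hypotheses hold there), and a database containing $S_{1}(\underline{c},d_{1})$, $S_{1}(\underline{c},d_{2})$, $S_{2}(\underline{c},d_{1})$, $S_{2}(\underline{c},d_{2})$ plus matching $R$-, $T$- and $U$-facts for both $d_{1}$ and $d_{2}$: one repair embeds $q$ with $z\mapsto d_{1}$, another with $z\mapsto d_{2}$, both with $y\mapsto c$. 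The step of your argument that breaks is the passage from a cross-repair disagreement to a syntactic attack: two key-equal facts $\theta_{1}(H_{m}),\theta_{2}(H_{m})$ disagreeing on a variable whose value ``propagates to $z$'' do \emph{not} certify $H_{m}\attacks{q}z$, because a witness requires the connecting variables to lie outside $\keycl{H_{m}}{q}$, and the closure $\keycl{H_{m}}{q}$ is computed assuming the \emph{other} atoms behave functionally --- which holds inside one repair but not across two. In the example the disagreement runs entirely inside $\keycl{S_{1}}{q}\ni z$, consistently with $S_{1}\nattacks{q}z$. Consequently your $T_{0}$ is not well defined for such ``ambiguous'' constants, and the converse direction of the reduction is unproved.

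The paper avoids this by never claiming uniqueness across arbitrary repairs. Its Sublemma~\ref{sub:chain} anchors the uniqueness of the $z$-value at a $\fpro{q}{\{x,z\}}$-\emph{frugal} repair (proved by an iterative fact-swapping argument using Lemma~\ref{lem:dontcarebis}, which is where the hypotheses $F\nattacks{q}x$, $F\nattacks{q}z$ enter); Sublemma~\ref{sub:chaincons} then shows that if a constant $a$ admits two distinct $z$-values in $\db$, no frugal repair satisfies $\substitute{q}{x}{a}$ at all; and the reduction first \emph{deletes} every block touching such an ambiguous constant (the sequence~(\ref{eq:saturate})), justifying this by Lemma~\ref{lem:frugal}, before adding the consistent $T$-facts to the now-unambiguous residue. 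This purging step is the missing idea in your proposal. Your sketch of item~\ref{it:reductiontwo} is broadly aligned with the paper's (out-degree zero for $T$, pulling $q'$-attacks back to $q$-attacks), but note that the case in which \emph{every} sequential proof of $\FD{q}\models\fd{x}{z}$ passes through the attacking atom $F$ itself cannot be ``bridged'' by rerouting and must be handled separately, as the paper does, by deriving a contradiction directly from $F\nattacks{q}x$ and $F\nattacks{q}z$.
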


\paragraph{Saturated queries}
Given a self-join-free Boolean conjunctive query,
the reduction of Lemma~\ref{lem:reduction} can be repeated until it can no longer be applied.
The query so obtained will be called {\em saturated\/}. 

\begin{definition}\label{def:saturated}
Let $q$ be a self-join-free Boolean conjunctive query.
We say that $q$ is {\em saturated\/} if whenever $x,z \in\queryvars{q}$ such that $\FD{q}\models\fd{x}{z}$ and $\FD{\consistent{q}}\not\models\fd{x}{z}$, 
then there exists an atom $F\in q$ with $\FD{q}\models\fd{x}{\keyvars{F}}$ such that $F\attacks{q}x$ or $F\attacks{q}z$.
\end{definition}

\begin{example}\label{ex:saturated}
Consider the query
$q=\{R(\underline{x},y)$, $S_1(\underline{y},z)$, $S_2(\underline{y},z)$, $T^c(\underline{x,z},w)$, $U(\underline{w},x)\}$.
We have $\FD{q}\models\fd{y}{z}$ and $\FD{\consistent{q}}\not \models\fd{y}{z}$.
The set $\{F\in q\mid\FD{q}\models\fd{y}{\keyvars{F}}\}$ equals $\{S_{1}, S_{2}\}$.
We have neither $S_{1}\attacks{q}y$ nor $S_{1}\attacks{q}z$.
Likewise, neither $S_{2}\attacks{q}y$ nor $S_{2}\attacks{q}z$.
Hence, $q$ is not saturated. 
By Lemma~\ref{lem:reduction}, there exists a polynomial-time many-one reduction from $\cqa{q}$ to $\cqa{q'}$ with $q'=q\cup\{S^c(\underline{y},z)\}$, where $S$ is a fresh relation name with mode~$c$. 
It can be verified that the query $q'$ is saturated.
\end{example}

\subsection{Syntactic Simplifications}\label{subsec:synsim}

The following lemma shows that any proof of Theorem~\ref{the:ptime} can assume some syntactic simplifications without loss of generality.

\begin{lemma}\label{lem:mostsimplified}
Let $q$ be a self-join-free Boolean conjunctive query.
There exists a polynomial-time many-one reduction from $\cqa{q}$ to $\cqa{q'}$ for some self-join-free Boolean conjunctive query $q'$ with the following properties:
\begin{itemize}
\item
$\icard{q'}\leq\icard{q}$; 
\item
no atom in $q'$ contains two occurrences of the same variable; 
\item
constants occur in $q'$ exclusively at the primary-key position of simple-key atoms;
\item
every atom with mode $i$ in $q'$ is simple-key;
\item
$q'$ is saturated; and
\item
if the the attack graph of $q$ contains no strong cycle,
then the attack graph of $q'$ contains no strong cycle either.
\end{itemize}
\end{lemma}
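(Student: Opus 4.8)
The plan is to prove Lemma~\ref{lem:mostsimplified} by exhibiting a sequence of first-order (hence polynomial-time) many-one reductions, each establishing one of the listed properties while preserving the others that have already been secured, and crucially preserving the ``no strong cycle'' invariant. Since the composition of finitely many first-order reductions is again a first-order reduction, it suffices to handle each simplification in isolation and then chain them in a carefully chosen order. The inconsistency-count bound $\icard{q'}\leq\icard{q}$ must be maintained throughout, so at no step may we introduce a new relation name of mode~$i$ unless we simultaneously eliminate one; the safe strategy is to add only mode-$c$ atoms (which do not increase $\icard{\cdot}$) and to merge or specialize existing atoms rather than split them.

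\textbf{Order of reductions.} First I would remove repeated variable occurrences within a single atom: if an atom $F$ contains a variable $x$ at two positions, the equality between those positions can be enforced on the instance side by a first-order-computable preprocessing that discards every fact whose two corresponding coordinates disagree, after which we may rename one occurrence to a fresh variable governed by a mode-$c$ equality atom; this is exactly the kind of move licensed by Proposition~\ref{pro:mode} and Lemma~\ref{lem:reduction}. Second I would push constants out of non-primary-key positions and out of composite-key positions: a constant $a$ at a position of $F$ is handled by filtering the input database to facts carrying $a$ there and then dropping that coordinate from the relation, a reduction that is clearly first-order and leaves the attack graph unchanged up to renaming. Third, and this is the delicate step, I would turn every mode-$i$ atom into a simple-key atom. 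Finally I would saturate the query by repeatedly applying Lemma~\ref{lem:reduction} until Definition~\ref{def:saturated} is met; termination is guaranteed because each application strictly enlarges $\consistent{q}$ (it adds a mode-$c$ atom and hence new derivable functional dependencies in $\FD{\consistent{q}}$), and there are only finitely many functional dependencies over $\queryvars{q}$, so the process halts. Saturation must be performed \emph{last}, since the earlier steps alter the variable set and the atom structure, and re-running saturation afterward costs nothing and cannot be undone by a step that is not subsequently applied.

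\textbf{The main obstacle.} The hardest part is making every mode-$i$ atom simple-key while (a) not increasing $\icard{\cdot}$ and (b) preserving the absence of strong cycles. The idea is that a composite-key mode-$i$ atom $R(\underline{x_1,\dots,x_k},\vec{y})$ can be replaced by a single mode-$i$ atom with a fresh simple key variable $w$ together with mode-$c$ atoms that functionally tie $w$ to the original key $\{x_1,\dots,x_k\}$, realizing on the instance side the bijection between composite key tuples and the values of~$w$. Because only one mode-$i$ relation name is introduced per mode-$i$ relation name removed, $\icard{\cdot}$ is preserved. The real work is to verify that this surgery does not create a strong attack cycle: one must show, using the redefinition $\keycl{F}{q}\defeq\{x\mid\FD{(q\setminus F)\cup\consistent{q}}\models\fd{\keyvars{F}}{x}\}$, that the key closures and sup-closures of all atoms are preserved by the rewrite (the added mode-$c$ atoms make $w$ and $\{x_1,\dots,x_k\}$ mutually determined, so closures are unchanged modulo the fresh variable), whence attacks and their strong/weak status are preserved. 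I expect to invoke item~\ref{it:reductiontwo} of Lemma~\ref{lem:reduction} at each insertion of a mode-$c$ atom to carry the no-strong-cycle invariant across the whole construction, and to check the remaining structural preservations by a direct but routine analysis of witnesses for attacks before and after each rewrite.
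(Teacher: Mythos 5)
Your proposal is correct and takes essentially the same route as the paper's proof: the paper first applies Lemma~\ref{lem:simplify} (replace each atom $F$ by a fresh-named atom $G$ with $\keyvars{G}=\keyvars{F}$ and $\atomvars{G}=\atomvars{F}$, which removes duplicate variable occurrences and stray constants in one stroke and avoids your fresh-variable/one-directional equality-atom detour, whose single mode-$c$ dependency $\fd{x}{x'}$ without the converse could otherwise perturb key closures), then Lemma~\ref{lem:makesimple}, which is exactly your $w$-indirection $R_{1}^{c}(\underline{\vec{x}},w)$, $R_{2}^{c}(\underline{w},\vec{x})$, $S(\underline{w},\vec{y})$ together with the witness-by-witness verification that no strong cycle is created. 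Saturation via repeated application of Lemma~\ref{lem:reduction} is likewise performed last, with the same finiteness argument for termination.
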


\subsection{Dissolving Markov Cycles} \label{subsec:markov}

The following definition introduces Markov graphs.

\begin{definition}\label{def:markov} 
Let $q$ be a self-join-free Boolean conjunctive query such that every atom with mode $i$ in $q$ is simple-key.
For every $x\in\queryvars{q}$, we define
$$\clutch{x}{q}\defeq\{F\in q\mid\mbox{$F$ has mode $i$ and $\keyvars{F}=\{x\}$}\}.$$
Notice that $\clutch{x}{q}$ can be empty.

The {\em Markov graph\/} of $q$ is a directed graph whose vertex set is $\queryvars{q}$.
There is a directed edge from $x$ to $y$, denoted $x\markovarg{q}y$, if $x\neq y$ and $\FD{\clutch{x}{q}\cup\consistent{q}}\models\fd{x}{y}$. 
If the query $q$ is clear from the context, then $x\markovarg{q}y$ can be shortened into $x\markov y$.
We write $x\markovpatharg{q}y$ (or $x\markovpath y$ if $q$ is clear from the context) if the Markov graph of $q$ contains a directed path from $x$ to $y$.\footnote{The term Markov refers to the intuition that in a Markov path, each variable functionally determines the next variable in the path, independently of preceding variables.}
Notice that for every $x\in\queryvars{q}$, $x\markovpatharg{q}x$.

An elementary directed cycle $\calC$ in the Markov graph of $q$ is said to be {\em premier\/} if there exists a variable $x\in\queryvars{q}$ such that
\begin{enumerate}
\item
$\{x\}=\keyvars{F_{0}}$ for some atom $F_{0}$  with mode $i$ that belongs to an initial strong component of the attack graph of $q$; and
\item
for some $y$ in $\calC$, we have $x\markovpatharg{q}y$ and $\FD{q}\models\fd{y}{x}$.
\end{enumerate}
The term {\em Markov edge\/} is used for an edge in the Markov graph; likewise for {\em Markov path\/} and {\em Markov cycle\/}. 
\end{definition}

\begin{figure*}\centering
\begin{tabular}{cc}
\includegraphics[scale=1.0]{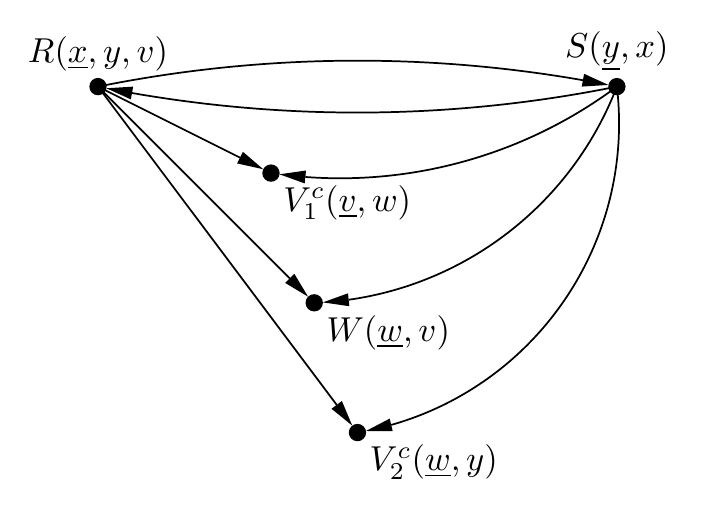}
&
\includegraphics[scale=1.0]{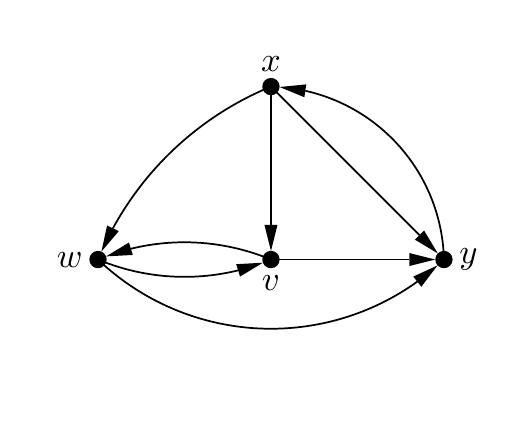}
\end{tabular}
\caption{Attack graph (left) and Markov graph (right) of the query
$\{R(\underline{x},y,v)$, $S(\underline{y},x)$, $V_{1}^{c}(\underline{v},w)$, $W(\underline{w},v)$ $V_{2}^{c}(\underline{w},y)\}$.}\label{fig:attmar}
\end{figure*}

\begin{example}
Let $q=\{R(\underline{x},y,v)$, $S(\underline{y},x)$, $V_{1}^{c}(\underline{v},w)$, $W(\underline{w},v)$ $V_{2}^{c}(\underline{w},y)\}$.
All atoms in $q$ are simple-key.
Then, $\consistent{q}=\{V_{1}^{c}(\underline{v},w)$, $V_{2}^{c}(\underline{w},y)\}$.

We have $\clutch{x}{q}=\{R(\underline{x},v,y)\}$.
Since $\FD{\clutch{x}{q}\cup\consistent{q}}\models\fd{x}{\{y,v,w\}}$,
the Markov graph of $q$ contains directed edges from $x$ to each of $y$, $v$, and $w$.

We have $\clutch{v}{q}=\emptyset$.
Since $\FD{\clutch{v}{q}\cup\consistent{q}}\models\fd{v}{\{y,w\}}$,
the Markov graph of $q$ contains directed edges from $v$ to both $y$ and $w$.
The complete Markov graph of $q$ is shown in Fig.~\ref{fig:attmar} (right). 

The attack graph of $q$ is shown in Fig.~\ref{fig:attmar} (left).
The atoms $R(\underline{x},y,v)$ and $S(\underline{y},x)$ together constitute an initial strong component of the attack graph. 
It is then straightforward that each cycle in the Markov graph of $q$ that contains $x$ or $y$, must be premier. 
Further, the cycle $v,w,v$ in the Markov graph of $q$ is also premier,
because there is a Markov path from $x$ to $v$, and $\FD{q}\models\fd{v}{x}$.
\end{example}

Let $q$ be like in Definition~\ref{def:markov} and assume that the Markov graph of $q$ contains an elementary directed cycle $\calC$.
Lemma~\ref{lem:soluble} states that $\cqa{q}$ can be reduced in polynomial time to $\cqa{q^{*}}$,
where $q^{*}$ is obtained from $q$ by ``dissolving" the Markov cycle $\calC$ as defined in Definition~\ref{def:resolve}. 
Moreover, we will show (Lemma~\ref{lem:nonewstrongcycles}) that if $\calC$ is premier and the attack graph of $q$ contains no strong cycle,
then the attack graph of $q^{*}$ will contain no strong cycle either.
The reduction that ``dissolves" Markov cycles will be the central idea in our polynomial-time algorithm for $\cqa{q}$ when the attack graph of $q$ contains no strong cycle. 

\begin{definition}\label{def:resolve}
Let $q$ be a self-join-free Boolean conjunctive query such that every atom with mode $i$ in $q$ is simple-key.
Let $\calC$ be an elementary directed cycle of length $k\geq 2$ in the Markov graph of $q$.
Then, $\resolve{\calC}{q}$ denotes the self-join-free Boolean conjunctive query defined next.
Let $x_{0},\dots,x_{k-1}$ be the variables in $\calC$, and let $q_{0}=\bigcup_{i=0}^{k-1}\clutch{x_{i}}{q}$.
Let $\vec{y}$ be a sequence of variables containing exactly once each variable of $\queryvars{q_{0}}\setminus\{x_{0},\dots,x_{k-1}\}$.
Let $q_{1}=\{T(\underline{u},x_{0},\dots,x_{k-1},\vec{y})\}\cup\{U_{i}^{c}(\underline{x_{i}},u)\}_{i=0}^{k-1}$,
where $u$ is a fresh variable, $T$ is a fresh relation name with mode $i$, 
and $U_{1},\dots,U_{k-1}$ are fresh relation names with mode $c$.
Then, we define
$$\resolve{\calC}{q}\defeq\formula{q\setminus q_{0}}\cup q_{1}.$$
Notice that $\resolve{\calC}{q}$ is unique up to a renaming of the variable $u$ and the relation names in $q_{1}$.
\end{definition}

\begin{example}
Let $q$ be the query of Fig.~\ref{fig:attmar}.
Let $\calC$ be the cycle $x,w,y,x$ in the Markov graph of $q$.
Using the notation of Definition~\ref{def:resolve},
we have
\begin{eqnarray*}
q_{0} & = & \{R(\underline{x},y,v), S(\underline{y},x), W(\underline{w},v)\}\\
q_{1} & = & \{T(\underline{u},x,w,y,v), U_{1}^{c}(\underline{x},u), U_{2}^{c}(\underline{w},u), U_{3}^{c}(\underline{y},u)\}
\end{eqnarray*}
Hence,
$\resolve{\calC}{q}=\{V_{1}^{c}(\underline{v},w)$, $V_{2}^{c}(\underline{w},y)$, $T(\underline{u},x,w,y,v)$, $U_{1}^{c}(\underline{x},u)$, $U_{2}^{c}(\underline{w},u)$, $U_{3}^{c}(\underline{y},u)\}$.
\end{example}

\begin{lemma}\label{lem:soluble}
Let $q$ be a self-join-free Boolean conjunctive query such that every atom with mode $i$ in $q$ is simple-key.
Let $\calC$ be an elementary directed cycle in the Markov graph of $q$, and let $q^{*}=\resolve{\calC}{q}$.
Then, there exists a polynomial-time many-one reduction from $\cqa{q}$ to $\cqa{q^{*}}$.
\end{lemma}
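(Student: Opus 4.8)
The plan is to exhibit an explicit function $f$ mapping each input $\db$ of $\cqa{q}$ to an input $f(\db)=\db^{*}$ of $\cqa{q^{*}}$, computable in polynomial time (in fact in $\FO$), such that every repair of $\db$ satisfies $q$ if and only if every repair of $\db^{*}$ satisfies $q^{*}$. Before constructing $f$, I would normalize the input: by Lemma~\ref{lem:purified} I may assume $\db$ is purified relative to $q$, and by the typing convention of Section~\ref{sec:preliminaries} I may assume $\db$ is typed relative to $q$; Lemma~\ref{lem:frugal}, taken with $X$ the set of cycle variables, then lets me reason about satisfaction through frugal repairs only, which constrains how the clutch facts interact. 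Writing $q_{0}=\bigcup_{i=0}^{k-1}\clutch{x_{i}}{q}$ and recalling $\resolve{\calC}{q}=\formula{q\setminus q_{0}}\cup q_{1}$, the reduction keeps every fact of $\db$ whose relation occurs in $q\setminus q_{0}$ unchanged (these atoms are common to $q$ and $q^{*}$), and rebuilds only the clutch part into facts for the fresh relations $T,U_{0}^{c},\dots,U_{k-1}^{c}$.

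The heart of the construction is how to populate $T$ and the $U_{i}^{c}$. The guiding principle is that in $q^{*}$ the relation $T$, keyed on the fresh variable $u$ and of mode $i$, is the only relation that may be inconsistent, whereas all $U_{i}^{c}$ have mode $c$ and must be consistent on every admissible input. Thus $f$ must transfer the inconsistency of all the clutch relations $\clutch{x_{0}}{q},\dots,\clutch{x_{k-1}}{q}$ into the single block structure of $T$, while the atoms $U_{i}^{c}(\underline{x_{i}},u)$ act as consistent ``back pointers'' that tie each cycle-variable value to the $u$-value of the choice it belongs to, thereby implementing the closure test that a matching of $q^{*}$ demands $U_{i}^{c}(\underline{\theta(x_{i})},\theta(u))$ for all $i$ with one common $u$. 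Concretely, I would introduce fresh constants $u$ indexing the relevant cyclic matchings of $q_{0}$ in $\db$, i.e.\ the valuations $\nu$ over $\queryvars{q_{0}}$ with $\nu(q_{0})\subseteq\db$, and, following the structured-constant technique used in the proofs of Theorems~\ref{the:FO} and~\ref{the:hard}, encode into these constants exactly the information needed to keep each $U_{i}^{c}$ single-valued in its key. Because $q$ is fixed, $\card{q_{0}}$ is a constant, so the number of such valuations is at most $\card{\db}^{\card{q_{0}}}$, a polynomial; hence $\db^{*}$ has polynomial size and $f$ is $\FO$-computable. As in the earlier reductions, Sublemma-style consistency statements (analogues of Sublemmas~\ref{lem:notFnotGstrong} and~\ref{lem:FandGstrong}) would show that the facts built for the atoms outside $q_{0}$ behave correctly and that $\repairs{\db^{*}}$ decomposes as $\{f(\rep)\cup g(\db)\mid\rep\in\repairs{\db}\}$, where $g(\db)$ collects the facts of $\db^{*}$ shared by all repairs.

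For correctness I would prove both implications through this repair correspondence, and the engine is the defining property of a Markov cycle: since $x_{0}\markov x_{1}\markov\dotsm\markov x_{k-1}\markov x_{0}$, we have $\FD{\clutch{x_{i}}{q}\cup\consistent{q}}\models\fd{x_{i}}{x_{i+1}}$, and because every repair is consistent, in any repair each cycle variable functionally determines the next one all the way around $\calC$. This lets me trace values around the cycle by induction on its length, in the very style of the inductions ``$b=b'$'' and ``$a=a'$'' in the proof of Theorem~\ref{the:hard}: a matching of $q^{*}$ in a repair of $\db^{*}$ is pulled back, step by step along the Markov cycle, to a consistent cyclic matching of $q_{0}$ in the corresponding repair of $\db$, and conversely a cyclic matching in a repair of $\db$ is pushed forward to the $u$-indexed tuple of $T$, with the $U_{i}^{c}$ conditions guaranteeing closure. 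The main obstacle is exactly the tension between consistency and faithfulness noted above: a single cycle-variable value may occur in several distinct cyclic matchings, so a naive ``one fresh $u$ per matching'' assignment makes some $U_{i}^{c}$ inconsistent, while collapsing $u$ onto a single cycle variable destroys the inconsistency that must survive to witness a failing repair. Getting the indexing of $u$ right---so that $T$ carries every choice (including non-closing ones, which must become non-matchable tuples) while each $U_{i}^{c}$ stays single-valued---is the delicate and technically hardest part of the proof, and is where typing, purification, frugality, and the structured-constant encoding all have to be combined.
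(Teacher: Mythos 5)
You have the right overall architecture---keep the facts for $q\setminus q_{0}$ untouched, fold the inconsistency of the clutch relations into the single relation $T$, use the $U_{i}^{c}$ as consistent back-pointers, and trace values around the cycle via $\FD{\clutch{x_{i}}{q}\cup\consistent{q}}\models\fd{x_{i}}{x_{i\oplus 1}}$---and you have correctly located the crux: how to choose the $u$-values so that $T$ retains the inconsistency while each $U_{i}^{c}$ stays single-valued in its key. But you stop exactly there, and that is the step that carries all the weight; saying that ``typing, purification, frugality, and the structured-constant encoding all have to be combined'' does not resolve it, and the structured-constant trick from the hardness reductions does not transfer, because there the grouping of constants is imposed by the reduction on a database it builds itself, whereas here the grouping must be \emph{extracted} from an arbitrary input database.

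The paper's resolution has three ingredients absent from your proposal. First, it strengthens purification to \emph{gpurification} (Definition~\ref{def:c_block}, Lemma~\ref{lem:gpurified}), pruning every repair of a gblock that is not grelevant. Second, it builds a $k$-partite graph $\dbgraph{\db}$ on the constants of $\type{x_{0}},\dots,\type{x_{k-1}}$ whose edges are realized by matchings of $q$; gpurification forces every strong component of $\dbgraph{\db}$ to be initial, so each constant lies in exactly one strong component, and it is the \emph{strong component} $D$---not the individual cyclic matching---that becomes the value of $u$. This is precisely what keeps each $U_{i}^{c}$ consistent while letting one $T$-block hold all the competing choices. Third, a component containing an elementary cycle of length greater than $k$, or a length-$k$ cycle whose realizing valuations disagree on a shared non-cycle variable (a cycle that does not ``support'' $q$), yields a repair that is not grelevant, so by Lemma~\ref{lem:irrelevant} the entire component is deleted rather than encoded; only components all of whose elementary cycles have length $k$ and support $q$ produce $T$-facts. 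Your alternative of keeping non-closing choices as ``non-matchable tuples'' of $T$ is not worked out and is not obviously sound: such a choice must correspond to a genuine falsifying repair on the $q^{*}$ side, not merely to a tuple that fails to join. Finally, your claim that the reduction is $\FO$-computable overreaches---it must compute strong components and detect long cycles, which is a reachability computation; the lemma claims, and needs, only polynomial time.
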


The reduction of Lemma~\ref{lem:soluble} will be explained in Section~\ref{subsec:soluble}.
To use the reduction in a proof of Theorem~\ref{the:ptime},
two more results are needed:
\begin{itemize}
\item
First, we need to show that the ``dissolution" of Markov cycles can be done while keeping the attack graph free of strong cycles (this is Lemma~\ref{lem:nonewstrongcycles}).
This turns out to be true only for Markov cycles that are premier (as defined in Definition~\ref{def:markov}).
\item
Second, we need to show the existence of premier Markov cycles that can be ``dissolved" (this is Lemma~\ref{lem:markov_cycle}).
\end{itemize}

\begin{lemma}\label{lem:nonewstrongcycles}
Let $q$ be a self-join-free Boolean conjunctive query such that every atom with mode $i$ in $q$ is simple-key.
Let $\calC$ be an elementary directed cycle in the Markov graph of $q$ such that $\calC$ is premier, and let $q^{*}=\resolve{\calC}{q}$.
If the attack graph of $q$ contains no strong cycle,
then the attack graph of $q^{*}$ contains no strong cycle either.
\end{lemma}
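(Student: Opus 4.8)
The plan is to argue the contrapositive and reduce everything to two‑cycles. By Lemma~\ref{lem:cycletwo}(2), the attack graph of $q^{*}$ contains a strong cycle only if it contains a strong cycle of size two, so it suffices to show that $q^{*}$ has no strong two‑cycle whenever $q$ has none. First I would record that no atom of mode $c$ can attack: if $F$ has mode $c$ then $F\in\consistent{q^{*}}$, whence $\keycl{F}{q^{*}}=\keyclsup{F}{q^{*}}\supseteq\atomvars{F}$, so the first step of any witness out of $F$ is impossible. In particular the fresh atoms $U_{i}^{c}$ never attack, and every vertex of an attack cycle of $q^{*}$ lies in $\formula{q\setminus q_{0}}\cup\{T\}$. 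Thus a strong two‑cycle of $q^{*}$ is either between two \emph{surviving} atoms (atoms of $q\setminus q_{0}$, present in both queries) or between $T$ and a surviving atom.

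The technical backbone is a functional‑dependency correspondence between $q$ and $q^{*}$. The only variable of $q^{*}$ not already in $q$ is the fresh key $u$ of $T$; in $\FD{q^{*}}$ the atoms $T$ and the $U_{i}^{c}$ make $u$ and the cycle variables $x_{0},\dots,x_{k-1}$ mutually determining. Crucially, since $\calC$ is a cycle of the Markov graph, chaining its edges $x_{i}\markov x_{i+1}$ yields $\FD{q}\models\fd{x_{i}}{\queryvars{q_{0}}}$ for every $i$, so everything $T$ encodes is, on $\queryvars{q}$, already visible in $q$. From this I would derive that for $X,\{w\}\subseteq\queryvars{q}$ one has $\FD{q}\models\fd{X}{w}$ iff $\FD{q^{*}}\models\fd{X}{w}$ (and likewise with $\consistent{q}$ added, and with any one surviving atom deleted), while $u$ enters a closure exactly when some, equivalently every, $x_{i}$ does. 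Consequently, for every surviving atom $H$, $\keycl{H}{q^{*}}\cap\queryvars{q}=\keycl{H}{q}$ and $\keyclsup{H}{q^{*}}\cap\queryvars{q}=\keyclsup{H}{q}$. Since weakness of an attack depends only on $\keyclsup{\cdot}{\cdot}$ and on key variables, all lying in $\queryvars{q}$, an attack between surviving atoms is strong in $q^{*}$ iff it is strong in $q$.

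For the surviving–surviving case I would show $H\attacks{q^{*}}H'$ implies $H\attacks{q}H'$, using the reachability reading of attacks: $H\attacks{q^{*}}H'$ means $H'$ is reachable from $H$ in the graph on atoms whose edges join atoms sharing a variable outside $\keycl{H}{q^{*}}$. Whenever such a witness visits $T$ or some $U_{i}$, the variable used lies in $\queryvars{q_{0}}\setminus\keycl{H}{q}$ (it can equal $u$ only when the cycle variables lie outside $\keycl{H}{q}$). If the cycle variables lie in $\keycl{H}{q}$ then $\queryvars{q_{0}}\subseteq\keycl{H}{q}$, the new atoms are unusable, and the witness already lives in $q$; otherwise the clutch atoms together with the surviving and consistent atoms that realize the Markov edges of $\calC$ reconnect $\queryvars{q_{0}}\setminus\keycl{H}{q}$ in $q$'s reachability graph — here one uses that $\keycl{H}{q}$ is FD‑closed, so a Markov realization can never exit $\keycl{H}{q}$ into a cycle variable lying outside it — which lets me reroute every detour through the clutch atoms. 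Either way $H\attacks{q}H'$, and by the previous paragraph strength is preserved, so a strong two‑cycle of surviving atoms in $q^{*}$ produces one in $q$, a contradiction.

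The main obstacle is a strong two‑cycle $T\rightleftarrows H$ with $H$ surviving, and this is exactly where the hypothesis that $\calC$ is premier is indispensable. Premiership supplies an atom $F_{0}$ of mode $i$ lying in an \emph{initial} strong component $\mathcal{I}$ of the attack graph of $q$ (Definition~\ref{def:tarjan}) with $\keyvars{F_{0}}=\{x\}$, where $x$ is $\FD{q}$‑equivalent to the cycle variables, hence to $u$; in particular $\keyclsup{T}{q^{*}}\cap\queryvars{q}=\keyclsup{F_{0}}{q}$ equals the $\FD{q}$‑closure of $x$. The delicate point is that $\keycl{T}{q^{*}}=\{u\}$ is as small as possible, so $T$ attacks very liberally; I neutralize this by exploiting the other edge. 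From $H\attacks{q^{*}}T$, a witness‑rerouting as above shows that $H$ attacks $F_{0}$ in $q$ (it reaches a cycle variable, which by Definition~\ref{def:az} and $\FD{q}$‑equivalence amounts to $H\attacks{q}x$, hence $H\attacks{q}F_{0}$); since $\mathcal{I}$ is initial it admits no incoming attack from outside, forcing $H\in\mathcal{I}$. Now $H$ and $F_{0}$ lie in the same strong component, so there are directed attack paths between them inside $\mathcal{I}$; as $q$ has no strong cycle, every edge of $\mathcal{I}$ is weak, and composing weak attacks along these paths yields $\FD{q}\models\fd{x}{\keyvars{H}}$ and $\FD{q}\models\fd{\keyvars{H}}{x}$. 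Translating back through the correspondence, the first gives $\keyvars{H}\subseteq\keyclsup{T}{q^{*}}$ and the second gives $u\in\keyclsup{H}{q^{*}}$, so \emph{both} $T\attacks{q^{*}}H$ and $H\attacks{q^{*}}T$ are weak — contradicting that the two‑cycle is strong. Hence $q^{*}$ has no strong two‑cycle, and the lemma follows. The crux throughout is reconciling the minimal $\keycl{T}{q^{*}}$ with the attacks into and out of the initial strong component that premiership hands us.
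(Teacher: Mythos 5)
Your overall strategy is the paper's: reduce to two-cycles via Lemma~\ref{lem:cycletwo}, establish the correspondence $\FD{q^{*}}\equiv\FD{q}\cup\{\fd{u}{x_{0}},\fd{x_{0}}{u}\}$, pull attacks between surviving atoms back to $q$ by rerouting witnesses through the clutch atoms, and use premiership to force any atom attacking $T$ into the initial strong component, where the absence of strong cycles makes every attack weak. Most of this is sound and matches the paper's sublemmas. There is, however, one step that does not go through as stated: in the $T\rightleftarrows H$ case you infer, from ``$H$ attacks some variable $z$ of $\queryvars{q_{0}}$ in $q$'' together with the $\FD{q}$-equivalence of $z$ and $x$ (where $\keyvars{F_{0}}=\{x\}$), that $H\attacks{q}x$ and hence $H\attacks{q}F_{0}$, hence $H\in\mathcal{I}$. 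Transferring an attack on $z$ to an attack on $x$ requires Lemma~\ref{lem:backpropagation} (whose hypothesis is $\FD{q\setminus\{H\}}\models\fd{x}{z}$) or Lemma~\ref{lem:attack-transitivity} (whose hypothesis is $\FD{q}\not\models\fd{\keyvars{H}}{z}$), and neither hypothesis is guaranteed: $H$ may itself be one of the atoms realizing the Markov path from $x$ to the cycle, in which case every derivation of $\fd{x}{z}$ passes through $H$, while $\fd{\keyvars{H}}{z}$ holds via $H$'s own functional dependency. Concretely, if the path is $x\markov w\markov x_{0}$ with $H=R_{w}(\underline{w},x_{0})$ realizing the second edge, then $H$ can attack $T$ through $x_{0}$ without attacking $x$ in $q$ at all, so ``$H$ lies in the initial strong component'' cannot be concluded by this route.

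The gap is localized and fillable without changing your architecture: the paper's Sublemma~\ref{sublem:atuw} splits on whether $\keyvars{H}$ is contained in the set of variables on the Markov path from $x$ to $x_{0}$. If it is, the required conclusions $\FD{q}\models\fd{\keyvars{F_{0}}}{\keyvars{H}}$ and $\FD{q}\models\fd{\keyvars{H}}{\keyvars{F_{0}}}$ are derived directly from the functional dependencies along that path, bypassing any claim that $H$ attacks $F_{0}$; only in the complementary case is $\FD{q\setminus\{H\}}\models\fd{\keyvars{F_{0}}}{z}$ available, so that backpropagation applies and your initial-strong-component argument runs exactly as you describe. With that case split added, your proof coincides with the paper's.
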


\begin{lemma}
\label{lem:markov_cycle}
Let $q$ be a self-join-free Boolean conjunctive query such that
\begin{itemize}
\item
for every atom $F\in q$, if $F$ has mode $i$,
then $F$ is simple-key and $\keyvars{F}\neq\emptyset$;
\item
$q$ is saturated; 
\item
the attack graph of $q$ contains no strong cycle; and
\item
the attack graph of $q$ contains an initial strong component with two or more atoms.
\end{itemize}
Then, the Markov graph of $q$ contains an elementary directed cycle that is premier and such that for every $y$ in $\calC$,
$\clutch{y}{q}\neq\emptyset$.
\end{lemma}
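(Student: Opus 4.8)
The plan is to find the required cycle inside the initial strong component supplied by the hypotheses, and then to ``purge'' it of variables with empty clutch.

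\emph{Setting up the component.} First I would fix an initial strong component $S$ of the attack graph with $\card{S}\geq 2$. A mode-$c$ atom $F$ has $F\in\consistent{q}$, so $\atomvars{F}\subseteq\keycl{F}{q}$ and $F$ can start no witness; hence mode-$c$ atoms attack nothing and cannot lie in a strong component of size at least two. Thus every atom of $S$ has mode $i$, and by hypothesis is simple-key with a nonempty key. Moreover every edge of $S$ lies on a directed cycle inside $S$ (strong connectivity), and every attack cycle is weak (no strong cycle), so every attack internal to $S$ is weak; by transitivity of functional dependencies this makes $K\defeq\{\keyvars{F}\mid F\in S\}$ a single $\FD{q}$-equivalence class, each element of which has a nonempty clutch (the corresponding $S$-atom). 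Using Lemma~\ref{lem:cycletwo} together with Lemma~\ref{lem:trans} I would extract a weak two-cycle $F\attacks{q}G\attacks{q}F$ inside $S$; writing $\{x\}=\keyvars{F}$, $\{y\}=\keyvars{G}$, the fact that an atom cannot attack a key-equal atom (its key-closure would already contain all of the target's variables) forces $x\neq y$, while weakness yields $\FD{q}\models\fd{x}{y}$ and $\FD{q}\models\fd{y}{x}$.

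\emph{Reaching across the component in the Markov graph.} The heart of the argument is to show that $x$ and $y$ lie in one nontrivial strong component of the Markov graph, i.e.\ $x\markovpatharg{q}y$ and $y\markovpatharg{q}x$. I would prove that $R\defeq\{z\mid x\markovpatharg{q}z\}$ contains all of $K$. Suppose not, and take a sequential proof of $\fd{R}{v}$ of minimal length for some $v\in K\setminus R$ (it exists since $\FD{q}\models\fd{x}{v}$). A mode-$i$ atom whose single key lies in $R$ contributes all its variables to $R$ through its clutch and is therefore droppable, so minimality forces the first atom of the proof to be a mode-$c$ atom $H$ with $\keyvars{H}\subseteq R$ and some $w\in\atomvars{H}\setminus R$. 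Then $\FD{q}\models\fd{x}{w}$ (since $\FD{q}\models\fd{x}{\keyvars{H}}$), while $\FD{\consistent{q}}\not\models\fd{x}{w}$ (otherwise $x\markovarg{q}w$, so $w\in R$); hence saturation yields an atom $F'$ with $\FD{q}\models\fd{x}{\keyvars{F'}}$ such that $F'\attacks{q}x$ or $F'\attacks{q}w$. \textbf{This is the step I expect to be the main obstacle}: I must convert this attack into a contradiction. An attack on the variable $x$ can be rerouted to an attack on the $S$-atom $F$ keyed on $x$; if $F'\notin S$ this violates the initiality of $S$, and if $F'\in S$ one uses $\FD{q}\models\fd{x}{\keyvars{F'}}$ to locate a strong cycle, contradicting the hypothesis. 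Carrying out this bookkeeping gives $K\subseteq R$, so $x\markovpatharg{q}y$, and by symmetry (the same argument applied to $y$, which also keys an $S$-atom) $y\markovpatharg{q}x$.

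\emph{Purging empty-clutch variables.} From $x\markovpatharg{q}y\markovpatharg{q}x$ I obtain a closed Markov walk through both $x$ and $y$, and I would remove from it every variable with empty clutch. The key observation is that an edge $v\markovarg{q}s$ leaving an empty-clutch variable $v$ satisfies $\FD{\consistent{q}}\models\fd{v}{s}$; combined with an incoming edge $p\markovarg{q}v$, that is $\FD{\clutch{p}{q}\cup\consistent{q}}\models\fd{p}{v}$, this yields $\FD{\clutch{p}{q}\cup\consistent{q}}\models\fd{p}{s}$ and hence $p\markovarg{q}s$ whenever $p\neq s$. Thus each empty-clutch variable can be spliced out while preserving reachability; since $x$ and $y$ have nonempty clutch they survive, so the result is a closed walk of length at least two through nonempty-clutch variables only, from which I extract an elementary Markov cycle $\calC$ all of whose vertices have nonempty clutch.

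\emph{Premierness.} Finally I would verify that $\calC$ is premier using the variable $x$: the atom $F\in S$ has mode $i$, key $\{x\}$, and belongs to an initial strong component. Every vertex $v$ of $\calC$ lies on the spliced walk out of $x$, so $x\markovpatharg{q}v$, and continuing along the walk to its endpoint gives $v\markovpatharg{q}x$, whence $\FD{q}\models\fd{v}{x}$. Therefore any $v\in\calC$ fulfils both conditions of Definition~\ref{def:markov}, so $\calC$ is the desired premier cycle with $\clutch{y}{q}\neq\emptyset$ for every $y$ in $\calC$.
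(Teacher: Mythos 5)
Your setup (every atom of the initial strong component $S$ has mode $i$, extraction of a weak two-cycle $F\attacks{q}G\attacks{q}F$ with keys $\{x\},\{y\}$, and the fact that the keys of $S$-atoms form one $\FD{q}$-equivalence class $K$) is correct and matches the paper's opening moves. The gap is exactly at the step you flagged: the claim that $R=\{z\mid x\markovpatharg{q}z\}$ contains all of $K$, and in particular that $x\markovpatharg{q}y$ and $y\markovpatharg{q}x$, is \emph{false} under the hypotheses of the lemma. The paper's own Example~\ref{ex:mg} is a counterexample: there $R_{1}(\underline{x},u_{1})$ and $R_{5}(\underline{y},u_{1})$ form a weak two-cycle in the initial strong component, yet no Markov edge enters $x$ at all (the only way to derive $x$ is via $\fd{\{u_{2},y\}}{x}$ from $S^{c}$, and no single variable's clutch together with $\consistent{q}$ yields both $u_{2}$ and $y$), so $y\nmarkovpath x$ -- wait, more precisely, $x$ lies on no Markov cycle and is not Markov-reachable from $y$. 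Tracing your own argument on that example shows why the intended contradiction never materializes: saturation for the pair $(y,x)$ is witnessed by $F'=R_{5}\in S$ with $F'\attacks{q}x$; rerouting gives $R_{5}\attacks{q}R_{1}$, which lies inside $S$ and is therefore weak (all variables of $K$ are mutually $\FD{q}$-equivalent, so no attack between $S$-atoms can be strong). There is no strong cycle to be found, and correctly so, because the statement you are trying to refute is true in that instance. A secondary soft spot is your claim that a first atom of the minimal sequential proof with key contained in $R$ must be droppable: this uses simple-keyness for mode-$i$ atoms, but for a mode-$c$ atom $H$ with $\card{\keyvars{H}}\geq 2$ and $\keyvars{H}\subseteq R$ nothing forces $\atomvars{H}\subseteq R$, which is precisely the phenomenon in the example.

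The paper avoids this trap by proving something strictly weaker (Lemma~\ref{lem:basic-step}): from any $x$ with $\FD{q}\models\fd{x}{y}$ and $\FD{q}\models\fd{y}{x}$ (where $\{y\}$ keys an atom of an initial strong component), there is a \emph{single} Markov edge $x\markov z$ to some $z$ with $\clutch{z}{q}\neq\emptyset$ and $\FD{q}\models\fd{z}{y}$ -- the target $z$ is not required to be $y$ or any key of $S$. Since the invariant ``$\FD{q}\models\fd{\cdot}{y}$ and nonempty clutch'' is preserved, iterating produces an arbitrarily long Markov walk through nonempty-clutch variables, which by finiteness must revisit a vertex; the resulting cycle need not return to $x$ (and in Example~\ref{ex:mg} it does not), and premierness follows because every vertex $v$ of the cycle satisfies $x\markovpatharg{q}v$ and $\FD{q}\models\fd{v}{x}$. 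Your purging and premierness paragraphs are essentially sound, but they are downstream of the false reachability claim, so the proof as a whole does not go through.
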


The condition $\clutch{y}{q}\neq\emptyset$ , for every $y$ in $\calC$, guarantees that $\resolve{\calC}{q}$ will contain strictly less atoms of mode $i$ than $q$.
This condition will be used in the proof of Theorem~\ref{the:ptime} which runs by induction on the number of atoms with mode $i$.
The following example shows that Lemma~\ref{lem:markov_cycle} is no longer true if $q$ is not saturated.

\begin{example} \label{ex:markov_path}
Continuing Example~\ref{ex:saturated}.
The query $q$ of Example~\ref{ex:saturated} is not saturated, but satisfies all other conditions in the statement of Lemma~\ref{lem:markov_cycle}.
In particular, the attack graph of $q$ contains a weak cycle $R\attacks{q}U\attacks{q}R$,
which is part of an initial strong component.
The Markov graph of $q$ consists of a single path $w\markovarg{q}x\markovarg{q}y\markovarg{q}z$, and hence is acyclic.

The query $q'$ of Example~\ref{ex:saturated} is saturated, and we have $x\markovarg{q'}w\markovarg{q'}x$, a Markov cycle which can be shown to be premier.
\end{example}

\subsection{The Proof of Theorem~\ref{the:ptime}}\label{subsec:final}

\begin{proof}[of Theorem~\ref{the:ptime}]
Assume that the attack graph of $q$ contains no strong cycle.
The proof runs by induction on increasing $\icard{q}$.
The desired result is obvious if $\icard{q}=0$.
Assume that $\icard{q}>0$ in the remainder of the proof.
Let $\db$ be an uncertain database that is input to $\cqa{q}$.

First, we reduce in polynomial time $\cqa{q}$ to $\cqa{q'}$ with $q'$ like in Lemma~\ref{lem:mostsimplified}.
We now distinguish two cases.

\paragraph{Case $q'$ contains an atom $F$ with mode $i$ that has zero indegree in the attack graph of $q$.}
We can assume either $F=R(\underline{x},\vec{y})$ or $F=R(\underline{a},\vec{y})$, where $\vec{y}$ is a sequence of distinct variables.
In the remainder, we treat the case  $F=R(\underline{x},\vec{y})$ (the case  $F=R(\underline{a},\vec{y})$ is even simpler).

Let $q''=q'\setminus\{R(\underline{x},\vec{y})\}$.
By Lemma~\ref{lem:inFO}, every repair of $\db$ satisfies $q'$ if and only if $\db$ includes an $R$-block $\block$ (there are only polynomially many such blocks) such for every $R(\underline{a},\vec{b})\in\block$,
every repair of $\db$ satisfies $\substitute{q''}{x,\vec{y}}{a,\vec{b}}$.
By Lemma~\ref{lem:nonewcycles},  the attack graph of $\substitute{q''}{x,\vec{y}}{a,\vec{b}}$ contains no strong cycle.
From $\icard{\substitute{q''}{x,\vec{y}}{a,\vec{b}}}=\icard{q'}-1<\icard{q}$, it follows that $\cqa{\substitute{q''}{x,\vec{y}}{a,\vec{b}}}$ is in $\P$ by the induction hypothesis.
It follows that $\cqa{q}$ is in $\P$ as well.

\paragraph{Case every atom $F$ with mode $i$ in $q'$ has an incoming attack in the attack graph of $q'$.}
It will be the case that no constant occurs in an atom of mode $i$ in $q'$.

Then, the attack graph of $q'$ must contain an initial strong component with two or more atoms.
By Lemma~\ref{lem:markov_cycle}, the Markov graph of $q'$ contains an elementary directed cycle $\calC$ that is premier and such that for every $y$ in $\calC$, $\clutch{y}{q'}\neq\emptyset$. 
By Lemma~\ref{lem:soluble}, we can reduce in polynomial time $\cqa{q'}$ to $\cqa{q^{*}}$ where $q^{*}=\resolve{\calC}{q'}$. 
Since the attack graph of $q'$ contains no strong cycle, it follows by Lemma~\ref{lem:nonewstrongcycles} that the attack graph of $q^{*}$ contains no strong cycle either.

Let $k\geq 2$ be the size of $\calC$.
It can be easily verified that $\icard{q^{*}}\leq\formula{\icard{q'}-k}+1<\icard{q'}$.
By the induction hypothesis,  $\cqa{q^{*}}$ is in $\P$.
Since there exists a polynomial-time reduction from $\cqa{q}$ to $\cqa{q^*}$, we conclude that $\cqa{q}$ is in $\P$ as well.
\end{proof}

\subsection{The Reduction of Lemma~\ref{lem:soluble}}\label{subsec:soluble}

This section first describes the reduction of Lemma~\ref{lem:soluble}, and then proves the lemma.

\paragraph{Relevance of subsets of repairs}
In Section~\ref{sec:preliminaries}, we distinguished database facts that are relevant for a query from those that are not. 
This notion is extended next.

\begin{definition}
Let $q$ be a self-join-free Boolean conjunctive query, and let $\db$ be an uncertain database.
A consistent subset $\sep$ of $\db$ is said to be {\em grelevant for $q$ in $\db$\/} (generalized relevant) if it can be extended into a repair $\rep$ of $\db$ such that some fact of $\sep$ is relevant for $q$ in $\rep$.
\end{definition}

It can be seen that $A\in\db$ is relevant for $q$ in $\db$ if and only if $\{A\}$ is grelevant for $q$ in $\db$.
Therefore, ``grelevant" is a notion that generalises ``relevant."

\begin{lemma}\label{lem:irrelevant}
Let $q$ be a self-join-free Boolean conjunctive query, and let $\db$ be an uncertain database.
Let $\sep$ be a consistent subset of $\db$ that is not grelevant for $q$ in $\db$.
Let $\db_{0}=\bigcup\{\theblock{A}{\db}\mid A\in\sep\}$.
Then, the following are equivalent:
\begin{enumerate}
\item\label{it:irrelevantone}
every repair of $\db$ satisfies $q$;
\item\label{it:irrelevanttwo}
every repair of $\db\setminus\db_{0}$ satisfies $q$.
\end{enumerate}
\end{lemma}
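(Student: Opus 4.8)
The plan is to prove the two implications separately, observing in advance that $\ref{it:irrelevanttwo}\Rightarrow\ref{it:irrelevantone}$ is a matter of monotonicity, while $\ref{it:irrelevantone}\Rightarrow\ref{it:irrelevanttwo}$ is where the non-grelevance hypothesis does the work. Throughout I will use the elementary observation that, since $\db_{0}$ is a union of complete blocks of $\db$, the blocks of $\db\setminus\db_{0}$ are exactly the blocks of $\db$ that are not contained in $\db_{0}$; and since $\sep$ is consistent, it selects exactly one fact from each block of $\db_{0}$.

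For $\ref{it:irrelevanttwo}\Rightarrow\ref{it:irrelevantone}$ I would take an arbitrary repair $\rep$ of $\db$ and consider $\rep\setminus\db_{0}$. Because $\rep$ contains exactly one fact per block of $\db$ and $\db_{0}$ is a union of complete blocks, $\rep\setminus\db_{0}$ contains exactly one fact per block of $\db\setminus\db_{0}$, so it is a repair of $\db\setminus\db_{0}$. By hypothesis it satisfies $q$; since conjunctive queries are monotone and $\rep\setminus\db_{0}\subseteq\rep$, it follows that $\rep\models q$. Note this direction uses neither $\sep$ nor grelevance.

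For $\ref{it:irrelevantone}\Rightarrow\ref{it:irrelevantteuo}$, which carries the real content, I would start from an arbitrary repair $\rep'$ of $\db\setminus\db_{0}$ and extend it to $\rep=\rep'\cup\sep$. This $\rep$ is a repair of $\db$: $\rep'$ selects one fact from each block outside $\db_{0}$ and $\sep$ selects one fact from each block of $\db_{0}$, so $\rep$ selects exactly one fact per block of $\db$, and the two families of blocks are disjoint so consistency holds. By hypothesis $\rep\models q$, so fix a valuation $\theta$ over $\queryvars{q}$ with $\theta(q)\subseteq\rep$. The crucial step is to show $\theta(q)$ avoids $\sep$ entirely: if some fact $A\in\sep$ belonged to $\theta(q)$, then $A\in\theta(q)\subseteq\rep$ would witness that $A$ is relevant for $q$ in $\rep$, and since $\rep$ extends $\sep$ (indeed $\sep\subseteq\rep$), this would exhibit $\sep$ as grelevant for $q$ in $\db$, contradicting the hypothesis. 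Hence $\theta(q)\cap\sep=\emptyset$, and since the only facts of $\rep$ lying in $\db_{0}$ are those of $\sep$, we get $\theta(q)\subseteq\rep\setminus\sep=\rep'$, so $\rep'\models q$.

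The only delicate point, and the step I would write out most carefully, is the interface with the definition of grelevance: one must check that the extension $\rep=\rep'\cup\sep$ really is a repair and that it contains $\sep$, so that a witnessing valuation hitting $\sep$ genuinely instantiates ``some fact of $\sep$ is relevant for $q$ in $\rep$'' for a repair $\rep$ extending $\sep$. Everything else is routine block bookkeeping and the monotonicity of conjunctive queries.
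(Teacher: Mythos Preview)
Your proof is correct and follows essentially the same approach as the paper's own proof: the paper argues $\ref{it:irrelevantone}\Rightarrow\ref{it:irrelevanttwo}$ by contraposition (taking a falsifying repair $\rep$ of $\db\setminus\db_{0}$ and showing $\rep\cup\sep$ is a falsifying repair of $\db$ via non-grelevance), which is just the contrapositive of your direct argument, and it dismisses $\ref{it:irrelevanttwo}\Rightarrow\ref{it:irrelevantone}$ as ``Trivial,'' which is exactly the monotonicity observation you spell out. (Minor typo: you wrote \texttt{irrelevantteuo} for \texttt{irrelevanttwo}.)
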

\begin{proof}
\framebox{\ref{it:irrelevantone}$\implies$\ref{it:irrelevanttwo}}
By contraposition.
Let $\rep$ be a repair of $\db\setminus\db_{0}$ that falsifies $q$.
Then, $\rep\cup\sep$ is a repair of $\db$.
If $\rep\cup\sep\models q$, then it must be the case that $\sep$ is grelevant for $q$ in $\db$, a contradiction. 
We conclude by contradiction that $\rep\cup\sep\not\models q$.
\framebox{\ref{it:irrelevanttwo}$\implies$\ref{it:irrelevantone}}
Trivial.
\end{proof}

\paragraph{Introductory example}
The following example illustrates the main ideas behind the reduction of Lemma~\ref{lem:soluble}.

\begin{example}\label{ex:frugal}
Let $q$ be a self-join-free Boolean conjunctive query.
Assume that $q$ includes $q_{0}=\{R(\underline{x},y)$, $S(\underline{y},z)$, $V(\underline{z},x)\}$.
Then, the Markov graph of $q$ contains a cycle $x \markov y \markov z \markov x$.
Let $\db$ be an uncertain database that is purified relative to $q$.
Let $\db_{0}$ be the subset of $\db$ containing all $R$-facts, $S$-facts, and $V$-facts of $\db$. 
Assume that the following three tables represent all facts of $\db_{0}$
(for convenience, we use variables as attribute names,
and we blur the distinction between a relation name $R$ and a table representing a set of $R$-facts).
$$
\begin{array}{llll}
\begin{array}{c|cc}
R & \underline{x} & y\\\cline{2-3}
  & 1 & a\\
	&   &  \\[1.5ex]  
	& 2 & b\\
	& 2 & c\\[1.5ex]  
	& 3 & d\\
	& 3 & e\\
	& 4 & e\\
	& 4 & f
\end{array}
&
\begin{array}{c|cc}
S & \underline{y} & z\\\cline{2-3}
  & a & \alpha\\
	& a & \kappa\\[1.5ex]   
	& b & \beta\\
	& c & \gamma\\[1.5ex]
	& d & \delta\\
	& e & \epsilon\\
	& e & \delta\\
	& f & \phi
\end{array}
&
\begin{array}{c|cc}
V & \underline{z} & x\\\cline{2-3}
  & \alpha   & 1\\
	& \kappa   & 1\\[1.5ex]  
	& \beta    & 2\\
	& \gamma   & 2\\[1.5ex]  
	& \delta   & 3\\
	& \epsilon & 3\\
	& \delta   & 4\\
	& \phi     & 4
\end{array}
&
\begin{array}{l}
\\
\MyLBrace{2.8ex}{\db_{01}}\\[2.5ex]
\MyLBrace{2.8ex}{\db_{02}}\\[3.0ex]
\MyLBrace{6ex}{\db_{03}}
\end{array}
\end{array}
$$

As indicated, we can partition $\db_{0}$ into three subsets $\db_{01}$, $\db_{02}$, and $\db_{03}$ whose active domains have, pairwise, no constants in common.
Consider each of these three subsets in turn.
\begin{enumerate}
\item
$\db_{01}$ has two repairs, each of which satisfies $q_{0}$.
For every repair $\rep$ of $\db$, 
either $\rep\models\substitute{q_{0}}{x,y,z}{1,a,\alpha}$ or  $\rep\models\substitute{q_{0}}{x,y,z}{1,a,\kappa}$.
\item
$\db_{02}$ has two repairs, each of which satisfies $q_{0}$.
For every repair $\rep$ of $\db$, 
either $\rep\models\substitute{q_{0}}{x,y,z}{2,b,\beta}$ or  $\rep\models\substitute{q_{0}}{x,y,z}{2,c,\gamma}$.
\item
$\db_{03}$ has $16$ repairs, and for 
$\sep\defeq\{R(\underline{3},d)$, $S(\underline{d},\delta)$, $V(\underline{\delta},4)$,
$R(\underline{4},e)$, $S(\underline{e},\epsilon)$, $V(\underline{\epsilon},3)$,
$S(\underline{f},\phi)$, $V(\underline{\phi},4)\}$,
we have that $\sep$ is a repair of $\db_{03}$ that falsifies $q_{0}$.
It can be seen that $\sep$ is not grelevant for $q$ in $\db$.
Then, by Lemma~\ref{lem:irrelevant}, every repair of $\db$ satisfies $q$
if and only if every repair of $\db\setminus\db_{03}$ satisfies $q$.
That is, $\db_{03}$ can henceforth be ignored.
\end{enumerate}  
The following table $T$ summarizes our findings.
In the first column (named with a fresh variable $u$), the values $01$ and $02$ refer to $\db_{01}$ and $\db_{02}$ respectively.
The table includes two blocks (separated by a dashed line for clarity).
The first block indicates that for every repair $\rep$ of $\db$,
either $\rep\models\substitute{q_{0}}{x,y,z}{1,a,\alpha}$ or $\rep\models\substitute{q_{0}}{x,y,z}{1,a,\kappa}$.
Likewise for the second block.

$$
\begin{array}{c|*{4}{c}}
T & \underline{u} & x & y & z\\\cline{2-5}
  & 01 & 1 & a & \alpha\bigstrut\\ 
	& 01 & 1 & a & \kappa\\\cdashline{2-5}
	& 02 & 2 & b & \beta\bigstrut\\
	& 02 & 2 & c & \gamma
\end{array}
$$
The table $U_{x}$ shown below is the projection of $T$ on attributes $x$ and $u$.
This table must be consistent, because by construction, the active domains of $\db_{01}$ and $\db_{02}$ are disjoint.
Likewise for $U_{y}$ and $U_{z}$.
$$
\begin{array}{llll}
\begin{array}{c|cc}
U_{x} & \underline{x} & u\\\cline{2-3}
  & 1 & 01\\
	& 2 & 02 
\end{array}
&
\begin{array}{c|cc}
U_{y} & \underline{y} & u\\\cline{2-3}
	& a & 01\\
	& b & 02\\
	& c & 02
\end{array}
&
\begin{array}{c|cc}
U_{z} & \underline{z} & u\\\cline{2-3}
  & \alpha  & 01\\
	& \kappa  & 01\\
	& \beta   & 02\\
	& \gamma  & 02
\end{array}
\end{array}
$$

Let $\db'$ be the database that extends $\db$ with all the facts shown in the tables $T$, $U_{x}$, $U_{y}$, and $U_{z}$.\footnote{Facts of $\db_{0}$ can be omitted from $\db'$, but that is not important.}
Let $q^*=\formula{q\setminus q_{0}}\cup\{T(\underline{u},x,y,z)$, $U_{x}^{c}(\underline{x},u)$, $U_{y}^{c}(\underline{y},u)$, $U_{z}^{c}(\underline{z},u)\}$.
From our construction, it follows that
every repair of $\db$ satisfies $q$ if and only if every repair of $\db'$ satisfies $q^*$.
\end{example}

\paragraph{Gblocks and gpurification}
The following definition strengthens the notion of purification introduced earlier in Section~\ref{sec:preliminaries}.

\begin{definition}\label{def:c_block}
Let $q$ be a self-join-free Boolean conjunctive query such that all atoms with mode $i$ in $q$ are simple-key.
Let $\db$ be an uncertain database that is purified and typed relative to $q$.
A {\em gblock\/} (generalized block) of $\db$ relative to $q$ is a maximal (with respect to $\subseteq$) subset $\gblock$ of $\db$ such that
all facts in $\gblock$ have mode $i$ and agree on their primary-key position (but may disagree on their relation name).
Notice that a gblock has at most polynomially many repairs (in the size of $\db$).\footnote{Indeed, since $\db$ is purified relative to $q$, every gblock of $\db$ contains at most $\card{q}$ distinct relation names,
and hence has at most $\card{\db}^{\card{q}}$ distinct repairs.}
We say that $\db$ is {\em gpurified relative to $q$} if for every gblock $\gblock$ of $\db$,
every repair of $\gblock$ is grelevant for $q$ in $\db$.
\end{definition}

Clearly, every gblock is the union of one or more blocks. 
Two facts of the same gblock have the same primary-key value, but can have distinct relation names.

\begin{example}
Let $q=\{R(\underline{x},y)$, $S(\underline{x},y)\}$.
Let $\db=\{R(\underline{a},1)$, $R(\underline{a},2)$, $S(\underline{a},1)$, $S(\underline{a},2)\}$.
Then, $\db$ is purified and typed relative to $q$.
All facts of $\db$ together constitute a gblock.
The uncertain database $\db$ is not gpurified, since 
$\sep=\{R(\underline{a},1)$, $S(\underline{a},2)\}$ is a repair of the gblock, and also a repair of $\db$.
However, neither $R(\underline{a},1)$ nor $S(\underline{a},2)\}$ is relevant for $q$ in $\sep$.
\end{example}

\begin{example}\label{ex:aggressive}
Let $q=\{R_{1}(\underline{x},y)$, $R_{2}(\underline{x},z)$, $S(\underline{y,z})\}$,
where the signature of $S$ is $\signature{2}{2}$.
Let $\db$ be the uncertain database containing the following facts.
$$
\begin{array}{ccc}
\begin{array}{c|cc}
R_{1} & \underline{x} & y\\\cline{2-3}
      & a & 1\\
	    & a & 2 
\end{array}
&
\begin{array}{c|cc}
R_{2} & \underline{x} & z\\\cline{2-3}
      & a & 3\\
	    & a & 4 
\end{array}
&
\begin{array}{c|cc}
S & \underline{y} & \underline{z}\\\cline{2-3}
  & 1 & 3\\
	& 2 & 4
\end{array}
\end{array}
$$
Then, $\db$ is purified and typed relative to $q$.
All $R_{1}$-facts and $R_{2}$-facts together constitute a gblock.
A repair of this gblock is $\sep=\{R_{1}(\underline{a},1)$, $R_{2}(\underline{a},4)\}$.
The uncertain database $\db$ is not gpurified.
Indeed, the only repair of $\db$ that extends $\sep$ is $\{R_{1}(\underline{a},1)$, $R_{2}(\underline{a},4)$, $S(\underline{1,3})$, $S(\underline{2,4})\}$ (call it $\rep$).
Neither $R_{1}(\underline{a},1)$ nor $R_{2}(\underline{a},4)$ is relevant for $q$ in $\rep$.
\end{example} 

The following lemma is similar to Lemma~\ref{lem:purified} and has an easy proof.

\begin{lemma}\label{lem:gpurified}
Let $q$ be a self-join-free Boolean conjunctive query such that all atoms with mode $i$ in $q$ are simple-key.
Let $\db$ be an uncertain database that is purified and typed relative to $q$.
It is possible to compute in polynomial time an uncertain database $\db'$ that is gpurified relative to $q$ such that
every repair of $\db$ satisfies $q$ if and only if every repair of $\db'$ satisfies $q$.
\end{lemma}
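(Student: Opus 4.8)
The plan is to mimic the proof of Lemma~\ref{lem:purified} (the purification lemma) but iterate the stronger ``grelevance'' elimination supplied by Lemma~\ref{lem:irrelevant}. First I would assume without loss of generality that the input $\db$ is already purified and typed relative to $q$, since Lemma~\ref{lem:purified} lets us compute such a database in polynomial time (typing is free by the remark in Section~\ref{sec:preliminaries} that it is without loss of generality for self-join-free queries). The target $\db'$ will be obtained by repeatedly deleting gblocks that witness a failure of gpurification.

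The core subroutine is as follows. For each gblock $\gblock$ of $\db$ I would enumerate its repairs; by the footnote in Definition~\ref{def:c_block} there are at most $\card{\db}^{\card{q}}$ of them, so this enumeration is polynomial in $\db$ for fixed $q$. For each repair $\sep$ of $\gblock$ I must decide whether $\sep$ is grelevant for $q$ in $\db$, i.e.\ whether $\sep$ extends to some repair $\rep$ of $\db$ in which some fact of $\sep$ is relevant. If every repair of every gblock is grelevant, then $\db$ is already gpurified and we set $\db'=\db$. Otherwise I pick a non-grelevant repair $\sep$ of some gblock, set $\db_{0}=\bigcup\{\theblock{A}{\db}\mid A\in\sep\}$, and replace $\db$ by $\db\setminus\db_{0}$; Lemma~\ref{lem:irrelevant} guarantees that this deletion preserves the answer to $\cqa{q}$, namely every repair of $\db$ satisfies $q$ iff every repair of $\db\setminus\db_{0}$ does. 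Since each step strictly decreases $\card{\db}$, the loop terminates after at most $\card{\db}$ iterations, and the final database is gpurified and equivalent; deleting $\db_{0}$ cannot create new non-grelevant repairs in a way that breaks termination, but I would note that gpurification must be re-checked after each deletion (removing blocks can change which extensions are repairs), which is why the argument is phrased as a terminating loop rather than a single pass.

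The step I expect to be the main obstacle is the grelevance test itself: deciding, for a fixed repair $\sep$ of a gblock, whether $\sep$ is grelevant for $q$ in $\db$. The subtlety is that grelevance asks about the \emph{existence} of an extending repair $\rep$ in which some fact of $\sep$ participates in a satisfying valuation $\theta$ with $\theta(q)\subseteq\rep$; one must be careful that the other atoms of $q$ can be matched by facts chosen consistently with $\sep$ and with the single-tuple-per-block constraint. I would argue this is decidable in polynomial time by observing that, since $\db$ is purified, any fact $A\in\sep$ already lies in some satisfying valuation image within $\db$, and one only needs to check whether the blocks used by such a valuation are compatible with the fixed choices in $\sep$ (two facts of $\sep$ may sit in blocks that a witnessing valuation would need to fill differently). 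For self-join-free $q$ this compatibility check reduces to verifying, atom by atom, that the required facts can be selected without forcing two distinct facts from a common block, which is a local consistency condition checkable in polynomial time for each of the polynomially many valuations. Making this local argument fully rigorous — especially handling the interaction between the fixed facts of $\sep$ and the freedom in choosing the remaining atoms of a witnessing valuation — is the delicate part, whereas the outer loop and the appeal to Lemma~\ref{lem:irrelevant} are routine.
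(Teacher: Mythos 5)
Your proposal is correct and follows essentially the same route as the paper: a terminating loop that repeatedly finds a gblock with a non-grelevant repair $\sep$, deletes the corresponding blocks, and appeals to Lemma~\ref{lem:irrelevant} to preserve the answer, with the polynomial bound coming from the count of gblocks and of their repairs. The one step you flag as delicate --- testing grelevance of $\sep$ --- is resolved in the paper by the observation that, thanks to typing, relevance of one fact of $\sep$ in an extending repair forces relevance of all of them, so grelevance is equivalent to $\formula{\db\setminus\db_{\sep}}\cup\sep\models q$ where $\db_{\sep}$ collects all facts whose relation name occurs in $\sep$; your direct enumeration of witnessing valuations combined with a block-consistency check against $\sep$ is an equally valid, if slightly less tidy, polynomial-time test.
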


\paragraph{Specification of the reduction of Lemma~\ref{lem:soluble}}
Let $q$ and $\calC$ be as in the statement of Lemma~\ref{lem:soluble}.
Assume that the elementary directed cycle $\calC$ in the Markov graph of $q$ is $x_0 \markov x_1 \dotsm \markov x_{k-1} \markov x_0$.
In what follows, let $\resolve{\calC}{q}$ be as in Definition~\ref{def:resolve},
with $q_{0}$, $q_{1}$, $\vec{y}$, $u$, $T$, and $U_{0},\dots,U_{k-1}$ as defined there.
Moreover,  we write $\oplus$ for addition modulo $k$, and $\ominus$ for subtraction modulo $k$.
For every $i\in\{0,\dots,k-1\}$, we define $X_{i}$ as follows:
$$
X_{i}\defeq\atomvars{\clutch{x_{i}}{q}}.
$$
The reduction of Lemma~\ref{lem:soluble} will be described under the following simplifying assumptions which can be made without loss of generality:
\begin{itemize}
\item
every uncertain database $\db$ that is input to $\cqa{q}$ is typed, purified, and gpurified relative to $q$.
This assumption is without loss of generality as argued in Section~\ref{sec:preliminaries}, and by Lemmas~\ref{lem:purified} and~\ref{lem:gpurified}; and
\item
for every $i\in\{0,\dots,k-1\}$, no atom of $\clutch{x_{i}}{q}$ contains constants or double occurrences of the same variable.
This assumption is without loss of generality by Lemma~\ref{lem:mostsimplified}.
\end{itemize}

Under these notations and assumptions, we describe the reduction of Lemma~\ref{lem:soluble}.
Let $\db$ be an uncertain database that is input to $\cqa{q}$.
Define a directed $k$-partite graph, denoted $\dbgraph{\db}$, as follows:
\begin{enumerate}
\item
the vertex set of $\dbgraph{\db}$ is $\bigcup_{i=0}^{k-1}\type{x_{i}}$; and
\item
there is a directed edge from $a\in\type{x_{i}}$ to $b\in\type{x_{i\oplus 1}}$ if for some valuation $\theta$ over $\queryvars{q}$, we have that $\theta(q)\subseteq\db$ and $\theta(x_{i})=a$ and $\theta(x_{i\oplus 1})=b$.
In this case, we say that $\theta[X_{i}]$ {\em realizes\/} the edge $(a,b)$,
where $\theta[X_{i}]$ denotes the restriction of $\theta$ on $X_{i}$.
\end{enumerate}
Notice that distinct valuations can realize the same edge of $\dbgraph{\db}$ (but if $\db$ is consistent, then every edge in $\dbgraph{\db}$ is realized at most once).

\begin{example}
Let
$q =\{R_{1}(\underline{x_{0}},y_{1})$, $R_{2}(\underline{x_{0}},y_{2})$, $S^{c}(\underline{y_{1},y_{2}},x_{1})$, $R_{3}(\underline{x_{0}},y_{3})$, $V(\underline{x_{1}},x_{0})\}$.
Then, $x_{0}\markov x_{1}$ and $X_{0}=\{x_{0},y_{1},y_{2},y_{3}\}$.
%
Assume an uncertain database $\db$ containing, among others, the following facts.
$$
\begin{array}{cccc}
\begin{array}{c|cc}
R_{1} & \underline{x_{0}} & y_{1}\\\cline{2-3}
      & a & c_{1}
\end{array}
&
\begin{array}{c|cc}
R_{2} & \underline{x_{0}} & y_{2}\\\cline{2-3}
      & a & c_{2}\\
			& a & c_{3}
\end{array}
&
\begin{array}{c|ccc}
S & \underline{y_{1}} & \underline{y_{2}} & x_{1}\\\cline{2-4}
  & c_{1} & c_{2} & 1\\
	& c_{1} & c_{3} & 1
\end{array}
&
\begin{array}{c|cc}
R_{3} & \underline{x_{0}} & y_{3}\\\cline{2-3}
      & a & \beta\\
	    & a & \gamma 
\end{array}
\end{array}
$$
The graph $\dbgraph{\db}$ contains a directed edge $(a,1)$, which is realized by $\{x_{0}\mapsto a$, $y_{1}\mapsto c_{1}$, $y_{2}\mapsto c_{2}$, $y_{3}\mapsto\beta\}$.
The edge $(a,1)$ is also realized by $\{x_{0}\mapsto a$, $y_{1}\mapsto c_{1}$, $y_{2}\mapsto c_{3}$, $y_{3}\mapsto\gamma\}$.
\end{example}

Let $\consistent{\db}$ be the subset of $\db$ that contains all facts with mode $c$.
Significantly, the edges in $\dbgraph{\db}$ outgoing from some constant $a\in\type{x_{j}}$ (for some $j\in\{0,\dots,k-1\}$)
are fully determined by $\consistent{\db}$ and the gblock of $\db$ containing all facts whose relation name is in $\clutch{x_{j}}{q}$
and whose primary-key position contains the constant $a$ (call this gblock $\gblock_{a}$).
Since $\db$ is gpurified, for every repair $\sep$ of $\gblock_{a}$, there exists a unique constant $b\in\type{x_{j\oplus 1}}$ such that 
$$
\sep\cup\consistent{\db}\models\substitute{\formula{\clutch{x_{j}}{q}\cup\consistent{q}}}{x_{j},x_{j\oplus 1}}{a,b},
$$
in which case $\dbgraph{\db}$ will contain a directed edge from $a$ to $b$.
Uniqueness of $b$ follows from $\FD{\clutch{x_{j}}{q}\cup\consistent{q}}\models\fd{x_{j}}{x_{j\oplus 1}}$ and~\cite[Lemma~4.3]{DBLP:journals/tods/Wijsen12}.

Since $\db$ is gpurified, $\dbgraph{\db}$ is a vertex-disjoint union of strong components such that no edge leads from one strong component to another strong component 
(i.e., all strong components are initial).\footnote{Strong components are defined by Definition~\ref{def:tarjan}.}
In what follows, let $D$ be a strong component of $\dbgraph{\db}$.
Since $\dbgraph{\db}$ is $k$-partite, the length of any cycle in $\dbgraph{\db}$ must be a multiple of $k$, i.e., must be in $\{k,2k,3k,\dots\}$.
Let $\db_{D}$ be the subset of $\db$ that contains $R(\underline{a},\vec{b})$ whenever $R$ is of mode $i$ and the constant $a$ is a vertex in $D$ (and $\vec{b}$ is any sequence of constants).
Obviously, every block of $\db$ is either included in $\db_{D}$ or disjoint with $\db_{D}$.

Clearly, $D$ must contain a cycle.
Among the cycles in $D$ of length exactly $k$, we now distinguish the cycles that {\em support\/} $q$ from those that do not, as defined next.
Let such cycle in $D$ be 
\begin{equation}\label{eq:datacycle}
a_{0},a_{1},\dots,a_{k-1},a_{0}
\end{equation}
where for $i\in\{0,\dots,k-1\}$, $a_{i}\in\type{x_{i}}$.
For $i\in\{0,\dots,k-1\}$, let $\Delta_{i}$ be the set of all valuations over $X_{i}$ that realize $(a_{i},a_{i\oplus 1})$.
We say that the cycle~(\ref{eq:datacycle}) {\em supports $q$} if for for all $i,j\in\{0,\dots,k-1\}$,
for all $\mu_{i}\in\Delta_{i}$ and $\mu_{j}\in\Delta_{j}$, it is the case that $\mu_{i}$ and $\mu_{j}$ agree on all variables in $X_{i}\cap X_{j}$.
Notice that $X_{i}\cap X_{j}$ can be empty.
The cycle~(\ref{eq:datacycle}) may not support $q$, because $\mu_{i}$ and $\mu_{j}$ can  disagree on variables in $X_{i}\cap X_{j}\cap\sequencevars{\vec{y}}$, as illustrated next.

\begin{example}\label{ex:disagree}
Let $q=\{R(\underline{x_{0}},x_{1},y)$, $S(\underline{x_{1}},x_{0},y)\}$.
We have $x_{0}\markov x_{1}\markov x_{0}$.
Let $\db$ be the uncertain database containing the following facts.
$$
\begin{array}{cc}
\begin{array}{c|ccc}
R & \underline{x_{0}} & x_{1} & y\\\cline{2-4}
  & a & 1 & \alpha\\
	& a & 1 & \beta
\end{array}
&
\begin{array}{c|ccc}
S & \underline{x_{1}} & x_{0} & y\\\cline{2-4}
  & 1 & a & \alpha\\
	& 1 & a & \beta
\end{array}
\end{array}
$$
The edge set of $\dbgraph{\db}$ is $\{(a,1)$, $(1,a)\}$.
Both $(a,1)$ and $(1,a)$ are realized by the valuations $\{x_{0}\mapsto a$, $x_{1}\mapsto 1$, $y\mapsto\alpha\}$ and $\{x_{0}\mapsto a$, $x_{1}\mapsto 1$, $y\mapsto\beta\}$, which disagree on $y$. Hence, the cycle $a,1,a$ does not support $q$.
\end{example} 

%

On the other hand, we can assume without loss of generality that $\mu_{i}$ and $\mu_{j}$ agree on all variables in $X_{i}\cap X_{j}\cap\{x_{0},\dots,x_{k-1}\}$.
In particular, if $x_{i}\in X_{j}$, then $\mu_{j}(x_{i})=\mu_{i}(x_{i})=a_{i}$. 
To see why this is the case, assume that $x_{i}\in X_{j}$, where $i,j\in\{0,\dots,k-1\}$ and $i\neq j$.
Then, it must be that $x_{j}\markov x_{i}$.
Two cases can occur:
\begin{itemize}
\item
if $j=i\ominus 1$, then $\mu_{j}$ realizes the edge $(a_{i\ominus 1},a_{i})$ and $\mu_{j}(x_{i})=a_{i}$; and
\item
if $j\neq i\ominus 1$, then $x_{j}\markov x_{i}\markov x_{i\oplus 1}\dotsm\markov x_{j\ominus 1}\markov x_{j}$ is a shorter Markov cycle.
\end{itemize}
The second case can be avoided by picking $\calC$ to be the shorter cycle, as illustrated by Example~\ref{ex:shorter}. 
It can be seen that such choice of $\calC$ is without loss of generality.
In particular, in Lemma~\ref{lem:markov_cycle}, if $\calC$ was premier, then the shorter cycle will also be premier.

\begin{example}\label{ex:shorter}
Let
$q =\{R(\underline{x_{0}},x_{1})$, $S(\underline{x_{1}},x_{2},x_{0})$, $V(\underline{x_{2}},x_{0})\}$.
Then, $x_{0}\markov x_{1}\markov x_{2}\markov x_{0}$.
We have $X_{0}=\{x_{0},x_{1}\}$, $X_{1}=\{x_{1},x_{2},x_{0}\}$, and $X_{2}=\{x_{2},x_{0}\}$.
Assume an uncertain database $\db$ with the following facts.
$$
\begin{array}{ccc}
\begin{array}{c|cc}
R & \underline{x_{0}} & x_{1}\\\cline{2-3}
  & a & 1\\
	& b & 1		
\end{array}
&
\begin{array}{c|ccc}
S & \underline{x_{1}} & x_{2} & x_{0}\\\cline{2-4}
  & 1 & \beta & a\\
  & 1 & \beta & b
\end{array}
&
\begin{array}{c|cc}
V & \underline{x_{2}} & x_{0}\\\cline{2-3}
  & \beta & a\\
	& \beta & b 
\end{array}
\end{array}
$$
The graph $\dbgraph{\db}$ contains an elementary directed cycle $a,1,\beta,a$.
The edge $(a,1)$ is realized by $\mu_{0}=\{x_{0}\mapsto a$, $x_{1}\mapsto 1\}$.
The edge $(1,\beta)$ is realized, among others, by $\mu_{1}=\{x_{1}\mapsto 1$, $x_{2}\mapsto\beta$, $x_{0}\mapsto b\}$.
Notice that $\mu_{0}$ and $\mu_{1}$ disagree on $x_{0}$.
Although it is easy to deal with this situation where two valuations disagree on a variable in the Markov cycle,
it is even easier to avoid this situation by working with the shorter Markov cycle  $x_{0}\markov x_{1}\markov x_{0}$.
\end{example}

We now distinguish two cases.

\paragraph{
Case $D$ contains either an elementary directed cycle of size $k$ that does not support $q$,
or an elementary directed cycle of size strictly greater than $k$.}
We show in the next paragraph how to construct a repair $\sep$ of $\db_{D}$ such that $\sep$ is not grelevant for $q$ in $\db$. 
Then, by Lemma~\ref{lem:irrelevant}, 
every repair of $\db$ satisfies $q$ if and only if every repair of $\db\setminus\db_{D}$ satisfies $q$. 
In this case, the reduction deletes from $\db$ all facts of $\db_{D}$. 

The construction of $\sep$ proceeds as follows.
Pick an elementary cycle in $D$ that has size strictly greater than $k$,
or that has size $k$ but does not support $q$.
The cycle picked will henceforth be denoted by $\calE$.
Construct a maximal sequence
$$(V_{0},E_{0}),b_{1},(V_{1},E_{1}),b_{2},(V_{2},E_{2}),\dots,b_{n},(V_{n},E_{n})$$
where 
\begin{enumerate}
\item 
$V_{0}$ is the set of vertices in $\calE$, and $E_{0}$ is the set of directed edges in $\calE$; and
\item
for every $i\in\{1,\dots,n\}$,
\begin{enumerate}
\item
$b_{i}\not\in V_{i-1}$ and for some $c\in V_{i-1}$, $(b_{i},c)$ is a directed edge in $\dbgraph{\db}$; and
\item
$V_{i}=V_{i-1}\cup\{b_{i}\}$ and $E_{i}=E_{i-1}\cup\{(b_{i},c)\}$.
\end{enumerate}
\end{enumerate}
The resulting graph $(V_{n},E_{n})$ is such that $V_{n}$ is equal to the vertex set of $D$,
and $E_{n}$ contains exactly one outgoing edge for each vertex in $V_{n}$.
The graph $(V_{n},E_{n})$ contains no directed cycle other than $\calE$.
To construct $\sep$, for each $j\in\{0,\dots,k-1\}$,
for each vertex $a\in V_{n}\cap\type{x_{j}}$, select some valuation $\mu$ that realizes the edge in $E_{n}$ outgoing from $a$,
and add $\mu(\clutch{x_{j}}{q})$ to $\sep$.
If $\calE$ has size $k$, then the valuations $\mu$ should be selected such that for some vertices $a,b$ in $\calE$,
the valuations chosen for $a$ and $b$ disagree on some variable of $\sequencevars{\vec{y}}$.
It is not hard to see that the set $\sep$ so obtained is a repair of $\db_{D}$ that is not grelevant for $q$ in $\db$.

We illustrate the above construction by two examples.

\begin{example}
In Example~\ref{ex:disagree}, one can choose $\sep=\{R(\underline{a},1,\alpha)$, $S(\underline{1},a,\beta)\}$.
The treatment of a directed cycle of size strictly greater than $k$ is illustrated by $\db_{03}$ in Example~\ref{ex:frugal}.
\end{example}

\begin{example}\label{ex:involved}
Let $q=\{R(\underline{x_{0}},y_{1},y_{2})$, 
$V(\underline{x_{1}},y_{2})$,
$S_{1}^{c}(\underline{y_{1},y_{2}},x_{1})$,
$S_{2}^{c}(\underline{y_{2}},x_{0})\}$.
We have $x_{0}\markov x_{1}\markov x_{0}$, $X_{0}=\{x_{0},y_{1},y_{2}\}$, and $X_{1}=\{x_{1},y_{2}\}$.
Let $\db$ be an uncertain database with the following facts.

$$
\begin{array}{cccc}
\begin{array}{c|ccc}
R & \underline{x_{0}} & y_{1} & y_{2}\\\cline{2-4}
  & a & 1 & 2\\
	& a & 3 & 4\\
	& a & 1 & 6\\
\end{array}
&
\begin{array}{c|cc}
V & \underline{x_{1}} & y_{2}\\\cline{2-3}
  & \gamma & 2\\
	& \gamma & 4\\
	& \beta  & 6
\end{array}
&
\begin{array}{c|ccc}
S_{1}^{c} & \underline{y_{1}} & \underline{y_{2}} & x_{1}\\\cline{2-4}
  & 1 & 2 & \gamma\\
	& 3 & 4 & \gamma\\
	& 1 & 6 & \beta\\
\end{array}
&
\begin{array}{c|cc}
S_{2}^{c} & \underline{y_{2}} & x_{0}\\\cline{2-3}
  & 2 & a\\
	& 4 & a\\
	& 6 & a
\end{array}
\end{array}
$$

The following table lists the edges in $\dbgraph{\db}$, by type,
along with the valuations that realize each edge.
$$
\begin{array}{ll}
\begin{array}{l|l@{=}c}
\multicolumn{2}{l}{\textrm{Edges in $\type{x_{0}}\times\type{x_{1}}$}}\\
\multicolumn{1}{c|}{\textrm{Edge}} & \multicolumn{1}{c}{\textrm{Realized by}}\\\hline
(a,\gamma)  & \{x_{0}\mapsto a, y_{1}\mapsto 1, y_{2}\mapsto 2\} & \mu_{1}\\
            & \{x_{0}\mapsto a, y_{1}\mapsto 3, y_{2}\mapsto 4\} & \mu_{2}\\\hline
(a,\beta)   & \{x_{0}\mapsto a, y_{1}\mapsto 1, y_{2}\mapsto 6\} & \mu_{3}\\							

\end{array}
&
\begin{array}{l|l@{=}c}
\multicolumn{2}{l}{\textrm{Edges in $\type{x_{1}}\times\type{x_{0}}$}}\\
\multicolumn{1}{c|}{\textrm{Edge}} & \multicolumn{1}{c}{\textrm{Realized by}}\\\hline
(\gamma,a)  & \{x_{1}\mapsto\gamma, y_{2}\mapsto 2\} & \mu_{4}\\
            & \{x_{1}\mapsto\gamma, y_{2}\mapsto 4\} & \mu_{5}\\\hline
(\beta,a)   & \{x_{1}\mapsto\beta, y_{2}\mapsto 6\} & \mu_{6}\\							

\end{array}
\end{array}
$$
Then, $\dbgraph{\db}$ contains two elementary cycles, $a,\gamma,a$ and $a,\beta,a$, both of length~$2$.
The cycle $a,\beta,a$ supports $q$.
The cycle $a,\gamma,a$ does not support $q$, because $\mu_{1}$ and $\mu_{5}$ disagree on $y_{2}$.
Therefore, the edges $(a,\gamma)$ and $(\gamma,a)$, along with $\mu_{1}$ and $\mu_{5}$, will be used in the construction of a consistent set $\sep$ that is not grelevant for $q$ in $\db$.
For the remaining vertex $\beta$, we add the edge $(\beta,a)$, which is only realized  by $\mu_{6}$.
Then, $\sep$ contains the $R$-fact $R(\underline{a},1,2)$ (because of $\mu_{1}$), and the $V$-facts  $V(\underline{\gamma},4)$ and $V(\underline{\beta},6)$ (because of $\mu_{5}$ and $\mu_{6}$ respectively).
In this example, there is only one repair that contains $\sep$, and this repair falsifies $q$.
\end{example}

\paragraph{Case every elementary directed cycle in $D$ has length $k$ and supports $q$.}
In this case, we will encode each cycle of $D$ as a set of $T$-facts, as follows.
Consider any cycle of the form~(\ref{eq:datacycle}) in $D$, and take the cross product 
\begin{equation}\label{eq:crossproduct}
\Delta_{0}\times\Delta_{2}\times\dots\times\Delta_{k-1},
\end{equation}
which is of polynomial size (in the size of $\db$).
Since we are in the case where any cycle of the form~(\ref{eq:datacycle}) supports $q$,
for every tuple $(\mu_{0},\mu_{1},\dots,\mu_{k-1})$ in the cross product~(\ref{eq:crossproduct}), the set $\mu\defeq\bigcup_{i=0}^{k-1}\mu_{i}$ is a well defined valuation over $\{x_{0},\dots,x_{k-1}\}\cup\sequencevars{\vec{y}}$.
In this case, for each such tuple, the reduction adds the following $k+1$ facts:
$$
\begin{array}{l}
T(\underline{D}, a_{0}, \dots, a_{k-1}, \mu(\vec{y}))\\
U_{0}^{c}(\underline{a_{0}}, D)\\ 
\phantom{U_{k-1}^{c}}\vdots\\
U_{k-1}^{c}(\underline{a_{k-1}}, D)
\end{array}
$$
in which $D$ is used as a constant.
Recall that $a_{i}=\mu(x_{i})$ for $i\in\{0,\dots,k-1\}$.
Notice that if the sequence $\vec{y}$ is empty, then the reduction will add exactly one $T$-fact for every cycle of the form~(\ref{eq:datacycle}). 
Otherwise, the reduction may add multiple $T$-facts for the same cycle, as illustrated next. 

\begin{example}
Let $q=\{R(\underline{x_{0}},x_{1},y)$, $S(\underline{x_{1}},x_{0})\}$.
We have $x_{0}\markov x_{1}\markov x_{0}$, $X_{0}=\{x_{0},x_{1},y\}$ and  $X_{1}=\{x_{0},x_{1}\}$.
Let $\db$ be the uncertain database containing the following facts.
$$
\begin{array}{cc}
\begin{array}{c|ccc}
R & \underline{x_{0}} & x_{1} & y\\\cline{2-4}
  & a & 1 & \alpha\\
	& a & 1 & \beta
\end{array}
&
\begin{array}{c|cc}
S & \underline{x_{1}} & x_{0}\\\cline{2-3}
  & 1 & a
\end{array}
\end{array}
$$
The edge set of $\dbgraph{\db}$ is $\{(a,1)$, $(1,a)\}$.
The edge $(a,1)$ is realized by both $\{x_{0}\mapsto a$, $x_{1}\mapsto 1$, $y\mapsto\alpha\}$ and $\{x_{0}\mapsto a$, $x_{1}\mapsto 1$, $y\mapsto\beta\}$.
The edge $(1,a)$ is realized only by $\{x_{0}\mapsto a$, $x_{1}\mapsto 1\}$.
The cycle $a,1,a$ in $\dbgraph{\db}$ supports $q$.
The reduction will add the following $T$-facts (for some identifier $D$):
$$
\begin{array}{c|cccc}
T & \underline{u} & x_{0} & x_{1} & y\\\cline{2-5}
  & D & a & 1 & \alpha\\
	& D & a & 1 & \beta
\end{array}
$$
\end{example} 

\begin{example}
Take the query $q$ of Example~\ref{ex:involved}, with the following uncertain database $\db$. 

$$
\begin{array}{cccc}
\begin{array}{c|ccc}
R & \underline{x_{0}} & y_{1} & y_{2}\\\cline{2-4}
  & a & 1 & 2\\
	& a & 1 & 6\\
  & a & 3 & 6
\end{array}
&
\begin{array}{c|cc}
V & \underline{x_{1}} & y_{2}\\\cline{2-3}
  & \gamma & 2\\
	& \beta  & 6
\end{array}
&
\begin{array}{c|ccc}
S_{1}^{c} & \underline{y_{1}} & \underline{y_{2}} & x_{1}\\\cline{2-4}
  & 1 & 2 & \gamma\\
	& 1 & 6 & \beta\\
	& 3 & 6 & \beta
\end{array}
&
\begin{array}{c|cc}
S_{2}^{c} & \underline{y_{2}} & x_{0}\\\cline{2-3}
  & 2 & a\\
	& 6 & a
\end{array}
\end{array}
$$
Then, $\dbgraph{\db}$ contains two elementary cycles, $a,\gamma,a$ and $a,\beta,a$, both of length~$2$ and both supporting $q$.
The reduction will add the following $T$-facts (for some identifier $D$):
$$
\begin{array}{c|*5{c}}
T & \underline{u} & x_{0} & x_{1} & y_{1} & y_{2}\\\cline{2-6}
  & D & a & \gamma & 1 & 2\\
	& D & a & \beta  & 1 & 6\\
	& D & a & \beta  & 3 & 6
\end{array}
$$
\end{example}

Each relation $U_i^c$ encodes that each constant in $\type{x_{i}}\cap\adom{\db}$ occurs in a unique strong component of $\dbgraph{\db}$.
The meaning of the $T$-facts is as follows.
Let  $V=\{x_{0},\dots,x_{k-1}\}\cup\sequencevars{\vec{y}}$.
Let $\Theta_{D}$ be the set of all valuations over $V$ such that 
$$T(\underline{D},\mu(x_{1}),\dots,\mu(x_{k-1}),\mu(\vec{y}))$$ 
has been added by the reduction.
Then the following hold (recall $q_{0}=\bigcup_{i=0}^{k-1}\clutch{x_{i}}{q}$):
\begin{itemize}
\item
for every repair $\rep$ of $\db$,
there exists $\mu\in\Theta_{D}$ such that $\rep\models\mu(q_{0})$; and
\item
for every $\mu\in\Theta_{D}$,
there exists a repair $\rep$ of $\db$ such that 
\begin{enumerate}
\item
$\rep\models\mu(q_{0})$; and
\item
for each $\mu'\in\Theta_{D}$, if $\mu'\neq\mu$, then $\rep\not\models\mu'(q_{0})$.
\end{enumerate}
\end{itemize}

The cycles in $D$ can be found in polynomial time by solving reachability problems, as explained in~\cite[Theorem~4]{DBLP:conf/pods/Wijsen13} and~\cite{DBLP:conf/icdt/KoutrisS14}.
The crux is that the number of cycles in $\dbgraph{\db}$ of length exactly $k$ is polynomially bounded.
Any longer cycle consists of an elementary path $a_{0},a_{1},\dots,a_{k-1},a_{0}'$ of length $k$ ($a_{0}\neq a_{0}'$),
concatenated with an elementary path from $a_{0}'$ to $a_{0}$ that contains no vertex in $\{a_{1},\dots,a_{k-1}\}$.
Notice incidentally that the reduction needs to know the existence (or not) of cycles of size strictly greater than $k$ in any strong component $D$, 
but the vertices on such cycle need not be remembered. 

It can now be seen that, in general, the above reduction results in a database $\db'$ that is as in the following lemma.

\begin{lemma}\label{lem:frugalencoding}
Let $q$ and $\calC$ be as in the statement of Lemma~\ref{lem:soluble}.
Let $q^{*}=\resolve{q}{\calC}$, and let the variable $u$ be as in Definition~\ref{def:resolve}. 
Let $\db$ be an uncertain database that is input to $\cqa{q}$.
We can compute in polynomial time an uncertain database $\db'$ that is a legal input to $\cqa{q^{*}}$ such that the following hold:
\begin{enumerate}
\item\label{it:frugone}
for every repair $\rep$ of $\db$,
there exists a repair $\rep'$ of $\db'$ such that  for every valuation $\theta$ over $\queryvars{q^{*}}$,
if $\theta(q^{*})\subseteq\rep'$,
then $\theta(q)\subseteq\rep$; and
\item\label{it:frugtwo}
for every repair $\rep'$ of $\db'$,
there exists a repair $\rep$ of $\db$ such that  for every valuation $\theta$ over $\queryvars{q}$,
if $\theta(q)\subseteq\rep$,
then there exists a constant $D$ such that $\substitute{\theta}{u}{D}(q^{*})\subseteq\rep'$.
\end{enumerate}
\end{lemma}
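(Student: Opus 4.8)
The plan is to turn the informal construction of Section~\ref{subsec:soluble} into a precise definition of $\db'$ and then to verify the two items by describing repairs as functional subgraphs of $\dbgraph{\db}$; throughout I use the notation $q_{0}$, $q_{1}$, $\vec{y}$, $u$, $T$, $U_{0},\dots,U_{k-1}$ of Definition~\ref{def:resolve}, and I write $\Theta_{D}$ for the set of valuations encoded by the $T$-block of a strong component $D$. First I would normalize the input: by Lemmas~\ref{lem:purified} and~\ref{lem:gpurified} (and the typing convention of Section~\ref{sec:preliminaries}) I may assume $\db$ is typed, purified, and gpurified relative to $q$. Then, for every strong component $D$ of $\dbgraph{\db}$ carrying a ``bad'' cycle --- one of length strictly larger than $k$, or of length $k$ that does not support $q$ --- the construction exhibits a repair $\sep$ of $\db_{D}$ that is not grelevant for $q$ in $\db$; applying Lemma~\ref{lem:irrelevant} once per such component lets me delete $\db_{D}$ without changing whether every repair satisfies $q$, reducing to the case where every elementary cycle of $\dbgraph{\db}$ has length exactly $k$ and supports $q$. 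In this case $\db'$ keeps every fact of $\db$ whose relation name occurs in $q\setminus q_{0}$ and, for each remaining component $D$, adds the block of $T$-facts $T(\underline{D},\mu(x_{0}),\dots,\mu(x_{k-1}),\mu(\vec{y}))$ for $\mu\in\Theta_{D}$ together with the facts $U_{i}^{c}(\underline{a_{i}},D)$. I would check that $\db'$ is a legal input to $\cqa{q^{*}}$: since the strong components of $\dbgraph{\db}$ are vertex-disjoint, each constant of $\type{x_{i}}\cap\adom{\db}$ lies in a unique $D$, so every $U_{i}^{c}$ is consistent, and the relations of $q\setminus q_{0}$ keep their mode.

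The heart of the proof is a dictionary between the two kinds of repairs. Because $\db$ is gpurified, a repair of a gblock $\gblock_{a}$ determines, together with $\consistent{\db}$, a unique outgoing edge of $a$ in $\dbgraph{\db}$; hence a repair $\rep$ of $\db$ picks, on the $q_{0}$-relations, exactly one out-edge per vertex, i.e.\ a \emph{functional subgraph} of out-degree one. As all components are initial these edges stay inside their component, so the functional subgraph contains at least one elementary cycle, which after normalization has length $k$ and supports $q$; the realizing valuations along it combine into a $\mu\in\Theta_{D}$ with $\rep\models\mu(q_{0})$ --- the first bulleted property. Conversely, given $\mu\in\Theta_{D}$ I build the functional subgraph whose cycle carries the out-edges realizing $\mu$ and route every other vertex of $D$ into that cycle by the maximal-sequence construction, so $\mu$ is the \emph{only} cycle realized inside $D$; doing this independently in each component (the gblocks are disjoint) and fixing any selection on the $q\setminus q_{0}$-relations yields a repair $\rep$ with $\rep\models\mu(q_{0})$ and $\rep\not\models\mu'(q_{0})$ for every other $\mu'\in\Theta_{D}$ --- the second bulleted property.

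With this dictionary both items follow by matching selections. For item~\ref{it:frugone}, given $\rep$ I define $\rep'$ by copying the $q\setminus q_{0}$-facts of $\rep$, including all $U_{i}^{c}$-facts, and selecting in each $T$-block $D$ the fact for some $\mu\in\Theta_{D}$ with $\rep\models\mu(q_{0})$; if $\theta(q^{*})\subseteq\rep'$ then the $q\setminus q_{0}$-atoms already lie in $\rep$, while the single retained $T$-fact of block $\theta(u)$ forces $\theta$ to agree with the chosen $\mu$ on $\{x_{0},\dots,x_{k-1}\}\cup\sequencevars{\vec{y}}$, so $\theta(q_{0})=\mu(q_{0})\subseteq\rep$ and hence $\theta(q)\subseteq\rep$. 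For item~\ref{it:frugtwo}, given $\rep'$ each $T$-block $D$ selects one $\mu_{D}\in\Theta_{D}$; I build $\rep$ componentwise as the repair whose unique cycle in $D$ is $\mu_{D}$, copying the $q\setminus q_{0}$-choice of $\rep'$. If $\theta(q)\subseteq\rep$ then the Markov edges $x_{0}\markov\dots\markov x_{k-1}\markov x_{0}$ force $\theta(x_{0}),\dots,\theta(x_{k-1})$ to trace a length-$k$ cycle realized by $\rep$, which by construction is $\mu_{D}$ for $D$ the component of $\theta(x_{0})$; thus $\theta$ agrees with $\mu_{D}$ on the variables of $q_{0}$, and $\substitute{\theta}{u}{D}$ maps $T$ and the $U_{i}^{c}$ into the facts recorded in $\rep'$, giving $\substitute{\theta}{u}{D}(q^{*})\subseteq\rep'$.

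The step I expect to be hardest is the ``only cycle'' half of the dictionary: I must realize a prescribed length-$k$ cycle as the sole cycle of some repair of $\db$, so that in item~\ref{it:frugtwo} the valuation $\theta$ cannot slip onto a cycle different from the $\mu_{D}$ recorded in $\rep'$. This is exactly where gpurification is indispensable, since it makes the functional-subgraph picture faithful (every gblock repair is a genuine out-edge), ensuring that routing the non-cycle vertices never manufactures a spurious cycle. The residual bookkeeping --- that $\theta$ cannot exploit a disagreement on $\sequencevars{\vec{y}}$, and that ``support'' is precisely the condition making the per-edge valuations fuse into a single $\mu\in\Theta_{D}$ --- uses the definitions of supporting cycle and of $\Theta_{D}$ but is routine once the structural picture is fixed.
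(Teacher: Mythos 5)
Your proposal is correct and follows essentially the same route as the paper: normalize via purification/gpurification, excise strong components carrying a cycle that is too long or fails to support $q$ using a non-grelevant repair and Lemma~\ref{lem:irrelevant}, encode the surviving length-$k$ supporting cycles as $T$- and $U_{i}^{c}$-facts, and verify the two items through the correspondence between repairs of the $q_{0}$-part and out-degree-one subgraphs of $\dbgraph{\db}$. Your explicit ``unique cycle per component'' construction for item~\ref{it:frugtwo} is exactly the paper's second bulleted property of $\Theta_{D}$, so the argument matches the intended proof.
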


We can now prove Lemma~\ref{lem:soluble}.

\noindent
\begin{proof}[of Lemma~\ref{lem:soluble}]
Let $\db$ be an uncertain database that is input to $\cqa{q}$.
By Lemma~\ref{lem:frugalencoding}, we can compute in polynomial time an uncertain database $\db'$ that is a legal input to $\cqa{q^{*}}$ such 
that $\db'$ satisfies conditions~\ref{it:frugone} and~\ref{it:frugtwo} in the statement of Lemma~\ref{lem:frugalencoding}.
It suffices to show that the following are equivalent.
\begin{enumerate}
\item\label{it:prosolone}
Every repair of $\db$ satisfies $q$.
\item\label{it:prosoltwo}
Every repair of $\db'$ satisfies $q^{*}$.
\end{enumerate}
\framebox{\ref{it:prosolone}$\implies$\ref{it:prosoltwo}}
Proof by contraposition.
Assume a repair $\rep'$ of $\db'$ such that $\rep'\not\models q^{*}$.
By item~\ref{it:frugtwo} in the statement of Lemma~\ref{lem:frugalencoding},
we can assume a repair $\rep$ of $\db$ such that  for every valuation $\theta$ over $\queryvars{q}$,
if $\theta(q)\subseteq\rep$,
then there exists a constant $D$ such that $\substitute{\theta}{u}{D}(q^{*})\subseteq\rep'$.
Obviously, if $\rep\models q$, then $\rep'\models q^{*}$, a contradiction.
We conclude by contradiction that $\rep\not\models q$.
\framebox{\ref{it:prosoltwo}$\implies$\ref{it:prosolone}}
Proof by contraposition.
Assume a repair $\rep$ of $\db$ such that $\rep\not\models q$.
By item~\ref{it:frugone} in the statement of Lemma~\ref{lem:frugalencoding},
we can assume a repair $\rep'$ of $\db'$ such that  for every valuation $\theta$ over $\queryvars{q^{*}}$,
if $\theta(q^{*})\subseteq\rep'$,
then $\theta(q)\subseteq\rep$.
Obviously, $\rep'\not\models q^{*}$.
\end{proof}

\section{Conclusion}

This paper settles a long-standing open question in certain query answering, by establishing an effective complexity trichotomy in the set containing $\cqa{q}$ for each self-join-free Boolean conjunctive query $q$. 
In particular, we show that, given $q$, there exists a procedure that looks at the structure of the attack graph of $q$ and decides whether $\cqa{q}$ is in $\FO$, in $\P\setminus\FO$, or $\coNP$-complete. 

The exciting question that still remains open is whether the above trichotomy can be extended beyond self-join-free conjunctive queries, to conjunctive queries with self-joins and unions of conjunctive queries.

\bibliographystyle{plain}
\bibliography{foiks}
\normalsize


\appendix

\section{Proofs for Section~\ref{sec:attackgraph}}

\subsection{Proof of Lemma~\ref{lem:trans}}

We use the following helping lemma.

\begin{lemma}
\label{lem:goback}
Let $q$ be a self-join-free Boolean conjunctive query.
Let $F,G\in q$ such that $F\attacks{q}G$.
Then, for every $x\in\keycl{F}{q}\setminus\keycl{G}{q}$,
there exists a sequence
$F_{0}, F_{1}, \dots, F_{n}$ of atoms of $q$ such that 
\begin{itemize}
\item
$F_{0}=F$; 
\item
for all $i\in\{0,\dots,n-1\}$, $\atomvars{F_{i}}\cap\atomvars{F_{i+1}}\nsubseteq\keycl{G}{q}$; and
\item
$x\in\atomvars{F_{n}}$.
\end{itemize}
\end{lemma}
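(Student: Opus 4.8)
The plan is to reformulate the conclusion as a reachability statement and reduce it to a closure computation. Call an atom $H\in q$ \emph{good} if there is a sequence $F=F_{0},F_{1},\dots,F_{m}=H$ of atoms of $q$ with $\atomvars{F_{i}}\cap\atomvars{F_{i+1}}\nsubseteq\keycl{G}{q}$ for all $i$ (so $F$ itself is good via the length-$0$ sequence), and set $V=\bigcup\{\atomvars{H}\mid H\text{ good}\}$. A sequence as required by the lemma is exactly a witness that some good atom contains $x$, so it suffices to prove $x\in V$. First I would record the elementary closure property of good atoms: if $H$ is good and $H'$ shares with $H$ a variable outside $\keycl{G}{q}$, then $H'$ is good too; contrapositively, for any non-good $H$ and any good $H'$ we have $\atomvars{H}\cap\atomvars{H'}\subseteq\keycl{G}{q}$, and therefore $\atomvars{H}\cap V\subseteq\keycl{G}{q}$.

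The heart of the argument is the inclusion $\keycl{F}{q}\subseteq V\cup\keycl{G}{q}$; granting it, $x\in\keycl{F}{q}\setminus\keycl{G}{q}$ forces $x\in V$, as desired. I would prove this by tracking the standard fixpoint that computes $\keycl{F}{q}$ as the closure of $\keyvars{F}$ under $\FD{q\setminus\{F\}}$, maintaining the invariant that every intermediate set is contained in $V\cup\keycl{G}{q}$. The base case is $\keyvars{F}\subseteq\atomvars{F}\subseteq V$. For the inductive step, suppose an atom $H$ fires, i.e.\ $\keyvars{H}\subseteq V\cup\keycl{G}{q}$, where necessarily $H\neq F$. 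If $H$ is good then $\atomvars{H}\subseteq V$. If $H$ is not good then $\keyvars{H}\cap V\subseteq\keycl{G}{q}$ by the closure property above, so $\keyvars{H}\subseteq\keycl{G}{q}$; since $H\in q\setminus\{G\}$ its functional dependency lies in $\FD{q\setminus\{G\}}$, whence $\atomvars{H}\subseteq\keycl{G}{q}$ because $\keycl{G}{q}$ is closed under $\FD{q\setminus\{G\}}$. Either way $\atomvars{H}\subseteq V\cup\keycl{G}{q}$, preserving the invariant.

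The step above silently assumed $H\neq G$, and dispatching the case $H=G$ is the main obstacle: the non-key variables of $G$ need not lie in $\keycl{G}{q}$ (its own dependency is excluded there) nor, a priori, in $V$. The clean way around it is to observe that $G$'s dependency never fires in the closure of $\keyvars{F}$, because $F\attacks{q}G$ implies $\keyvars{G}\nsubseteq\keycl{F}{q}$. Indeed, a witness $F=F_{0}\step{z_{1}}\cdots\step{z_{n}}G$ has $n\geq 1$ (as $F\neq G$) and its last shared variable satisfies $z_{n}\in\atomvars{G}$ and $z_{n}\notin\keycl{F}{q}$; were $\keyvars{G}\subseteq\keycl{F}{q}$, then since $G\in q\setminus\{F\}$ and $\keycl{F}{q}$ is closed under $\FD{q\setminus\{F\}}$ we would get $\atomvars{G}\subseteq\keycl{F}{q}$, contradicting $z_{n}\notin\keycl{F}{q}$. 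Hence any firing atom $H$, having $\keyvars{H}\subseteq\keycl{F}{q}$, satisfies $H\neq G$, so only the two cases above occur and the invariant yields $\keycl{F}{q}\subseteq V\cup\keycl{G}{q}$, completing the proof. The only routine points left are the closure property of good atoms and the fixpoint characterization of $\keycl{F}{q}$, both immediate from the definitions.
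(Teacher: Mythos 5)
Your proof is correct and follows essentially the same route as the paper's: both track the sequential computation of $\keycl{F}{q}$, observe that $G$ never fires because $F\attacks{q}G$ forces $\keyvars{G}\nsubseteq\keycl{F}{q}$, and argue that each firing atom either lies entirely inside $\keycl{G}{q}$ or is linked back to $F$ by a shared variable outside $\keycl{G}{q}$. The only difference is presentational: the paper extracts the witness sequence by backward induction on the depth at which $x$ enters the closure, whereas you maintain the forward invariant $\keycl{F}{q}\subseteq V\cup\keycl{G}{q}$ over the set $V$ of variables of reachable atoms.
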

\begin{proof}
Consider a maximal sequence
$$
\begin{array}{llll}
\keyvars{F} & = & S_{0}   & H_{1}\\
            &   & S_{1}   & H_{2}\\
            &   & \multicolumn{1}{c}{\vdots} & \multicolumn{1}{c}{\vdots}\\
            &   & S_{k-1} & H_{k}\\
            &   & S_{k}
\end{array}
$$
where 
\begin{enumerate}
\item $S_{0}\subsetneq S_{1}\subsetneq\dotsm\subsetneq S_{k-1}\subsetneq S_{k}$; and
\item for every $i\in\{1,2,\dots,k\}$, 
\begin{enumerate}
\item
$H_{i}\in q\setminus\{F\}$. Thus, $\FD{q\setminus\{F\}}$ contains the functional dependency $\fd{\keyvars{H_{i}}}{\atomvars{H_{i}}}$.
\item
$\keyvars{H_{i}}\subseteq S_{i-1}$ and $S_{i}=S_{i-1}\cup\atomvars{H_{i}}$.
\end{enumerate}
\end{enumerate}
Then, $S_{k}=\keycl{F}{q}$.
From $F\attacks{q}G$, it follows $G\not\in\{H_{1},\dots,H_{k}\}$.
For every $v\in S_{k}$, define $\depth{v}$ as the smallest integer $i$ such that $v\in S_{i}$.
Let $x\in\keycl{F}{q}\setminus\keycl{G}{q}$.
We define the desired result by induction on $\depth{x}$.
\paragraph{Basis: $\depth{x}=0$.}
Then the desired sequence is $F$.

\paragraph{Step: $\depth{x}=i$.}
Hence, $x\in S_{i}$ and $x\not\in S_{i-1}$.
Then, $x\not\in\keyvars{H_{i}}\subseteq S_{i-1}$ and $x\in\atomvars{H_{i}}$.
Since $H_{i}\neq G$, we have $\keyvars{H_{i}}\nsubseteq\keycl{G}{q}$,
or else $x\in\keycl{G}{q}$, a contradiction. 
Therefore, we can assume some variable $y\in\keyvars{H_{i}}\setminus\keycl{G}{q}$.
Since $y\in S_{i-1}$, we have $\depth{y}<\depth{x}$.
By the induction hypothesis,
there exists a sequence 
$F_{0}, F_{1}, \dots, F_{n}$ of atoms of $q$ such that 
\begin{itemize}
\item
$F_{0}=F$; 
\item
for all $i\in\{0,\dots,n-1\}$, $\atomvars{F_{i}}\cap\atomvars{F_{i+1}}\nsubseteq\keycl{G}{q}$; and
\item
$y\in F_{n}$.
\end{itemize}
The desired sequence is
$F_{0}, F_{1}, \dots, F_{n},H_{i}$.
\end{proof}

The proof of Lemma~\ref{lem:trans} is given next.

\noindent
\begin{proof}[of Lemma~\ref{lem:trans}]
Assume $F\attacks{q}G$, $G\attacks{q}H$, and $F\nattacks{q}H$.

Since $F\attacks{q}G$,
there exists a sequence 
$F_{0}, F_{1}, \dots, F_{n}$ of atoms of $q$ such that 
\begin{itemize}
\item
$F_{0}=F$ and $F_{n}=G$; and
\item
for all $i\in\{0,\dots,n-1\}$, $\atomvars{F_{i}}\cap\atomvars{F_{i+1}}\nsubseteq\keycl{F}{q}$.
\end{itemize}

Since $G\attacks{q}H$,
there exists a sequence 
$G_{0}, G_{1}, \dots, G_{m}$ of atoms of $q$ such that 
\begin{itemize}
\item
$G_{0}=G$ and $G_{m}=H$; and
\item
for all $i\in\{0,\dots,m-1\}$, $\atomvars{G_{i}}\cap\atomvars{G_{i+1}}\nsubseteq\keycl{G}{q}$.
\end{itemize}
Consider the path
$$
F_{0}, F_{1}, \dots, F_{n}, G_{1}, G_{2}, \dots, G_{m}
$$
where $F_{0}=F$, $F_{n}=G=G_{0}$, and $G_{m}=H$.
Since $F\nattacks{q}H$, we can assume  $j\in\{0,\dots,m-1\}$ such that $\atomvars{G_{j}}\cap\atomvars{G_{j+1}}\subseteq\keycl{F}{q}$.
Since $\atomvars{G_{j}}\cap\atomvars{G_{j+1}}\nsubseteq\keycl{G}{q}$,
we can assume $x\in\atomvars{G_{j}}\cap\atomvars{G_{j+1}}$ such that $x\in\keycl{F}{q}\setminus\keycl{G}{q}$.

By Lemma~\ref{lem:goback}, there exists a sequence 
$H_{0}, H_{1}, \dots, H_{k}$ of atoms of $q$ such that 
\begin{itemize}
\item
$H_{0}=F$; 
\item
for all $i\in\{0,\dots,k-1\}$, $\atomvars{H_{i}}\cap\atomvars{H_{i+1}}\nsubseteq\keycl{G}{q}$; and
\item
$x\in H_{k}$.
\end{itemize}
Consider the sequence
$$G_{0}, G_{1}, \dots, G_{j}, H_{k}, H_{k-1}, \dots, H_{0},$$
where $G_{0}=G$ and $H_{0}=F$.
Every two consecutive atoms in this sequence share a variable not in $\keycl{G}{q}$.
In particular, $G_{j}$ and $H_{k}$ share the variable $x$.
It follows $G\attacks{q}F$.
\end{proof}

\subsection{Proof of Lemma~\ref{lem:cycletwo}}

\noindent
\begin{proof}[of Lemma~\ref{lem:cycletwo}]
The first item is an immediate consequence of Lemma~\ref{lem:trans}.
In what follows, we show the second item.

We show that if the attack graph of $q$ contains a strong cycle of length $n$ with $n\geq 3$,
then it contains a strong cycle of some length $m$ with $m<n$.

Let $H_{0}\attacks{q}H_{1}\attacks{q}H_{2}\attacks{q}\dotsm\attacks{q}H_{n-1}\attacks{q}H_{0}$ be a strong cycle of length $n$ ($n\geq 3$) in the attack graph of $q$, where $i\neq j$ implies $H_{i}\neq H_{j}$.
Assume without loss of generality that the attack $H_{0}\attacks{q}H_{1}$ is strong.
Thus, $\FD{q}\not\models\fd{\keyvars{H_{0}}}{\keyvars{H_{1}}}$.

We write $i\oplus j$ as shorthand for for $(i+j)\mod n$.
If $H_{1}\attacks{q}H_{1\oplus 2}$,
then $H_{0}\attacks{q}H_{1}\attacks{q}H_{1\oplus 2}\attacks{q}\dotsm\attacks{q}H_{n-1}\attacks{q}H_{0}$ is a strong cycle of length $n-1$, and the desired result holds.
Assume next $H_{1}\nattacks{q}H_{1\oplus 2}$.
By Lemma~\ref{lem:trans}, $H_{2}\attacks{q}H_{1}$.
We distinguish two cases.

\paragraph{Case $H_{2}\attacks{q}H_{1}$ is a strong attack.}
Then $H_{1}\attacks{q}H_{2}\attacks{q}H_{1}$ is a strong cycle of length $2<n$.

\paragraph{Case $H_{2}\attacks{q}H_{1}$ is a weak attack.}
If $H_{1}\attacks{q}H_{0}$, then $H_{0}\attacks{q}H_{1}\attacks{q}H_{0}$ is a strong cycle of length $2<n$.
Assume next $H_{1}\nattacks{q}H_{0}$.
Then, from $H_{0}\attacks{q}H_{1}\attacks{q}H_{2}$ and Lemma~\ref{lem:trans},
it follows $H_{0}\attacks{q}H_{2}$.
The cycle $H_{0}\attacks{q}H_{2}\attacks{q}H_{2\oplus 1}\attacks{q}\dotsm\attacks{q}H_{n-1}\attacks{q}H_{0}$ has length $n-1$.
It suffices to show that the attack $H_{0}\attacks{q}H_{2}$ is strong.
Assume towards a contradiction that the attack $H_{0}\attacks{q}H_{2}$ is weak.
Then, $\FD{q}\models\fd{\keyvars{H_{0}}}{\keyvars{H_{2}}}$.
Since $H_{2}\attacks{q}H_{1}$ is a weak attack, $\FD{q}\models\fd{\keyvars{H_{2}}}{\keyvars{H_{1}}}$.
By transitivity, $\FD{q}\models\fd{\keyvars{H_{0}}}{\keyvars{H_{1}}}$, a contradiction.
This concludes the proof.
\end{proof}

\subsection{Proof of Lemma~\ref{lem:nonewcycles}}

\begin{proof}[of Lemma~\ref{lem:nonewcycles}]
Let $q'=\substitute{q}{x}{a}$.
For every $F\in q'$, there exists a (unique) atom $\widehat{F}\in q$ such that $F=\substitute{\widehat{F}}{x}{a}$.
It can be easily shown that for every $F\in q'$,
we have $\keycl{\widehat{F}}{q}\setminus\{x\}\subseteq\keycl{F}{q'}$.

Assume $F\attacks{q'}G$.
Then, there exists a witness 
$F_{0}\step{z_{1}}F_{1}\step{z_{2}}F_{2}\dots\step{z_{n}}F_{n}$
for $F\attacks{q'}G$ where $F_{0}=F$ and $F_{n}=G$.
It can now be easily seen that
$\widehat{F_{0}}\step{z_{1}}\widehat{F_{1}}\step{z_{2}}\widehat{F_{2}}\dots\step{z_{n}}\widehat{F_{n}}$
is a witness for $\widehat{F}\attacks{q}\widehat{G}$.
Therefore, if the attack graph of $q'$ is cyclic, then the attack graph of $q$ is cyclic.

The second item in the statement of Lemma~\ref{lem:nonewcycles} follows from the observation that for all $F,G\in q'$,
if $\FD{q}\models\fd{\keyvars{\widehat{F}}}{\keyvars{\widehat{G}}}$,
then $\FD{q'}\models\fd{\keyvars{F}}{\keyvars{G}}$.
\end{proof}

\section{Proofs for Section~\ref{sec:firstorder}}

\subsection{Proof of Lemma~\ref{lem:Lhard}}

\begin{proof}[of Lemma~\ref{lem:Lhard}]
We show a first-order reduction from the problem UFA (Undirected Forest Accessibility)~\cite{DBLP:journals/jal/CookM87} to $\cqa{q_{\fuxman}}$.
In UFA, we are given an acyclic undirected graph, and nodes $u,v$. 
The problem is to determine whether there is a path between $u$ and $v$. The problem is $\L$-complete, and remains $\L$-complete when the given graph has exactly two connected components. 
Moreover, we can assume in the reduction that the two connected components each contain at least one edge.

Given an acyclic undirected graph $G=(V,E)$ with exactly two connected components, and two nodes $u,v$, we construct an uncertain database $\db$ as follows:
\begin{enumerate}
\item 
for every edge $\{a,b\}$ in $E$,
the uncertain database $\db$ contains the facts $R_{\fuxman}(\underline{a}, \{a,b\})$, $R_{\fuxman}(\underline{b}, \{a,b\})$, $S_{\fuxman}(\underline{\{a,b\}}, a)$, and $S_{\fuxman}(\underline{\{a,b\}}, b)$, in which $\{a,b\}$ is treated as a constant; and
\item
$\db$ contains $R_{\fuxman}(\underline{u}, t)$ and $R_{\fuxman}(\underline{v}, t)$, where $t$ is a new value not occurring elsewhere.
\end{enumerate}
Clearly, the computation of $\db$ from $G$ is in $\FO$.

We next show that there exists a path between $u$ and $v$ in $G$ if and only if every repair of $\db$ satisfies $q_{\fuxman}$. 

Assume first that $u,v$ belong to the same connected component.
Let $\db'$ be the uncertain database that is constructed from the connected component not containing $u,v$.
Let $a_{0},b_{0},a_{1},b_{1},\dots,a_{n-1},b_{n-1},a_{n}$ be a sequence of distinct constants such that
\begin{enumerate}
\item
$a_{0}=a_{n}$ and for $0\leq i<j\leq n-1$, $a_{i}\neq a_{j}$ and $b_{i}\neq b_{j}$; and
\item
for $i\in\{0,\dots,n-1\}$, $\db'$ contains $R_{\fuxman}(\underline{a_{i}},b_{i})$ and $S_{\fuxman}(\underline{b_{i}},a_{i+1})$.
\end{enumerate}
Since $G$ is acyclic, any such sequence satisfies $n=1$.
An existing algorithm for $\cqa{q_{\fuxman}}$~\cite{DBLP:conf/pods/Wijsen13, DBLP:conf/icdt/KoutrisS14} will return that every repair of $\db'$ satisfies $q_{\fuxman}$.
Consequently, every repair of $\db$ satisfies $q_{\fuxman}$.

For the opposite implication, assume that one connected component contains $u$, and the other contains $v$.
By Lemma~\ref{lem:purified}, there exists an uncertain database $\db'$ that is purified relative to $q_{\fuxman}$ such that $q_{\fuxman}$ is true in every repair of $\db'$ if and only if $q_{\fuxman}$ is true in every repair of $\db$.
It is easy to see that if $u$ and $v$ belong to distinct connected components, then this purified uncertain database $\db'$ will be the empty database, whose only repair is the empty repair which falsifies $q_{0}$.
It follows that $q_{\fuxman}$ is not true in every repair of $\db$.
%
\end{proof}

\subsection{Proof of Lemma~\ref{lem:inFO}}

We  first show two helping lemmas.

\begin{lemma}\label{lem:dontcarebis}
Let $q$ be a self-join-free Boolean conjunctive query.
Let $X\subseteq\queryvars{q}$ and let $G\in q$ be an $R$-atom such for every $x\in X$, $G\nattacks{q}x$. 
Let $\rep$ be a repair of some database such that $\rep\models q$.
Let $A\in\rep$ be an $R$-fact that is relevant for $q$ in $\rep$.
Let $B$ be key-equal to $A$ and $\rep_{B}=\formula{\rep\setminus\{A\}}\cup\{B\}$.
Then, for every valuation $\zeta$ over $X$, if $\rep_{B}\models\zeta(q)$, then $\rep\models\zeta(q)$.
\end{lemma}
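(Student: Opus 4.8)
The plan is to argue by contraposition of the easy direction: start from a witness for $\rep_{B}\models\zeta(q)$ and manufacture one for $\rep\models\zeta(q)$ by swapping the value of the single $R$-atom $G$ back from $B$ to $A$. First I would dispose of the trivial case $A=B$ (where $\rep_{B}=\rep$), so assume $A\neq B$; since $\rep$ is consistent and $A\in\rep$, this yields $B\notin\rep$ and $\rep_{B}\setminus\{B\}=\rep\setminus\{A\}$. Fix $\zeta$ over $X$ and a valuation $\theta$ over $\queryvars{q}$ with $\theta[X]=\zeta$ and $\theta(q)\subseteq\rep_{B}$. Because $q$ is self-join-free, $G$ is the unique $R$-atom, so the only $R$-fact that $\theta(q)$ can contain is $\theta(G)$; if $\theta(G)\neq B$ then $\theta(q)\subseteq\rep_{B}\setminus\{B\}\subseteq\rep$ and we are done immediately. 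The interesting case is $\theta(G)=B$, and here I would invoke the relevance of $A$: there is a valuation $\sigma$ over $\queryvars{q}$ with $A\in\sigma(q)\subseteq\rep$, and again by self-join-freeness $\sigma(G)=A$.

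Next I would record two facts relating $\theta$ and $\sigma$. Since no atom of $q\setminus\{G\}$ is an $R$-atom, $\theta(q\setminus\{G\})$ contains no $R$-fact, hence lies in the part of $\rep_{B}$ that coincides with $\rep$; thus $\theta(q\setminus\{G\})\subseteq\rep$. Moreover $\theta$ and $\sigma$ agree on $\keyvars{G}$, because $\theta(G)=B$ and $\sigma(G)=A$ are key-equal. From this I would deduce that $\theta$ and $\sigma$ agree on all of $\keycl{G}{q}$: walking along a sequential proof of $\FD{q\setminus\{G\}}\models\fd{\keyvars{G}}{x}$ (Definition~\ref{def:sqp}), and using that both $\theta$ and $\sigma$ map $q\setminus\{G\}$ into the consistent repair $\rep$, an induction shows that at each step the two images of the current atom are key-equal facts of $\rep$, hence equal, so the agreement propagates from $\keyvars{G}$ to $x$.

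The crux is to build a single hybrid valuation. Let $Z\defeq\{z\in\queryvars{q}\mid G\attacks{q}z\}$ and define $\theta'$ to equal $\sigma$ on $Z$ and $\theta$ off $Z$. I would establish three structural facts about attacks: (i) $Z\cap\keycl{G}{q}=\emptyset$, since every attacked variable lies outside the key closure; (ii) for every atom $H$ reachable from $G$ along a witness path, $\atomvars{H}\setminus\keycl{G}{q}\subseteq Z$ (prolonging the witness to the fresh atom $R(\underline{v})$ of Definition~\ref{def:az} witnesses $G\attacks{q}v$); and (iii) for every atom $H$ not reachable from $G$, $\atomvars{H}\cap Z=\emptyset$ (otherwise a shared variable outside $\keycl{G}{q}$ would make $H$ reachable). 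Combining these: for a reachable $H$ one has $\atomvars{H}\subseteq Z\cup\keycl{G}{q}$, and since $\theta=\sigma$ on $\keycl{G}{q}$ while $\theta'=\sigma$ on $Z$, we get $\theta'(H)=\sigma(H)\in\rep$; in particular $\theta'(G)=\sigma(G)=A\in\rep$. For a non-reachable $H$ (necessarily $H\neq G$) one has $\theta'(H)=\theta(H)$, a non-$R$-fact of $\rep_{B}$, hence in $\rep$. Thus $\theta'(q)\subseteq\rep$. Finally, the hypothesis that $G$ attacks no variable of $X$ gives $X\cap Z=\emptyset$, so $\theta'[X]=\theta[X]=\zeta$, and therefore $\rep\models\zeta(q)$, as required.

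I expect the main obstacle to be precisely the verification that $\theta'$ maps every atom into $\rep$, which is where the attack hypothesis does its work. The clean dichotomy of atoms into those reachable from $G$ (governed by $\sigma$) and those unreachable (governed by $\theta$), glued together by the agreement of $\theta$ and $\sigma$ on $\keycl{G}{q}$, is what makes the two pieces fit consistently; and the assumption $G\nattacks{q}x$ for $x\in X$ is exactly what guarantees that this surgery leaves the fixed values $\zeta$ on $X$ untouched.
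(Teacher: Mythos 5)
Your proof is correct and follows essentially the same strategy as the paper's: both glue the relevance witness for $A$ to the witness for $\zeta(q)$ in $\rep_B$ along $\keycl{G}{q}$, your set $Z$ of $G$-attacked variables being exactly $\queryvars{q_{G}}\setminus\keycl{G}{q}$ for the paper's connected component $q_{G}$, so the two hybrid valuations coincide. The only cosmetic difference is that you re-derive the agreement of the two valuations on $\keycl{G}{q}$ by walking a sequential proof, where the paper cites an external lemma.
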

\begin{proof}
Let $\zeta$ be a valuation over $X$ such that $\rep_{B}\models\zeta(q)$.
We can assume a valuation $\zeta^{+}$ over $\queryvars{q}$ such that $\zeta^{+}[X]=\zeta[X]$ and $\zeta^{+}(q)\subseteq\rep_{B}$.
Thus, $\zeta^{+}$ extends $\zeta$ to $\queryvars{q}$.
We need to show $\rep\models\zeta(q)$, which is obvious if $B\not\in\zeta^{+}(q)$.
Assume next $B\in\zeta^{+}(q)$.
Since $A$ is relevant for $q$ in $\rep$, we can assume a valuation $\mu$ over $\queryvars{q}$ such that $A\in\mu(q)\subseteq\rep$.
Let $q'=q\setminus\{G\}$.
Let $\rep'=\rep_{B}\setminus\{B\}=\rep\setminus\{A\}$.
Since $q'$ contains no $R$-atom (no self-join),
$\zeta^{+}(q')\subseteq\rep'$ and $\mu(q')\subseteq\rep'$.
Moreover, $\zeta^{+}[\keyvars{G}]=\mu[\keyvars{G}]$, because $A$ and $B$ are key-equal.

From $\FD{q'}\models\fd{\keyvars{G}}{\keycl{G}{q}}$ and~\cite[Lemma~4.3]{DBLP:journals/tods/Wijsen12},
it follows $\zeta^{+}[\keycl{G}{q}]=\mu[\keycl{G}{q}]$.

Let $\tau$ be the complete edge-labeled undirected graph whose vertices are the atoms of $q$;
an edge between $H$ and $H'$ is labeled by $\atomvars{H}\cap\atomvars{H'}$.

Let $\tau'$ be the graph obtained from $\tau$ by cutting every edge whose label is included in $\keycl{G}{q}$.
Let $q_{G}$ be the subset of $q$ containing all atoms that are in $\tau'$'s strong component that contains $G$.
Let $q_{X}=q\setminus q_{G}$.

Let $\kappa$ be the valuation over $\queryvars{q}$ such that for every $x\in \queryvars{q}$,
$$
\kappa(x)=
\left\{
\begin{array}{ll}
\mu(x) & \mbox{if $x\in\queryvars{q_{G}}$}\\
\zeta^{+}(x) & \mbox{if $x\in\queryvars{q_{X}}$}
\end{array}
\right.
$$
We show that $\kappa$ is well defined.
Assume $x\in\queryvars{q_{X}}\cap\queryvars{q_{G}}$.
Then, there exist atoms $F'\in q_{X}$ and  $G'\in q_{G}$ such that $x\in\atomvars{F'}\cap\atomvars{G'}$.
Since $F'$ and $G'$ belong to distinct strong components of $\tau'$, it follows $\atomvars{F'}\cap\atomvars{G'}\subseteq\keycl{G}{q}$.
Consequently, $x\in\keycl{G}{q}$.
Since $\zeta^{+}[\keycl{G}{q}]=\mu[\keycl{G}{q}]$, it follows that $\mu(x)=\zeta^{+}(x)$.

Obviously, $\kappa(q)\subseteq\rep$.
Finally, we show that for every $u\in X$, $\kappa(u)=\zeta(u)$.
This is obvious if $u\in X\cap\keycl{G}{q}$.
Assume next that $u\in X\setminus\keycl{G}{q}$.
Since $G\nattacks{q}u$ by the assumption in the statement of Lemma~\ref{lem:dontcarebis},
it must be the case $u\in\queryvars{q_{X}}$, hence $\kappa(u)=\zeta^{+}(u)=\zeta(u)$.
It follows $\rep\models\zeta(q)$.
This concludes the proof.
\end{proof}

The following helping lemma extends~\cite[Lemma~B.1]{DBLP:journals/tods/Wijsen12}.

\begin{lemma}\label{lem:dontcare}
Let $q$ be a self-join-free Boolean conjunctive query.
Let $F\in q$ such that $F$ has zero indegree in the attack graph of $q$.
Let $\rep$ be a repair of some database.
Let $A\in\rep$ such that $A$ is 
relevant for $q$ in $\rep$.\footnote{Recall from Section~\ref{sec:preliminaries} that $A\in\rep$ is {\em relevant\/} for $q$ in $\rep$ if $A\in\theta(q)\subseteq\rep$ for some valuation $\theta$ over $\queryvars{q}$.}
Let $B$ be key-equal to $A$ and $\rep_{B}=\formula{\rep\setminus\{A\}}\cup\{B\}$.
Then, for every valuation $\zeta$ over $\keyvars{F}$,
if $\rep_{B}\models\zeta(q)$, then $\rep\models\zeta(q)$.
\end{lemma}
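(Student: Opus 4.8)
The plan is to derive Lemma~\ref{lem:dontcare} directly from the helping Lemma~\ref{lem:dontcarebis}. First I would note that since $A$ is relevant for $q$ in $\rep$, there is a valuation $\theta$ with $A\in\theta(q)\subseteq\rep$; in particular $\rep\models q$, which is exactly the standing hypothesis of Lemma~\ref{lem:dontcarebis}. Because $q$ is self-join-free, the fact $A$ determines a unique atom $G\in q$ sharing its relation name, so $A$ is a $G$-fact. Setting $X\defeq\keyvars{F}$, the statement to be proved is then literally the conclusion of Lemma~\ref{lem:dontcarebis} instantiated at this $G$ and $X$ --- provided its structural hypothesis $G\nattacks{q}x$ holds for every $x\in\keyvars{F}$.

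So the whole content lies in the following claim: if $F$ has zero indegree in the attack graph of $q$, then no atom of $q$ attacks any variable of $\keyvars{F}$, i.e.\ for every $G\in q$ and every $x\in\keyvars{F}$ we have $G\nattacks{q}x$. I would prove this by contradiction. Suppose $G\attacks{q}x$ for some $x\in\keyvars{F}$. By Definition~\ref{def:az} this means $G\attacks{q'}R(\underline{x})$ for $q'=q\cup\{R(\underline{x})\}$ with a fresh $\signature{1}{1}$-atom $R(\underline{x})$, witnessed by a sequence $H_{0}=G,H_{1},\dots,H_{m}=R(\underline{x})$. The final step forces $x\in\atomvars{H_{m-1}}$ and $x\notin\keycl{G}{q'}$. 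Here I would record the small but essential observation that adjoining the singleton atom $R(\underline{x})$ contributes only the trivial dependency $\fd{x}{x}$, so that $\keycl{G}{q'}=\keycl{G}{q}$; hence $x\notin\keycl{G}{q}$.

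The key step is then to extend the witness. The atoms $H_{0},\dots,H_{m-1}$ all lie in $q$, and since $x\in\keyvars{F}\subseteq\atomvars{F}$ together with $x\in\atomvars{H_{m-1}}\setminus\keycl{G}{q}$, the sequence $H_{0},\dots,H_{m-1},F$ is a witness for $G\attacks{q}F$ --- provided $G\neq F$. I would dispose of the degenerate case $G=F$ separately: an attack $F\attacks{q}x$ with $x\in\keyvars{F}$ is already impossible, because $\keyvars{F}\subseteq\keycl{F}{q}$ by reflexivity, whereas the terminal variable of any attack-to-variable witness must lie outside $\keycl{F}{q}$. In the remaining case $G\neq F$, the extended witness exhibits an incoming edge at $F$, contradicting the assumption that $F$ has zero indegree. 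This establishes the claim, and invoking Lemma~\ref{lem:dontcarebis} with $X=\keyvars{F}$ yields the lemma.

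The main obstacle I anticipate is the witness-extension argument, specifically the bookkeeping guaranteeing that the appended step $H_{m-1}\to F$ satisfies the non-containment condition against $\keycl{G}{q}$ rather than against $\keycl{G}{q'}$; this hinges precisely on the identity $\keycl{G}{q'}=\keycl{G}{q}$, which must be justified with some care since $\keycl{\cdot}{\cdot}$ is defined through logical implication of functional dependencies, and the attack witnesses in $q$ and $q'$ are measured against these two key closures. Everything else is a routine unwinding of the definitions followed by the direct application of the helping lemma.
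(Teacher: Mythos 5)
Your proposal is correct and follows essentially the same route as the paper: reduce to Lemma~\ref{lem:dontcarebis} with $X=\keyvars{F}$, justified by the observation that an unattacked $F$ forces $G\nattacks{q}x$ for every $x\in\keyvars{F}$. The paper merely asserts that implication and dismisses the case where $A$ is an $F$-fact as obvious, whereas you supply the witness-extension argument (including the needed identity $\keycl{G}{q\cup\{R(\underline{x})\}}=\keycl{G}{q}$) and fold the $G=F$ case into the same lemma application via $\keyvars{F}\subseteq\keycl{F}{q}$; both of these fillings-in are sound.
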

\begin{proof}
The proof is obvious if $A$ has the same relation name as $F$.
%
Assume next that relation names in $A$ and $F$ are distinct.
We can assume some atom $G\in q\setminus\{F\}$ such that $A$ has the same relation name as $G$.
Since $G\nattacks{q}F$, we have that for each $x\in\keyvars{F}$, $G\nattacks{q}x$.
The desired result then follows by Lemma~\ref{lem:dontcarebis}.
\end{proof}

Assume that a query $q$ contains an $R$-atom that has no incoming attack in the attack graph of $q$.
Paraphrasing Lemma~\ref{lem:dontcare}, if one replaces, in a repair $\rep$, some relevant fact $A$ with another fact $B$ that belongs to the same block as $A$,
then every $R$-fact of $\rep$ that was not relevant in $\rep$, will remain non-relevant in $\formula{\rep\setminus\{A\}}\cup\{B\}$. 
Notice, however, that the fact $B$ may be non-relevant in the new repair $\formula{\rep\setminus\{A\}}\cup\{B\}$.

The proof of Lemma~\ref{lem:inFO} can now be given.

\noindent
\begin{proof}[of Lemma~\ref{lem:inFO}]
Let $X=\keyvars{F}$.
Let $\db$ be an uncertain database.
Let $\rep$ be a repair of $\db$ that is $\fpro{q}{X}$-frugal.
Let $\sep$ be any repair of $\db$.
Construct a maximal sequence
\begin{equation}\label{eq:seqsr}
(\rep_{0},\sep_{0}),
(\rep_{1},\sep_{1}),
\dots,
(\rep_{n},\sep_{n})
\end{equation}
where
\begin{enumerate}
\item $\rep_{0}=\rep$ and $\sep_{0}=\sep$;
\item for every $i\in\{1,\dots,n\}$, one of the following holds:
\begin{enumerate}
\item
$\rep_{i}=\rep_{i-1}$ and $\sep_{i}=\formula{\sep_{i-1}\setminus\{A\}}\cup\{B\}$ for distinct, key-equal facts $A,B$ such that
$A\in\sep_{i-1}$, $B\in\rep_{i-1}$, and $A$ is relevant for $q$ in $\sep_{i-1}$; or
\item
$\sep_{i}=\sep_{i-1}$ and $\rep_{i}=\formula{\rep_{i-1}\setminus\{A\}}\cup\{B\}$ for distinct, key-equal facts $A,B$ such that
$A\in\rep_{i-1}$, $B\in\sep_{i-1}$, and $A$ is relevant for $q$ in $\rep_{i-1}$.
\end{enumerate}
\end{enumerate}
That is, the construction repeatedly replaces a fact that is relevant in one repair with its distinct, key-equal fact in the other repair.
The sequence~(\ref{eq:seqsr}) is finite, since the total number of distinct relevant facts distinguishes at each step.
For the last element $(\rep_{n},\sep_{n})$, it holds that the set of facts that are relevant for $q$ in $\rep_{n}$ is equal the set of facts that are relevant for $q$ in $\sep_{n}$.
It follows that for every valuation $\theta$ over $X$, 
\begin{equation}\label{eq:repnsepn}
\rep_{n}\models\theta(q)\iff\sep_{n}\models\theta(q).
\end{equation} 
By Lemma~\ref{lem:dontcare}, for every valuation $\theta$ over $X$,
\begin{eqnarray}
\rep_{n}\models\theta(q) & \implies & \rep\models\theta(q)\label{eq:repsub}\\
\sep_{n}\models\theta(q) & \implies & \sep\models\theta(q)\label{eq:sepsub}
\end{eqnarray}
From~(\ref{eq:repsub}) and since $\rep$ is $\fpro{q}{X}$-frugal, it follows that for every valuation $\theta$ over $X$,
\begin{equation}
\rep_{n}\models\theta(q)\iff\rep\models\theta(q)\label{eq:repeq}
\end{equation}
From~(\ref{eq:repeq}), (\ref{eq:sepsub}), and~(\ref{eq:repnsepn}),
it follows that for every valuation $\theta$ over $X$,
\begin{equation*}
\rep\models\theta(q)\implies\sep\models\theta(q)\label{eq:repsepconclude}
\end{equation*}
Since $\sep$ is an arbitrary repair, the desired result follows.
\end{proof}

\section{Proofs for Section~\ref{sec:road}}

This section contains helping lemmas and proofs that are used in the proof of Theorem~\ref{the:ptime}.

\subsection{Helping Lemmas}

\begin{lemma}\label{lem:witness}
Let $q$ be a self-join-free Boolean conjunctive query. 
Let $G\in q$ and $x,y\in\queryvars{q}$ such that $\FD{q\setminus\{G\}}\models\fd{x}{y}$ and $y\not\in\keycl{G}{q}$.
Then, there exists a sequence $G_{1},\dots,G_{n}$ of distinct atoms in $q$
such that $x\in\atomvars{G_{1}}$, $y\in\atomvars{G_{n}}$, and for every $i\in\{1,\dots,n-1\}$,
$\atomvars{G_{i}}\cap\atomvars{G_{i+1}}\nsubseteq\keycl{G}{q}$.
\end{lemma}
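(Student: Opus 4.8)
The plan is to adapt the closure-based argument used in the proof of Lemma~\ref{lem:goback}, now treating $\keycl{G}{q}$ as the ``forbidden'' set of variables. First I would unfold the hypothesis $\FD{q\setminus\{G\}}\models\fd{x}{y}$ into the standard attribute-closure computation: build a maximal strictly increasing chain $S_{0}\subsetneq S_{1}\subsetneq\dots\subsetneq S_{k}$ with $S_{0}=\{x\}$, where for each $i\in\{1,\dots,k\}$ there is an atom $H_{i}\in q\setminus\{G\}$ with $\keyvars{H_{i}}\subseteq S_{i-1}$ and $S_{i}=S_{i-1}\cup\atomvars{H_{i}}$. Then $S_{k}$ is exactly the closure of $\{x\}$ under $\FD{q\setminus\{G\}}$, so $y\in S_{k}$. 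For each $v\in S_{k}$ let $\depth{v}$ be the least $i$ with $v\in S_{i}$, so that $\depth{x}=0$.

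A preliminary observation I would record is that $x\notin\keycl{G}{q}$. Indeed, $\keycl{G}{q}$ is the closure of $\keyvars{G}$ under $\FD{(q\setminus\{G\})\cup\consistent{q}}$, hence a closed set for that system; since $\FD{q\setminus\{G\}}$ is a subset of it, $\keycl{G}{q}$ is also closed under $\FD{q\setminus\{G\}}$. Were $x\in\keycl{G}{q}$, the whole closure $S_{k}$ of $\{x\}$ would be contained in $\keycl{G}{q}$, contradicting $y\in S_{k}\setminus\keycl{G}{q}$.

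The core is then an induction on $\depth{v}$ establishing the claim: for every $v\in S_{k}\setminus\keycl{G}{q}$ there is a (not necessarily simple) walk $G_{1},\dots,G_{m}$ of atoms of $q$ with $x\in\atomvars{G_{1}}$, $v\in\atomvars{G_{m}}$, and $\atomvars{G_{i}}\cap\atomvars{G_{i+1}}\nsubseteq\keycl{G}{q}$ for all $i$. The base case $\depth{v}=0$ gives $v=x$, and since $x\in\queryvars{q}$ occurs in some atom, any such atom serves as the length-one walk. For the step, with $\depth{v}=i\geq 1$ we have $v\in\atomvars{H_{i}}\setminus\keyvars{H_{i}}$; the key inference is that $\keyvars{H_{i}}\nsubseteq\keycl{G}{q}$, for otherwise the functional dependency $\fd{\keyvars{H_{i}}}{\atomvars{H_{i}}}$ (which lies in $\FD{(q\setminus\{G\})\cup\consistent{q}}$ as $H_{i}\in q\setminus\{G\}$) would force $\atomvars{H_{i}}\subseteq\keycl{G}{q}$, whence $v\in\keycl{G}{q}$. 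Choosing $w\in\keyvars{H_{i}}\setminus\keycl{G}{q}$, we get $w\in S_{i-1}$, so $\depth{w}<i$; the induction hypothesis supplies a walk from $x$ to $w$, and appending $H_{i}$ works because $w\in\atomvars{G_{m}}\cap\atomvars{H_{i}}\setminus\keycl{G}{q}$.

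Applying the claim to $v=y$ yields a walk with the required endpoints and edge condition, and it remains only to make the atoms distinct. I would finish by taking a walk using the fewest atoms among all walks satisfying the claim: if it repeated an atom, say $G_{a}=G_{b}$ with $a<b$, then deleting $G_{a+1},\dots,G_{b}$ produces a strictly shorter walk that still meets the edge condition, since the new consecutive pair $G_{a},G_{b+1}$ inherits $\atomvars{G_{a}}\cap\atomvars{G_{b+1}}\nsubseteq\keycl{G}{q}$ from the pair $G_{b},G_{b+1}$; this contradicts minimality, so the shortest walk has distinct atoms. I expect the main obstacle to be the bookkeeping around $\keycl{G}{q}$ rather than the path construction itself: both the preliminary step $x\notin\keycl{G}{q}$ and the implication ``$\keyvars{H_{i}}\subseteq\keycl{G}{q}$ forces $\atomvars{H_{i}}\subseteq\keycl{G}{q}$'' rely on $\keycl{G}{q}$ being closed under the correct functional-dependency system, which must be verified carefully once atoms of mode $c$ are present.
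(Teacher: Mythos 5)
Your proof is correct and follows essentially the same route as the paper's: the paper phrases the argument as an induction on the length of a shortest sequential proof of $\FD{q\setminus\{G\}}\models\fd{x}{y}$, repeatedly stepping back from $y$ to a key variable of the last atom that lies outside $\keycl{G}{q}$, which is the same mechanism as your depth induction over the closure chain (and matches the paper's own proof of its helping Lemma~\ref{lem:goback}). Your explicit shortcutting argument for making the atoms distinct is if anything slightly more careful than the paper's, which leaves that point implicit.
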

\begin{proof}
If $x=y$, then the desired sequence that proves the lemma is any atom that contains $x$.
In the remainder, we treat the case $x\neq y$.

Since $\FD{q\setminus\{G\}}\models\fd{x}{y}$,
we can assume a shortest sequence $F_{1},F_{2},\dots,F_{m}$ (call it $\pi$) 
that is a sequential proof of $\FD{q\setminus\{G\}}\models\fd{x}{y}$, as defined by Definition~\ref{def:sqp}.
Note that $G\not\in\{F_{1},\dots,F_{m}\}$.
It will be the case that $y$ occurs at a non-primary-key position in $F_{m}$.

The proof runs by induction on the length $m$ of the proof.

\paragraph{Basis}
If $m=1$, then the sequential proof $\pi$ is $F_{1}$ with $\keyvars{F_{1}}=\{x\}$.
Notice that $\keyvars{F_{1}}\neq\emptyset$, or else $y\in\keycl{G}{q}$, a contradiction.
The desired sequence that proves the lemma is $F_{1}$.

\paragraph{Induction}
Assume $m>1$.
Consider the last atom $F_{m}$ in $\pi$.
We have $\keyvars{F_{m}}\nsubseteq\keycl{G}{q}$, or else $y\in\keycl{G}{q}$, a contradiction.
If $x\in\queryvars{F_{m}}$, then the desired sequence is $F_{m}$.
In the remainder, we treat the case  $x\not\in\queryvars{F_{m}}$.
We can assume a variable $u\in\keyvars{F_{m}}$ such that $u\not\in\keycl{G}{q}$.
There exists an integer $k<m$ such that $u$ occurs at a non-primary-key position in $F_{k}$.
Then, 
$F_{1},F_{2},\dots,F_{k}$
contains a shortest subsequence that is a sequential proof of $\FD{q\setminus\{G\}}\models\fd{x}{u}$, where $u\not\in\keycl{G}{q}$.
By the induction hypothesis,
there exists a sequence $G_{1},\dots,G_{\ell}$ of distinct atoms in $q$
such that $x\in\atomvars{G_{1}}$, $u\in\atomvars{G_{\ell}}$, and for every $i\in\{1,\dots,\ell-1\}$,
$\atomvars{G_{i}}\cap\atomvars{G_{i+1}}\nsubseteq\keycl{G}{q}$.
The desired sequence that proves the lemma is $G_{1},\dots,G_{\ell},F_{m}$.
Notice that $u\in\atomvars{G_{\ell}}\cap\atomvars{F_{m}}$ and $u\not\in\keycl{G}{q}$.
\end{proof}

The following two lemmas are important tools for inferring attacks.

\begin{lemma}\label{lem:backpropagation}
Let $q$ be a self-join-free Boolean conjunctive query.
Let $G\in q$ and $y\in\queryvars{q}$ such that $G\attacks{q}y$. 
Let $x\in\queryvars{q}$ such that $\FD{q\setminus\{G\}}\models\fd{x}{y}$.
Then, $G\attacks{q}x$.
\end{lemma}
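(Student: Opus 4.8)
The plan is to produce the attack $G\attacks{q}x$ by exhibiting a witness that is obtained by splicing together two paths of atoms: one extracted from the hypothesis $G\attacks{q}y$, and one supplied by the functional dependency $\fd{x}{y}$ through Lemma~\ref{lem:witness}. Before building the witness I first dispose of the closure side condition by showing that $x\notin\keycl{G}{q}$. Unfolding Definition~\ref{def:az}, the hypothesis $G\attacks{q}y$ means $G\attacks{q'}R(\underline{y})$ for $q'=q\cup\{R(\underline{y})\}$ with $R$ a fresh simple-key relation name; the final edge of any witness can only contribute the single variable $y$, which forces $y\notin\keycl{G}{q'}$. Since adding a fresh simple-key atom whose key variable already occurs in $q$ introduces only a trivial functional dependency and no new variable, we have $\keycl{G}{q'}=\keycl{G}{q}$, so $y\notin\keycl{G}{q}$; I will use this identity of key-closures freely throughout. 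Now if we had $x\in\keycl{G}{q}$, i.e.\ $\FD{q\setminus\{G\}}\models\fd{\keyvars{G}}{x}$, then combining this with the hypothesis $\FD{q\setminus\{G\}}\models\fd{x}{y}$ and transitivity of functional dependencies would give $\FD{q\setminus\{G\}}\models\fd{\keyvars{G}}{y}$, hence $y\in\keycl{G}{q}$, a contradiction. So $x\notin\keycl{G}{q}$.

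Next I extract the two paths. From the witness for $G\attacks{q'}R(\underline{y})$ I obtain a sequence $G=H_{0},H_{1},\dots,H_{m}$ of atoms of $q$ with $y\in\atomvars{H_{m}}$ and $\atomvars{H_{i}}\cap\atomvars{H_{i+1}}\nsubseteq\keycl{G}{q}$ for each $i$: strip off the trailing dummy atom $R(\underline{y})$ and truncate the witness at the first atom containing $y$. Independently, applying Lemma~\ref{lem:witness} to $\FD{q\setminus\{G\}}\models\fd{x}{y}$ together with $y\notin\keycl{G}{q}$ yields a sequence $G_{1},\dots,G_{n}$ of atoms of $q$ with $x\in\atomvars{G_{1}}$, $y\in\atomvars{G_{n}}$, and $\atomvars{G_{i}}\cap\atomvars{G_{i+1}}\nsubseteq\keycl{G}{q}$ for each $i$.

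The witness for $G\attacks{q}x$ is then the concatenation
$$G=H_{0},H_{1},\dots,H_{m},G_{n},G_{n-1},\dots,G_{1},$$
read as a path that starts at $G$ and ends at an atom containing $x$. Every internal edge inherits the property $\nsubseteq\keycl{G}{q}$ from one of the two source paths (the block $G_{n},\dots,G_{1}$ being the reversal of the Lemma~\ref{lem:witness} path, whose edge condition is symmetric); the only new edge is the junction $H_{m}$--$G_{n}$, which is legitimate precisely because both atoms contain $y$ and $y\notin\keycl{G}{q}$. Finally, appending a fresh simple-key atom $R'(\underline{x})$ at the end is justified since $x\in\atomvars{G_{1}}$ and $x\notin\keycl{G}{q}$, turning the sequence into a genuine witness for $G\attacks{q\cup\{R'(\underline{x})\}}R'(\underline{x})$, that is, for $G\attacks{q}x$.

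I expect the only delicate point to be the clean extraction of the first path $H_{0},\dots,H_{m}$ from the formal variable-attack witness: since witnesses are allowed to repeat atoms, one must argue carefully that the dummy atom $R(\underline{y})$ can be removed and that truncating at the \emph{first} index whose atom contains $y$ leaves a prefix consisting entirely of atoms of $q$ (no earlier atom equals $R(\underline{y})$, as that atom contains $y$ and would contradict minimality). Once this extraction and the key-closure identity $\keycl{G}{q\cup\{R'(\underline{x})\}}=\keycl{G}{q}$ are in place, the remainder is routine bookkeeping on shared variables along the spliced path.
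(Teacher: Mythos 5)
Your proposal is correct and takes essentially the same route as the paper's proof, which in two lines says to concatenate the sequence $G_{1},\dots,G_{n}$ from Lemma~\ref{lem:witness} with a witness of $G\attacks{q}y$, splicing at the shared variable $y\notin\keycl{G}{q}$. You merely spell out the bookkeeping the paper leaves implicit — in particular the observation that $x\notin\keycl{G}{q}$ (via transitivity of $\FD{q\setminus\{G\}}\models\fd{\keyvars{G}}{x}$ and $\fd{x}{y}$), which legitimises the final edge to the fresh atom $R'(\underline{x})$.
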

\begin{proof}
From $G\attacks{q}y$, it follows $y\notin\keycl{G}{q}$. 
A witness for $G\attacks{q}x$ can be obtained by concatenating
the sequence $G_{1},\dots,G_{n}$ like in the statement of Lemma~\ref{lem:witness}, where $y\in\atomvars{G_{n}}$, with a witness of $G\attacks{q}y$.
\end{proof}
%

\begin{lemma} \label{lem:attack-transitivity}
Let $q$ be a self-join-free Boolean conjunctive query.
Let $G\in q$ and $y\in\queryvars{q}$ such that $G\attacks{q}y$ and $\FD{q} \not \models \fd{\keyvars{G}}{y}$. If $\FD{q} \models \fd{x}{y}$, then $G\attacks{q}x$.
\end{lemma}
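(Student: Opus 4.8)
The plan is to mimic the proof of Lemma~\ref{lem:backpropagation}: I would produce a witness for $G\attacks{q}x$ by gluing the given witness for $G\attacks{q}y$ to a ``back-path'' from $y$ to $x$ that avoids $\keycl{G}{q}$ on every shared variable. The only obstacle to copying Lemma~\ref{lem:backpropagation} verbatim is that it obtains this back-path from Lemma~\ref{lem:witness}, which requires $\FD{q\setminus\{G\}}\models\fd{x}{y}$, whereas here I only have the weaker $\FD{q}\models\fd{x}{y}$; so the dependency $\fd{\keyvars{G}}{\atomvars{G}}$ of $G$ itself may be needed to derive $y$ from $x$. The extra hypothesis $\FD{q}\not\models\fd{\keyvars{G}}{y}$, i.e. $y\notin\keyclsup{G}{q}$, is exactly what prevents such a derivation from routing the path ``through'' $G$.

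First I would record two easy consequences of the hypotheses. From $\FD{q}\models\fd{x}{y}$ together with $y\notin\keyclsup{G}{q}$ it follows that $x\notin\keyclsup{G}{q}$: otherwise $\FD{q}\models\fd{\keyvars{G}}{x}$, and composing with $\fd{x}{y}$ gives $\FD{q}\models\fd{\keyvars{G}}{y}$, a contradiction. Since $\keycl{G}{q}\subseteq\keyclsup{G}{q}$, this also yields $x\notin\keycl{G}{q}$, which will justify the final edge of the witness, namely from an atom containing $x$ to the auxiliary atom $R(\underline{x})$ of Definition~\ref{def:az}.

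The core step is a strengthening of Lemma~\ref{lem:witness}: if $\FD{q}\models\fd{x}{w}$ and $w\notin\keyclsup{G}{q}$, then there is a sequence $G_{1},\dots,G_{n}$ of atoms of $q$ with $x\in\atomvars{G_{1}}$, $w\in\atomvars{G_{n}}$, and $\atomvars{G_{i}}\cap\atomvars{G_{i+1}}\nsubseteq\keycl{G}{q}$ for all $i$. I would prove this by induction on the length of a shortest sequential proof $H_{0},\dots,H_{\ell}$ of $\FD{q}\models\fd{x}{w}$ (Definition~\ref{def:sqp}), the base case $w=x$ being any single atom containing $x$, and otherwise taking $H_{\ell}$ to be the atom that introduces $w$. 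The crucial observations are that $H_{\ell}\neq G$ (else $w\in\atomvars{G}\subseteq\keyclsup{G}{q}$) and that $\keyvars{H_{\ell}}\nsubseteq\keyclsup{G}{q}$ (else $\FD{q}\models\fd{\keyvars{G}}{\keyvars{H_{\ell}}}$, hence $\FD{q}\models\fd{\keyvars{G}}{w}$, contradicting $w\notin\keyclsup{G}{q}$). Thus I may choose $u\in\keyvars{H_{\ell}}\setminus\keyclsup{G}{q}$; this $u\neq x$ occurs at a non-key position of some earlier $H_{k}$, so a prefix is a strictly shorter sequential proof of $\fd{x}{u}$, and $u$ still lies outside $\keyclsup{G}{q}$, so the induction hypothesis applies. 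Appending $H_{\ell}$ to the path furnished for $u$ extends it to $w$ across the shared variable $u$, and $u\notin\keyclsup{G}{q}\supseteq\keycl{G}{q}$ guarantees this new junction avoids $\keycl{G}{q}$.

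Finally I would assemble the witness exactly as in Lemma~\ref{lem:backpropagation}: apply the strengthened lemma to $w=y$ (legitimate since $y\notin\keyclsup{G}{q}$) to get a path $G_{1},\dots,G_{n}$ from $x$ to $y$ avoiding $\keycl{G}{q}$, reverse it, and concatenate it after the given witness $G=F_{0},\dots,F_{s}$ for $G\attacks{q}y$; the two ends meet at an atom containing $y$, and $y\notin\keycl{G}{q}$. Adding the final edge to $R(\underline{x})$, justified by $x\notin\keycl{G}{q}$, produces a witness for $G\attacks{q}x$. I expect the main obstacle to be the inductive proof of the strengthened Lemma~\ref{lem:witness}, and specifically the carrying of the invariant ``the current target lies outside $\keyclsup{G}{q}$'' through the recursion, which is what forbids the induction from ever selecting $G$ as an intermediate atom; the remaining gluing of witnesses is routine.
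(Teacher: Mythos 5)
Your proof is correct and rests on the same idea as the paper's: walk backward along a shortest sequential proof of $\FD{q}\models\fd{x}{y}$, using $\FD{q}\not\models\fd{\keyvars{G}}{\cdot}$ at each step to extract a key variable outside $\keyclsup{G}{q}$ (hence outside $\keycl{G}{q}$) along which the witness can be extended. The paper packages this as a direct descent on attacked variables of decreasing depth rather than as a strengthened version of Lemma~\ref{lem:witness} followed by the gluing of Lemma~\ref{lem:backpropagation}, but the two arguments are essentially the same.
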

\begin{proof}
The desired result is obvious in case $x=y$.
In the remainder of the proof, we treat the case $x\neq y$.
Assume $\FD{q}\models\fd{x}{y}$.
Then, we can assume a shortest sequence 
$ F_{1},F_{2},\dots,F_{n} $
that is a sequential proof of $\FD{q}\models\fd{x}{y}$ as defined by Definition~\ref{def:sqp}.

Let $V=\formula{\bigcup_{j=1}^{n}\atomvars{F_{j}}}\cup\{x\}$.
For every $u\in V\setminus\{x\}$,
we define the {\em depth\/} of $u$, denoted $\depth{u}$, as the smallest integer $j$ such that $u\in\atomvars{F_{j}}$.
Furthermore, we define $\depth{x}=0$.
Clearly, $\depth{y}=n$.

We show next that if $G$ attacks some variable $u\in V$ with $\depth{u}>0$ and $\FD{q} \not \models \fd{\keyvars{G}}{u}$, 
then  also $G$  attacks some variable $u'\in V$ with $\depth{u'}<\depth{u}$ and $\FD{q} \not \models \fd{\keyvars{G}}{u'}$.

Assume $G\attacks{q}u$ with $\depth{u}=k>0$ and $\FD{q} \not \models \fd{\keyvars{G}}{u}$.
It must be the case that $u\in\atomvars{F_{k}}\setminus\keyvars{F_{k}}$.
Also,  $\FD{q} \not\models \fd{\keyvars{G}}{\keyvars{F_k}}$ (otherwise, $\FD{q}\models\fd{\keyvars{G}}{u}$, a contradiction). 
Then, there must be some $w\in\keyvars{F_{k}}$ such that $\FD{q}\not\models\fd{\keyvars{G}}{w}$, which implies $w\not\in\keycl{G}{q}$. 
Clearly, $\depth{w}<k$ and $G\attacks{q}w$.

It follows $G\attacks{q}x$.
\end{proof}

\subsection{Proof of Lemma~\ref{lem:reduction}}
\label{sec:saturated}

\noindent
\begin{proof}[of Lemma~\ref{lem:reduction}]
\framebox{Item~\ref{it:reductionone}}
Let $\pi=H_{1},H_{2},\dots,H_{n}$ be a shortest sequence that is a sequential proof of $\FD{q} \models \fd{x}{z}$.
Clearly, for $i\in\{1,\dots,n\}$, we have $\FD{q}\models\fd{x}{\keyvars{H_{i}}}$, hence $H_{i}\nattacks{q}x$ and $H_{i}\nattacks{q}z$,
by the assumption in the statement of Lemma~\ref{lem:reduction}.
%

Let $\db$ be an uncertain database that is the input to $\cqa{q}$.

\begin{sublemma}\label{sub:chain}
Let $a,b$ be constants.
If some $\fpro{q}{\{x,z\}}$-frugal repair of $\db$ satisfies $\substitute{q}{x,z}{a,b}$,
then for every repair $\rep_{B}$ of $\db$,
for every valuation $\theta$ over $\queryvars{q}$ such that $\theta(q)\subseteq\rep_{B}$,
if $\theta(x)=a$, then $\theta(z)=b$.
\end{sublemma}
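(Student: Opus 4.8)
The plan is to exploit the shortest sequential proof $\pi=H_1,\dots,H_n$ of $\FD{q}\models\fd{x}{z}$ (Definition~\ref{def:sqp}), whose defining properties give $\keyvars{H_i}\subseteq\{x\}\cup\bigcup_{j<i}\atomvars{H_j}$ for each $i$ and $z\in\{x\}\cup\bigcup_i\atomvars{H_i}$, together with the fact, already noted in the surrounding argument, that $\FD{q}\models\fd{x}{\keyvars{H_i}}$ and hence $H_i\nattacks{q}x$ and $H_i\nattacks{q}z$. The case $x=z$ is degenerate (it forces $a=b$ and there is nothing to prove), so I assume $x\neq z$, whence $z\in\bigcup_i\atomvars{H_i}$. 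Let $\rep$ be the given $\fpro{q}{\{x,z\}}$-frugal repair, with witness $\mu$ satisfying $\mu(q)\subseteq\rep$, $\mu(x)=a$, $\mu(z)=b$, and let $\rep_B$ be an arbitrary repair with $\theta(q)\subseteq\rep_B$ and $\theta(x)=a$; the goal is $\theta(z)=b$. I may also assume the atoms $H_1,\dots,H_n$ are distinct, since a shortest sequential proof contains no repeated atom.

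First I would build a sequence of repairs $\rep=\rep_0,\rep_1,\dots,\rep_n$ where step $i$ replaces the $H_i$-fact used by the current witness with the $H_i$-fact $\theta(H_i)$ prescribed by $\rep_B$. I maintain the invariant that $\rep_i$ is again $\fpro{q}{\{x,z\}}$-frugal, that $\rep_i\models\substitute{q}{x,z}{a,b}$ via some witness $\mu_i$ with $\mu_i(q)\subseteq\rep_i$, $\mu_i(x)=a$, $\mu_i(z)=b$, and, crucially, that $\mu_i$ agrees with $\theta$ on $\{x\}\cup\bigcup_{j\le i}\atomvars{H_j}$. Assuming the invariant at stage $i-1$, the inclusion $\keyvars{H_i}\subseteq\{x\}\cup\bigcup_{j<i}\atomvars{H_j}$ gives $\mu_{i-1}[\keyvars{H_i}]=\theta[\keyvars{H_i}]$, so the facts $A:=\mu_{i-1}(H_i)$ and $B:=\theta(H_i)$ are key-equal; since $A$ lies in the image of $\mu_{i-1}$ it is relevant in $\rep_{i-1}$, hence $\rep_i:=(\rep_{i-1}\setminus\{A\})\cup\{B\}$ is again a repair. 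Because $H_i$ attacks neither $x$ nor $z$, Lemma~\ref{lem:dontcarebis} (with $X=\{x,z\}$) shows that every $\zeta$ over $\{x,z\}$ satisfied in $\rep_i$ is satisfied in $\rep_{i-1}$, so the realized $(x,z)$-pairs do not grow; frugality of $\rep$ (minimality in $\fpro{q}{\{x,z\}}$) then forbids any strict shrinking, so $\rep_i$ realizes exactly the pairs of $\rep$, is itself frugal, and still satisfies $\substitute{q}{x,z}{a,b}$ through some witness $\mu_i$.

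To re-establish the agreement part of the invariant I would use self-join-freeness: the $H_j$ for $j\le i$ are distinct relations, so the blocks touched at different steps are disjoint, whence $\theta(H_1),\dots,\theta(H_i)$ all lie in $\rep_i$. Walking along the sequential proof inside the consistent repair $\rep_i$, the fact $\mu_i(H_1)$ must be the unique $H_1$-fact of $\rep_i$ with key $\theta[\keyvars{H_1}]$, namely $\theta(H_1)$; inductively, once $\mu_i$ agrees with $\theta$ on $\{x\}\cup\bigcup_{j'<j}\atomvars{H_{j'}}\supseteq\keyvars{H_j}$, consistency forces $\mu_i(H_j)=\theta(H_j)$ and extends the agreement to $\atomvars{H_j}$. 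This gives the invariant at stage $i$. At stage $n$ the witness $\mu_n$ agrees with $\theta$ on $\{x\}\cup\bigcup_i\atomvars{H_i}\ni z$, so $\theta(z)=\mu_n(z)=b$, which is the desired conclusion.

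The step I expect to be the main obstacle is exactly the propagation of key-equality along the chain, i.e.\ showing at each stage $i$ that the incoming fact $\theta(H_i)$ is key-equal to the $H_i$-fact currently witnessed in $\rep_{i-1}$. This is what keeps every swap confined to a single block, and it rests on carrying the agreement with $\theta$ forward — which in turn depends on \emph{re-selecting} the witness $\mu_i$ after each appeal to frugality and on self-join-freeness to ensure that previously installed facts $\theta(H_j)$ are never disturbed. Once this invariant is set up, checking the hypotheses of Lemma~\ref{lem:dontcarebis} at each step (relevance of the removed fact and the non-attack of $H_i$ on $x$ and $z$) is routine.
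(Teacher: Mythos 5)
Your proposal is correct and follows essentially the same strategy as the paper's proof: both iterate along a shortest sequential proof of $\FD{q}\models\fd{x}{z}$, swap the current witness's $H_i$-fact for $\theta(H_i)$ (key-equality following from agreement on earlier atoms), invoke Lemma~\ref{lem:dontcarebis} together with frugality to keep $\substitute{q}{x,z}{a,b}$ satisfied, and re-select a witness whose agreement with $\theta$ on the processed prefix is forced by consistency. The only differences are presentational — you advance one atom per step where the paper jumps to the first index of disagreement, and you spell out the consistency walk and the preservation of frugality somewhat more explicitly.
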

\begin{subproof}
Let $\rep_{A}$ be a $\fpro{q}{\{x,z\}}$-frugal repair of $\db$.
Let $\theta_{A}$ be a valuation over $\queryvars{q}$ such that $\theta_{A}(q)\subseteq\rep_{A}$, and $\theta_{A}(x)=a$ and $\theta_{A}(z) = b$.
That is, $\rep_{A}\models\substitute{q}{x,z}{a,b}$.
Let  $\rep_{B}$ be a repair of $\db$ such that for some valuation $\theta_{B}$ over $\queryvars{q}$,
we have $\theta_{B}(q)\subseteq\rep_{B}$ and $\theta_{B}(x)=a$.
We need to show $\theta_{B}(z) =b$.

We show how to inductively construct a maximal sequence 
$$(p_{0},\rep_{0},\zeta_{0}), (p_{1},\rep_{1},\zeta_{1}), \dots, (p_{m},\rep_{m},\zeta_{m})$$
where for every $j\geq 0$,
\begin{enumerate}
\item\label{it:zetaone}
$\rep_{j}$ is a $\fpro{q}{\{x,z\}}$-frugal repair of $\db$;
\item
$\zeta_{j}$ is a valuation over $\queryvars{q}$ such that
$\zeta_{j}(q)\subseteq\rep_{j}$;
\item
$\zeta_{j}(x)=a$ and $\zeta_{j}(z)=b$, i.e., $\rep_{j}\models\substitute{q}{x,z}{a,b}$;
\item\label{it:prefix}
$p_{j}\in\{0,1,\dots,n\}$ and for all $i\in\{1,\dots,p_{j}\}$, $\zeta_{j}(H_{i})=\theta_{B}(H_{i})$;
\item\label{it:zetalast}
$p_{0}<p_{1}<\dots<p_{j}$.
\end{enumerate}
Intuitively, one can think of $p_{j}$ as an index in $\pi$ indicating that $\zeta_{j}$ and $\theta_{B}$ agree on all variables in $H_{1},H_{2},\dots,H_{p_{j}}$.

For the basis of the induction,  we choose $(p_{0},\rep_{0},\zeta_{0})=(0,\rep_{A},\theta_{A})$.
In this way, the above conditions are obviously satisfied for $j=0$.

For the induction step $j\rightarrow j+1$,
let $p_{j+1}$ be be the smallest integer $k$ such that $\zeta_{j}(H_{k})\neq\theta_{B}(H_{k})$.
It can be seen that $\zeta_{j}(H_{k})$ and $\theta_{B}(H_{k})$ must be key-equal.
Let $\rep_{j+1}=\formula{\rep_{j}\setminus\{\zeta_{j}(H_{k})\}}\cup\{\theta_{B}(H_{k})\}$.
By Lemma~\ref{lem:dontcarebis} and since $\rep_{j}$ is $\fpro{q}{\{x,z\}}$-frugal, it follows $\rep_{j+1}\models\substitute{q}{x,z}{a,b}$.
So there exists a valuation $\mu$ over $\queryvars{q}$ such that $\mu(q)\subseteq\rep_{j+1}$, and $\mu(x)=a$ and $\mu(z)=b$.
From $\rep_{j}\setminus\{\zeta_{j}(H_{k})\}=\rep_{j+1}\setminus\{\theta_{B}(H_{k})\}$ and $\mu(x)=\zeta_{j}(x)$,
it will be that case that $\mu(H_{i})=\zeta_{j}(H_{i})$ for all $i\in\{1,\dots,p_{j}\}$.
By the condition~\ref{it:prefix},  $\mu(H_{i})=\theta_{B}(H_{i})$ for all $i\in\{1,\dots,p_{j}\}$.
Then by our choice of $p_{j+1}$ and our construction of $\rep_{j+1}$, we have $\mu(H_{i})=\theta_{B}(H_{i})$ for all $i\in\{1,\dots,p_{j+1}\}$.
We choose $\zeta_{j+1}=\mu$.
With these choices, the above conditions \ref{it:zetaone}--\ref{it:zetalast} are satisfied for $j+1$. 

For $j=m$, we will have that $\zeta_{m}$ and $\theta_{B}$ agree on all variables in $\bigcup_{i=1}^{n}\atomvars{H_{i}}$.
Since $\zeta_{m}(z)=b$, it follows $\theta_{B}(z)=b$.
This concludes the proof of Sublemma~\ref{sub:chain}.
\end{subproof}

\begin{sublemma}\label{sub:chaincons}
Let $a,b_{1},b_{2}$ be constants such that $b_{1}\neq b_{2}$.
If $\db\models\substitute{q}{x,z}{a,b_{1}}$ and $\db\models\substitute{q}{x,z}{a,b_{2}}$,
then for every $\fpro{q}{\{x,z\}}$-frugal repair $\rep_{f}$ of $\db$, $\rep_{f}\not\models\substitute{q}{x}{a}$.
\end{sublemma}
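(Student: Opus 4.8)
The plan is to argue by contraposition: I will assume there exists a $\fpro{q}{\{x,z\}}$-frugal repair $\rep_{f}$ of $\db$ with $\rep_{f}\models\substitute{q}{x}{a}$ and derive a contradiction with $b_{1}\neq b_{2}$. First I would fix a valuation $\theta$ over $\queryvars{q}$ witnessing $\rep_{f}\models\substitute{q}{x}{a}$, so that $\theta(q)\subseteq\rep_{f}$ and $\theta(x)=a$, and set $b\defeq\theta(z)$. Then $\rep_{f}\models\substitute{q}{x,z}{a,b}$; that is, $\rep_{f}$ is a $\fpro{q}{\{x,z\}}$-frugal repair satisfying $\substitute{q}{x,z}{a,b}$, which is exactly the hypothesis of Sublemma~\ref{sub:chain}.

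Next I would turn the two database-level hypotheses into repair-level statements. From $\db\models\substitute{q}{x,z}{a,b_{1}}$ I obtain a valuation $\theta_{1}$ over $\queryvars{q}$ with $\theta_{1}(q)\subseteq\db$, $\theta_{1}(x)=a$, and $\theta_{1}(z)=b_{1}$. The key observation is that, because $q$ is self-join-free, distinct atoms of $q$ carry distinct relation names, so $\theta_{1}(q)$ contains at most one fact per relation name and is therefore consistent; it can thus be extended to a repair $\rep_{1}$ of $\db$ with $\theta_{1}(q)\subseteq\rep_{1}$. Symmetrically, $\db\models\substitute{q}{x,z}{a,b_{2}}$ yields a valuation $\theta_{2}$ and a repair $\rep_{2}$ with $\theta_{2}(q)\subseteq\rep_{2}$, $\theta_{2}(x)=a$, and $\theta_{2}(z)=b_{2}$.

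Finally I would invoke Sublemma~\ref{sub:chain} twice, using that $\rep_{f}$ is $\fpro{q}{\{x,z\}}$-frugal and satisfies $\substitute{q}{x,z}{a,b}$. Applied to the repair $\rep_{1}$ with the valuation $\theta_{1}$ (which satisfies $\theta_{1}(x)=a$), it forces $\theta_{1}(z)=b$, i.e.\ $b_{1}=b$. Applied to $\rep_{2}$ with $\theta_{2}$, it forces $b_{2}=b$. Hence $b_{1}=b=b_{2}$, contradicting $b_{1}\neq b_{2}$, so no such frugal repair $\rep_{f}$ exists and the sublemma follows.

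I do not expect a serious obstacle, since Sublemma~\ref{sub:chain} does the heavy lifting. The only point requiring care is the passage from a valuation satisfying $q$ inside $\db$ to a genuine repair of $\db$; this is precisely where self-join-freeness of $q$ is essential, as it guarantees that the image $\theta_{i}(q)$ of $q$ under a single valuation is a consistent set of facts that extends to a repair.
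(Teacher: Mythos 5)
Your proof is correct and takes essentially the same route as the paper: both arguments extend the two witnessing valuations $\theta_{1},\theta_{2}$ to repairs of $\db$ and then apply Sublemma~\ref{sub:chain} twice to the hypothesized frugal repair $\rep_{f}$ to force $b_{1}=\mu(z)=b_{2}$, contradicting $b_{1}\neq b_{2}$. Your explicit remark that self-join-freeness makes $\theta_{i}(q)$ consistent (hence extendable to a repair) is a detail the paper leaves implicit, but it is not a difference in approach.
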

\begin{subproof}
Assume the existence of two valuations $\theta_{1},\theta_{2}$ over $\queryvars{q}$
such that $\theta_{1}(q)\subseteq\db$,  $\theta_{2}(q)\subseteq\db$, $\theta_{1}(x)=\theta_{2}(x)=a$, and $b_{1}=\theta_{1}(z)\neq\theta_{2}(z)=b_{2}$. 
Then, there exist two repairs $\rep_{1}, \rep_{2}$ such that $\theta_{1}(q)\subseteq\rep_{1}$ and $\theta_{2}(q)\subseteq\rep_{2}$.


Assume towards a contradiction the existence of a $\fpro{q}{\{x,z\}}$-frugal repair $\rep_{f}$ of $\db$ such that $\rep_{f}\models\substitute{q}{x}{a}$.
Then, we can assume a valuation $\mu$ over $\queryvars{q}$ such that $\mu(q)\subseteq\rep_{f}$ and $\mu(x)=a$.
By Sublemma~\ref{sub:chain}, $\theta_{1}(z)=\mu(z)$ and $\theta_{2}(z)=\mu(z)$, hence $\theta_{1}(z)=\theta_{2}(z)$, a contradiction.
This concludes the proof of Sublemma~\ref{sub:chaincons}.
\end{subproof}

Construct a maximal sequence
\begin{equation}\label{eq:saturate}
\db_{0},a_{1},\db_{1},a_{2},\db_{2},\dots,a_{\ell},\db_{\ell}
\end{equation}
where $\db_{0}=\db$ and for $i\in\{1,\dots,\ell\}$,
\begin{enumerate}
\item
there exist two constants $b_{i},c_{i}$ such that $b_{i}\neq c_{i}$,
$\db_{i-1}\models\substitute{q}{x,z}{a_{i},b_{i}}$, and $\db_{i-1}\models\substitute{q}{x,z}{a_{i},c_{i}}$; and
\item
$\db_{i}=\db_{i-1}\setminus\widehat{\db}_{i-1}$,
where $\widehat{\db}_{i-1}$ is the smallest subset of $\db_{i-1}$ that includes every block $\block$ of $\db_{i-1}$ such that $a_{i}$ occurs in some fact of $\block$.
Recall from Section~\ref{sec:preliminaries} that we assume uncertain databases to be typed.
\end{enumerate}
Then, the following are equivalent:
\begin{enumerate}
\item\label{it:redufruone}
every repair of $\db$ satisfies $q$;
\item\label{it:redufrutwo}
every $\fpro{q}{\{x,z\}}$-frugal repair of $\db$ satisfies $q$; and
\item\label{it:redufruthree}
every $\fpro{q}{\{x,z\}}$-frugal repair of $\db_{\ell}$ satisfies $q$.
\end{enumerate}
Equivalence of items~\ref{it:redufruone} and~\ref{it:redufrutwo} follows from Lemma~\ref{lem:frugal}.
Equivalence of items~\ref{it:redufrutwo} and~\ref{it:redufruthree} follows from Sublemma~\ref{sub:chaincons}, using  induction on increasing $i\in\{0,\dots,\ell\}$.

Since the sequence~(\ref{eq:saturate}) is maximal, it must be that $\db_{\ell}\vmodels{q}\fd{x}{z}$.
Let $\db'$ be the database that includes $\db_{\ell}$ and such that for every valuation $\theta$,
if $\theta(q)\subseteq\db_{\ell}$, then $\db'$ contains $T^{c}(\underline{\theta(x)},\theta(z))$.
Clearly, the set of $T$-facts of $\db'$ is consistent, and the following are equivalent:
\begin{enumerate}
\item
every $\fpro{q}{\{x,z\}}$-frugal repair of $\db_{\ell}$ satisfies $q$; 
\item
every $\fpro{q}{\{x,z\}}$-frugal repair of $\db'$ satisfies $q\cup\{T^{c}(\underline{x},z)\}$; and
\item
every repair of $\db'$ satisfies $q\cup\{T^{c}(\underline{x},z)\}$.
\end{enumerate}
Finally, it can be easily seen that $\db'$ can be computed from $\db$ in polynomial time.
This concludes the proof of the first item.

\framebox{Item~\ref{it:reductiontwo}}
Define $q' = q \cup \{T^c(\underline{x},z)\}$. 
We show that for all $F,G\in q$, if $F\attacks{q'}G$, then  $F\attacks{q}G$.
For every attack $F\attacks{q'}G$, we distinguish two cases depending on $F$.

\paragraph{Case $\FD{q\setminus\{F\}}\models\fd{x}{z}$.}
Then clearly, $\keycl{F}{q}=\keycl{F}{q'}$. 
The only hard case is where a witness for the attack $F \attacks{q'}G$ contains the atom $T^c(\underline{x},z)$.
Then, $z\notin\keycl{F}{q'}$, hence $z\notin\keycl{F}{q}$.
From Lemma~\ref{lem:witness},
it follows that there exists a witness for $F\attacks{q}G$.

\paragraph{Case $\FD{q\setminus\{F\}}\not\models\fd{x}{z}$.}
Since $\FD{q}\models\fd{x}{z}$, it must be the case that every sequential proof of $\FD{q}\models\fd{x}{z}$ contains $F$.
Then $\FD{q} \models \fd{x}{\keyvars{F}}$. 
By the assumption in the statement of Lemma~\ref{lem:reduction}, $F\nattacks{q}x$ and  $F\nattacks{q}z$.
Assume towards a contradiction that a witness of $F\attacks{q'}G$ contains $T^{c}(\underline{x},z)$.
Then, since $\keycl{F}{q}\subseteq\keycl{F}{q'}$, it must be the case that $F\attacks{q}x$ or  $F\attacks{q}z$, a contradiction.
We conclude by contradiction that no witness of $F\attacks{q'}G$ contains $T^{c}(\underline{x},z)$.
Since $\keycl{F}{q}\subseteq\keycl{F}{q'}$, it follows $F\attacks{q}G$.

Assume that the attack graph of $q'$ contains a strong cycle $C$.
Since the atom $T^c(\underline{x},z)$ cannot be in $C$ (since it has no outgoing attacks),
the attack graph of $q$ contains the same cycle $C$.
It can be easily seen that $C$ is strong in the attack graph of $q$.
\end{proof}

\subsection{Proof of Lemma~\ref{lem:mostsimplified}}
\label{sec:binarized}

We first show two helping lemmas.

\begin{lemma}\label{lem:simplify}
Let $q$ be a self-join-free Boolean conjunctive query.
Let $F$ be an atom of $q$.
Let $G$ be an atom with a fresh relation name such that $\keyvars{G}=\keyvars{F}$ and $\atomvars{G}=\atomvars{F}$.
Let $q'=\formula{q\setminus\{F\}}\cup\{G\}$.
Then, 
\begin{enumerate}
\item
there exists a polynomial-time many-one reduction from $\cqa{q}$ to $\cqa{q'}$; and\footnote{We know that there exists such  a first-order reduction.
However, polynomial-time is sufficient here and allows for an easier proof.}
\item
if the attack graph of $q$ contains no strong cycle, then the attack graph of $q'$ contains no strong cycle either.
\end{enumerate}
\end{lemma}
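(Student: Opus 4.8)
The plan is to prove the two items separately, exploiting that $G$ shares with $F$ both its key-variables and its full set of variables. For item~1, I would give the reduction in two polynomial-time stages. Write $F$ as an $R$-atom and $G$ as an $R'$-atom with $R'$ fresh, and call an $R$-fact $A$ a \emph{match} of $F$ when $A=\theta(F)$ for some valuation $\theta$; equivalently, $A$ respects the constant-and-repeated-variable pattern of $F$. Since $q$ is self-join-free, $F$ is its only $R$-atom, so only matches of $F$ can take part in a witness of $q$. The first stage (\emph{pattern purification}) deletes from the input $\db$ every $R$-block that contains at least one non-match of $F$; the second stage keeps all non-$R$-facts and replaces each surviving $R$-fact $A=\theta(F)$ by the $R'$-fact $\theta(G)$. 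Both stages run in polynomial time and introduce no new constants.

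Correctness of the second stage is routine. Because $\keyvars{G}=\keyvars{F}$ and $\atomvars{G}=\atomvars{F}$, the assignment $\theta(F)\mapsto\theta(G)$ is a well-defined bijection between the matching $R$-facts and the produced $R'$-facts, and it preserves key-equality: two matches are key-equal iff the underlying valuations agree on $\keyvars{F}=\keyvars{G}$ iff their images are key-equal. Hence the $R'$-blocks are in bijection with the surviving $R$-blocks, repairs correspond one-to-one, and since $\queryvars{q}=\queryvars{q'}$ and $q\setminus\{F\}=q'\setminus\{G\}$, a valuation witnesses $q$ in a repair exactly when it witnesses $q'$ in the matching repair.

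The crux --- and the step I expect to be the main obstacle --- is the soundness of the first stage; a bare renaming is unsound whenever $G$ discards a constraint of $F$ (for instance $F=R(\underline{x},y,y)$ versus $G=R'(\underline{x},y)$), since a block choice that falsifies $F$ in $\db$ would then spuriously satisfy $G$ in $\db'$. I would therefore prove the claim: if an $R$-block $\block$ of $\db$ contains a non-match $N$ of $F$, then every repair of $\db$ satisfies $q$ iff every repair of $\db\setminus\block$ does. One direction is immediate by monotonicity of conjunctive queries: from a repair of $\db$ that falsifies $q$, deleting its selected fact of $\block$ yields a repair of $\db\setminus\block$ that still falsifies $q$. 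For the converse, assume every repair of $\db$ satisfies $q$ and let $\rep^{\circ}$ be any repair of $\db\setminus\block$; then $\rep^{\circ}\cup\{N\}$ is a repair of $\db$, hence satisfies $q$ via some $\theta$, and since $N$ is not a match we have $\theta(F)\neq N$; as $q\setminus\{F\}$ has no $R$-atom, all of $\theta(q)$ lies in $\rep^{\circ}$, so $\rep^{\circ}\models q$. Applying this claim block by block shows the first stage preserves the answer, and composing the two stages gives the reduction; this also matches the footnote's remark that a polynomial-time reduction (rather than a first-order one) is all that is needed here.

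For item~2, the decisive observation is that $\FD{q}=\FD{q'}$: the dependency contributed by $G$ is $\fd{\keyvars{G}}{\atomvars{G}}=\fd{\keyvars{F}}{\atomvars{F}}$, and every other atom is untouched (and $G$ inherits the mode of $F$, so $\consistent{q}$ and $\consistent{q'}$ carry the same dependencies). Consequently $\keycl{H}{q}=\keycl{H}{q'}$ and $\keyclsup{H}{q}=\keyclsup{H}{q'}$ for every atom, and since also $\atomvars{G}=\atomvars{F}$, replacing $F$ by $G$ turns any attack witness in $q$ into one in $q'$ and conversely. Thus the attack graph of $q'$ is isomorphic to that of $q$ under $F\leftrightarrow G$, with weak edges mapped to weak edges, because weakness of an attack from $H$ to $H'$ is governed by $\FD{q}\models\fd{\keyvars{H}}{\keyvars{H'}}$, which is unchanged. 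Therefore $q$ has no strong cycle iff $q'$ has none, which is exactly item~2.
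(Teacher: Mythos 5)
Your proof is correct and follows essentially the same route as the paper's: a polynomial-time preprocessing step that removes all $R$-facts not of the form $\theta(F)$, followed by the bijective renaming $\theta(F)\mapsto\theta(G)$, which is sound precisely because $\keyvars{G}=\keyvars{F}$ and $\atomvars{G}=\atomvars{F}$. The only difference is that the paper obtains the preprocessing by invoking its purification lemma (Lemma~\ref{lem:purified}) as a black box, whereas you prove a bespoke block-level deletion claim directly; your argument for item~2, which the paper dismisses as straightforward, is a correct elaboration.
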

\begin{proof}
The proof of the second item is straightforward.

For the first item, let $\db$ be an uncertain database that is input to $\cqa{q}$.
By Lemma~\ref{lem:purified}, we can compute in polynomial time a database $\db_{p}$ such that $\db_{p}$ is purified relative to $q$
and such that every repair of $\db$ satisfies $q$ if and only if every repair of $\db_{p}$ satisfies $q$.

Let $\db'$ be the uncertain database that includes $\db_{p}$ and such that whenever $\db_{p}$ contains $\theta(F)$ for some valuation $\theta$ over $\atomvars{F}$,
then  $\db'$ contains $\theta(G)$.
Notice here that $\atomvars{F}=\atomvars{G}$ and, since $\db_{p}$ is purified, whenever $A\in\db_{p}$ has the same relation name as $F$,
then there exists a valuation $\theta$ over $\atomvars{F}$ such that $A=\theta(F)$. 
It can now be easily verified that every repair of $\db_{p}$ satisfies $q$
if and only if every repair of $\db'$ satisfies $q'$. 
\end{proof}

Notice that the roles of $F$ and $G$ can be switched in the statement of Lemma~\ref{lem:simplify},
showing that $\cqa{q}$ and $\cqa{q'}$ are polynomially equivalent.

\begin{example}
If $F=R(\underline{a,x,x,y},y,z,z,b,u)$ and $G=S(\underline{x,y},z,u)$,
then $\keyvars{F}=\keyvars{G}$ and $\atomvars{F}=\atomvars{G}$.
So Lemma~\ref{lem:simplify} implies that we can replace $F$ with $G$ in the study of $\cqa{q}$.
\end{example}

\begin{lemma}\label{lem:makesimple}
Let $q$ be a self-join-free Boolean conjunctive query.
Let $R(\underline{\vec{x}},\vec{y})$ be an atom of $q$ with mode $i$.
Let $q_{0}=\{R_{1}^{c}(\underline{\vec{x}},w)$ $R_{2}^{c}(\underline{w},\vec{x})$, $S(\underline{w},\vec{y})\}$,
where $R_{1}$, $R_{2}$ are fresh relation names of mode $c$, $S$ is a fresh relation name of mode $i$, and $w$ is a variable such that $w\not\in\queryvars{q}$.
Let $q'=\formula{q\setminus\{R(\underline{\vec{x}},\vec{y})\}}\cup q_{0}$.
Then, 
\begin{enumerate}
\item\label{it:binone}
there exists a polynomial-time many-one reduction from $\cqa{q}$ to $\cqa{q'}$; and
\item\label{it:bintwo}
if the attack graph of $q$ contains no strong cycle, then the attack graph of $q'$ contains no strong cycle either.
\end{enumerate}
\end{lemma}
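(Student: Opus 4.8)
The plan is to regard the fresh variable $w$ as a surrogate key kept in bijection with the composite key $\vec{x}$ by the two consistent ``dictionary'' atoms $R_{1}^{c}(\underline{\vec{x}},w)$ and $R_{2}^{c}(\underline{w},\vec{x})$, while the simple-key inconsistent atom $S(\underline{w},\vec{y})$ carries the payload $\vec{y}$. By Lemma~\ref{lem:simplify} I may assume, if convenient, that $\vec{x}$ and $\vec{y}$ are sequences of distinct variables, though the argument below works verbatim in general. For item~\ref{it:binone}, given an input $\db$ to $\cqa{q}$, I would build $\db'$ by keeping every non-$R$-fact of $\db$, and, for each distinct key value $\vec{a}$ occurring in the $R$-relation of $\db$, introducing a fresh constant $w_{\vec{a}}$ and adding $R_{1}^{c}(\underline{\vec{a}},w_{\vec{a}})$ and $R_{2}^{c}(\underline{w_{\vec{a}}},\vec{a})$; finally, for each $R$-fact $R(\underline{\vec{a}},\vec{b})\in\db$, adding $S(\underline{w_{\vec{a}}},\vec{b})$ and dropping the $R$-fact itself. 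This is clearly $\FO$-computable, and $\db'$ is a legal input to $\cqa{q'}$: distinct key values get distinct surrogates, so $R_{1}^{c}$ (one fact per $\vec{a}$) and $R_{2}^{c}$ (one fact per $w_{\vec{a}}$) are consistent, and the $S$-block for $w_{\vec{a}}$ mirrors the $R$-block for $\vec{a}$.

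The correctness of item~\ref{it:binone} rests on a repair bijection: since $R_{1}^{c},R_{2}^{c}$ are fixed in every repair, each repair $\rep$ of $\db$ corresponds to the repair $\rep'$ of $\db'$ obtained by replacing each $R(\underline{\vec{a}},\vec{b})\in\rep$ with $S(\underline{w_{\vec{a}}},\vec{b})$. I would then show $\rep\models q$ iff $\rep'\models q'$ by passing between valuations: a valuation $\theta$ over $\queryvars{q}$ with $\theta(q)\subseteq\rep$ extends to $\theta'$ with $\theta'(w)=w_{\theta(\vec{x})}$, which satisfies all three new atoms in $\rep'$; conversely, given $\theta'(q')\subseteq\rep'$, the fact $R_{1}^{c}(\underline{\theta'(\vec{x})},\theta'(w))\in\rep'$ forces $\theta'(w)=w_{\theta'(\vec{x})}$, so that $S(\underline{\theta'(w)},\theta'(\vec{y}))\in\rep'$ traces back to $R(\underline{\theta'(\vec{x})},\theta'(\vec{y}))\in\rep$. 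Hence every repair of $\db$ satisfies $q$ iff every repair of $\db'$ satisfies $q'$.

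For item~\ref{it:bintwo} I argue by contraposition, exhibiting a strong cycle in $q$ from one in $q'$. The first ingredient is that the dictionary atoms are off every cycle: any atom $F$ of mode $c$ satisfies $\atomvars{F}\subseteq\keycl{F}{q'}$ (because $F\in\consistent{q'}$ contributes $\fd{\keyvars{F}}{\atomvars{F}}$, and $(q'\setminus\{F\})\cup\consistent{q'}=q'$), so $F$ has no outgoing attack; thus $R_{1}^{c},R_{2}^{c}$ lie on no cycle. Consequently every cycle of the attack graph of $q'$ uses only atoms of $(q\setminus\{R\})\cup\{S\}$, which correspond bijectively to the atoms of $q$ under the relabeling $\widehat{\cdot}$ fixing $q\setminus\{R\}$ and sending $S\mapsto R$. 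The second ingredient is a functional-dependency comparison: over $\queryvars{q}$, the chain $\fd{\vec{x}}{w}$, $\fd{w}{\vec{x}}$, $\fd{w}{\vec{y}}$ of $q'$ is equivalent to the single dependency $\fd{\vec{x}}{\vec{y}}$ of $R$ (the variable $w$ being reachable only from $\vec{x}$), so $\FD{q'}\models\fd{X}{Y}$ iff $\FD{q}\models\fd{X}{Y}$ for all $X,Y\subseteq\queryvars{q}$, and $w\equiv\vec{x}$ in $\FD{q'}$. From this I obtain, for each mode-$i$ start atom $F$ (the only atoms with outgoing attacks), the closure correspondence $\keycl{F}{q'}\cap\queryvars{q}=\keycl{\widehat{F}}{q}$.

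The heart of the proof — and the step I expect to be the main obstacle — is projecting a witness for $F_{0}\attacks{q'}F_{n}$ (with $F_{0},F_{n}\in(q\setminus\{R\})\cup\{S\}$) to a witness for $\widehat{F_{0}}\attacks{q}\widehat{F_{n}}$. I would apply $\widehat{\cdot}$ to each atom of the witness and delete consecutive duplicates. Every witness edge sharing the variable $w$ joins two atoms among $R_{1}^{c},R_{2}^{c},S$ (the only carriers of $w$), which all map to $R$ and collapse; every surviving edge shares some $z\neq w$, and since a gadget atom shares with an old atom only variables of $\vec{x}\cup\vec{y}\subseteq\atomvars{R}$, the relabeled endpoints still share $z$, with $z\notin\keycl{F_{0}}{q'}\cap\queryvars{q}=\keycl{\widehat{F_{0}}}{q}$. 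This yields a genuine witness in $q$ (endpoints distinct because $\widehat{\cdot}$ is injective). Finally, using $w\equiv\vec{x}$ and the FD comparison above, I would check case by case (on whether $F,G$ equal $S$) that strength transfers, i.e.\ $\FD{q'}\not\models\fd{\keyvars{F}}{\keyvars{G}}$ implies $\FD{q}\not\models\fd{\keyvars{\widehat{F}}}{\keyvars{\widehat{G}}}$. A strong cycle in $q'$ therefore maps, edge by edge, to a strong cycle in $q$, proving item~\ref{it:bintwo}. The delicacy lies entirely in the witness bookkeeping across the $w$-gadget and in verifying the closure and strength correspondences; the reduction itself is routine.
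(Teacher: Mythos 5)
Your proposal is correct and follows essentially the same route as the paper's proof: the same surrogate-constant construction with a repair bijection for item~1, and for item~2 the same three ingredients (mode-$c$ atoms have no outgoing attacks, the closure correspondence $\keycl{F}{q'}\setminus\{w\}=\keycl{\widehat{F}}{q}$, and transfer of the relevant functional dependencies so that weakness of attacks is preserved). The only cosmetic difference is that you project witnesses by a uniform relabel-and-collapse of the gadget onto $R$, whereas the paper performs an explicit case analysis on whether $S$ is the source, the target, or an interior atom of the witness; the underlying argument is the same.
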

\begin{proof}
\framebox{Item~\ref{it:binone}}
Assume that the signature of $R$ is $\signature{n}{k}$.
Let $\db$ be an uncertain database that is input to $\cqa{q}$.
Define an injective function $h$ that maps every element in $\formula{\adom{\db}}^{k}$ to a fresh constant not occurring elsewhere.
Let $\db'$ be the database obtained from $\db$ by replacing each fact $R(\underline{\vec{a}},\vec{b})$ with the following three facts:
$$
R_{1}^{c}(\underline{\vec{a}},h(\vec{a})),
R_{2}^{c}(\underline{h(\vec{a})},\vec{a}), \mbox{and\ }
S(\underline{h(\vec{a})},\vec{b}).
$$
Since the function $h$ is injective,
the set of $R_{1}$-facts and $R_{2}$-facts of $\db'$ is consistent.
Hence, $\db'$ is a legal input to $\cqa{q'}$.
Intuitively, $R_{1}$-facts encode the function $h$,
and $R_{2}$-facts affirm that $h$ is injective.
It remains to be shown that every repair of $\db$ satisfies $q$ if and only if every repair of $\db'$ satisfies $q'$.

Define $f:\repairs{\db}\rightarrow\repairs{\db'}$ such that for every $\rep\in\repairs{\db}$,
\begin{itemize}
\item
if $\rep$ contains $R(\underline{\vec{a}},\vec{b})$,
then $f(\rep)$ contains $S(\underline{h(\vec{a})},\vec{b})$; 
\item
$f(\rep)$ contains all $R_{1}$-facts and all $R_{2}$-facts of $\db'$; and
\item
if $T$ is a relation name in $q$ such that $T\neq R$, then $f(\rep)$ contains exactly the same $T$-facts as $\rep$.
\end{itemize}
The following can be easily verified for every $\rep\in\repairs{\db}$:
\begin{itemize}
\item
$f(\rep)$ is indeed a repair of $\db'$; and
\item
$q$ is true in $\rep$ if and only if $q'$ is true in $f(\rep)$.
\end{itemize}
The desired result follows from the easy observation that $f$ is bijective.

\framebox{Item~\ref{it:bintwo}}
By a little abuse of notation, we will denote atoms by their relation name.
First, observe that $\FD{\consistent{q'}}\models\fd{w}{\sequencevars{\vec{x}}} $ and $\FD{\consistent{q'}}\models\fd{\sequencevars{\vec{x}}}{w}$.
This implies that for any atom $F\in q\setminus\{R\}$, we have $\keycl{F}{q}=\keycl{F}{q'}\setminus \{w\}$. 
Furthermore, $\keycl{R}{q}=\keycl{S}{q'}\setminus\{w\}$.

Notice that atoms $R_{1}$ and  $R_{2}$ have mode $c$, and hence have no outgoing attacks in the attack graph of $q'$.
We will now show that for all $F,G\in q\setminus\{R\}$,
\begin{itemize}
\item if $S\attacks{q'}G$, then $R\attacks{q}G$; 
\item if $F\attacks{q'}S$, then $F\attacks{q}R$; and
\item if $F\attacks{q'}G$, then $F\attacks{q}G$. 
\end{itemize} 
To this extent, assume an attack $F\attacks{q'}G$ where $F,G\in\formula{q\setminus\{R\}}\cup\{S\}$. 
We can assume a witness 
\begin{equation}\label{eq:witnessFG}
F_{0}\step{z_{1}}F_{1}\step{z_{2}}F_{2}\dots\step{z_{n}}F_{n}
\end{equation}
for $F\attacks{q'}G$ where $F_{0}=F$ and $F_{n}=G$.
We can assume without loss of generality that $1\leq i<j\leq n$ implies $z_{i}\neq z_{j}$,
and that $0\leq i<j\leq n$ implies $F_{i}\neq F_{j}$.
Moreover, since $\atomvars{R_{1}}=\atomvars{R_{2}}$,
we can assume that $R_{2}$ does not occur in~(\ref{eq:witnessFG}). 
We distinguish two cases.
\begin{description}
\item[Case $F_{0}=S$.]
Since $\{w\}\cup\sequencevars{\vec{x}}\subseteq\keycl{S}{q'}$,
we have that $R_{1}$ and $R_{2}$ do not occur in the sequence~(\ref{eq:witnessFG}),
and that $w\not\in\{z_{1},\dots,z_{n}\}$.
Then,
$R\step{z_{1}}F_{1}\step{z_{2}}F_{2}\dots\step{z_{n}}F_{n}$
is a witness for $R\attacks{q}F_{n}$.
\item[Case $F_{n}=S$.] 
It may be the case that $w\in\{z_{1},\dots,z_{n}\}$.
Then, by the form of $q_{0}$, we can assume a smallest integer $i$ such that $z_{i}\in\sequencevars{\vec{x}}\cup\sequencevars{\vec{y}}$.
Then,
$F_{0}\step{z_{1}}F_{1}\step{z_{2}}F_{2}\dots\step{z_{i}}R$
is a witness for $F_{0}\attacks{q}R$.
\item[Case $F_{0}\neq S\neq F_{n}$.] 
The only hard case is when the sequence~(\ref{eq:witnessFG}) is of one of the following forms:
$$
\begin{array}{ll} 
F_{0}\dots\step{x}R_{1}^{c}\step{w}S\step{y}\dots F_{n}, & \mbox{or}\\   
F_{0}\dots\step{y}S\step{w}R_{1}^{c}\step{x}\dots F_{n},\\   
\end{array}  
$$
where $x\in\sequencevars{\vec{x}}$ and $y\in\sequencevars{\vec{y}}$.
Then, $y\notin\keycl{F_{0}}{q'}$ and $x\notin\keycl{F_{0}}{q'}$. 
It follows $y\notin\keycl{F_{0}}{q}$ and $x\notin\keycl{F_{0}}{q}$, which implies that we can replace the subsequence $R_{1}^{c}\step{w}S$ (or $S\step{w}R_{1}^{c}$) with $R$ to obtain a witness for $F_{0}\attacks{q}F_{n}$.
\end{description}
It follows that every cycle in the attack graph of $q'$ is present in the attack graph of $q$ modulo a replacement of $S$ with $R$.

Assume that the attack graph of $q$ contains no strong cycle.
Let $C'$ be an elementary directed cycle in the attack graph of $q'$.
Let $C$ be the directed cycle in the attack graph of $q$ obtained from $C'$ by replacing $S$ with $R$. 
The attack cycle $C$ must be weak.
Then, the attack cycle $C'$ will be weak, because for every $F,G\in q\setminus\{R\}$,
\begin{itemize}
\item
if $\FD{q}\models\fd{\keyvars{F}}{\keyvars{G}}$, then $\FD{q'}\models\fd{\keyvars{F}}{\keyvars{G}}$;
\item
if $\FD{q}\models\fd{\keyvars{F}}{\keyvars{R}}$, then $\FD{q'}\models\fd{\keyvars{F}}{\keyvars{S}}$; and
\item
if $\FD{q}\models\fd{\keyvars{R}}{\keyvars{G}}$, then $\FD{q'}\models\fd{\keyvars{S}}{\keyvars{G}}$.
\end{itemize}
This concludes the proof.
\end{proof}

The proof of Lemma~\ref{lem:mostsimplified} is now straightforward.

\noindent
\begin{proof}[of Lemma~\ref{lem:mostsimplified}]
Apply the reductions of Lemmas~\ref{lem:simplify} and~\ref{lem:makesimple}.
Then repeatedly apply the reduction of Lemma~\ref{lem:reduction} until it can no longer be applied.
Notice that the reduction of Lemma~\ref{lem:reduction} consists in adding atoms of the form $T^{c}(\underline{x},z)$.
\end{proof}

\subsection{Proof of Lemma~\ref{lem:nonewstrongcycles}}
\label{subsec:preservation}


\noindent
\begin{proof}[of Lemma~\ref{lem:nonewstrongcycles}]
Assume that $k$, $x_{0},\dots,x_{k-1}$, $\vec{y}$, $q_{0}$, $q_{1}$ are as in Definition~\ref{def:resolve}.
Let $K=T(\underline{u},x_{0},\dots,x_{k-1},\vec{y})$.

Since the Markov cycle $\calC$ is premier,
we can assume an atom $F_{0}\in q$ with mode $i$ and $x\in\queryvars{q}$ such that $\keyvars{F_{0}}=\{x\}$ and $x\markovpatharg{q}x_{0}$ and $\FD{q}\models\fd{x_{0}}{x}$.

Assume that the attack graph of $q$ contains no strong cycle. 

\begin{sublemma}\label{sublem:zeroone}
$\FD{q_{0}\cup\consistent{q}}\cup\{\fd{u}{x_{0}},\fd{x_{0}}{u}\}\models\FD{q_{1}}$.
\end{sublemma}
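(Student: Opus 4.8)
The plan is to unfold $\FD{q_{1}}$ into its individual functional dependencies and derive each of them from $\Sigma\defeq\FD{q_{0}\cup\consistent{q}}\cup\{\fd{u}{x_{0}},\fd{x_{0}}{u}\}$ by a routine chase using the Armstrong rules (reflexivity, transitivity, and the union rule). By definition of $\FD{\cdot}$, the set $\FD{q_{1}}$ contains exactly the dependency $\fd{u}{\{u,x_{0},\dots,x_{k-1}\}\cup\sequencevars{\vec{y}}}$ arising from the atom $T(\underline{u},x_{0},\dots,x_{k-1},\vec{y})$, together with one dependency $\fd{x_{i}}{u}$ for each atom $U_{i}^{c}(\underline{x_{i}},u)$. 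By the union/decomposition rule it therefore suffices to derive from $\Sigma$ the following three families: (A) $\fd{u}{x_{i}}$ for every $i\in\{0,\dots,k-1\}$; (B) $\fd{u}{y}$ for every $y\in\sequencevars{\vec{y}}$; and (C) $\fd{x_{i}}{u}$ for every $i\in\{0,\dots,k-1\}$.

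The first step exploits that $\calC$ is a Markov cycle. For each $i$ (indices taken modulo $k$, with $\oplus$ denoting addition modulo $k$), the Markov edge $x_{i}\markovarg{q}x_{i\oplus 1}$ means $\FD{\clutch{x_{i}}{q}\cup\consistent{q}}\models\fd{x_{i}}{x_{i\oplus 1}}$. Since $\clutch{x_{i}}{q}\subseteq q_{0}$ and $\FD{\cdot}$ is monotone in its set of atoms, this lifts to $\FD{q_{0}\cup\consistent{q}}\models\fd{x_{i}}{x_{i\oplus 1}}$, hence $\Sigma\models\fd{x_{i}}{x_{i\oplus 1}}$. Composing these dependencies along the closed walk around $\calC$ by transitivity yields $\Sigma\models\fd{x_{i}}{x_{j}}$ for all $i,j$; that is, all the variables $x_{0},\dots,x_{k-1}$ are mutually equivalent under $\Sigma$.

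Adding the two extra dependencies $\fd{u}{x_{0}}$ and $\fd{x_{0}}{u}$ then places $u$ in the same equivalence class: composing $\fd{u}{x_{0}}$ with $\fd{x_{0}}{x_{i}}$ gives (A), and composing $\fd{x_{i}}{x_{0}}$ with $\fd{x_{0}}{u}$ gives (C). For (B), observe that every $y\in\sequencevars{\vec{y}}=\queryvars{q_{0}}\setminus\{x_{0},\dots,x_{k-1}\}$ occurs in some atom $F\in\clutch{x_{j}}{q}\subseteq q_{0}$; since $\keyvars{F}=\{x_{j}\}$, the dependency $\fd{x_{j}}{\atomvars{F}}\in\FD{q_{0}\cup\consistent{q}}$ gives $\Sigma\models\fd{x_{j}}{y}$, and composing with the instance $\fd{u}{x_{j}}$ from (A) yields $\fd{u}{y}$. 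Assembling (A), (B), and (C) reconstitutes every dependency of $\FD{q_{1}}$, which is the claim.

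Because the argument is a pure implication chase over functional dependencies, I do not expect a genuine obstacle. The only two points requiring care are the monotonicity of $\FD{\cdot}$ used to lift each Markov edge from $\clutch{x_{i}}{q}\cup\consistent{q}$ to the larger $q_{0}\cup\consistent{q}$, and the observation that traversing $\calC$ collapses all of $u,x_{0},\dots,x_{k-1}$ into a single FD-equivalence class, after which each $\vec{y}$-variable follows immediately from the key-to-variables dependency of whichever clutch atom introduces it.
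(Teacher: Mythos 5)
Your proof is correct and follows essentially the same route as the paper: decompose $\FD{q_{1}}$ into the dependencies $\fd{u}{z}$ for $z\in\queryvars{q_{0}}$ and $\fd{x_{i}}{u}$, derive the mutual equivalence of $x_{0},\dots,x_{k-1}$ from the Markov edges of $\calC$, and then route everything through $x_{0}$ via the two added dependencies. The paper compresses the Markov-edge/monotonicity/transitivity step into a single ``clearly,'' so your version merely spells out the same chase in more detail.
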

\begin{subproof}
$\FD{q_{1}}$ is logically equivalent to
$\{\fd{u}{z}\mid z\in\queryvars{q_{0}}\}\cup\{\fd{x_{i}}{u}\mid 0\leq i\leq k-1\}$.

Let $z\in\queryvars{q_{0}}$.
Clearly, for all $i,j\in\{0,\dots,k-1\}$, $\FD{q_{0}\cup\consistent{q}}\models\fd{x_{i}}{x_{j}}$.
It is then obvious that $\FD{q_{0}\cup\consistent{q}}\models\fd{x_{0}}{z}$.
Hence, $\FD{q_{0}\cup\consistent{q}}\cup\{\fd{u}{x_{0}},\fd{x_{0}}{u}\}\models\fd{u}{z}$.

Let $i\in\{0,\dots,k-1\}$.
As argued before, $\FD{q_{0}\cup\consistent{q}}\models\fd{x_{i}}{x_{0}}$.
Hence, $\FD{q_{0}\cup\consistent{q}}\cup\{\fd{u}{x_{0}},\fd{x_{0}}{u}\}\models\fd{x_{i}}{u}$.

It follows that every functional dependency of $\FD{q_{1}}$ is logically implied by $\FD{q_{0}\cup\consistent{q}}\cup\{\fd{u}{x_{0}},\fd{x_{0}}{u}\}$.
\end{subproof}

\begin{sublemma}\label{sublem:onezero}
$\FD{q_{1}}\models\FD{q_{0}}\cup\{\fd{u}{x_{0}},\fd{x_{0}}{u}\}$.
\end{sublemma}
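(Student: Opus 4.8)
The plan is to verify that each functional dependency on the right-hand side is logically implied by $\FD{q_1}$, by reading the generating dependencies directly off the atoms of $q_1$ and then closing under transitivity. First I would record what $\FD{q_1}$ contributes atom by atom. The atom $T(\underline{u},x_0,\dots,x_{k-1},\vec{y})$ has $\keyvars{T}=\{u\}$ and $\atomvars{T}=\{u\}\cup\queryvars{q_0}$, since by construction $\vec{y}$ lists exactly the variables of $\queryvars{q_0}$ lying outside $\{x_0,\dots,x_{k-1}\}$; hence $\FD{q_1}$ contains $\fd{u}{z}$ for every $z\in\queryvars{q_0}$. Each atom $U_i^c(\underline{x_i},u)$ has $\keyvars{U_i}=\{x_i\}$ and $\atomvars{U_i}=\{x_i,u\}$, so $\FD{q_1}$ contains $\fd{x_i}{u}$ for every $i\in\{0,\dots,k-1\}$.

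The two bridge dependencies are then immediate. Since $x_0\in\queryvars{q_0}$, the dependency $\fd{u}{x_0}$ is one of the $\fd{u}{z}$ just obtained, and $\fd{x_0}{u}$ is exactly the dependency generated by $U_0$. This settles $\{\fd{u}{x_0},\fd{x_0}{u}\}$.

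For $\FD{q_0}$ I would use that $q_0=\bigcup_{i=0}^{k-1}\clutch{x_i}{q}$ together with the hypothesis of Lemma~\ref{lem:nonewstrongcycles} that every atom of mode $i$ in $q$ is simple-key. Thus every $F\in q_0$ belongs to some $\clutch{x_i}{q}$, whence $\keyvars{F}=\{x_i\}$, and $\atomvars{F}\subseteq\queryvars{q_0}$ by the very definition of $q_0$. The dependency that $F$ contributes to $\FD{q_0}$ is therefore $\fd{x_i}{\atomvars{F}}$. Combining $\fd{x_i}{u}$ with the dependencies $\fd{u}{z}$ for each $z\in\atomvars{F}$ and applying transitivity yields $\fd{x_i}{\atomvars{F}}$. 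As this holds for every $F\in q_0$, all of $\FD{q_0}$ is implied, which completes the argument.

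This sublemma is simply the converse of Sublemma~\ref{sublem:zeroone}, and I expect the reasoning to be entirely routine; there is no genuine obstacle. The only points demanding a little care are matching each atom of $q_1$ to the dependency it generates and noting that $\queryvars{q_1}=\{u\}\cup\queryvars{q_0}$, so that $\vec{y}$ covers precisely the variables of $q_0$ other than the cycle variables. The reliance on the simple-key assumption to guarantee $\keyvars{F}=\{x_i\}$ is worth flagging, since without it the clutches $\clutch{x_i}{q}$ and the construction of $\resolve{\calC}{q}$ would not even be well defined.
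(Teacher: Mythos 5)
Your proof is correct and follows essentially the same route as the paper's: read off the generating dependencies of $\FD{q_{1}}$ from the $T$-atom and the $U_{i}^{c}$-atoms, note that $\fd{u}{x_{0}}$ and $\fd{x_{0}}{u}$ are among them, and obtain each $\fd{x_{i}}{\atomvars{F}}$ for $F\in q_{0}$ by transitivity through $u$. The extra care you take in justifying $\atomvars{T}=\{u\}\cup\queryvars{q_{0}}$ and in flagging the simple-key assumption is sound but does not change the argument.
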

\begin{subproof}
Obviously, $\FD{q_{1}}\models\fd{u}{x_{0}}$,  $\FD{q_{1}}\models\fd{x_{0}}{u}$, and for every $i\in\{0,\dots,k-1\}$, $\FD{q_{1}}\models\fd{x_{i}}{\queryvars{q_{0}}}$.
Every atom of $q_{0}$ is of the form $R(\underline{x_{i}},\vec{z})$ where $i\in\{0,\dots,k-1\}$ and $\sequencevars{\vec{z}}\subseteq\queryvars{q_{0}}$.
Since $\FD{q_{1}}\models\fd{x_{i}}{\queryvars{q_{0}}}$, we have $\FD{q_{1}}\models\fd{x_{i}}{\sequencevars{\vec{z}}}$.
\end{subproof}

Sublemmas~\ref{sublem:zeroone} and~\ref{sublem:onezero} immediately lead to the following results.

\begin{sublemma}\label{sublem:fdeqbis}
$\FD{q^{*}}\equiv\FD{q}\cup\{\fd{u}{x_{0}},\fd{x_{0}}{u}\}$.
\end{sublemma}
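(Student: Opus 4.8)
The plan is to derive this equivalence purely by propositional manipulation of the two preceding sublemmas, using only the fact that $\FD{\cdot}$ is additive over atoms. Concretely, since by definition $\FD{q}=\{\fd{\keyvars{F}}{\atomvars{F}}\mid F\in q\}$, for any atom sets $A$ and $B$ we have $\FD{A\cup B}=\FD{A}\cup\FD{B}$. Recalling from Definition~\ref{def:resolve} that $q^{*}=\formula{q\setminus q_{0}}\cup q_{1}$ and that $q=\formula{q\setminus q_{0}}\cup q_{0}$, I would first record the two decompositions $\FD{q^{*}}=\FD{q\setminus q_{0}}\cup\FD{q_{1}}$ and $\FD{q}=\FD{q\setminus q_{0}}\cup\FD{q_{0}}$, which reduce the claim to relating $\FD{q_{1}}$ to $\FD{q_{0}}\cup\{\fd{u}{x_{0}},\fd{x_{0}}{u}\}$ over the common part $\FD{q\setminus q_{0}}$.

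For the implication $\FD{q^{*}}\models\FD{q}\cup\{\fd{u}{x_{0}},\fd{x_{0}}{u}\}$, I would invoke Sublemma~\ref{sublem:onezero}, namely $\FD{q_{1}}\models\FD{q_{0}}\cup\{\fd{u}{x_{0}},\fd{x_{0}}{u}\}$, and adjoin $\FD{q\setminus q_{0}}$ to both sides; the right-hand side then becomes exactly $\FD{q}\cup\{\fd{u}{x_{0}},\fd{x_{0}}{u}\}$ by the second decomposition, while the left-hand side is $\FD{q^{*}}$ by the first.

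For the converse, I would first note the trivial containment $q_{0}\cup\consistent{q}\subseteq q$ (both $q_{0}$ and $\consistent{q}$ are subsets of $q$), so that $\FD{q}\models\FD{q_{0}\cup\consistent{q}}$ by monotonicity. Feeding this into Sublemma~\ref{sublem:zeroone}, which states $\FD{q_{0}\cup\consistent{q}}\cup\{\fd{u}{x_{0}},\fd{x_{0}}{u}\}\models\FD{q_{1}}$, yields $\FD{q}\cup\{\fd{u}{x_{0}},\fd{x_{0}}{u}\}\models\FD{q_{1}}$. Combining this with the obvious $\FD{q}\models\FD{q\setminus q_{0}}$ and the decomposition of $\FD{q^{*}}$, I conclude $\FD{q}\cup\{\fd{u}{x_{0}},\fd{x_{0}}{u}\}\models\FD{q^{*}}$, which together with the previous paragraph establishes the equivalence.

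There is no genuine obstacle here, only bookkeeping: one must check that $\consistent{q}$ is subsumed by $\FD{q}$ (which holds because $\FD{\cdot}$ ignores the mode of an atom and $\consistent{q}\subseteq q$), and that the fresh variable $u$ occurs in none of $\FD{q\setminus q_{0}}$ or $\FD{q_{0}}$, so that adjoining $\FD{q\setminus q_{0}}$ to both sides of an implication concerning $u$, $x_{0}$, and the variables of $q_{0}$ is harmless. No induction or attack-graph reasoning is required; the entire content is carried by Sublemmas~\ref{sublem:zeroone} and~\ref{sublem:onezero}.
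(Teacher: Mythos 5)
Your proof is correct and follows exactly the route the paper intends: the paper states that Sublemmas~\ref{sublem:zeroone} and~\ref{sublem:onezero} ``immediately lead to'' this result, and your argument is precisely the bookkeeping behind that remark, via the decompositions $\FD{q^{*}}=\FD{q\setminus q_{0}}\cup\FD{q_{1}}$ and $\FD{q}=\FD{q\setminus q_{0}}\cup\FD{q_{0}}$ together with monotonicity of $\FD{\cdot}$ and the inclusion $q_{0}\cup\consistent{q}\subseteq q$.
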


\begin{sublemma}\label{sublem:fdeq}
For every $F\in q\setminus q_{0}$ such that $F$ has mode~$i$,
we have $\FD{q^{*}\setminus\{F\}}\equiv\FD{q\setminus\{F\}}\cup\{\fd{x_{0}}{u}$, $\fd{u}{x_{0}}\}$.
\end{sublemma}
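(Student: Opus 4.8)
The plan is to reduce the claimed equivalence entirely to the two preceding sublemmas by decomposing each FD-set along the partition $q^{*}=\formula{q\setminus q_{0}}\cup q_{1}$. First I would record the membership facts that make the decomposition clean: since $F$ has mode $i$ while $q_{0}$ consists of the simple-key mode-$i$ atoms with key in $\{x_{0},\dots,x_{k-1}\}$ and $\consistent{q}$ consists of the mode-$c$ atoms, we have $F\in q\setminus q_{0}$, $F\notin q_{1}$ (whose relation names are fresh), and $\consistent{q}\subseteq\formula{q\setminus\{F\}}\setminus q_{0}$. Writing $P\defeq\FD{\formula{q\setminus\{F\}}\setminus q_{0}}$ for the common part and using that the definition $\FD{S}=\{\fd{\keyvars{G}}{\atomvars{G}}\mid G\in S\}$ distributes over unions of atom sets, I get
$$\FD{q^{*}\setminus\{F\}}=P\cup\FD{q_{1}},\qquad \FD{q\setminus\{F\}}=P\cup\FD{q_{0}}.$$
Thus the goal becomes $P\cup\FD{q_{1}}\equiv P\cup\FD{q_{0}}\cup\{\fd{x_{0}}{u},\fd{u}{x_{0}}\}$.

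For the left-to-right implication, Sublemma~\ref{sublem:onezero} gives $\FD{q_{1}}\models\FD{q_{0}}\cup\{\fd{u}{x_{0}},\fd{x_{0}}{u}\}$, and adjoining the shared set $P$ to both sides yields the desired containment. For the reverse implication it suffices to show that the right-hand set implies $\FD{q_{1}}$, since $P$ is common to both sides. Here I would use $\FD{\consistent{q}}\subseteq P$: the right-hand set then contains $\FD{q_{0}}\cup\FD{\consistent{q}}\cup\{\fd{x_{0}}{u},\fd{u}{x_{0}}\}=\FD{q_{0}\cup\consistent{q}}\cup\{\fd{u}{x_{0}},\fd{x_{0}}{u}\}$, which implies $\FD{q_{1}}$ verbatim by Sublemma~\ref{sublem:zeroone}. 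Combining the two directions gives the equivalence.

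No genuinely new functional-dependency reasoning is needed; everything is inherited from Sublemmas~\ref{sublem:zeroone} and~\ref{sublem:onezero}, which is why the paper states the result follows immediately. The only step that warrants care—and the one most likely to hide a slip—is the set-algebra bookkeeping: verifying that deleting $F$ disturbs neither $q_{0}$ nor $q_{1}$ (so that $q^{*}\setminus\{F\}=\formula{\formula{q\setminus\{F\}}\setminus q_{0}}\cup q_{1}$), and confirming both $\consistent{q}\subseteq P$ and $\FD{q_{0}\cup\consistent{q}}=\FD{q_{0}}\cup\FD{\consistent{q}}$ so that Sublemma~\ref{sublem:zeroone} can be invoked without change. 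Once these containments are pinned down, the equivalence is a two-line consequence of the distributivity of $\FD{\cdot}$ over $P$.
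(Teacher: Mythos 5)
Your proof is correct and is exactly the derivation the paper intends when it says Sublemmas~\ref{sublem:zeroone} and~\ref{sublem:onezero} ``immediately lead to'' this result: decompose both FD-sets along $q^{*}\setminus\{F\}=\formula{\formula{q\setminus\{F\}}\setminus q_{0}}\cup q_{1}$, use Sublemma~\ref{sublem:onezero} for one direction and Sublemma~\ref{sublem:zeroone} (together with $\consistent{q}\subseteq\formula{q\setminus\{F\}}\setminus q_{0}$, which holds because $F$ has mode~$i$) for the other. The bookkeeping you flag is handled correctly, so nothing further is needed.
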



\begin{sublemma}\label{sublem:uw}
For every $F\in q\setminus q_{0}$ such that $F$ has mode~$i$,  we have $\keycl{F}{q}=\keycl{F}{q^{*}}\setminus\{u\}$.
\end{sublemma}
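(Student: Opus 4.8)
The plan is to reduce the claim to a purely combinatorial statement about attribute closures of functional dependencies, and to control the passage from $q$ to $q^{*}$ by invoking Sublemma~\ref{sublem:fdeq}. First I would dispose of the consistent atoms. Since $F$ has mode $i$, we have $F\notin\consistent{q}$ and $F\notin\consistent{q^{*}}$, so $\consistent{q}\subseteq q\setminus\{F\}$ and $\consistent{q^{*}}\subseteq q^{*}\setminus\{F\}$; hence the mode-aware definition of key closure collapses to
\[
\keycl{F}{q}=\{x\in\queryvars{q}\mid\FD{q\setminus\{F\}}\models\fd{\keyvars{F}}{x}\},\qquad
\keycl{F}{q^{*}}=\{x\in\queryvars{q^{*}}\mid\FD{q^{*}\setminus\{F\}}\models\fd{\keyvars{F}}{x}\}.
\]
Second, since $u$ is the fresh variable introduced by $\resolve{\calC}{q}$, we have $\queryvars{q^{*}}=\queryvars{q}\cup\{u\}$ with $u\notin\queryvars{q}$; in particular $u\notin\keyvars{F}$ and, because $x_{0}\in\queryvars{q}$, the variable $u$ occurs in no dependency of $\FD{q\setminus\{F\}}$.

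By Sublemma~\ref{sublem:fdeq}, $\FD{q^{*}\setminus\{F\}}\equiv\FD{q\setminus\{F\}}\cup\{\fd{x_{0}}{u},\fd{u}{x_{0}}\}$. Writing $\Sigma\defeq\FD{q\setminus\{F\}}$ and $\Sigma^{*}\defeq\Sigma\cup\{\fd{x_{0}}{u},\fd{u}{x_{0}}\}$, it therefore suffices to prove that for every $x\in\queryvars{q}$,
\[
\Sigma^{*}\models\fd{\keyvars{F}}{x}\iff\Sigma\models\fd{\keyvars{F}}{x}.
\]
The direction $\Leftarrow$ is immediate from $\Sigma\subseteq\Sigma^{*}$; together with $u\notin\queryvars{q}$ (so that $u\notin\keycl{F}{q}$) this yields the inclusion $\keycl{F}{q}\subseteq\keycl{F}{q^{*}}\setminus\{u\}$.

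For the reverse inclusion I would argue with the attribute-closure characterization of functional-dependency implication underlying Definition~\ref{def:sqp}. Let $C$ and $C^{*}$ be the closures of $\keyvars{F}$ under $\Sigma$ and $\Sigma^{*}$, so that $\Sigma\models\fd{\keyvars{F}}{x}$ iff $x\in C$, and likewise for $C^{*}$. It suffices to show $C^{*}\cap\queryvars{q}=C$, since then any $x\in\keycl{F}{q^{*}}$ with $x\neq u$ lies in $C^{*}\cap\queryvars{q}=C$, giving $x\in\keycl{F}{q}$. The inclusion $C\subseteq C^{*}$ is clear. For $C^{*}\cap\queryvars{q}\subseteq C$, I would run the closure construction for $C^{*}$ and observe that $u$ is inert: because $u\notin\keyvars{F}$, the variable $u$ can enter the closure only through $\fd{x_{0}}{u}$, hence only after $x_{0}$ has already been added; and the only dependency whose left-hand side mentions $u$, namely $\fd{u}{x_{0}}$, has right-hand side $x_{0}$, which is therefore already present whenever $u$ is. As $u$ appears in no dependency of $\Sigma$ by freshness, deleting $u$ from the construction leaves precisely the closure of $\keyvars{F}$ under $\Sigma$. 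A routine induction on the steps of the closure algorithm makes this precise and yields $C^{*}\cap\queryvars{q}=C$. Combining the two inclusions gives $\keycl{F}{q^{*}}\setminus\{u\}=\keycl{F}{q}$.

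The main obstacle is exactly this inductive verification that $u$ is inert, i.e.\ that no variable of $\queryvars{q}$ becomes derivable only via the detour through $u$. This is the standard fact that augmenting a dependency set by a two-way equivalence $x_{0}\leftrightarrow u$ with $u$ fresh does not change closures restricted to the old variables, but it has to be carried out carefully, using both that $u$ is absent from $\Sigma$ and that the unique non-trivial consequence of $u$ is $x_{0}$, which is also its sole antecedent.
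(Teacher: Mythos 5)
Your proposal is correct and follows essentially the same route as the paper: both inclusions are derived from Sublemma~\ref{sublem:fdeq}, with the forward inclusion using $u\notin\atomvars{q}$ and the reverse inclusion resting on the observation that the added dependencies $\fd{x_{0}}{u}$ and $\fd{u}{x_{0}}$ are useless in the closure computation except for deriving $u$ itself from $x_{0}$. The paper states this inertness of $u$ as a one-line observation where you spell out the closure-algorithm induction, but the argument is the same.
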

\begin{subproof}
Let $F\in q\setminus q_{0}$ such that the mode of $F$ is $i$.
From Sublemma~\ref{sublem:fdeq}, it follows that $\keycl{F}{q}\subseteq\keycl{F}{q^{*}}$.
Since $u\not\in\atomvars{q}$, it follows $\keycl{F}{q}\subseteq\keycl{F}{q^{*}}\setminus\{u\}$.

The inclusion $\keycl{F}{q^{*}}\setminus \{u\}\subseteq\keycl{F}{q}$ follows from Sublemma~\ref{sublem:fdeq} and
the observation that in the computation of $\keycl{F}{q^{*}}$, the functional dependencies $\fd{x_{0}}{u}$ and $\fd{u}{x_{0}}$ are useless, 
except for inferring $u\in\keycl{F}{q^{*}}$ from $x_{0}\in\keycl{F}{q^{*}}$.
\end{subproof}

All $U_{i}$-atoms have mode $c$ and hence have no outgoing attacks in the attack graph of $q^{*}$.
The following lemma states that all attacks among atoms of $q\setminus q_{0}$ in the attack graph of $q^{*}$ are also present in the attack graph of $q$.

\begin{sublemma}\label{sublem:lunatic}
For all $F,G\in q\setminus q_{0}$, if $F\attacks{q^{*}}G$, then  $F\attacks{q}G$.
\end{sublemma}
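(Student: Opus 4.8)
The statement to prove is: for all $F, G \in q \setminus q_0$, if $F \attacks{q^*} G$, then $F \attacks{q} G$. This is the analog, for the Markov-cycle dissolution $q^* = \resolve{\calC}{q}$, of Sublemma in Lemma~\ref{lem:makesimple} (Item~\ref{it:bintwo}), and the overall goal (Lemma~\ref{lem:nonewstrongcycles}) is to show that dissolving a premier Markov cycle introduces no strong cycle in the attack graph.

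Let me think about what the attack graph of $q^*$ looks like compared to $q$. The query $q^*$ replaces $q_0 = \bigcup_{i} \clutch{x_i}{q}$ with $q_1 = \{T(\underline{u}, x_0, \dots, x_{k-1}, \vec{y})\} \cup \{U_i^c(\underline{x_i}, u)\}_{i=0}^{k-1}$. Here $T$ has mode $i$ and the $U_i$ have mode $c$. The key facts I have available are Sublemma~\ref{sublem:fdeqbis} ($\FD{q^*} \equiv \FD{q} \cup \{\fd{u}{x_0}, \fd{x_0}{u}\}$) and Sublemma~\ref{sublem:uw} ($\keycl{F}{q} = \keycl{F}{q^*} \setminus \{u\}$ for mode-$i$ atoms $F \in q \setminus q_0$).

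Now, the approach. Suppose $F \attacks{q^*} G$ with $F, G \in q \setminus q_0$. I would take a witness for this attack in $q^*$, say $F_0 \step{z_1} F_1 \step{z_2} \cdots \step{z_n} F_n$ with $F_0 = F$, $F_n = G$, where each $z_i \in \atomvars{F_{i-1}} \cap \atomvars{F_{i}}$ and $z_i \notin \keycl{F}{q^*}$. Since the $U_i^c$ atoms have mode $c$ and hence no outgoing attacks, none of them can appear as an intermediate $F_j$ with $j < n$; and since $F, G \in q \setminus q_0$, neither endpoint is a $U_i$ or $T$ atom. So the only atom of $q_1$ that can occur in the witness is $T$, and it must occur at interior positions. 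The plan is to take such a witness and transform it into a witness over $q$: wherever the witness passes through $T$ (entering via some variable and leaving via another), I must reroute through the atoms of $q_0$ that were dissolved, re-establishing a genuine $q$-witness. Throughout, I would invoke Sublemma~\ref{sublem:uw} to translate the condition $z_i \notin \keycl{F}{q^*}$ (with $z_i \neq u$, since $u \notin \queryvars{q}$) into $z_i \notin \keycl{F}{q}$, which is exactly what a $q$-witness requires.

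The main obstacle is the rerouting through $T$. The atom $T$ carries all the variables $x_0, \dots, x_{k-1}$ together with $\vec{y} = \queryvars{q_0} \setminus \{x_0, \dots, x_{k-1}\}$, so $\atomvars{T} = \queryvars{q_0}$, whereas in $q_0$ those same variables are spread across the atoms $\clutch{x_i}{q}$, each keyed on a single $x_i$. When the witness in $q^*$ enters $T$ through a variable $z \in \atomvars{F_{j-1}} \cap \atomvars{T}$ and exits through $z' \in \atomvars{T} \cap \atomvars{F_{j+1}}$, I need to produce, in $q$, a sub-path connecting an atom containing $z$ to an atom containing $z'$ using only shared variables outside $\keycl{F}{q}$. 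The crucial tool here should be Lemma~\ref{lem:witness}: since $T$ is keyed on $u$ and $\FD{q^*} \models \fd{u}{\queryvars{q_0}}$, which by Sublemma~\ref{sublem:fdeqbis} corresponds in $q$ to $\FD{q_0 \cup \consistent{q}} \models \fd{x_0}{z}$ for any $z \in \queryvars{q_0}$, I can feed these functional-dependency derivations into Lemma~\ref{lem:witness} to manufacture the required intermediate paths through $q_0$, whose internal shared variables avoid $\keycl{F}{q}$. Care is needed because a witness edge touching $u$ can only arise at $T$'s key position, and $u \notin \keycl{F}{q^*}$ is possible; I would use the fact that $\FD{q} \models \fd{x_0}{u}$-translated derivations (via Lemma~\ref{lem:backpropagation} or Lemma~\ref{lem:witness}) let me replace any passage through $u$ at $T$ by a passage through $x_0$ in $q_0$, which is a genuine variable of $q$. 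Assembling these rerouted segments yields a full witness $\widehat{F_0} \step{} \cdots \step{} \widehat{F_n}$ for $F \attacks{q} G$, completing the proof.
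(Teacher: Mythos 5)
Your plan follows essentially the same route as the paper's proof: take a witness for $F\attacks{q^{*}}G$, reduce to the case where $T$ is the only atom of $q_{1}$ occurring in it, use Sublemma~\ref{sublem:uw} to transfer the condition $z_{i}\notin\keycl{F}{q^{*}}$ into $z_{i}\notin\keycl{F}{q}$, and splice out the passage through $T$ by a detour through $q_{0}$ manufactured from the functional dependencies $\FD{q_{0}\cup\consistent{q}}\models\fd{x_{i}}{x_{j}}$ via Lemma~\ref{lem:backpropagation} (the paper enters $T$ at $z_{\ell}$, derives $F\attacks{q}x_{i}$ for the variable $x_{i}$ keying a $q_{0}$-atom containing the exit variable $z_{\ell+1}$, and then continues with the witness suffix). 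One supporting claim in your plan is wrong, though the step it supports survives: you exclude the $U_{i}^{c}$ atoms from the interior of the witness on the grounds that mode-$c$ atoms have no outgoing attacks, but having no outgoing attacks only prevents an atom from being the \emph{source} of an attack, not from occurring as an interior atom of someone else's witness (the proof of Lemma~\ref{lem:makesimple} exhibits exactly such a witness passing through $R_{1}^{c}$). The correct reason, which the paper invokes as ``because of the structure of $q_{1}$,'' is that $\atomvars{U_{i}^{c}}=\{x_{i},u\}\subseteq\atomvars{T}$, so any occurrence of $U_{i}^{c}$ in a witness can be bypassed by routing through $T$ directly; with that repair your argument matches the paper's.
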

\begin{subproof}
Let  $F,G\in q\setminus q_{0}$ such that $F\attacks{q^{*}}G$.
Then, we can assume a witness for $F\attacks{q^{*}}G$ of the following form:
\begin{equation}\label{eq:witness}
H_{0}\step{z_{1}}H_{1}
     \step{z_{2}}H_{2}\dots
		\step{z_{n}}H_{n},
\end{equation}
where $H_{0}=F$ and $H_{n}=G$.
We can assume without loss of generality that $1\leq i<j\leq n$ implies $z_{i}\neq z_{j}$,
and that $0\leq i<j\leq n$ implies $H_{i}\neq H_{j}$. 
Since $\keycl{H_{0}}{q}\subseteq\keycl{H_{0}}{q^{*}}$ by Sublemma~\ref{sublem:uw},
it follows that $\{z_{1},\dots,z_{n}\}\cap\keycl{H_{0}}{q}=\emptyset$.

If the sequence~(\ref{eq:witness}) contains no atom of $q_{1}$, then it is also a witness for $F\attacks{q}G$, and the desired result holds.
In the remainder, assume that the sequence~(\ref{eq:witness}) contains an atom of $q_{1}$.
Because of the structure of $q_{1}$, we can assume without loss of generality that $K$ is the only atom of $q_{1}$ that occurs in the sequence~(\ref{eq:witness}).
So we can assume $\ell\in\{1,\dots,n-1\}$ such that $H_{\ell}=K$.
Clearly, $z_{\ell},z_{\ell+1}\in\queryvars{q_{0}}$ and by Sublemma~\ref{sublem:uw}, $z_{\ell}, z_{\ell+1}\not\in\keycl{F}{q}$.  

For the variable $z_{\ell+1}$, there exists some $i\in\{0,\dots,k-1\}$ such that either $z_{\ell+1}=x_{i}$ or the atom $R(\underline{x_{i}}, z_{\ell+1})$ belongs to $q_{0}$. 
Since $\FD{q_{0}\cup\consistent{q}}\models\fd{x_{i}}{x_{j}}$ for all $i,j\in\{0,\dots,k-1\}$, it follows $\FD{q\setminus{\{F\}}}\models\fd{x_{i}}{z_{\ell}}$.
From $F\attacks{q}z_{\ell}$, it follows $F\attacks{q}x_{i}$ by Lemma~\ref{lem:backpropagation},
and hence $F\attacks{q}z_{\ell+1}$.
It can then be easily seen that there exists a witness for $F\attacks{q}G$.
\end{subproof}

We finally focus on attacks in the attack graph of $q^{*}$ that involve the atom $K$.

\begin{sublemma}\label{sublem:atuw}
For every $H\in q^{*}$,
if $H\attacks{q^{*}}K$,
then $H\in q\setminus q_{0}$, and both $\FD{q}\models\fd{\keyvars{F_{0}}}{\keyvars{H}}$ and $\FD{q}\models\fd{\keyvars{H}}{\keyvars{F_{0}}}$.
\end{sublemma}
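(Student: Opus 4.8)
The plan is to derive the two functional dependencies from a single \emph{weak-cycle} argument, after first relocating the attack into $q$ and placing $H$ inside the initial strong component that hosts $F_{0}$. I would begin with the membership claim $H\in q\setminus q_{0}$. The atoms of $q^{*}$ are those of $q\setminus q_{0}$ together with $K$ and the atoms $U_{0}^{c},\dots,U_{k-1}^{c}$; the latter have mode $c$ and, as already observed, have no outgoing attacks, while $K$ cannot attack itself. Hence any $H$ with $H\attacks{q^{*}}K$ lies in $q\setminus q_{0}$, and since every mode-$i$ atom of $q$ is simple-key, $\keyvars{H}$ is a singleton $\{h\}$. If $H=F_{0}$ the two dependencies are trivial, so I assume $H\neq F_{0}$ from here on.

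The central reduction is to establish \emph{Claim~A}: $H\attacks{q}F_{0}$, an attack in $q$ rather than merely in $q^{*}$. Granting Claim~A the conclusion follows quickly. By the premier hypothesis (Definition~\ref{def:markov}), $F_{0}$ belongs to an initial strong component $\Sigma$ of the attack graph of $q$; an edge $H\attacks{q}F_{0}$ entering $\Sigma$ from outside is impossible since $\Sigma$ has no predecessor, so $H\in\Sigma$. As $\Sigma$ is strongly connected and $H\neq F_{0}$, there is a directed cycle through both $H$ and $F_{0}$, and since the attack graph of $q$ has no strong cycle this cycle is weak. Chaining the defining inclusions of its weak attacks along the two arcs between $H$ and $F_{0}$, and using transitivity of functional dependencies, yields both $\FD{q}\models\fd{\keyvars{F_{0}}}{\keyvars{H}}$ and $\FD{q}\models\fd{\keyvars{H}}{\keyvars{F_{0}}}$, which are exactly the desired statements.

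To prove Claim~A I would first show that the witness into $K$ must already ``see'' a cycle variable. From a witness $H=H_{0}\step{z_{1}}\dots\step{z_{n}}K$ we have $z_{n}\in\atomvars{K}$ and $z_{n}\notin\keycl{H}{q^{*}}$. If any of $u,x_{0},\dots,x_{k-1}$ lay in $\keycl{H}{q^{*}}$, then through the dependency $\fd{x_{i}}{u}$ carried by $U_{i}^{c}$ and the dependency $\fd{u}{\atomvars{K}}$ carried by $K$ we would obtain $\atomvars{K}\subseteq\keycl{H}{q^{*}}$, contradicting $z_{n}\notin\keycl{H}{q^{*}}$. In particular $x_{0}\notin\keycl{H}{q^{*}}$, so appending $K\step{x_{0}}R(\underline{x_{0}})$ to the witness shows $H\attacks{q^{*}}x_{0}$. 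It then remains to transport this attack back to $q$ and redirect it at $F_{0}$: I would convert the $q^{*}$-witness into a $q$-witness for $H\attacks{q}x$, rerouting every excursion through $q_{1}$ into the clutch atoms of $q_{0}\subseteq q$ that realise the same Markov connections, using Sublemma~\ref{sublem:uw} (so that $\keycl{H}{q}=\keycl{H}{q^{*}}\setminus\{u\}$ keeps the rerouted path outside $\keycl{H}{q}$) and Lemma~\ref{lem:backpropagation} to pull the target from $x_{0}$ to $x$ along $\FD{q}\models\fd{x}{x_{0}}$. Finally, since $x\in\keyvars{F_{0}}$ and $x\notin\keycl{H}{q}$, I replace the last step $\cdot\step{x}R(\underline{x})$ by $\cdot\step{x}F_{0}$ to get $H\attacks{q}F_{0}$.

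The hard part is precisely this last conversion. The cascade giving $H\attacks{q^{*}}x_{0}$ and the weak-cycle argument are routine, but faithfully rerouting the witness from $q^{*}$ to $q$ — showing that the $q_{0}$ atoms supply a path with the same entry and exit variables, all avoiding $\keycl{H}{q}$ — is delicate and mirrors the bookkeeping already carried out in Sublemma~\ref{sublem:lunatic}. The genuine subtlety is that the Markov-path variables linking $x$ to $x_{0}$ must themselves be kept out of $\keycl{H}{q}$, and it is here that the premier hypothesis together with Lemma~\ref{lem:backpropagation} do the real work.
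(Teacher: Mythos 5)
Your argument reproduces only one half of the paper's proof, and the half you give rests on a misapplication of Lemma~\ref{lem:backpropagation}. That lemma lets you pull an attack from $z$ back to $x$ only under the hypothesis $\FD{q\setminus\{H\}}\models\fd{x}{z}$ — the functional dependency must be derivable \emph{without} the attacking atom — whereas you invoke it ``along $\FD{q}\models\fd{x}{x_{0}}$.'' The two are not interchangeable here: the dependency $\fd{x}{x_{0}}$ is realized by the clutch atoms of the variables on the Markov path from $x$ to $x_{0}$, and $H$ may be one of those clutch atoms (it need not lie in $q_{0}$, since $q_{0}$ only collects the clutches of the cycle variables $x_{0},\dots,x_{k-1}$, not of the intermediate variables on the path from $x$). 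For such an $H$ the back-propagation step has no justification, and indeed your Claim~A ($H\attacks{q}F_{0}$, hence $H$ in the initial strong component) is exactly what one should \emph{not} try to prove in that situation.

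The paper avoids this by splitting on whether $\keyvars{H}\subseteq M$, where $M$ is the set of variables on the Markov path from $x$ to $x_{0}$. When $\keyvars{H}\subseteq M$, no attack argument is used at all: $\FD{q}\models\fd{x}{\keyvars{H}}$ because each variable on a Markov path determines the next, $\FD{q}\models\fd{\keyvars{H}}{x_{0}}$ by continuing along the path, and $\FD{q}\models\fd{x_{0}}{x}$ by the premier condition, which yields both required dependencies directly. Only when $\keyvars{H}\not\subseteq M$ does one get $\FD{q\setminus\{H\}}\models\fd{x}{z}$ for all $z\in\queryvars{q_{0}}$ (since $H$, being simple-key with key outside $M$ and outside $q_{0}$, cannot occur in the relevant sequential proof), and only then does Lemma~\ref{lem:backpropagation} apply to give $H\attacks{q}x$, $H\attacks{q}F_{0}$, and the weak-cycle conclusion you describe. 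Your treatment of the second case, including the observation that none of $u,x_{0},\dots,x_{k-1}$ can lie in $\keycl{H}{q^{*}}$ and the rerouting of the witness in the style of Sublemma~\ref{sublem:lunatic}, is sound; what is missing is the entire first case.
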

\begin{subproof}
Let $H\in q^{*}$ such that $H\attacks{q^{*}}K$.
Since $U_{i}$-atoms have no outgoing attacks in the attack graph of $q^{*}$,
it must be the case that $H\in q\setminus q_{0}$.
The Markov graph of $q$ contains a directed path from $x$ to  $x_{0}$ (recall $\{x\}=\keyvars{F_{0}}$);
let $M$ be the set of variables on this path. 
We now distinguish two cases. 
\begin{itemize}
\item
If  $\keyvars{H}\subseteq M$, then clearly $\FD{q}\models\fd{\keyvars{F_{0}}}{\keyvars{H}}$.
Since $\FD{q}\models\fd{\keyvars{H}}{x_{0}}$ and $\FD{q}\models\fd{x_{0}}{\keyvars{F_{0}}}$, we obtain $\FD{q}\models\fd{\keyvars{H}}{\keyvars{F_{0}}}$. 
\item
Otherwise, $\FD{q\setminus\{H\}}\models\fd{\keyvars{F_{0}}}{z}$ for every $z\in\queryvars{q_{0}}$.
Since $H\attacks{q^{*}}K$, it must be that $H \attacks{q} z$ for some $z\in\queryvars{q_{0}}$. 
Then, $H\attacks{q}x$ by Lemma~\ref{lem:backpropagation}, and consequently $H \attacks{q}F_{0}$.
Then, it must be the case that $H$ belongs to the initial strong component of the attack graph of $q$ that also contains $F_{0}$.
Since the attack graph of $q$ contains no strong cycle, we have $\FD{q}\models\fd{\keyvars{F_{0}}}{\keyvars{H}}$ and $\FD{q}\models\fd{\keyvars{H}}{\keyvars{F_{0}}}$.
\end{itemize}
This concludes the proof of Sublemma~\ref{sublem:atuw}.
\end{subproof}

We can now complete the proof of Lemma~\ref{lem:nonewstrongcycles}.
Assume towards a contradiction that the attack graph of $q^{*}$ contains a strong cycle. 
By Lemma~\ref{lem:cycletwo}, the attack graph of $q^{*}$ contains a strong cycle of size~2.
So we can assume atoms $H_{0}, H_{1}\in q^{*}$ such that $H_{0}\attacks{q^{*}}H_{1}\attacks{q^{*}}H_{0}$, and at least one of the attacks is strong.
\begin{description}
\item[Case $H_{0}, H_{1} \in q \setminus q_{0}$.] 
By Sublemma~\ref{sublem:lunatic}, $H_{0}\attacks{q}H_{1}\attacks{q}H_{0}$.
Since the attack graph of $q$ contains no strong attack cycles,
we have $\FD{q}\models\fd{\keyvars{H_{0}}}{\keyvars{H_{1}}}$ and  $\FD{q}\models\fd{\keyvars{H_{0}}}{\keyvars{H_{1}}}$.
From Sublemma~\ref{sublem:fdeqbis}, it follows $\FD{q^{*}}\models\fd{\keyvars{H_{0}}}{\keyvars{H_{1}}}$ and  $\FD{q^{*}}\models\fd{\keyvars{H_{0}}}{\keyvars{H_{1}}}$,
contradicting that $H_{0}\attacks{q^{*}}H_{1}\attacks{q^{*}}H_{0}$ is a strong attack cycle.
\item[Case $H_{0} = K$ (the case $H_{1} = K$ is symmetrical).]
Then, $\keyvars{H_{0}}=\{u\}$. 
By Sublemma~\ref{sublem:atuw}, $H_{1} \in q \setminus q_{0}$, and both $\FD{q}\models\fd{\keyvars{F_{0}}}{\keyvars{H_{1}}}$ and $\FD{q}\models\fd{\keyvars{H_{1}}}{\keyvars{F_{0}}}$.
From Sublemma~\ref{sublem:fdeqbis} and $\FD{q}\models\fd{x_{0}}{\keyvars{F_{0}}}$, it follows $\FD{q^{*}}\models\fd{u}{\keyvars{H_{1}}}$.
From Sublemma~\ref{sublem:fdeqbis} and $\FD{q}\models\fd{\keyvars{F_{0}}}{x_{0}}$ (because there is a Markov path from $x$ to $x_{0}$), it follows $\FD{q^{*}}\models\fd{\keyvars{H_{1}}}{u}$.
But then $H_{0}\attacks{q^{*}}H_{1}\attacks{q^{*}}H_{0}$ is a weak attack cycle, a contradiction.
\end{description}
In both cases, we conclude by contradiction that the attack graph of $q^{*}$ contains no strong attack cycle.
\end{proof}

\subsection{Proof of Lemma~\ref{lem:markov_cycle}}
\label{sec:markov_exists}

We use the following helping lemma.

\begin{lemma}\label{lem:basic-step}
Let $q$ be a self-join-free Boolean conjunctive query such that
\begin{itemize}
\item
for every atom $F\in q$, if $F$ has mode $i$,
then $F$ is simple-key and $\keyvars{F}\neq\emptyset$; 
\item
$q$ is saturated; and
\item
the attack graph of $q$ contains no strong cycle.
\end{itemize}
Let $F_{0}$ be an atom of $q$ that belongs to an initial strong component of the attack graph of $q$, and let $\keyvars{F_{0}}=\{y\}$. 
Let $x\in\atomvars{q}$ such that $\FD{q}\models\fd{x}{y}$ and $\FD{q}\models\fd{y}{x}$.
Then, there exists $z\in\queryvars{q}$ with $\clutch{z}{q}\neq\emptyset$ such that  $x\markov z$ and $\FD{q}\models\fd{z}{y}$.
\end{lemma}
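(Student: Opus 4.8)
The plan is to phrase everything in terms of the functional-equivalence class $E\defeq\{w\in\queryvars{q}\mid\FD{q}\models\fd{x}{w}\text{ and }\FD{q}\models\fd{w}{x}\}$, which contains both $x$ and $y$. I would first reduce the statement to the claim that there is a clutch variable $z\in E$ with $z\neq x$ and $x\markov z$. This is equivalent to the lemma: if $x\markov z$ then $\FD{q}\models\fd{x}{z}$, hence $\FD{q}\models\fd{y}{z}$ by $\FD{q}\models\fd{y}{x}$, and $z\in E$ additionally gives $\FD{q}\models\fd{z}{y}$, which is the required dependency; conversely, any $z$ obtained with $x\markov z$ and $\FD{q}\models\fd{z}{y}$ automatically lies in $E$. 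So the whole task is to produce a single Markov successor of $x$ that is a clutch variable equivalent to $y$.

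The key tool is a saturation-plus-attack argument producing clutch variables inside $E$. Suppose $w\in E$ with $\FD{\consistent{q}}\not\models\fd{x}{w}$. Applying saturation to the pair $(x,w)$ yields an atom $F\in q$ with $\FD{q}\models\fd{x}{\keyvars{F}}$ such that $F\attacks{q}x$ or $F\attacks{q}w$. Since $\FD{q}\models\fd{w}{x}$ and $w\equiv_q y$, a short case analysis with Lemma~\ref{lem:attack-transitivity} (falling back on Lemma~\ref{lem:backpropagation}) shows that either $\keyvars{F}$ is already $\equiv_q$-equivalent to $x$, or $F\attacks{q}y$. In the latter case, because $\keyvars{F_0}=\{y\}$, concatenating a witness for the variable-attack $F\attacks{q}y$ with $F_0$ exhibits $F\attacks{q}F_0$ (Definition~\ref{def:az}); as $F_0$ lies in an initial strong component, $F$ belongs to that same component, and since the attack graph has no strong cycle every attack inside a strongly connected set is weak, forcing all its atoms to have $\equiv_q$-equivalent keys. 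Either way $\keyvars{F}$ is equivalent to $y$, so $F$ has an outgoing attack and hence mode $i$; being simple-key, $\keyvars{F}=\{k\}$ with $k\in E$ and $\clutch{k}{q}\ni F\neq\emptyset$. In short, whenever $w\in E$ is not already consistently reachable from $x$, one obtains a clutch variable $k\in E$ with $\FD{q}\models\fd{x}{k}$.

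To turn such a clutch variable into a single Markov edge, I would pick a clutch variable $z\in E\setminus\{x\}$ minimizing its consistent-dereference distance $d(z)$ from $x$---the least number of single-variable mode-$i$ dereferences, each interleaved with closure under $\consistent{q}$, needed to derive $z$ from $x$---and argue $d(z)=0$, i.e.\ $\FD{\consistent{q}}\models\fd{x}{z}$, which is exactly $x\markov z$. If $d(z)\geq 1$, the last mode-$i$ atom $H$ dereferenced en route to $z$ is simple-key, say $\keyvars{H}=\{p\}$ with $d(p)<d(z)$ and $\clutch{p}{q}\ni H$; the content of the argument is that $H$ genuinely attacks towards $z$ (otherwise $z$ would be derivable without $H$, lowering $d(z)$), so $H\attacks{q}z$ with $z\equiv_q y$, and the workhorse of the previous paragraph then places $H$ in the initial component of $F_0$, giving $p\equiv_q y$ and thus a clutch variable $p\in E$ of strictly smaller distance---contradicting minimality. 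The main obstacle is precisely this descent: certifying that the key of the last dereferenced atom stays inside $E$ (rather than merely being determined by $x$) is where saturation, the simple-key hypothesis, and the absence of strong cycles in the initial component all have to be combined. Separately, I would dispose of the degenerate situations---$x=y$, the case $\FD{\consistent{q}}\models\fd{x}{y}$ (where $F_0$ itself, necessarily of mode $i$, makes $y$ a clutch variable in $E$), and the possibility $k=x$ (forcing a dereference of an atom of $\clutch{x}{q}$ to reach a distinct clutch variable of $E$)---which are routine but needed to guarantee that the minimization ranges over a nonempty set of variables other than $x$.
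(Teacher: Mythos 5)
Your global strategy---reduce to finding a clutch variable in the $\FD{q}$-equivalence class $E$ of $x$ that is a Markov successor of $x$, together with your ``workhorse'' showing that an atom $F$ with $\FD{q}\models\fd{x}{\keyvars{F}}$ that attacks a variable of $E$ must have its key inside $E$ (via Lemma~\ref{lem:attack-transitivity}, the induced attack $F\attacks{q}F_{0}$, initiality of $F_{0}$'s strong component, and weakness of all attacks lying on cycles)---is sound and closely parallels the final case analysis in the paper's own proof. The fatal gap is in the descent: you assert that the last mode-$i$ atom $H$ dereferenced en route to $z$ satisfies $H\attacks{q}z$, ``otherwise $z$ would be derivable without $H$, lowering $d(z)$.'' Neither half of that parenthetical holds. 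First, $H\nattacks{q}z$ does not mean $H$ is redundant: an attack needs a witness path of atoms consecutively sharing variables outside $\keycl{H}{q}$, and it can fail simply because $z\in\keycl{H}{q}$, i.e.\ because \emph{other} atoms of $q$ (possibly mode-$i$ atoms nowhere near your chosen derivation) also derive $z$ from $\keyvars{H}$. Second, even when $z\in\keycl{H}{q}$, the alternative derivation of $z$ bypassing $H$ may dereference many more mode-$i$ atoms, so it need not lower $d(z)$ and no contradiction with minimality arises. Since the entire descent---and hence the production of a clutch variable of $E$ strictly closer to $x$---hinges on feeding an attack into the workhorse, the argument collapses at exactly the point you yourself flag as ``the main obstacle.''

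What the paper does instead is structurally different, and it is where the real work lies: it fixes a minimal sequential proof $\pi$ of $\FD{q}\models\fd{x}{y}$, takes $z$ to be the key of the \emph{leftmost} mode-$i$ atom of $\pi$ outside $\clutch{x}{q}$ (which yields $x\markov z$ and $\clutch{z}{q}\neq\emptyset$ for free), and then proves $\FD{q}\models\fd{z}{y}$ by contradiction. Assuming the contrary, it normalizes $\pi$ (the ``Simple-Things-First'' condition), follows the chain of introduced-then-used variables from that atom to the atom introducing $y$, and locates an atom $H$ of $\pi$ with \emph{two} key variables $w_{1},w_{2}$: one with $\FD{q}\models\fd{z}{w_{1}}$ but $\FD{\consistent{q}}\not\models\fd{z}{w_{1}}$, to which saturation is applied (producing a fresh attacking atom $G'$, not an atom of $\pi$), and one with $\FD{q}\not\models\fd{z}{w_{2}}$, which is precisely what licenses Lemma~\ref{lem:attack-transitivity} to pull the attack back to $x$. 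This two-variable device is the substitute for your unjustified ``$H\attacks{q}z$'' and is entirely absent from your proposal. A secondary, smaller gap is that your treatment of the degenerate case where the key tool only returns $k=x$ (``forcing a dereference of an atom of $\clutch{x}{q}$ to reach a distinct clutch variable of $E$'') is likewise only a restatement of what must be proved.
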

%
\begin{proof}
If $x\markov y$, then the desired result holds for $z=y$. 
In the remainder of the proof, we treat the case $x\not\markov y$.

Let $q_{0}$ be a minimal (with respect to $\subseteq$) subset of $q$ 
such that $\FD{\clutch{x}{q}\cup\consistent{q}\cup q_{0}}\models\fd{x}{y}$.
Obviously, $q_{0}\cap\clutch{x}{q}=\emptyset$ and $q_{0}\cap\consistent{q}=\emptyset$.
Let $p$ be a minimal (with respect to $\subseteq$) subset of 
$\clutch{x}{q}\cup\consistent{q}\cup q_{0}$
such that the atoms of $p$ can be sequentially ordered into a sequential proof 
(call it $\pi$) of $\FD{q}\models\fd{x}{y}$.
Clearly, $\pi$ must contain all atoms of $q_{0}$.

From $x\not\markov y$, it follows  $\FD{\clutch{x}{q}\cup\consistent{q}}\not\models\fd{x}{y}$.
Hence, $q_{0}\neq\emptyset$.
Let $G$ be the leftmost atom in $\pi$ such that $G\in q_{0}$.
Notice that $\keyvars{G}\neq\emptyset$ by the premise in the statement of Lemma~\ref{lem:basic-step}.
We can assume $z\in\queryvars{q}$ such that $G\in \clutch{z}{q}$.
Since $G$ is chosen leftmost, $\FD{\clutch{x}{q}\cup\consistent{q}}\models\fd{x}{z}$, hence $x\markov z$ and $\clutch{z}{q}\neq\emptyset$. 
It remains to be shown that $\FD{q}\models\fd{z}{y}$. 

Assume towards a contradiction that $\FD{q}\not\models\fd{z}{y}$.
In the next paragraph, we show that  $\pi$ contains an atom $H$ such 
that for some $w_{1},w_{2}\in\keyvars{H}$,
\begin{enumerate}
\item
$\FD{q}\models\fd{z}{w_{1}}$ but $\FD{\consistent{q}}\not\models\fd{z}{w_{1}}$; and
\item
$\FD{q}\not\models\fd{z}{w_{2}}$.
\end{enumerate}

\paragraph{Existence of $H$, $w_{1}$, and $w_{2}$.}
Let $V=\queryvars{p}\cup\{x\}$ and let the sequential proof $\pi$ be $H_{1},H_{2},\dots,H_{\ell}$.
For every $u\in V\setminus\{x\}$,
we define the {\em depth\/} of $u$, denoted $\depth{u}$, as the smallest integer $j$ such that $u\in\atomvars{H_{j}}$.
Furthermore, we define $\depth{x}=0$.
Clearly, $\depth{y}=\ell$.

For $u\in V$ and $i,j\in\{0,\dots,\ell\}$,
we write $i\stackrel{u}{\rightarrowtail}j$ if $\depth{u}=i$ and $j\in\{i+1,\dots,\ell\}$ such that $u\in\keyvars{H_{j}}$.
Intuitively, if $i>0$, then $i\stackrel{u}{\rightarrowtail}j$ says that the variable $u$ is introduced in the sequential proof by $H_{i}$, and ``used" later on by $H_{j}$. 
We can assume $k\in\{1,\dots,\ell\}$ such that $G=H_{k}$.
Clearly, $\depth{z}<k$.
It can be easily seen that the following can be assumed without loss of generality.
\begin{quote}
{\em Simple-Things-First Condition:\/}
for every $u\in V$,
if $\FD{\consistent{q}}\models\fd{z}{u}$, then $\depth{u}<k$.
\end{quote}
Since no atom of $\pi$ is redundant, there exists a sequence
$$k_{0}\stackrel{u_{1}}{\rightarrowtail}k_{1}\stackrel{u_{2}}{\rightarrowtail}k_{2}\dotsm\stackrel{u_{m}}{\rightarrowtail}k_{m}$$
where $k_{0}=k$ and $k_{m}=\ell$.
Thus, $y$ occurs at a non-primary-key position in $H_{k_{m}}$.
For all $i\in\{1,\dots,m\}$, $\depth{u_{i}}\geq k$, hence $\FD{\consistent{q}}\not\models\fd{z}{u_{i}}$ by the {\em Simple-Things-First Condition\/}.

Since $\FD{q}\not\models\fd{z}{y}$, we have $\FD{q}\not\models\fd{z}{\keyvars{H_{k_{m}}}}$.
Hence, we can assume a smallest integer $j\in\{1,2,\dots,m\}$ such that $\FD{q}\not\models\fd{z}{\keyvars{H_{k_{j}}}}$.
Then obviously, $\FD{q}\models\fd{z}{\keyvars{H_{k_{j-1}}}}$, hence $\FD{q}\models\fd{z}{u_{j}}$.
We can choose $w_{1}=u_{j}$ and $H=H_{k_{j}}$.
Further, since $\FD{q}\not\models\fd{z}{\keyvars{H_{k_{j}}}}$, we can choose $w_{2}\in\keyvars{H_{k_{j}}}$ such that $\FD{q}\not\models\fd{z}{w_{2}}$.
We conclude that $H$, $w_{1}$, and $w_{2}$ indeed exist.

Since $q$ is saturated,
from $\FD{q}\models\fd{z}{w_{1}}$ and $\FD{\consistent{q}}\not\models\fd{z}{w_{1}}$,
it follows that there exists an atom $G'\in q$ such that
$\FD{q}\models\fd{z}{\keyvars{G'}}$ and such that either $G'\attacks{q}z$ or $G'\attacks{q}w_{1}$.
Clearly, $G'$ is an atom with mode~$i$.

We show  $\FD{q\setminus\{G'\}}\models\fd{x}{z}$.
Assume towards a contradiction that $\FD{q\setminus\{G'\}}\not\models\fd{x}{z}$.
Since $\FD{\clutch{x}{q}\cup\consistent{q}}\models\fd{x}{z}$, it must be the case that $G'\in \clutch{x}{q}$, hence $\keyvars{G'}=\{x\}$.
Then, from $\FD{q}\models\fd{z}{\keyvars{G'}}$ and $\FD{q}\models\fd{x}{y}$, it follows $\FD{q}\models\fd{z}{y}$, a contradiction.
We conclude by contradiction that  $\FD{q\setminus\{G'\}}\models\fd{x}{z}$.

Two cases can occur.
\begin{description}
\item[Case $G'\attacks{q}w_{1}$.] 
Since $\FD{q}\not\models\fd{\keyvars{G'}}{w_{2}}$ (or otherwise $\FD{q}\models\fd{z}{w_{2}}$, a contradiction), we have $w_{2}\not\in\keycl{G'}{q}$, hence $G'\attacks{q}w_{2}$.  
Since $\FD{q}\models\fd{x}{w_{2}}$, it follows by Lemma~\ref{lem:attack-transitivity} that $G'\attacks{q}x$. 
\item[Case $G'\attacks{q} z$.] 
Since $\FD{q\setminus\{G'\}}\models\fd{x}{z}$, we have that $G'\attacks{q}x$ by Lemma~\ref{lem:backpropagation}.
\end{description}
Thus, at this part of the proof, we have $G'\attacks{q}x$.
We now distinguish two cases.
\begin{description}
\item[Case $\FD{q}\models\fd{\keyvars{G'}}{x}$.]
From $\FD{q}\models\fd{z}{\keyvars{G'}}$ and $\FD{q}\models\fd{x}{y}$, we have $\FD{q}\models\fd{z}{y}$, a contradiction.
\item[Case $\FD{q}\not\models\fd{\keyvars{G'}}{x}$.]
From $\FD{q}\models\fd{y}{x}$ and  $G'\attacks{q}x$, it follows from Lemma~\ref{lem:attack-transitivity} that $G'\attacks{q}y$, which implies $G'\attacks{q}F_{0}$.
Since $F_{0}$ belongs to an initial strong component of $q$'s attack graph and since the attack graph of $q$ contains no strong cycle, the attack $G'\attacks{q}F_{0}$ must be weak, so $\FD{q}\models\fd{\keyvars{G'}}{y}$.
Since $\FD{q}\models\fd{z}{\keyvars{G'}}$, we obtain $\FD{q}\models\fd{z}{y}$, a contradiction.
\end{description}
We conclude by contradiction that $\FD{q}\models\fd{z}{y}$.
\end{proof}

The proof of Lemma~\ref{lem:markov_cycle} is given next.

\noindent
\begin{proof}[of Lemma~\ref{lem:markov_cycle}]
By repeated application of Lemma~\ref{lem:trans}, the initial strong component with two or more atoms will contain two atoms $F_{0},G$ such that $F_{0}\attacks{q}G\attacks{q}F_{0}$. 

Let $\{w_{0}\} = \keyvars{F_{0}}$ (and thus $\clutch{w_{0}}{q} \neq \emptyset$) and $\{y\} = \keyvars{G}$. 
Since the attack graph of $q$ contains no strong cycle, we have $\FD{q} \models \fd{w_{0}}{y}$ and $\FD{q} \models \fd{y}{w_{0}}$.
By~Lemma~\ref{lem:basic-step}, there exists $w_{1}\in\atomvars{q}$ such that $w_{0}\markov w_{1}$, $\clutch{w_{1}}{q}\neq\emptyset$, and $\FD{q}\models\fd{w_{1}}{y}$. 
The latter implies that $\FD{q} \models \fd{w_{1}}{w_{0}}$ as well.

By repeated application of Lemma~\ref{lem:basic-step}, for every $k>0$, there exists a Markov path  $w_{0}\markov w_{1}\dotsm\markov w_{k}$, where $\clutch{w_{i}}{q}\neq\emptyset$ for every $i\in\{0,\dots,k\}$, and $\FD{q}\models\fd{w_k}{w_{0}}$.
Since $\atomvars{q}$ is a finite set, at some point we will have $w_{k} = w_{i}$ for some $i$ with $i<k$, at which point we have found the desired Markov cycle.
\end{proof}

\begin{figure}\centering
\includegraphics[scale=1.0]{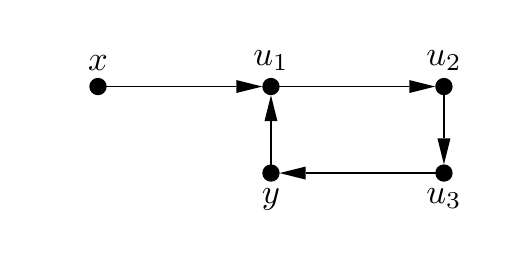}
\caption{Markov graph of the query in Example \ref{ex:mg}.}\label{fig:mg}
\end{figure}

The proof of Lemma~\ref{lem:markov_cycle} actually shows a slightly stronger result than the statement of Lemma~\ref{lem:markov_cycle}.
The proof shows that whenever $R(\underline{x},\vec{z})$ belongs to an attack cycle of size~2 that is part of an initial strong component of the attack graph, 
then the Markov graph contains a directed path from $x$ to a Markov cycle with the desired properties.
This is illustrated by the following example.
\begin{example}\label{ex:mg}
Let $q=\{R_{1}(\underline{x},u_{1})$,
$R_{2}(\underline{u_{1}},u_{2})$,
$R_{3}(\underline{u_{2}},u_{3})$,
$R_{4}(\underline{u_{3}},y)$,
$R_{5}(\underline{y},u_{1})$,
$S^{c}(\underline{u_{2},y},x)\}$.
In the attack graph of $q$,
every $R_{i}$-atom attacks every other atom of $q$, and all these attacks are weak.

The Markov graph of $q$ is shown in Figure~\ref{fig:mg}.
As predicted by the proof of Lemma~\ref{lem:markov_cycle},
for every variable among $x,y,u_{1},u_{2},u_{3}$, there is a path that starts from the variable and ends in a Markov cycle.
Notice, however, that $x$ itself is not part of a Markov cycle.
\end{example}

\subsection{Proof of Lemma~\ref{lem:gpurified}}

\noindent
\begin{proof}[of Lemma~\ref{lem:gpurified}]
Construct a maximal sequence
\begin{equation}\label{eq:gpurify}
\db_{0},\gblock_{1},\db_{1},\gblock_{2},\db_{2},\dots,\gblock_{n},\db_{n}
\end{equation}
such that $\db_{0}=\db$ and for every $i\in\{1,\dots,n\}$,
\begin{enumerate}
\item\label{it:gpurified} 
$\gblock_{i}$ is a gblock of $\db_{i-1}$ such that some repair of $\gblock_{i}$ is not grelevant for $q$ in $\db_{i-1}$;
\item
$\db_{i}=\db_{i-1}\setminus\gblock_{i}$.
\end{enumerate}
Clearly, $\db_{n}$ is gpurified relative to $q$, and by repeated application of Lemma~\ref{lem:irrelevant},
every repair of $\db$ satisfies $q$ if and only if every repair of $\db_{n}$ satisfies $q$.

It remains to be shown that $\db_{n}$ can be computed in polynomial time.
Clearly, the above sequence~(\ref{eq:gpurify}) satisfies $n\leq\card{\db}$.
The condition~\ref{it:gpurified} can be tested in polynomial time, as argued in the sequel of this proof.

First, every uncertain database that is purified relative to $q$ has at most polynomially many gblocks,
and every gblock has at most polynomially many repairs. 
Further, for any repair $\sep$ of some gblock $\gblock_{i}$, the following are equivalent:
\begin{enumerate}
\item
$\sep$ is grelevant for $q$ in $\db_{i-1}$;
\item
there exists a repair $\rep$ of $\db$ such that $\sep\subseteq\rep$ and for some valuation $\theta$ over $\queryvars{q}$ and some fact $A\in\sep$,
$A\in\theta(q)\subseteq\rep$; and
\item
$\formula{\db_{i-1}\setminus\db_{\sep}}\cup\sep\models q$,
where $\db_{\sep}$ is the subset of $\db$ that contains all facts whose relation name occurs in $\sep$.
\end{enumerate}
The first two items are equivalent by definition.
Equivalence of the last two items follows from the observation that if some atom $A\in\sep$ is relevant for $q$ in $\rep$,
then every atom of $\sep$ must be relevant for $q$ in $\rep$.
The latter test is obviously in polynomial time.
\end{proof}

\end{document}